\definecolor{bgcolor}{rgb}{0.66,0.88,1.00}
\theoremstyle{plain}
\newtheorem{theorem}{Theorem}[section]
\newtheorem{lemma}[theorem]{Lemma}
\newtheorem{fact}[theorem]{Fact}
\newtheorem{corollary}[theorem]{Corollary}
\newtheorem{claim}[theorem]{Claim}
\newtheorem{problem}[theorem]{Problem}
\theoremstyle{definition}
\newtheorem{definition}[theorem]{Definition}
\newcommand{\eat}[1]{}
\newcommand*{\rom}[1]{\expandafter\@slowromancap\romannumeral #1@}
\title{Efficient Submodular Optimization under Noise: Local Search is Robust}
\author{Lingxiao Huang\thanks{{\tt huanglingxiao1990@126.com}. Huawei TCS Lab $\to$ Nanjing University.}
  \and
  Yuyi Wang \thanks{{\tt yuyiwang920@gmail.com}. Swiss Federal Institute of Technology.}
  \and
  Chunxue Yang \thanks{{\tt chunxue001@e.ntu.edu.sg}. Nanyang Technological University.}
  \and 
  Huanjian Zhou\thanks{{\tt zhou@ms.k.u-tokyo.ac.jp}. The University of Tokyo.}
  }
\begin{document}
\maketitle

\begin{abstract}
The problem of monotone submodular maximization has been studied extensively due to its wide range of applications.
However, there are cases where one can only access the objective function in a distorted or noisy form because of the uncertain nature or the errors involved in the evaluation.
This paper considers the problem of constrained monotone submodular maximization with noisy oracles introduced by~\cite{hassidim2017submodular}.
For a cardinality constraint, we propose an algorithm achieving a near-optimal $\left(1-\frac{1}{e}-O(\varepsilon)\right)$-approximation guarantee (for arbitrary $\varepsilon > 0$) with only a polynomial number of queries to the noisy value oracle, which improves the exponential query complexity of~\cite{singer2018optimization}.
For general matroid constraints, we show the first constant approximation algorithm in the presence of noise.
Our main approaches are to design a novel local search framework that can handle the effect of noise and to construct certain smoothing surrogate functions for noise reduction.
\end{abstract}
\newpage
\tableofcontents
\newpage
	
\section{Introduction} 
\label{sec:introduction}
Consider the following problems in machine learning and operations research: (1) selecting a set of locations to open up facilities with the goal of maximizing their overall user coverage~\cite{krause2008efficient}; (2) reducing the number of features in a machine learning model while retaining as much information as possible~\cite{thoma2009near}; and (3) identifying a small set of seed nodes that can achieve the largest overall influence in a social network~\cite{kempe2003maximizing}.
Solving these problems all involves maximizing a monotone \emph{submodular} set function $f: 2^N \mapsto \mathbb{R}$ subject to certain constraints. 
Intuitively, submodularity captures the property of diminishing returns. For example, a newly opened facility will contribute less to the overall user coverage if we have already opened many facilities and more if we have only opened a few. 
Although the general problem of monotone submodular maximization subject to a cardinality or general matroid constraint is NP-hard~\cite{feige1998threshold}, 
the greedy algorithm, which selects an element with the largest margin at each step, can approximately solve this problem under a cardinality constraint by a factor of $1-1/e$, and this approximation ratio is tight~\cite{nemhauser1978}. 
Moreover, a non-oblivious local search algorithm, which iteratively alters one element to improve an auxiliary objective function, is guaranteed to achieve an approximation ratio of $1-1/e$ for general matroid constraints~\cite{filmus2014monotone}.

In the literature, the submodular optimization problem usually assumes a \emph{value oracle} to the objective function $f$, which means one is allowed to query the \emph{exact} value of $f(S)$ for any $S \subseteq N$. 
However, in many applications, due to the uncertain nature of the objective or the errors involved in the evaluation, one can only access the function value in a distorted or noisy form.
For example, \cite{globerson2006nightmare} pointed out that selecting features to construct a robust learning model is particularly important in domains with non-stationary feature distributions or input sensor failures. 
It is known that without any assumption on the noise, direct adaptions of the greedy and local search methods mentioned above may yield arbitrarily poor performance~\cite{horel2016maximization}.
To address this issue, \cite{hassidim2017submodular} introduced and studied the following problem of \emph{monotone submodular maximization under noise}:
For a monotone submodular function $f$, given a noisy value oracle $\Tilde{f}$ satisfying $\Tilde{f}(S) = \xi_Sf(S)$ for each set $S\subseteq N$, where the noise multiplier $\xi_S$ is independently drawn from a certain distribution,
the goal is to find a set $S$ maximizing $f(S)$ under certain constraints. 
Some applications of this problem are provided in~\cite{singer2018optimization} such as revealed preference theory~\cite{chambers2016revealed} and active learning~\cite{feldman2009power}.
\cite{hassidim2017submodular} showed that under a sufficiently large cardinality constraint, a variant of the greedy algorithm achieves a near-optimal approximation ratio of $1-\frac{1}{e}-O(\varepsilon)$ for arbitrary $\varepsilon > 0$. 
The problem becomes more challenging when the cardinality constraint is relatively small since there is less room for mistakes. 
In a subsequent work, \cite{singer2018optimization} developed another greedy-based algorithm for small cardinality constraints and showed a near-tight approximation guarantee. 

Despite these encouraging results, we want to point out two directions along this line of monotone submodular maximization under noise that still have room for improvements:
\begin{itemize}
    \item In the algorithm from ~\cite{singer2018optimization}, the query complexity to the noisy value oracle is exponential in $\varepsilon^{-1}$, which is costly when a near-optimal solution is needed, i.e., when the parameter $\varepsilon$ is close to $0$. Is it possible to obtain near-optimal approximations for monotone submodular maximization under cardinality constraints with the number of queries polynomial in $\varepsilon^{-1}$?
    \item All previous works in submodular maximization under noise consider only the cardinality constraint, and no approximation guarantee is known under other constraints.
    \footnote{Note that~\cite{singer2018optimization} also provide an algorithm to deal with general matroid constraints.
    However, we will argue in Section~\ref{sec:invalid} that their algorithm fails to obtain the approximation guarantee they claim.}
    Is there any algorithm that can achieve a constant approximation for submodular maximization under noise for more general constraints, such as commonly studied    matroid constraints~\cite{krause2009simultaneous,vondrak2011submodular}?
\end{itemize}
In this paper, we provide answers to both questions above. 
\subsection{Our contributions}
\label{sec:contribution}
We study the problem of constrained monotone submodular maximization under noise (Problem~\ref{problem:noisy_submodular}).
Following prior works~\cite{hassidim2017submodular,singer2018optimization}, we assume generalized exponential tail noises (Definition~\ref{def:exponential}) and consider the solutions subject to cardinality constraints (Definition~\ref{def:cardinality}) and matroid constraints (Definition~\ref{def:matroid}).
The main contribution of this work is to show that for optimizing a monotone submodular function under a cardinality constraint, $\left(1-\frac{1}{e}-O(\varepsilon)\right)$-approximations can be obtained with high probability by querying the noisy oracle only $\text{Poly}\left(n,\frac{1}{\varepsilon}\right)$ times.
\begin{theorem}[\textbf{Informal, see Theorems~\ref{thm:cardi} and~\ref{thm:cardi2}}]
\label{thm:informal_cardi}
    Let $\varepsilon > 0$ and assume $n$ is sufficiently large. 
    For any $r\in \Omega\left(\frac{1}{\varepsilon}\right)$, there exists an algorithm that returns a $\left(1-\frac{1}{e}-O(\varepsilon)\right)$-approximation for the monotone submodular maximization problem under a $r$-cardinality constraint, with probability $1-o(1)$ and query complexity $\text{Poly}\left(n,\frac{1}{\varepsilon}\right)$ to $\tilde{f}$.
\end{theorem}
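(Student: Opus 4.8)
The plan is to reduce the problem to running an \emph{approximate non-oblivious local search} on a single surrogate objective that is engineered to do two jobs at once: (i) following \cite{filmus2014monotone}, it is the weighted-average potential whose swap-local optima are $(1-1/e)$-approximate for a uniform matroid, and (ii) being by construction an average of $f$ over many distinct subsets, it can be estimated from the noisy oracle $\tilde f(S)=\xi_S f(S)$ to within a small \emph{relative} error even though no individual query can be cleaned up (re-querying $S$ returns the same $\xi_S$). The reason \emph{local search} (rather than greedy) is the right vehicle here is exactly the theme of the title: greedy commits to elements irrevocably, so an element whose multiplier $\xi$ happened to be large gets locked into the solution forever, whereas local search can swap such an element back out once the surrogate reveals its true marginal is small — unlucky noise draws become self-correcting.

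Concretely, I would first fix the surrogate $g(S)=\sum_{T\subseteq S}\beta_{|T|}f(T)$ with the coefficients of \cite{filmus2014monotone} tuned to the $r$-uniform matroid, and record the two facts needed about it: its local optima under single-element swaps are $(1-1/e)$-approximate, and $g(S)=\Theta(f(S))$, using monotonicity for the upper bound and the standard sampling inequality $\mathbb{E}[f(T)]\ge p\,f(S)$ (for $T$ retaining each element of $S$ independently with probability $p$) for the lower bound. For the noise-reduction step, I would estimate $g(S)$, up to the fixed unknown scalar $\mu=\mathbb{E}[\xi]$ and a known normalization, by drawing $m=\mathrm{Poly}(1/\varepsilon)$ \emph{distinct} subsets $T_1,\dots,T_m$ from the weight distribution (sampling without replacement, which is essentially free since there are $2^{\Theta(r)}\gg m$ subsets), querying $\tilde f(T_j)=\xi_{T_j}f(T_j)$, and averaging. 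Because the $T_j$ are distinct, the multipliers $\xi_{T_j}$ are i.i.d.; the generalized-exponential-tail assumption makes every moment finite and caps $\max_j \xi_{T_j}$ at $O(\log m)$, so a Bernstein-type bound gives $\hat g(S)=(1\pm\delta)\,\mu\,g(S)$ with probability $1-n^{-\omega(1)}$ for a target accuracy $\delta$ to be fixed below. The unknown $\mu$ (and the normalization) never need to be estimated, since the local search only ever compares ratios $\hat g(S-a+b)/\hat g(S)$; and since the algorithm forms only $\mathrm{Poly}(n,1/\varepsilon)$ estimates in total, a union bound — which needs no independence across estimates — makes all of them accurate simultaneously with probability $1-o(1)$.

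The last step is the local-search analysis. Starting from a feasible set of size $r$ (say, greedily built against the surrogate), repeatedly perform any swap $S\mapsto S-a+b$ with $\hat g(S-a+b)\ge(1+\delta)\,\hat g(S)$; on termination $S$ is an approximate local optimum of $g$ up to a factor $1+\Theta(\delta)$. The iteration count is $O(\delta^{-1}\log(\mathrm{poly}(n)))=\mathrm{Poly}(n,1/\varepsilon)$, because each accepted swap multiplies $\hat g$ by at least $1+\delta$ while the surrogate values encountered span only a polynomial range (after discarding negligibly-valued starting sets); each iteration inspects $O(nr)$ candidate swaps at $\mathrm{Poly}(1/\varepsilon)$ queries apiece, so the total query count is $\mathrm{Poly}(n,1/\varepsilon)$. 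Feeding the $1+\Theta(\delta)$ slack into the inequalities of \cite{filmus2014monotone} — of which $O(r)$ are summed — degrades the guarantee from $(1-1/e)\,\mathrm{OPT}$ to $(1-1/e-O(r\delta))\,\mathrm{OPT}$, so the choice $\delta=\Theta(\varepsilon/r)$ yields exactly $(1-1/e-O(\varepsilon))\,\mathrm{OPT}$; this is also the place where $r=\Omega(1/\varepsilon)$ is used, to make per-element marginal errors of size $\mathrm{OPT}/r$ themselves $O(\varepsilon\,\mathrm{OPT})$.

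The main obstacle I anticipate is precisely this balancing act: the accuracy $\delta$ must be driven down to $\Theta(\varepsilon/r)$ so that the $O(r)$ accumulated errors in the approximation proof stay within $O(\varepsilon)$, yet the iteration bound $O(\delta^{-1}\log n)$ and the per-estimate sample size $\mathrm{Poly}(1/\delta)$ must still come out polynomial — which works only because $r=\Omega(1/\varepsilon)$ forces $\delta^{-1}=\mathrm{Poly}(n,1/\varepsilon)$ rather than something worse, and because the surrogate is genuinely an average over $2^{\Theta(r)}\gg m$ distinct sets so that enough fresh independent noise is available. A secondary difficulty, invisible in the noiseless setting, is that $\xi_S$ is a fixed function of $S$: one must be scrupulous that the sets averaged for $\hat g(S)$ are pairwise distinct (hence their multipliers independent) and that the heavy tail the estimator inherits from $\xi$ is tamed via the exponential-tail assumption before concentration is invoked.
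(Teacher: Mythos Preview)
Your plan --- approximate non-oblivious local search on the Filmus--Ward potential $g=\varphi_f$, with noise averaged out by sampling distinct subsets --- is close to the paper's, and once local search is running on size-$r$ bases your concentration and local-optimality arguments essentially go through (with the minor correction that the $r$ summed inequalities carry an extra $H_r\approx\ln r$ factor from Lemma~\ref{lmm:bound_g}, so one needs $\delta=\Theta(\varepsilon/(r\log r))$ rather than $\Theta(\varepsilon/r)$; still polynomial).

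The genuine gap is initialization. Your estimator for $g(S)=\sum_{T\subseteq S}\beta_{|T|}f(T)$ needs $2^{|S|}\gg m$ distinct subsets to draw on, so it is meaningful only once $|S|$ is already moderately large; but the greedy warm-start you invoke must evaluate the surrogate on sets of size $1,2,\ldots$, where there is nothing to average. Without a constant- (or even polynomial-) factor start, the $(1+\delta)$-geometric iteration bound fails: an arbitrary base $S_0$ can have $g(S_0)$ an arbitrarily small fraction of $g^\star$, and ``discarding negligibly-valued starting sets'' does not say how one recognizes this from $\tilde f$ alone. This is precisely what the paper's extra smoothing layer is for. Rather than applying $\varphi$ directly to $f$, they first pass to a surrogate $h(S)=\frac{1}{n}\sum_{e\in N}f(S+e)$ (for $r\in O(n^{1/3})$) or $h_H(S)=2^{-|H|}\sum_{H'\subseteq H}f(S\cup H')$ with $|H|=\Theta(\log n)$ (for $r\in\Omega(n^{1/3})$), and run local search on $\varphi_h$. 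Because the averaging set of $h$ is \emph{external} to $S$, the estimator for $\varphi_h(A)$ draws on $\mathrm{poly}(n)$ independent multipliers even when $|A|=1$, so greedy initialization is sound and yields the constant-factor start (Lemma~\ref{lmm:init_solution}) that controls the iteration count. The cost is a post-processing step to translate from $h$ back to $f$ --- the $\tilde f$-maximization phase (Lemma~\ref{lem:add}) for small $r$, or simply appending $H$ for large $r$ --- and this is where the case split and the hypothesis $r=\Omega(1/\varepsilon)$ actually do work.
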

\noindent
For a cardinality constraint, this paper and prior works~\cite{hassidim2017submodular, singer2018optimization} 
all achieve near-optimal approximations.
However, our result is applicable for a larger range $\Omega\left(\frac{1}{\varepsilon}\right)$ of cardinalities than $\Omega\left(\log\log n\cdot\varepsilon^{-2}\right)$ in~\cite{hassidim2017submodular}.
Moreover, we only require $\text{Poly}\left(n,\frac{1}{\varepsilon}\right)$ queries to the noisy value oracle, 
which improves the query complexity $\Omega(n^{\frac{1}{\varepsilon}})$ of~\cite{singer2018optimization}.
Our main idea to address this problem is to employ a local search procedure, whereas prior methods are all variants of the greedy algorithm.
Intuitively, local search is more robust than the greedy since the marginal functions are more sensitive to noise than the value functions.
To achieve a sufficient gain in each iteration, the greedy algorithm must identify the element with maximum margin.
In contrast, local search only needs to estimate the sets' values.
This is why we can improve the query complexity to $\text{Poly}\left(n, \frac{1}{\varepsilon}\right)$ for small cardinality constraints.

We present a unified framework (Algorithm~\ref{alg:local_search}) for enhancing the local search to cope with noise.
One of the main differences between our framework and the non-oblivious local search proposed by~\cite{filmus2014monotone} is that we use an approximation of the auxiliary function (Definition~\ref{def:approximation})
rather than the exact one due to the presence of noise.
We analyze the impact of the inaccuracies on the approximation performance and query complexity of the local search (Theorem~\ref{the:main_nls}).
Another difference is the construction of the auxiliary functions used to guide the local search.
We construct the auxiliary function not based on the objective function but on smoothing surrogate functions.
A surrogate function $h$ needs to meet two properties: (\romannumeral1) $h$ should depend on an averaging set of size $\mathrm{poly}(n)$, such that $h(S)$ and its noisy analogue $\tilde{h}(S)$ are close for all sets $S$ considered by local search; (\romannumeral2) $h$ needs to be close to the original function $f$, such that optimizing $h$ can yield a near-optimal solution to optimizing $f$.
However, a large averaging set is more likely to induce a large gap between the surrogate and original function, making simultaneous fulfillment of both properties non-trivial. 
In this paper we carefully design the smoothing surrogate functions as follows.
For a set with size $r\in\Omega\left(\frac{1}{\varepsilon}\right) \cap O\left(n^{1/3}\right)$, we define the smoothing surrogate function $h$ as the expectation of $f$'s value when a random element in $N$ is added to the set (Definition~\ref{def:h_1}). 
This surrogate function is robust for a relatively small cardinality as it is based on a rather large averaging set with size nearly $n$, but too concentrated for a large cardinality close to $n$.
Thus, we consider another smoothing surrogate function $h_H$ for size $r\in \Omega\left(n^{1/3}\right)$, defined as the average value combined with all subsets of a certain small-size set $H$ (Definition~\ref{def:h_2}).
The auxiliary functions constructed on both smoothing surrogates are shown to have almost accurate approximations (Lemma~\ref{lem:hat_varphi_h} and~\ref{lem:hat_varphi_h2}).
Consequently, we can apply our unified local search framework (Algorithm~\ref{alg:local_search}) in both cases (Algorithm~\ref{alg:local_search_monotone} and~\ref{alg:local_search_monotone2}),
and guarantee to achieve nearly tight approximate solutions (Theorem~\ref{thm:cardi} and~\ref{thm:cardi2}).
The other contribution of this paper is a constant approximation result for maximizing monotone submodular functions with noisy oracles under general matroid constraints.
\begin{theorem}[\textbf{Informal, see Theorems~\ref{thm:matroid1} and~\ref{thm:matroid2}}]
    Let $\varepsilon > 0$ and assume $n$ is sufficiently large. 
    For any $r\in \Omega\left(\varepsilon^{-1}\log(\varepsilon^{-1})\right)$, there exists an algorithm that returns a $\left(\frac{1}{2}\left(1-\frac{1}{e}\right)-O(\varepsilon)\right)$-approximation for the monotone submodular maximization problem under a matroid constraint with rank $r$, with probability $1-o(1)$ and query complexity at most $\text{Poly}\left(n,\frac{1}{\varepsilon}\right)$ to $\tilde{f}$.
\end{theorem}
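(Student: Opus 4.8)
The plan is to reduce the matroid case to the cardinality-constraint machinery already developed, using the classical fact that a monotone submodular maximization problem over a matroid of rank $r$ can be attacked by a local-search / swap procedure whose analysis only needs value estimates of the relevant sets. Concretely, I would first establish a robust local-search guarantee for matroids in the same spirit as Theorem~\ref{the:main_nls}: run Algorithm~\ref{alg:local_search} with the auxiliary function built from a smoothing surrogate, but now the local moves are matroid exchanges (remove one element, add one element, subject to staying independent) rather than the single-element swaps used for cardinality constraints. The key point, as emphasized in the introduction, is that local search only needs to compare the surrogate values of the current independent set and its neighbors, so the noise-reduction argument carries over: if the surrogate $h$ depends on a large averaging set then $\tilde h(S)$ concentrates around $h(S)$ for all $O(\mathrm{poly}(n))$ sets visited, and the approximate auxiliary function (Definition~\ref{def:approximation}) is accurate enough (this is the matroid analogue of Lemmas~\ref{lem:hat_varphi_h} and~\ref{lem:hat_varphi_h2}).

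Second, I would split into the two surrogate regimes exactly as in the cardinality proof. For $r \in \Omega(\varepsilon^{-1}\log(\varepsilon^{-1})) \cap O(n^{1/3})$ use the surrogate $h$ of Definition~\ref{def:h_1} (expected value after adding one uniformly random element); for larger $r$ use $h_H$ of Definition~\ref{def:h_2}. In each regime one checks (i) the averaging set is large enough that noisy and noiseless surrogate values agree up to a $(1\pm O(\varepsilon))$ factor with high probability over all queried sets, and (ii) the surrogate is within $O(\varepsilon)\cdot\mathrm{OPT}$ of $f$, so that an approximate optimizer of $h$ over the matroid is an approximate optimizer of $f$. The only genuinely new ingredient beyond the cardinality analysis is bounding how the matroid exchange structure interacts with the surrogate: one must verify that the standard exchange-argument inequality (for every independent set $S$ and every basis, there is a bijection matching elements so that swapping preserves independence) still yields, when applied to the approximate auxiliary function rather than $f$, the $\tfrac{1}{2}(1-\tfrac1e)$ bound up to additive $O(\varepsilon)$ error.

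The factor $\tfrac{1}{2}(1-\tfrac1e)$ itself I expect to come from composing two sources of loss: a $(1-\tfrac1e)$-type loss inherent in optimizing the non-oblivious potential over the matroid (as in \cite{filmus2014monotone}), and an extra factor of $\tfrac12$ coming from the fact that the smoothing surrogate is only a coarse proxy for $f$ under a matroid — plausibly because the surrogate's ``add a random element'' or ``average over subsets of $H$'' construction only captures half of the marginal structure when the feasible region is a matroid polytope rather than a simple cardinality ball, so the exchange argument loses a factor of two relative to the cardinality case. I would make this precise by writing the chain of inequalities $f(S^*) \le (\text{surrogate gap}) + 2\cdot(\text{local optimum of }\hat\varphi) + O(\varepsilon)\mathrm{OPT}$ and tracking constants.

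The main obstacle will be step (i)--(ii) simultaneously in the matroid setting: the averaging set has to be large for concentration but a large averaging set distorts $f$ more, and over a matroid the distortion is harder to control than over a cardinality constraint because one cannot freely pad sets up to size $r$ — independence must be preserved. Handling this likely requires choosing the averaging elements from a carefully structured sub-matroid (e.g. contracting or restricting the matroid so that added elements remain independent), and then arguing that contraction/restriction changes $\mathrm{OPT}$ by at most $O(\varepsilon)\mathrm{OPT}$. Ensuring the query complexity stays $\mathrm{Poly}(n,\varepsilon^{-1})$ means the number of local-search iterations must be polynomially bounded, which follows from a standard potential-increase argument (each improving swap increases $\hat\varphi$ by at least a $\mathrm{poly}(1/n)$ fraction of its range), so that part should be routine once the approximation guarantee is in place.
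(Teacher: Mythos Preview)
Your high-level plan is sound up to the point where you try to locate the factor $\tfrac{1}{2}$. The local search over a matroid via the Filmus--Ward auxiliary potential (Theorem~\ref{the:main_nls}) already gives a $(1-\tfrac{1}{e}-O(\varepsilon))$-approximation for $h$; there is no loss of $\tfrac{1}{2}$ in the exchange argument itself, and the surrogate-to-$f$ gap is $O(\varepsilon)$ in both regimes just as for cardinality. Your speculative chain $f(S^*) \le (\text{surrogate gap}) + 2\cdot(\text{local optimum}) + O(\varepsilon)\mathrm{OPT}$ is not what is happening.

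The actual obstruction, which your proposal does not identify, is \emph{feasibility of the output}. In the small-rank regime, local search is run under $\mathcal{I}(r-1)\cap\mathcal{I}(\mathcal{M})$ and produces $S_L$; then the $\tilde f$-maximization step appends one element to form $S_M$ with $f(S_M)\ge(1-\tfrac{1}{e}-O(\varepsilon))f(O_f)$. Under a cardinality constraint $S_M$ is always feasible, but under a general matroid $S_M$ need not lie in $\mathcal{I}(\mathcal{M})$. One therefore has to return either $S_L$ or the single element $S_M\setminus S_L$, and submodularity guarantees only that the better of the two is worth at least $\tfrac{1}{2}f(S_M)$. In the large-rank regime, the surrogate $h_H$ forces the search to take place in the contraction matroid $\mathcal{I}_H(\mathcal{M})$; an arbitrary basis is split into two halves $H_1,H_2$, local search is run in each contraction, and a base-exchange partition of $O^\star$ shows $h_{H_1}(O_1)+h_{H_2}(O_2)\ge f(O^\star)$, so the better of the two candidates is at least $\tfrac{1}{2}(1-\tfrac{1}{e}-O(\varepsilon))f(O^\star)$.

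The piece of machinery you are missing entirely is how to perform this ``pick the better of two feasible candidates'' step under noise: you only see $\tilde f$, not $f$. The paper introduces a separate comparison auxiliary function $f_0(S)=\tfrac{1}{|S|}\sum_{e\in S}f(S-e)$ (Definition~\ref{def:f_0}), shows $(1-\tfrac{1}{|S|})f(S)\le f_0(S)\le f(S)$, and proves its noisy analogue $\widetilde{f_0}$ concentrates when $|S|\in\Omega(\varepsilon^{-1}\log(\varepsilon^{-1}))$ (Lemma~\ref{lmm:matroid1_comp}); this is precisely why the rank lower bound is $\Omega(\varepsilon^{-1}\log(\varepsilon^{-1}))$ rather than $\Omega(\varepsilon^{-1})$. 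Without this comparison device your proposal has no mechanism to output a feasible set with the claimed guarantee.
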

\noindent
To the best of our knowledge, this is the first result showing that constant approximation guarantees are obtainable 
under general matroid constraints in the presence of noise.
To cope with noise, one common approach for cardinality constraints is to incorporate some extra elements to gain robustness and include these elements in the final solutions.
However, for a matroid, additional elements may undermine the independence of a set.
To address this difficulty, we develop a technique for comparing the values of independent sets in the presence of noise, which allows us to select either the local search solutions or the additional elements for robustness and leads to an approximation ratio of $\frac{1}{2}\left(1-\frac{1}{e}\right)$. 
\subsection{Related work}
\label{sec:related}
Research has been conducted on monotone submodular maximization in the presence of noise.
We say a noisy oracle is inconsistent if it returns different answers when repeatedly queried.
For inconsistent oracles, noise often does not present a barrier to optimization, since concentration assumptions can eliminate the noise after a sufficient number of queries~\cite{singla2016noisy,ito2019submodular}.
When identical queries always obtain the same answer, the problem becomes more challenging.
Aside from the i.i.d noise adopted in~\cite{hassidim2017submodular,singer2018optimization} and this paper, \cite{horel2016maximization} study submodular optimization under noise adversarially generated from $[1-\varepsilon/r, 1+\varepsilon/r]$, where the greedy algorithm achieves a ratio of $1-1/e-O(\varepsilon)$.
No algorithm can obtain a constant approximation if noise is not bounded in this range.

\section{The model}
\label{sec:model}
This section formally defines our model (Problem~\ref{problem:noisy_submodular}) of maximizing a monotone submodular function (Definition~\ref{def:submodular}) under a cardinality constraint (Definition~\ref{def:cardinality}) or a matroid constraint (Definition~\ref{def:matroid}), given access to a noisy value oracle (Definition~\ref{def:noise}).
Let $N$ be the ground set with size $|N|=n$, and we use the shorthands $S+x = S\cup\{x\}$ and $S-x=S\backslash\{x\}$ throughout this paper. 
We first review the definition of monotone submodular functions. 
\begin{definition}[\bf{Monontone submodular functions}]
\label{def:submodular}
A function $f :2^N \to \mathbb{R}_{\geq 0}$
is monotone submodular if 1) (monotonicity) $f(A)\leq f(B)$ for any $A\subseteq B\subseteq N$;
 2) (submodularity) for any subset $A,B \subseteq N$ and $x\in N$:
$f(A + x) - f(A)\geq  f(B + x) - f(B)$.
\end{definition}
\noindent 
There are various examples of monotone submodular functions in optimization, such as budget additive functions, coverage functions, cut functions and rank functions~\cite{buchbinder2018submodular}.
Since the description of a submodular function may be exponential in the size $N$, we usually assume access of a \emph{value oracle} that answers $f(S)$ for each $S\subseteq N$. 
Let $\mathcal{I}$ denote a collection of feasible subsets $S\subseteq N$. 
The goal of a constrained monotone submodular maximization is to find a subset $S\subseteq \mathcal{I}$ to maximize $f(S)$. 
The objective $f$ is further assumed to be normalized, i.e., $f(\varnothing) = 0$.
We consider two types of constraints in our models: cardinality constraints and matroid constraints. 
\begin{definition}[\bf{Cardinality constraints}]
\label{def:cardinality}
Given a ground set $N$ and an integer $r\geq 1$, a cardinality constraint is of the form $\mathcal{I}(r) = \{ S\subseteq N: |S|\leq r\}$.
\end{definition}
\begin{definition}[\bf{Matroids and Matroid constraints}]
\label{def:matroid}
Given a ground set $N$, a matroid $\mathcal{M}$ is represented by an ordered pair $(N,\mathcal{I}(\mathcal{M}))$ satisfying that
1) $\varnothing \in \mathcal{I}(\mathcal{M})$; 2) If $I\in \mathcal{I}(\mathcal{M})$ and $I'\subseteq I$, then $I'\in \mathcal{I}(\mathcal{M})$; 3) If $I_1,I_2\subseteq \mathcal{I}(\mathcal{M})$ and $|I_1|<|I_2|$, then there must exist an element $e\in I_2\setminus I_1$ such that $I_1\cup \{e\}\in \mathcal{I}(\mathcal{M})$.
Each $I\in \mathcal{I}(\mathcal{M})$ is called an independent set. 
The maximum size of an independent set is called the rank of $\mathcal{M}$.
We call the collection $\mathcal{I}(\mathcal{M})$ a matroid constraint.
\end{definition}

\noindent 
Assume we are given a \emph{membership oracle} of $\mathcal{I}(\mathcal{M})$ that for any set $S \subseteq N$ answers whether $S\in \mathcal{I}(\mathcal{M})$. 
As a widely used combinatorial structure, there is an extensive study on matroids~\cite{wilson1973introduction,oxley2006matroid,welsh2010matroid}.
Common matroids include uniform matroids, partition matroids, regular matroids, etc. See~\cite{oxley2006matroid} for more discussions. 
Specifically, a uniform matroid constraint is equivalent to a cardinality constraint, implying that cardinality constraints are a special case of matroid constraints.
\paragraph{Noisy value oracle.} 
It is well known that given an exact value oracle to $f$, for any $\varepsilon > 0$, there exists a randomized $(1-1/e-\varepsilon)$-approximate algorithm for the submodular optimization problem under a matroid constraint, i.e., $\max_{S\subseteq \mathcal{I}}f(S)$~\cite{calinescu2011maximizing,filmus2014monotone}.
However, as discussed earlier, the value oracle of $f$ may be imperfect, and we may only have a noisy value oracle $\tilde{f}$ instead of $f$. 
We consider the following noisy value oracle that has also been investigated in~\cite{horel2016maximization,hassidim2017submodular}.

\begin{definition}[\bf{(Multiplicative) noisy value oracle}~\cite{hassidim2017submodular}]
\label{def:noise}
We call $\Tilde{f} : 2^N \to \mathbb{R}_{\geq 0}$ a (multiplicative) noisy value oracle of $f$, if there exists some distribution $\mathcal{D}$ s.t.\ for any $S \subseteq N$, $\tilde{f}(S) = \xi_Sf(S)$ where $\xi_S$ is i.i.d.\ drawn from $\mathcal{D}$.
\end{definition}

\noindent
Throughout this paper, we consider a general class of noise distributions, called generalized exponential tail distributions, defined in \cite{singer2018optimization}.
\begin{definition}[\bf{Generalized exponential tail distributions~\cite{singer2018optimization}}]
\label{def:exponential}
A noise distribution $\mathcal{D}$ has a \textit{generalized exponential tail} if there exists some $x_0$ such that for every $x > x_0$ the probability density function $\rho(x) = e^{-g(x)}$, where $g(x) = \sum_i c_ix^{\gamma_i}$ for some (not necessarily integers) $\gamma_0 \geq \gamma_1 \geq \dots, s.t. \gamma_0 \geq 1$ and $c_0 > 0$. 
If $\mathcal{D}$ has bounded support we only require that either it has an atom at its supremum or that $\rho$ is continuous and non-zero at the supremum.
\end{definition}

\noindent
W.l.o.g., we assume $\mathbb{E}[\mathcal{D}]=1$.\footnote{Otherwise, we can scale $\mathcal{D}$ to be $\mathcal{D}'= \mathcal{D}/\mathbb{E}[\mathcal{D}]=1$.}
As mentioned in~\cite{singer2018optimization}, the class of generalized exponential tail distributions contains Gaussian distributions, exponential distributions, and all distributions with bounded support that is independent of $n$. 

\paragraph{The model.} We are ready to propose the main problem, which has already been considered in~\cite{hassidim2017submodular,singer2018optimization}.
\begin{problem}[\bf{Constrained submodular optimization under noise}]
\label{problem:noisy_submodular}
Given a noisy value oracle $\Tilde{f}: 2^N\to \mathbb{R}_{\geq 0}$ (with a certain generalized exponential tail distribution $\mathcal{D}$) to an underlying monotone submodular function $f$, and a cardinality constraint $\mathcal{I}=\mathcal{I}(r)$ or a matroid constraint $\mathcal{I}=\mathcal{I}(\mathcal{M})$, the goal is to find $S\subseteq \mathcal{I}$ to maximize $f(S)$. 
\end{problem}

\section{Local search with approximate evaluation oracles to auxiliary functions}
\label{sec:meta_algorithm}

This section proposes a non-oblivious local search framework with a noisy value oracle (Algorithm~\ref{alg:local_search}), and gives an analysis (Theorem~\ref{the:main_nls}) of its performance.
This algorithm is a generalization of that in~\cite{filmus2014monotone} and will be used as a meta-algorithm for solving Problem~\ref{problem:noisy_submodular} in Sections~\ref{sec:cardinality} and~\ref{sec:matroid}.
Roughly speaking, a local search framework first constructs a "good" initial solution via a standard greedy algorithm and then iteratively improves the solution w.r.t.\ to an auxiliary function (Definition~\ref{def:auxiliary}) by swapping an element at each step.
The following auxiliary function proposed by \cite{filmus2014monotone} is a linear combination of the original function over all subsets. 
%
\begin{definition}[\bf{Auxiliary function~\cite{filmus2014monotone}}]
\label{def:auxiliary}
Given a monotone submodular function $h:2^N\to \mathbb{R}_{\geq 0}$, its auxiliary function $\varphi_h:2^N\to \mathbb{R}_{\geq 0}$ is defined as
$\varphi_h(S) = \sum_{T\subseteq S} m_{|S|-1,|T|-1} \cdot h(T)$,
where
$m_{s, t} = 
     \int_0^1 e^p(e-1)^{-1} p^t(1-p)^{s-t}dp$
for all $s\geq t\geq 0$. 
\end{definition}
\noindent
We use $h$ instead of $f$ because we may do local search on certain smoothing surrogate function $h$ of $f$ (see examples in Definitions~\ref{def:h_1} and~\ref{def:h_2}). 
Note that $\varphi_h$ is also a monotone submodular function.
Given an exact value oracle of $\varphi_h$ and a matroid constraint $\mathcal{I}(\mathcal{M})$, \cite{filmus2014monotone} show that there exists a local search algorithm that outputs a $(1-1/e)$-approximate solution for the optimization problem $\max_{S\subseteq \mathcal{I}(\mathcal{M})}h(S)$.
However, as mentioned before, we do not have an exact value oracle of $h$. 
Thus, it may not be possible to construct an exact value oracle of $\varphi_h$.
To understand how these inaccuracies can affect the performance of the local search algorithm, we introduce the following approximation for the auxiliary function $\varphi_h$.
\begin{definition}[\bf{$(\alpha,\delta,\mathcal{I}(\mathcal{M}))$-approximation of $\varphi_h$}]
\label{def:approximation}

Given an auxiliary function $\varphi_h$, a matroid constraint $\mathcal{I}(\mathcal{M})$ and constants $\alpha, \delta > 0$, we say a randomized function $\widehat{\varphi_h}$ is an $(\alpha,\delta,\mathcal{I}(\mathcal{M}))$-approximation of $\varphi_h$ if for any independent set $A\in \mathcal{I}(\mathcal{M})$, 
\begin{equation*}
    \mathbb{P}\left[\left|\widehat{\varphi_h}(A) - \varphi_h(A)\right| > \alpha\cdot \max_{S\in \mathcal{I}(\mathcal{M})} \varphi_h(S)\right]\leq \delta.
\end{equation*}
\end{definition}

\noindent
Intuitively, $\widehat{\varphi_h}$ is a randomized version of $\varphi_h$ with an additive concentration guarantee, where the randomness may come from noise of $h$ or sampling error for estimating $\varphi_h$.
As $\alpha, \delta$ tend to 0, $\widehat{\varphi_h}$ is a better approximation of $\varphi_h$.
Specifically, when $\alpha = \delta = 0$ and $\mathcal{I}=2^N$, $\widehat{\varphi_h}$ is exactly equivalent to $\varphi_h$.
Now we are ready to provide our local search framework (Algorithm~\ref{alg:local_search}), which is a generalization of \cite[Algorithm 2]{filmus2014monotone}. 
The main difference is that we use an approximation $\widehat{\varphi_h}$ instead of $\varphi_h$ for evaluation at each greedy step (Line 4) and local search step (Line 12).
The performance of Algorithm~\ref{alg:local_search} is summarized in Theorem~\ref{the:main_nls}.
\begin{algorithm}[htp]
 \caption{Noisy local search (\texttt{NLS}($\widehat{\varphi_h},\mathcal{I}(\mathcal{M}),\Delta$))}
  \label{alg:local_search}
\DontPrintSemicolon
\SetNoFillComment
  \SetKwInOut{Input}{Input}
  \Input{ A matroid constraint $\mathcal{I}(\mathcal{M})$ of rank $r\geq 1$, a value oracle to an $(\alpha,\delta,\mathcal{I}(\mathcal{M}))$-approximation $\widehat{\varphi_h}$ of $\varphi_h$,
		a stepsize $\Delta\in (0,1/2)$.}
    Set $I \leftarrow \left\lceil\log_{1+\Delta}\left(\frac{2(1+\alpha)}{1-2(r+1)\alpha}\right) \right\rceil$\;
    Initialize: $U_0 \leftarrow \varnothing$, $i = 1$\;
    \While{$i\leq r$}{
     $u_i \leftarrow \mathop{\arg\max}_{e: U_{i-1}+e\in \mathcal{I}(\mathcal{M})}\widehat{\varphi_h}(U_{i-1}+e)$\;
    $U_{i}\leftarrow U_{i-1}+ u_i$\;
     $i\leftarrow i+1$\;
    }
    $S_0 \leftarrow U_{r}$ \Comment*[r]{Initial solution by greedy}
    \For(\Comment*[f]{Local search for  $I$ iterations}){$i =0$ to $I-1$}{
        \For{each element $x \in S_i$ and $y \in N \setminus S_i$}{
            $S'_i \leftarrow S_i-x+y$\;
            \If{$S_i'\in \mathcal{I}(\mathcal{M})$}{
                \If(\Comment*[f]{An improved solution $S_i'$ was found}){$\widehat{\varphi_h}(S'_i)\geq (1+\Delta)\cdot \widehat{\varphi_h}(S_i)$}{
                $S_{i+1}\leftarrow S'_i$\; 
                break and continue to the next iteration of $i$ 
                }
            }
        }
    }
\Return $S_I$
\end{algorithm}
\begin{theorem}[\textbf{Performance of noisy local search}]
\label{the:main_nls}
Let $I = \log_{1+\Delta}\left(\frac{2(1+\alpha)}{1-2(r+1)\alpha}\right)$.
With probability at least $1-(I+1)rn\delta$, the output $S_I$ of Algorithm \ref{alg:local_search} is a $\left(1-\frac{1}{e}\right)\left(1-r\ln (er)\left((2+\Delta)\alpha+\Delta\right)\right) $-approximate solution for problem $\max_{S\subseteq \mathcal{I}(\mathcal{M})}h(S)$,  with at most  $\left(I+1\right)rn$ calls to $\widehat{\varphi_h}$.
\end{theorem}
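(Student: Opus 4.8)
The plan is to mirror the analysis of the exact non-oblivious local search of Filmus and Ward, tracking the error terms introduced by replacing $\varphi_h$ with its $(\alpha,\delta,\mathcal{I}(\mathcal{M}))$-approximation $\widehat{\varphi_h}$. First I would set up the clean events: there are at most $(I+1)rn$ calls to $\widehat{\varphi_h}$ over the whole run (the greedy phase makes at most $rn$ calls, and each of the $I$ local-search iterations scans at most $rn$ swap pairs), so by a union bound over Definition~\ref{def:approximation}, with probability at least $1-(I+1)rn\delta$ every query satisfies $|\widehat{\varphi_h}(A)-\varphi_h(A)|\le\alpha\cdot M$ where $M:=\max_{S\in\mathcal{I}(\mathcal{M})}\varphi_h(S)$. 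Condition on this event for the rest of the argument; everything below is then deterministic.

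Next I would analyze the greedy initialization. Since each greedy step picks $u_i$ maximizing $\widehat{\varphi_h}$, and $\widehat{\varphi_h}$ is within $\alpha M$ of $\varphi_h$, the set $S_0=U_r$ satisfies $\varphi_h(S_0)\ge \varphi_h(S^\ast_\varphi) - 2r\alpha M \ge (1-2r\alpha)M$ up to the usual $(1-1/e)$ loss — actually the point of the greedy warm-start in Filmus–Ward is just to guarantee $\varphi_h(S_0)$ is a constant-factor approximation of $M$, so that the number $I$ of multiplicative-$(1+\Delta)$ improvement steps needed is bounded; I would verify that the specific choice $I=\log_{1+\Delta}\!\big(\tfrac{2(1+\alpha)}{1-2(r+1)\alpha}\big)$ is exactly what is needed so that $\widehat{\varphi_h}$ cannot keep improving by a factor $1+\Delta$ beyond $M(1+\alpha)$, i.e. after $I$ iterations no improving swap remains. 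So the returned $S_I$ is a genuine \emph{approximate local optimum}: for every valid swap $S_I-x+y\in\mathcal{I}(\mathcal{M})$, $\widehat{\varphi_h}(S_I-x+y)<(1+\Delta)\widehat{\varphi_h}(S_I)$, which by the concentration event converts to $\varphi_h(S_I-x+y)<(1+\Delta)\varphi_h(S_I)+(2+\Delta)\alpha M$.

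Then I would invoke the core combinatorial identity behind non-oblivious local search: for the auxiliary function $\varphi_h$ of Definition~\ref{def:auxiliary}, summing the local-optimality inequalities over a suitable system of swaps exchanging $S_I$ with an optimal independent set $O$ (using the matroid exchange property to build a perfect fractional matching between $S_I\setminus O$ and $O\setminus S_I$) yields $h(S_I)\ge(1-1/e)h(O) - (\text{error})$, where in the exact case the error is zero. Here each of the (at most $r$) swap inequalities contributes an extra additive slack of roughly $(2+\Delta)\alpha M + \Delta\,\varphi_h(S_I)$, and since $\varphi_h(S_I)\le M$ and, by the Filmus–Ward bounds on the coefficients $m_{s,t}$, $M=\max_S\varphi_h(S)\le \ln(er)\cdot h(O^\ast)\le \ln(er)\,\mathrm{OPT}$ relative to $h$'s optimum, the total accumulated error is at most $r\ln(er)\big((2+\Delta)\alpha+\Delta\big)\cdot\mathrm{OPT}$. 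Collecting terms gives $h(S_I)\ge\big(1-\tfrac1e\big)\big(1-r\ln(er)((2+\Delta)\alpha+\Delta)\big)\mathrm{OPT}$, as claimed, and the query bound $(I+1)rn$ was already established.

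The main obstacle I expect is the bookkeeping in the last step: getting the error from the swap inequalities to aggregate into precisely the factor $r\ln(er)((2+\Delta)\alpha+\Delta)$ rather than something looser. This requires (i) the exact statement and proof of the Filmus–Ward telescoping identity for $\varphi_h$ — i.e. that $\sum$ over the matching of $[\varphi_h(S-x+y)-\varphi_h(S)]$ lower-bounds $(1-1/e)h(O)-h(S)$ — adapted to carry additive perturbations, and (ii) the sharp comparison $\max_S\varphi_h(S)\le\ln(er)\cdot\max_S h(S)$ coming from $\sum_{t}\binom{s}{t}m_{s,t}=\int_0^1\frac{e^p}{e-1}dp\cdot(\text{normalization})$ and monotonicity of $h$. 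Both are essentially in \cite{filmus2014monotone}; the new content is only that the inequalities are now approximate, and one must be careful that $\alpha$ is multiplied against $\max_{S\in\mathcal{I}(\mathcal{M})}\varphi_h(S)$ (as in Definition~\ref{def:approximation}) and not against $\varphi_h$ evaluated at arbitrary sets, so the error never blows up.
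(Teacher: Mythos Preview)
Your proposal is correct and follows essentially the same route as the paper: union-bound to the event $\mathcal{E}_g$ that all $(I+1)rn$ evaluations of $\widehat{\varphi_h}$ are $\alpha M$-accurate; show the greedy warm-start gives $\varphi_h(S_0)\ge\tfrac{1-2r\alpha}{2}M$ (via Brualdi's lemma) so that $I$ multiplicative-$(1+\Delta)$ improvements suffice; convert the approximate local optimality of $S_I$ in $\widehat{\varphi_h}$ to $\varphi_h(S_I-s_i+o_i)-\varphi_h(S_I)\le \Delta\varphi_h(S_I)+(2+\Delta)\alpha M$; sum over the $r$ swaps from Brualdi, apply the Filmus--Ward inequality $\sum_i[\varphi_h(S_I-s_i+o_i)-\varphi_h(S_I)]\ge h(O)-\tfrac{e}{e-1}h(S_I)$, and bound $\varphi_h(S_I),\,M\le \tfrac{e}{e-1}H_r\cdot h(O)$ to collect the error into $rH_r((2+\Delta)\alpha+\Delta)h(O)$. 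The only cosmetic difference is that the paper packages the greedy bound and the iteration count as separate lemmas (Lemmas~\ref{lmm:init_solution} and~\ref{clm:App_A}) before invoking \cite[Theorem~1, Lemma~4]{filmus2014monotone}, exactly as you anticipated in your ``main obstacle'' paragraph.
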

\noindent
Roughly, given a nearly accurate approximation $\widehat{\varphi_h}$, we can show that the initial solution $S_0$ is likely to be a constant approximation for $\max_{S\in \mathcal{I}(\mathcal{M})}\varphi_h(S)$ (Lemma \ref{lmm:init_solution}). 
This guarantees that we will reach a local optimal solution w.r.t. $\varphi_h$ in $I$ iterations with high probability (Claim~\ref{clm:App_A}).
Then combining with basic properties of $\varphi_h$ shown in~\cite{filmus2014monotone} and the fact that $\widehat{\varphi_h}$ is "close" to $\varphi_h$, we obtain Theorem~\ref{the:main_nls}.

\begin{proof}
We first analyze the query complexity to $\widehat{\varphi_h}$ and then prove the approximate ratio together with the success probability.

\paragraph{Query complexity of Algorithm \ref{alg:local_search}.}
For any $i\in [r]$, define $\mathcal{U}_i := \{U_{i-1}+e\in \mathcal{I}(\mathcal{M}): e\in N\backslash U_{i-1}\}$ to be the collection of subsets considered in Line 4. For any $i = 0,1\cdots ,I-1$, define $\mathcal{U}_i' := \{S_i'\in \mathcal{I}(\mathcal{M})\}$ to be the collection of subsets considered in Line 10 in interation $i$. 
Let $\mathcal{U} = \left(\mathop{\bigcup}\limits_{i\in [r]}\mathcal{U}_i\right) \cup \left(\mathop{\bigcup}\limits_{0\leq i\leq I-1}\mathcal{U}_i'\right)$. Since each $\left|\mathcal{U}_i\right|\leq n$ and each $\left|\mathcal{U}_i'\right|\leq rn$, we have 
\[\left|\mathcal{U}\right|\leq (I+1)rn.\]
Thus the total query complexity is less than $(I+1)rn$.

\paragraph{Approximate ratio and success probability of Algorithm \ref{alg:local_search}.}
We say that event $\mathcal{E}_g$ happens if for all sets $A\in \mathcal{U}$, the following inequality holds:
\[ \mathbb{P}\left[\left|\widehat{\varphi_h}(A) - \varphi_h(A)\right| > \alpha \max_{S\in \mathcal{I}(\mathcal{M})} \varphi_h(S)\right]\leq \delta.\]

By the above argument, we know that the total number of calls to $\widehat{\varphi_h}$ is at most $(I + 1)rn$. Thus \[\mathbb{P}[\mathcal{E}_g]\geq 1-(I+1)rn\delta.\]
In the remaining proof, we assume $\mathcal{E}_g$ holds. Firstly we prove the following lemma which analyzes the performance of the greedy initial solution $S_0$ in Line 7 of Algorithm~\ref{alg:local_search}.
\begin{lemma}[\bf{Approximation ratio of $S_0$}]
\label{lmm:init_solution}
Conditioned on $\mathcal{E}_g$, we have
\[
\varphi_h(S_0)\geq \frac{1-2r\alpha}{2}\max_{S\in \mathcal{I}(\mathcal{M})}\varphi_h(S).
\]
\end{lemma}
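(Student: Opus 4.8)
# Proof Proposal for Lemma~\ref{lmm:init_solution}

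The plan is to mimic the standard analysis of the greedy algorithm for maximizing a monotone submodular function subject to a matroid constraint, but carefully tracking the error introduced by using $\widehat{\varphi_h}$ instead of $\varphi_h$. Since $\varphi_h$ is monotone submodular (as noted after Definition~\ref{def:auxiliary}), the clean fact we want to leverage is that running exact greedy on a monotone submodular function under a matroid constraint yields a $\tfrac{1}{2}$-approximation. Our greedy picks $u_i$ to maximize $\widehat{\varphi_h}(U_{i-1}+e)$, not $\varphi_h(U_{i-1}+e)$, so at each of the $r$ steps we may lose an additive $\alpha\cdot M$ where $M := \max_{S\in\mathcal I(\mathcal M)}\varphi_h(S)$; conditioning on $\mathcal E_g$ lets us treat all these estimates as simultaneously accurate within $\alpha M$.

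The key steps, in order: (1) Let $O\in\mathcal I(\mathcal M)$ be an optimal independent set for $\varphi_h$, so $\varphi_h(O)=M$. Since $|S_0|=r$ equals the rank and $S_0$ is independent, by the matroid exchange property there is a bijection between $O\setminus S_0$ and $S_0\setminus O$ (extend $O$ to a base if needed); more cleanly, use the standard fact that for bases $S_0$ and $O'\supseteq O$ there is a bijection $\pi: O'\to S_0$ with $S_0 - \pi(o) + o \in \mathcal I(\mathcal M)$ for each $o$. (2) For each element $o$ considered at greedy step $i$ (when $o$ was a feasible candidate, i.e. $U_{i-1}+o\in\mathcal I(\mathcal M)$), the choice of $u_i$ gives $\widehat{\varphi_h}(U_{i-1}+u_i)\ge \widehat{\varphi_h}(U_{i-1}+o)$, hence by $\mathcal E_g$, $\varphi_h(U_i) \ge \varphi_h(U_{i-1}+o) - 2\alpha M$. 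Summing the marginal gains telescopes: $\varphi_h(S_0) = \sum_{i=1}^r \big(\varphi_h(U_i)-\varphi_h(U_{i-1})\big)$. (3) Combine with submodularity of $\varphi_h$: $\varphi_h(O) - \varphi_h(S_0) \le \sum_{o\in O\setminus S_0} \big(\varphi_h(S_0+o)-\varphi_h(S_0)\big) \le \sum_{o}\big(\varphi_h(U_{i(o)})-\varphi_h(U_{i(o)-1})\big) + 2\alpha r M$, where $i(o)$ is the step at which the partner of $o$ was selected and we used that the marginal of $o$ over $U_{i(o)-1}$ is at least its marginal over the larger set $S_0$. Since the greedy marginals over all $r$ steps sum to $\varphi_h(S_0)$, this yields $M = \varphi_h(O) \le 2\varphi_h(S_0) + 2\alpha r M$, i.e. $\varphi_h(S_0) \ge \tfrac{1-2r\alpha}{2}M$, which is exactly the claimed bound.

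The main obstacle I expect is step (2)–(3): making the matroid exchange argument align with the greedy \emph{order} so that each optimal element $o$ is charged against the greedy marginal at the correct step, while ensuring $o$ was actually a feasible candidate at that step (so that the $\widehat{\varphi_h}$-maximality inequality applies). The cleanest route is the well-known lemma (used in the exact-greedy matroid analysis) that there is an ordering $o_1,\dots,o_r$ of an optimal base and a matching so that $U_{i-1}+o_i \in \mathcal I(\mathcal M)$ for every $i$; I would invoke that and then the error bookkeeping is routine — each of the $r$ comparisons costs $2\alpha M$ (an $\alpha M$ slack on each of $\widehat{\varphi_h}(U_{i-1}+u_i)$ and $\widehat{\varphi_h}(U_{i-1}+o_i)$), for a total additive loss of $2\alpha r M$, and submodularity handles the "marginal over $U_{i-1}$ dominates marginal over $S_0$" inequality. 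A minor point to handle with care is the case $O$ is not a base: since $S_0$ has size exactly the rank $r$, we may extend $O$ to a base without decreasing $\varphi_h$ by monotonicity, so WLOG $|O|=r$.
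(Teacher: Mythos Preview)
Your proposal is correct and mirrors the paper's proof essentially step for step: the paper invokes Brualdi's lemma to index the optimal base $O^\star=\{o_1,\dots,o_r\}$ so that $U_{i-1}+o_i\in\mathcal I(\mathcal M)$, uses the greedy choice and event $\mathcal E_g$ to get $\varphi_h(U_{i-1}+u_i)\ge \varphi_h(U_{i-1}+o_i)-2\alpha\varphi_h(O^\star)$, sums the marginal gains, applies submodularity, and obtains $2\varphi_h(S_0)\ge(1-2r\alpha)\varphi_h(O^\star)$. Your handling of the case $|O|<r$ by monotone extension to a base is a small addition the paper leaves implicit.
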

\begin{proof}

We first introduce the following well-known fact.
\begin{fact}[\textbf{Brualdi’s lemma}]
\label{fact:Brualdi}
Suppose $A$, $B$ are two bases in a matroid. There is a bijection $\pi : A \to B$ such that for all $a \in  A$, $A - a + \pi(a)$ is a base. Furthermore, $\pi$ is the identity on $A \cap B$.
\end{fact}
\noindent
Let $O^\star\in \mathop{\arg\max}\limits_{S\in\mathcal{I}(\mathcal{M})}\varphi_h(S)$ be a set on which $\varphi_h$ attains its maximum value.
For $O^\star$ and $U_r$,
we index the elements of $O^\star$ as $\{o_1, \dots, o_{r}\}$ according to the Brualdi's lemma such that $o_i = \pi(u_i)$ with
$U_{i-1}+o_i\in \mathcal{I}(\mathcal{M})$.
From how we choose $u_i$ in Line 4 of Algorithm~\ref{alg:local_search}, we have
\begin{equation}
\label{eq:optimal_greedy}
    \widehat{\varphi_h}(U_{i-1}+u_i) \geq \widehat{\varphi_h}(U_{i-1}+o_i)
\end{equation}
for all $i \in [r]$.
To utilize monotonicity and submodularity of $\varphi_h$, we translate the inequality above into that of $\varphi_h$:
\begin{align*}
    \varphi_h(U_{i-1}+u_i)+\alpha\varphi_h(O^\star) &~\geq~  \widehat{\varphi_h}(U_{i-1}+u_i) && \text{(event $\mathcal{E}_g$)}\\
    &~\geq~ \widehat{\varphi_h}(U_{i-1}+o_i) && \text{(Ineq.~\eqref{eq:optimal_greedy})}\\ &~\geq~ \varphi_h(U_{i-1}+o_i)-\alpha\varphi_h(O^\star) && \text{(event $\mathcal{E}_g$)}.
\end{align*}
That is,
$$\varphi_h(U_{i-1}+u_i) \geq \varphi_h(U_{i-1}+o_i)- 2\alpha\cdot\varphi_h(O^\star).$$
Subtracting $\varphi_h(U_{i-1})$ from each side gives
$$\varphi_h(U_{i-1} + u_i) - \varphi_h(U_{i-1}) \geq \left[\varphi_h(U_{i-1} + o_i) - \varphi_h(U_{i-1})\right] - 2\alpha\cdot\varphi_h(O^\star)$$
for all $i\in [r]$.
Summing these $r$ inequalities, we obtain
\begin{equation}
\label{eq:greedy_1}
    \varphi_h(S_0) \geq \sum_{i=1}^{r-1} [\varphi_h(U_{i-1}+o_i) - \varphi_h(U_{i-1})] - 2r\alpha\cdot\varphi_h(O^\star).
\end{equation}
From submodularity of $\varphi_h$, we have
\begin{equation}
\label{eq:greedy_2}
    \sum_{i=1}^{r-1} [\varphi_h(U_{i-1}+o_i) - \varphi_h(U_{i-1})] \geq \sum_{i=1}^{r-1} [\varphi_h(S_0+o_i) - \varphi_h(S_0)] \geq \varphi_h(S_0\cup O^\star)-\varphi_h(S_0).
\end{equation}
Combining~\eqref{eq:greedy_1} and~\eqref{eq:greedy_2} gives
$$2\varphi_h(S_0) \geq
\varphi_h(S_0\cup O^\star) - 2r\alpha\cdot\varphi_h(O^\star)\geq (1-2r\alpha)\varphi_h(O^\star).$$
This completes the proof.
\end{proof}

\noindent
\begin{lemma}[\textbf{Maximum iterations of the local search procedure in Algorithm~\ref{alg:local_search}}]
\label{clm:App_A}
Conditioned on event $\mathcal{E}_g$, the local search procedure in Algorithm~\ref{alg:local_search} terminates within $I = \log_{1+\Delta}\left(\frac{2(1+\alpha)}{1-2(r+1)\alpha}\right)$ iterations.
\end{lemma}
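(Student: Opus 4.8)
The plan is to bound the number of successful local search iterations by tracking how much $\widehat{\varphi_h}(S_i)$ can grow. Each successful swap in Line 12 multiplies $\widehat{\varphi_h}(S_i)$ by at least $(1+\Delta)$, so after $k$ iterations we have $\widehat{\varphi_h}(S_k) \geq (1+\Delta)^k \widehat{\varphi_h}(S_0)$. The strategy is therefore to (a) lower-bound $\widehat{\varphi_h}(S_0)$ and (b) upper-bound $\widehat{\varphi_h}(S_i)$ for every set encountered, both in terms of $\max_{S\in\mathcal{I}(\mathcal{M})}\varphi_h(S)$, and then combine these to cap $k$.

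First I would invoke Lemma~\ref{lmm:init_solution}, which under $\mathcal{E}_g$ gives $\varphi_h(S_0) \geq \frac{1-2r\alpha}{2}\max_{S\in\mathcal{I}(\mathcal{M})}\varphi_h(S)$. Since $S_0 \in \mathcal{I}(\mathcal{M})$ and $\mathcal{E}_g$ holds, $\widehat{\varphi_h}(S_0) \geq \varphi_h(S_0) - \alpha\varphi_h(O^\star) \geq \frac{1-2(r+1)\alpha}{2}\varphi_h(O^\star)$, where $O^\star$ is the maximizer of $\varphi_h$ over $\mathcal{I}(\mathcal{M})$. On the other side, every $S_i'$ considered by the local search lies in $\mathcal{I}(\mathcal{M})$ and is in $\mathcal{U}$, so under $\mathcal{E}_g$ we get $\widehat{\varphi_h}(S_i') \leq \varphi_h(S_i') + \alpha\varphi_h(O^\star) \leq (1+\alpha)\varphi_h(O^\star)$, using $\varphi_h(S_i')\leq\varphi_h(O^\star)$ by optimality. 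Hence for any successful iteration count $k$, $(1+\Delta)^k \cdot \frac{1-2(r+1)\alpha}{2}\varphi_h(O^\star) \leq \widehat{\varphi_h}(S_k) \leq (1+\alpha)\varphi_h(O^\star)$, which rearranges to $(1+\Delta)^k \leq \frac{2(1+\alpha)}{1-2(r+1)\alpha}$, i.e. $k \leq \log_{1+\Delta}\!\left(\frac{2(1+\alpha)}{1-2(r+1)\alpha}\right) = I$. Since each iteration of the outer \textbf{for} loop that performs a swap strictly increases the count, the procedure terminates within $I$ iterations.

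One technical point I would be careful about: the bound implicitly requires $\varphi_h(O^\star) > 0$ (otherwise $\varphi_h \equiv 0$ on $\mathcal{I}(\mathcal{M})$ and no swap is ever accepted, so the claim holds trivially) and $1 - 2(r+1)\alpha > 0$, which is exactly the regime in which the quantity $I$ defined in the algorithm and theorem is well-defined and positive; I would state this as a standing assumption consistent with the rest of the paper. I would also note that $\widehat{\varphi_h}$ evaluated at the same set may return different values across the algorithm because it is randomized, but the definition of $\mathcal{E}_g$ controls every one of the at most $(I+1)rn$ calls simultaneously, so the chained inequality $\widehat{\varphi_h}(S_{j+1}) \geq (1+\Delta)\widehat{\varphi_h}(S_j)$ from Line 12 and the two-sided bounds all refer to those same controlled evaluations.

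The main obstacle is really just the bookkeeping of which evaluation of $\widehat{\varphi_h}$ is being referenced and ensuring the two-sided sandwich uses $\varphi_h(O^\star)$ consistently on both ends; there is no deep difficulty, since Lemma~\ref{lmm:init_solution} already does the substantive work of controlling the starting point. The only genuinely quantitative step is the geometric-growth argument, which is elementary. I expect the proof to be short.
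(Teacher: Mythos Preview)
Your proposal is correct and follows essentially the same approach as the paper's proof: lower-bound $\widehat{\varphi_h}(S_0)$ via Lemma~\ref{lmm:init_solution}, upper-bound every encountered $\widehat{\varphi_h}(S_i')$, and use the geometric growth factor $(1+\Delta)$ per accepted swap to cap the iteration count. The only cosmetic difference is that the paper introduces $O' \in \arg\max_{S\in\mathcal{U}}\widehat{\varphi_h}(S)$ and shows $\widehat{\varphi_h}(S_0)\geq \frac{1-2(r+1)\alpha}{2(1+\alpha)}\widehat{\varphi_h}(O')$, whereas you bound directly by $(1+\alpha)\varphi_h(O^\star)$; these are equivalent, and your version is arguably a bit more direct.
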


\begin{proof}
Let $O^\star\in \mathop{\arg\max}\limits_{S\in\mathcal{I}(\mathcal{M})}\varphi_h(S)$ be a set on which $\varphi_h$ attains its maximum value and $O'\in \mathop{\arg\max}\limits_{S\in\mathcal{U}}\widehat{\varphi_h}(S)$ be the set with largest value of $\widehat{\varphi_h}$.
By Lemma \ref{lmm:init_solution}, we have
\begin{align*}
\widehat{\varphi_h}(S_0)
~\geq~ &\varphi_h(S_0)- \alpha \varphi_h(O^\star)\\
\geq~&\frac{1-2(r+1)\alpha}{2} \varphi_h(O^*)\tag*{(Lemma \ref{lmm:init_solution})}\\
\geq~ &\frac{1-2(r+1)\alpha}{2(1+\alpha)} \varphi_h(O')
+ \frac{(1-2(r+1)\alpha)\alpha}{2(1+\alpha)} \varphi_h(O^*)
\tag*{(optimality of $O^*$)}\\
\geq~& \frac{1-2(r+1)\alpha}{2(1+\alpha)} \left(\widehat{\varphi_h}(O') - \alpha \varphi_h(O^*)\right) + \frac{(1-2(r+1)\alpha)\alpha}{2(1+\alpha)} \varphi_h(O^*) \\
=~&  \frac{1-2(r+1)\alpha}{2(1+\alpha)} \widehat{\varphi_h}(O').
\end{align*}
Since every local search step increases the function value $\widehat{\varphi_h}$ by $1+\Delta$, Algorithm~\ref{alg:local_search}  searches for at most $I =  \log_{1+\Delta}\left(\frac{2(1+\alpha)}{1-2(r+1)\alpha}\right)$ iterations.
\end{proof}
\noindent
Conditioned on $\mathcal{E}_g$, we finally obtain the approximation ratio of $S_I$ by the locally optimality of $S_I$ by Theorem 1 and Lemma 4 in \cite{filmus2014monotone}.
\begin{lemma}[\textbf{\cite[Theorem 1]{filmus2014monotone}}] 
\label{lmm:local_sub}
Let $A = \{a_1,\cdots,a_r\}$ and $B = \{b_1,\cdots,b_r\}$ be any two bases of $\mathcal{M}$. Further suppose that we index the elements of $B$ so that $b_i = \pi(a_i)$, where $\pi : A \to B$ is the bijection guaranteed by Fact~\ref{fact:Brualdi} (Brualdi’s lemma).
Then,
\[ \sum\limits_{i=1}^r\left[\varphi_h(A-a_i+b_i)-\varphi_h(A)\right]\geq h(B)-\frac{e}{e-1}h(A).\]
\end{lemma}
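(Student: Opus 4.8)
\textit{Plan.} I would reprove this statement (which is Theorem~1 of~\cite{filmus2014monotone}) directly, via a probabilistic representation of $\varphi_h$ and a single integration by parts. Throughout I assume $h(\varnothing)=0$, which is implicit in Definition~\ref{def:auxiliary} (otherwise the $T=\varnothing$ term in $\varphi_h$ is ill-defined). For $S\subseteq N$ let $S_p$ be the random subset keeping each element of $S$ independently with probability $p$, and put $g_p(S):=\mathbb{E}[h(S_p)]=\sum_{T\subseteq S}p^{|T|}(1-p)^{|S|-|T|}h(T)$, i.e.\ the multilinear extension of $h$ evaluated at $p\mathbf 1_S$. Swapping the sum and integral in Definition~\ref{def:auxiliary} and using $h(\varnothing)=0$ gives $\varphi_h(S)=\int_0^1\frac{e^p}{p(e-1)}g_p(S)\,dp$, which converges since $g_p(S)=O(p)$ near $0$. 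The computational engine is the swap formula: since $g_p$ is multilinear in each coordinate, $g_p(S+x)-g_p(S)=p\cdot\mathbb{E}_{R\sim S_p}[h(R+x)-h(R)]$, and the factor $p$ cancels the $1/p$, so
\[
\varphi_h(S+x)-\varphi_h(S)=\int_0^1\frac{e^p}{e-1}\,\mathbb{E}_{R\sim S_p}\!\big[h(R+x)-h(R)\big]\,dp .
\]
Applying this twice (to $A-a_i+b_i$ and to $A$, both obtained from $A-a_i$ by adding one element) and summing over $i$ yields
\[
\sum_{i=1}^r\big[\varphi_h(A-a_i+b_i)-\varphi_h(A)\big]=\int_0^1\frac{e^p}{e-1}\,\Psi(p)\,dp,\qquad \Psi(p):=\sum_{i=1}^r\mathbb{E}_{R_i}\!\big[h(R_i+b_i)-h(R_i+a_i)\big],
\]
where each $R_i$ keeps each element of $A-a_i$ independently with probability $p$.

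\textit{Pointwise bound (the core).} Next I would prove $\Psi(p)\ge h(B)-g_p(A)-g_p'(A)$ for every $p\in(0,1)$, where $g_p'(A)$ is the derivative of $p\mapsto g_p(A)$. Write $\Psi(p)=\beta(p)-\alpha(p)$ with $\beta(p):=\sum_i\mathbb{E}[h(R_i+b_i)-h(R_i)]$ and $\alpha(p):=\sum_i\mathbb{E}[h(R_i+a_i)-h(R_i)]$. The $\alpha$-term is exact: $\mathbb{E}[h(R_i+a_i)-h(R_i)]$ is the $a_i$-th partial derivative of the multilinear extension at $p\mathbf 1_A$ (it ignores coordinate $a_i$), so $\alpha(p)=\sum_{a\in A}\partial_aF_h(p\mathbf 1_A)=\tfrac{d}{dp}F_h(p\mathbf 1_A)=g_p'(A)$. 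For $\beta(p)$ I would couple all the $R_i$ to one master set $A_p$ (keeping each element of $A$ independently with probability $p$) by $R_i:=A_p\setminus\{a_i\}$, which preserves each marginal law. For a fixed realization of $A_p$, since $A_p\setminus\{a_i\}\subseteq A_p\cup B_{<i}$ with $B_{<i}:=\{b_1,\dots,b_{i-1}\}$, monotone submodularity of $h$ gives $h(R_i+b_i)-h(R_i)\ge h(A_p\cup B_{\le i})-h(A_p\cup B_{<i})$; these telescope over $i$ to $h(A_p\cup B)-h(A_p)$, and monotonicity ($h(A_p\cup B)\ge h(B)$) yields $\beta(p)\ge h(B)-\mathbb{E}[h(A_p)]=h(B)-g_p(A)$. (Indices with $b_i=a_i$, where Brualdi's bijection $\pi$ is the identity, contribute $0$ to $\Psi$ and are harmless.) Hence $\Psi(p)\ge h(B)-g_p(A)-g_p'(A)$.

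\textit{Integration.} Finally, since $\int_0^1\frac{e^p}{e-1}\,dp=1$, the $h(B)$ term contributes exactly $h(B)$. The whole point of the weight $e^p/(e-1)$ is that $e^p$ is the matching integrating factor: $e^p\big(g_p(A)+g_p'(A)\big)=\tfrac{d}{dp}\big(e^pg_p(A)\big)$, so
\[
\int_0^1\frac{e^p}{e-1}\big(g_p(A)+g_p'(A)\big)\,dp=\frac{1}{e-1}\Big[e^pg_p(A)\Big]_0^1=\frac{1}{e-1}\big(e\,h(A)-0\big)=\frac{e}{e-1}\,h(A),
\]
using $g_1(A)=h(A)$ and $g_0(A)=h(\varnothing)=0$. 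Combining the displays gives $\sum_{i=1}^r\big[\varphi_h(A-a_i+b_i)-\varphi_h(A)\big]\ge h(B)-\tfrac{e}{e-1}h(A)$, as claimed.

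\textit{Expected obstacle.} The delicate step is the pointwise lower bound on $\beta(p)$: the coupling $R_i=A_p\setminus\{a_i\}$ and the prefix-telescoping over $B_{<i}$ must be arranged so that the bound is \emph{exactly} $h(B)-g_p(A)$ with no slack, because any looseness in $p$ fails to integrate against $e^p/(e-1)$ to the precise constant $e/(e-1)$. In effect one has to reverse-engineer which telescoping the exponential weight was designed for; everything else (the integral representation, the cancellation of $1/p$, and the integration-by-parts) is routine once this is identified.
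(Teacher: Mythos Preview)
Your proof is correct. The paper itself does not prove this lemma; it merely cites it as \cite[Theorem~1]{filmus2014monotone} and uses it as a black box. What you have written is a clean reconstruction of the Filmus--Ward argument: the probabilistic representation $\varphi_h(S)=\int_0^1\frac{e^p}{p(e-1)}g_p(S)\,dp$, the cancellation of the $1/p$ factor in marginal differences, the coupling $R_i=A_p\setminus\{a_i\}$ with prefix-telescoping over $B$ to get $\beta(p)\ge h(B)-g_p(A)$, the identification $\alpha(p)=g_p'(A)$ via the chain rule on the multilinear extension, and the integration by parts using $e^p(g_p+g_p')=(e^pg_p)'$---these are precisely the ingredients of the original proof. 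One small point worth making explicit: your swap formula $g_p(S+x)-g_p(S)=p\,\mathbb{E}_{R\sim S_p}[h(R+x)-h(R)]$ requires $x\notin S$, and when you apply it with $S=A-a_i$ and $x=b_i$ you are implicitly using that Brualdi's bijection is the identity on $A\cap B$, so $b_i\neq a_i$ forces $b_i\notin A$. You do flag this at the end of the pointwise-bound paragraph, so there is no gap.
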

\begin{lemma}[\textbf{\cite[Lemma 4]{filmus2014monotone}}]
\label{lmm:bound_g}
For all $A\subseteq N$,
\[h(A) 
\leq \varphi_h(A) 
\leq  \frac{e}{e-1}H_{|A|}h(A),\]
where $H_{|A|} = \sum\limits_{i=1}^{|A|}\frac{1}{i}.$
\end{lemma}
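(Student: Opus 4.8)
\textbf{Proof plan for Lemma~\ref{lmm:bound_g}.}

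The plan is to establish the two-sided bound $h(A)\le\varphi_h(A)\le\frac{e}{e-1}H_{|A|}h(A)$ directly from the definition $\varphi_h(S)=\sum_{T\subseteq S} m_{|S|-1,|T|-1}\, h(T)$, using only monotonicity of $h$ and elementary estimates on the coefficients $m_{s,t}=\int_0^1 \frac{e^p}{e-1}\,p^t(1-p)^{s-t}\,dp$. The key observation is that for a fixed ground size $s=|A|-1$, the coefficients $m_{s,t}$ for $0\le t\le s$ are, up to the common factor $\tfrac{e^p}{e-1}$ under the integral, exactly the weights of a Binomial$(s,p)$ distribution: $\sum_{t=0}^{s}\binom{s}{t} p^t(1-p)^{s-t}=1$. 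This identity, integrated against $\tfrac{e^p}{e-1}\,dp$, will be the backbone of both inequalities.

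For the \emph{lower bound} $h(A)\le\varphi_h(A)$, I would group the sum over subsets $T\subseteq A$ by cardinality: writing $|A|=a$, for each $j=|T|$ there are $\binom{a}{j}$ subsets of size $j$, and by monotonicity $h(T)\ge \max(0,\dots)$ — more usefully, one wants a telescoping / convexity argument. The cleanest route is to verify the combinatorial identity that $\sum_{T\subseteq A} m_{a-1,|T|-1}\cdot [\text{something}] $ collapses; concretely, since $\int_0^1 \tfrac{e^p}{e-1}\sum_{t=0}^{s}\binom{s}{t}p^t(1-p)^{s-t}\,dp=\int_0^1\tfrac{e^p}{e-1}\,dp=1$, the coefficients summed with binomial multiplicities over a fixed level give a "partition of unity". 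I would then use that $\varphi_h(A)-h(A)$ can be rewritten, via these identities plus the fact that $h$ restricted to subsets is monotone, as a nonnegative combination of marginals $h(T+x)-h(T)\ge 0$; this is essentially the computation already carried out in \cite{filmus2014monotone}, and I would either cite it or reproduce the short induction on $|A|$. Alternatively, and perhaps more transparently, one can note $\varphi_h(A)=\mathbb{E}_{p}\big[(e-1)^{-1}e^p\sum_{T\subseteq A}p^{|T|-1}(1-p)^{|A|-|T|}h(T)\big]$ and that the inner sum, being an expectation of $h$ over a random subset where each element of $A$ is kept independently with probability $p$ (times $1/p$), dominates $h(A)$ by monotonicity after an appropriate weighting — I would make this precise by checking the $p\to 1$ behavior and monotonicity in $p$.

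For the \emph{upper bound} $\varphi_h(A)\le\frac{e}{e-1}H_{a}h(A)$ where $a=|A|$, I would again pass to the level decomposition: $\varphi_h(A)=\sum_{j=1}^{a}\Big(\sum_{T\subseteq A,\,|T|=j} h(T)\Big) m_{a-1,j-1}\le h(A)\sum_{j=1}^{a}\binom{a}{j} m_{a-1,j-1}$ by monotonicity ($h(T)\le h(A)$). It then remains to bound the scalar $\sum_{j=1}^{a}\binom{a}{j} m_{a-1,j-1}=\int_0^1\tfrac{e^p}{e-1}\sum_{j=1}^{a}\binom{a}{j}p^{j-1}(1-p)^{a-j}\,dp$. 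The inner sum equals $\tfrac{1}{p}\big(1-(1-p)^a\big)$ by the binomial theorem (subtracting the $j=0$ term), so the whole quantity is $\tfrac{1}{e-1}\int_0^1 e^p\cdot\tfrac{1-(1-p)^a}{p}\,dp$. I would then show $\int_0^1 e^p\cdot\tfrac{1-(1-p)^a}{p}\,dp\le e\,H_a$: bound $e^p\le e$ on $[0,1]$, pulling out $e$, and then verify $\int_0^1 \tfrac{1-(1-p)^a}{p}\,dp=H_a$ exactly — this last identity follows by substituting $q=1-p$ and expanding $\tfrac{1-q^a}{1-q}=1+q+\dots+q^{a-1}$, whose integral over $[0,1]$ is $\sum_{k=0}^{a-1}\tfrac{1}{k+1}=H_a$. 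Chaining these gives $\varphi_h(A)\le \tfrac{e}{e-1}H_a\, h(A)$.

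The main obstacle is the lower bound $h(A)\le\varphi_h(A)$: the upper bound reduces to the clean exact integral identity $\int_0^1\frac{1-(1-p)^a}{p}dp=H_a$ plus the trivial estimate $e^p\le e$, but the lower bound genuinely needs either an induction on $|A|$ exploiting submodularity/monotonicity of $h$ or the probabilistic monotonicity-in-$p$ argument, and care is required because the weight $\tfrac{e^p}{e-1}$ is not itself a probability density in the right normalization for a one-line domination. I expect to handle it by the induction from \cite{filmus2014monotone}, checking the base case $|A|\le 1$ (where $\varphi_h(\varnothing)=0=h(\varnothing)$ and $\varphi_h(\{x\})=m_{0,0}h(\{x\})=h(\{x\})$ since $m_{0,0}=\int_0^1\tfrac{e^p}{e-1}dp=1$) and the inductive step via the recursion relating $m_{s,t}$ to $m_{s-1,t}$ and $m_{s-1,t-1}$.
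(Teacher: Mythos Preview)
The paper does not prove this lemma; it quotes it verbatim from \cite[Lemma~4]{filmus2014monotone} and uses it as a black box, so there is no in-paper argument to compare your proposal against.

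Evaluating your plan on its own merits: the upper bound is complete and correct. The level decomposition plus monotonicity $h(T)\le h(A)$ reduces the task to bounding the scalar $\sum_{j=1}^{a}\binom{a}{j}m_{a-1,j-1}$, and your computation $\int_0^1\frac{1-(1-p)^a}{p}\,dp=H_a$ together with $e^p\le e$ on $[0,1]$ gives exactly the claimed $\frac{e}{e-1}H_{|A|}$. (The paper separately cites the sharper bound $\sum_{T\subseteq A}m_{|A|-1,|T|-1}\le H_{|A|}$ from \cite[Lemma~2]{filmus2014monotone} in Lemma~\ref{lmm:app_card2_bound}(1), but your weaker estimate already suffices for the present statement.)

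You correctly identify the lower bound as the non-trivial direction, and your plan to follow the Filmus--Ward induction via the coefficient recursion is the right one; the base case $m_{0,0}=\int_0^1\frac{e^p}{e-1}\,dp=1$ is as you say. One caution: your alternative ``probabilistic monotonicity-in-$p$'' sketch is shakier than it sounds---since $h(T)\le h(A)$ for $T\subseteq A$, the naive domination points the wrong way, and it is precisely the $1/p$ weight in the integrand that compensates; making this rigorous essentially reconstructs the inductive computation anyway. I would commit to the induction rather than the heuristic. Also note that the lemma as stated (for all $A\subseteq N$, including $A=\varnothing$) implicitly needs $h(\varnothing)=0$; the surrogates $h$ and $h_H$ used in this paper are not normalized, so when you write up the argument be explicit that you are using the inequality only for nonempty $A$, which is all the paper ever needs.
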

\noindent
Let $O \in \mathop{\arg\max}\limits_{S\in\mathcal{I}(\mathcal{M})}h(S)$ and  $O^\star\in \mathop{\arg\max}\limits_{S\in\mathcal{I}(\mathcal{M})}\varphi_h(S)$. By Fact~\ref{fact:Brualdi}, we index the elements of $S_I$ and $O$ by $S_I=\{s_1,\cdots,s_r\}$ and $O =\{o_1,\cdots,o_r\} $ such that $S_I-s_i+o_i\in \mathcal{I}(\mathcal{M})$ holds for all $i\in [r]$.
We firstly transfer $S_I$'s the locally optimality w.r.t. $\widehat{\varphi_h}$ to the locally optimality w.r.t. $\varphi_h$.
For all $s_i$ and $o_i$, we have 
\begin{align*}
    \varphi_h(S_I-s_i+o_i)- \alpha \varphi_h(O^*) 
    &~\leq~ \widehat{\varphi_h}(S_I-s_i+o_i)\tag*{(event $\mathcal{E}_g$)}\\
    &~\leq~ (1+\Delta) \widehat{\varphi_h}(S_I)\tag*{(Lemma \ref{clm:App_A})}\\
    &~\leq~(1+\Delta)\left(\varphi_h(S_I)+  \alpha \varphi_h(O^\star)\right)\tag*{(event $\mathcal{E}_g$)}
\end{align*}

which implies
\[\varphi_h(S_I-s_i+o_i) -\varphi_h(S_I) \leq \Delta \varphi_h(S_I) + (2+\Delta)\alpha \varphi_h(O^\star).\]
Summing the resulting $r$ inequalities gives
\begin{align*}
\sum\limits_{i=1}^r [\varphi_h(S_I-s_i+o_i)-\varphi_h(S_I)]
&~\leq~ r\Delta \varphi_h(S_I) + r(2+\Delta)\alpha \varphi_h(O^\star) \\
&~\leq~ rH_r\Delta h(S_I) + r(2+\Delta)\alpha H_r h(O^\star) \tag*{(Lemma \ref{lmm:bound_g})}\\
&~\leq~ rH_r\left((2+\Delta)\alpha+\Delta\right) h(O) .  \tag*{(optimality of $O$)}
\end{align*}
Combine the Lemma \ref{lmm:local_sub}, we have 
\[h(O)-\frac{e}{e-1}h(S_I)\leq\sum\limits_{i=1}^r [\varphi_h(S_I-s_i+o_i)-\varphi_h(S_I)]\leq rH_r\left((2+\Delta)\alpha+\Delta\right) h(O).  \]
Thus
\begin{align*}
 h(S_I) &~\geq~ \left(1-\frac{1}{e}\right)\left(1-rH_r\left((2+\Delta)\alpha+\Delta\right) \right)h(O)   \\
 &~\geq ~\left(1-\frac{1}{e}\right)\left(1-r\left(\ln r+1\right)\left((2+\Delta)\alpha+\Delta\right) \right)h(O).  
\end{align*}
This completes the proof of Theorem \ref{the:main_nls}.
    
\end{proof}

\begin{corollary}{\bf{}}
\label{cor:main_nls}
Assume $n \in \Omega\left(\exp\left(\left(\frac{1}{\varepsilon}\right)^{O(1)}\right)\right)$, if $\alpha,\Delta = \frac{\varepsilon}{4r\log r}$, we have $I\leq \frac{5r\ln r}{\varepsilon}$, and with probability at least $1-(I+1)rn\delta$, the output $S_I$ of Algorithm \ref{alg:local_search} is a $\left(1-\frac{1}{e}-\varepsilon\right) $-approximate solution for problem $\max_{S\subseteq \mathcal{I}(\mathcal{M})}h(S)$,  with at most  $\left(I+1\right)rn$ calls to $\widehat{\varphi_h}$.
\end{corollary}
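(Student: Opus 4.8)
The plan is to derive Corollary~\ref{cor:main_nls} as a direct specialization of Theorem~\ref{the:main_nls} by plugging in the prescribed values $\alpha = \Delta = \frac{\varepsilon}{4r\log r}$ and bookkeeping the resulting bounds on $I$ and on the approximation factor. First I would establish the bound on the number of iterations $I$. Starting from $I = \log_{1+\Delta}\left(\frac{2(1+\alpha)}{1-2(r+1)\alpha}\right)$, I would note that with $\alpha = \frac{\varepsilon}{4r\log r}$ the quantity $2(r+1)\alpha$ is bounded well away from $1$ for $r$ large (say at most some constant like $2/3$), so the argument of the logarithm is $O(1)$; hence $\ln\left(\frac{2(1+\alpha)}{1-2(r+1)\alpha}\right) = O(1)$. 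Using $\ln(1+\Delta) \geq \Delta - \Delta^2/2 \geq \Delta/2$ for small $\Delta$, I get $I \leq \frac{O(1)}{\Delta/2} = O(1/\Delta) = O\!\left(\frac{r\log r}{\varepsilon}\right)$, and I would then track the constants carefully enough to land on the stated $I \leq \frac{5r\ln r}{\varepsilon}$ — this is where the "$n$ sufficiently large" (equivalently $r$ large, via $r \in \Omega(1/\varepsilon)$ and the hypothesis $n \in \Omega(\exp((1/\varepsilon)^{O(1)}))$) hypothesis gets used, to absorb the $1+\alpha$ and $\frac{1}{1-2(r+1)\alpha}$ factors into a clean numerical constant.

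Next I would handle the approximation factor. Theorem~\ref{the:main_nls} gives that $S_I$ is a $\left(1-\frac1e\right)\left(1 - r\ln(er)\left((2+\Delta)\alpha + \Delta\right)\right)$-approximation. Substituting $\alpha = \Delta = \frac{\varepsilon}{4r\log r}$, the error term inside is $r\ln(er)\left((2+\Delta)\alpha + \Delta\right) = r\ln(er)\cdot\left((3+\Delta)\cdot\frac{\varepsilon}{4r\log r}\right)$. Since $\ln(er) = 1 + \ln r \leq 2\ln r$ for $r \geq 3$ (or one can keep it as $\ln r + 1$), and $\Delta \leq 1$, the term $(3+\Delta) \leq 4$, so the whole error term is at most $r \cdot 2\ln r \cdot \frac{4\varepsilon}{4r\log r} = \frac{2\ln r}{\log r}\varepsilon$. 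If "$\log$" denotes natural log this is just $2\varepsilon$; more carefully one wants it to be $\leq \varepsilon$, which suggests the intended reading is that the constant $4$ in $\alpha = \Delta = \frac{\varepsilon}{4r\log r}$ is chosen precisely so that $r\ln(er)\left((2+\Delta)\alpha+\Delta\right) \leq \frac{\varepsilon}{1-1/e}$ or simply $\leq \frac{e}{e-1}\varepsilon$ — I would pick the bookkeeping so that $\left(1-\frac1e\right)\cdot r\ln(er)\left((2+\Delta)\alpha+\Delta\right) \leq \varepsilon$, giving $h(S_I) \geq \left(1-\frac1e\right)h(O) - \varepsilon \cdot h(O) = \left(1-\frac1e-\varepsilon\right)h(O)$. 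The query complexity $(I+1)rn$ and success probability $1-(I+1)rn\delta$ are inherited verbatim from Theorem~\ref{the:main_nls} with the new value of $I$ plugged in, so nothing new is needed there.

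I expect the main (though modest) obstacle to be making the constant chase in the $I$-bound fully rigorous: one must verify that $r$ is large enough that $2(r+1)\alpha = \frac{(r+1)\varepsilon}{2r\log r} < 1$ (which needs $\log r > \varepsilon/2$, trivially true) and, more delicately, that the numerical slack is enough to pass from the crude bound $I \leq \frac{2}{\Delta}\ln\left(\frac{2(1+\alpha)}{1-2(r+1)\alpha}\right)$ to the clean $I \leq \frac{5r\ln r}{\varepsilon}$. Concretely, $\frac{2}{\Delta} = \frac{8r\log r}{\varepsilon}$, so I need $\ln\left(\frac{2(1+\alpha)}{1-2(r+1)\alpha}\right) \leq \frac{5\ln r}{8\log r}$, i.e. the log-argument should be bounded by roughly $e^{5/8} \approx 1.87$ when $\log = \ln$; since $\alpha \to 0$ and $2(r+1)\alpha \to 0$ as $r$ grows (using $r \in \Omega(1/\varepsilon)$ and $n$ large), the argument tends to $2$, which is slightly too big — so the honest version either uses a sharper estimate $\ln(1+\Delta) \geq \Delta(1-\Delta/2) \geq \Delta \cdot \frac{3}{4}$ for $\Delta \leq 1/2$ (giving $I \leq \frac{4}{3\Delta}\ln(\cdots) \leq \frac{4}{3}\cdot\frac{4r\log r}{\varepsilon}\cdot\ln 3 \leq \frac{5r\ln r}{\varepsilon}$ once the argument is below $3$, which holds for all $r \geq 2$ since $1-2(r+1)\alpha > 2/3$), or simply absorbs a small multiplicative constant. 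I would present the cleaner of these; the whole corollary is then a one-paragraph substitution into Theorem~\ref{the:main_nls} once these numerical inequalities are pinned down.
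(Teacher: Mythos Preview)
Your proposal is correct and matches the paper's intent: the corollary is stated in the paper immediately after the proof of Theorem~\ref{the:main_nls} with no separate proof, so it is meant to follow by direct substitution of $\alpha=\Delta=\frac{\varepsilon}{4r\log r}$ into the bounds of that theorem, which is exactly what you do. Your careful tracking of the numerical constants for both the $I$-bound and the approximation factor is more than the paper itself provides, and your observations about the slack needed (e.g., $r$ large enough that $\ln(er)/\ln r$ and $3+\Delta$ are controlled) are the right way to make the statement rigorous.
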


\section{Our algorithms and main theorems for cardinality constraints}
\label{sec:cardinality}
For different ranges of cardinalities, 
this section presents algorithms (Algorithm~\ref{alg:local_search_monotone} and~\ref{alg:local_search_monotone2}) that return a $\left(1-\frac{1}{e}-O(\varepsilon)\right)$-approximation for Problem~\ref{problem:noisy_submodular} with only $\text{Poly}\left(n, \frac{1}{\varepsilon}\right)$ queries to the noisy oracle $\Tilde{f}$.
The analyses (Theorem~\ref{thm:cardi} and~\ref{thm:cardi2}) of the algorithms constitute a proof of Theorem~\ref{thm:informal_cardi}.

\subsection{Algorithmic results for cardinality constraints when \texorpdfstring{$r\in \Omega\left(\frac{1}{\varepsilon}\right) \cap O\left(n^{\frac{1}{3}}\right)$}{Lg}}
\label{sec:cardinality1}

\noindent
We first present an algorithm (Algorithm~\ref{alg:local_search_monotone}) and its analysis (Theorem~\ref{thm:cardi}) that are applicable for all $r\in \Omega\left(\frac{1}{\varepsilon}\right) \cap O\left(n^{\frac{1}{3}}\right)$ when $n$ is sufficiently large. 

\begin{theorem}[\textbf{Algorithmic results for cardinality constraints when $r\in \Omega\left(\frac{1}{\varepsilon}\right) \cap O\left(n^{\frac{1}{3}}\right)$}]
\label{thm:cardi}
    Let $\varepsilon > 0$ and assume $n \in \Omega\left(\exp\left(\left(\frac{1}{\varepsilon}\right)^{O(1)}\right)\right)$ is sufficiently large. 
    For any $r\in \Omega(\frac{1}{\varepsilon}) \cap O\left(n^{\frac{1}{3}}\right)$, there exists an algorithm that returns a $\left(1-\frac{1}{e}-O(\varepsilon)\right)$-approximation for Problem~\ref{problem:noisy_submodular} under a $r$-cardinality constraint, with probability at least $1-O\left(\frac{1}{\log n}\right)$ and query complexity at most $O\left(r^2\log ^2r\cdot n^{\frac{3}{2}}\varepsilon^{-1}\max\{r, \log n\}\right)$ to $\tilde{f}$.
\end{theorem}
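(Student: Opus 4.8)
The plan is to instantiate the noisy local search meta-algorithm (Algorithm~\ref{alg:local_search}, analyzed in Theorem~\ref{the:main_nls} and Corollary~\ref{cor:main_nls}) with the smoothing surrogate $h$ from Definition~\ref{def:h_1}, namely $h(S) = \mathbb{E}_{x\sim N}[f(S+x)]$ (the expectation of $f$ over adding a uniformly random element). First I would verify that $h$ is itself monotone submodular, so that its auxiliary function $\varphi_h$ from Definition~\ref{def:auxiliary} is well-defined, and then argue the two properties flagged in the introduction: (i) \emph{robustness} --- because $h$ averages over an averaging set of size $\sim n$, a Chernoff/concentration bound over the generalized-exponential-tail noise (Definition~\ref{def:exponential}) shows that one can build an estimator $\widehat{h}(S)$ by sampling $\text{poly}(n,\varepsilon^{-1})$ noisy queries $\tilde f(S+x)$ that is within a $(1\pm\text{tiny})$ multiplicative factor of $h(S)$ with high probability, uniformly over all sets $S$ that the local search ever inspects; and (ii) \emph{fidelity} --- since $|S|=r = O(n^{1/3})$, adding one random element changes $f$ only slightly relative to the optimum, so $\max_{|S|\le r} h(S) \ge (1-O(1/n^{?}))\,\mathrm{OPT}_f$ and any good solution for $h$ is a good solution for $f$ up to an $O(\varepsilon)$ loss (using $r = \Omega(1/\varepsilon)$ to absorb the additive $f(\text{single element})/r$-type terms into $\varepsilon\cdot\mathrm{OPT}$).

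Next I would convert the estimator $\widehat h$ into an $(\alpha,\delta,\mathcal{I}(r))$-approximation $\widehat{\varphi_h}$ of $\varphi_h$ as required by Definition~\ref{def:approximation}. The auxiliary function is a fixed convex combination of $h$-values over subsets $\varphi_h(S) = \sum_{T\subseteq S} m_{|S|-1,|T|-1} h(T)$; since $|S|\le r$ there are at most $2^r$ terms, but one does not evaluate them all --- instead one uses the sampling representation of the coefficients $m_{s,t}$ (an integral of $p^t(1-p)^{s-t}$) to rewrite $\varphi_h(S)$ as an expectation of $h$ over a random subset $T$ drawn by including each element of $S$ independently with a $\text{Beta}$-distributed probability $p$, exactly as in~\cite{filmus2014monotone}. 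Then $\widehat{\varphi_h}(S)$ is the empirical mean of $\widehat h(T)$ over $\text{poly}(n,\varepsilon^{-1})$ such samples; a concentration argument bounds $|\widehat{\varphi_h}(S)-\varphi_h(S)|$ by $\alpha\cdot\max_{S'}\varphi_h(S')$ with failure probability $\delta$, for $\alpha,\delta$ as small as we need. Choosing $\alpha=\Delta = \frac{\varepsilon}{4r\log r}$ as in Corollary~\ref{cor:main_nls} makes the per-query sample count $\text{poly}(n, r, \varepsilon^{-1})$ and the total over all $(I+1)rn$ inspected sets still polynomial, matching the claimed $O(r^2\log^2 r\cdot n^{3/2}\varepsilon^{-1}\max\{r,\log n\})$ bound after bookkeeping.

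Finally I would assemble the pieces: by Corollary~\ref{cor:main_nls}, conditioned on the good event, Algorithm~\ref{alg:local_search} returns an $S_I$ with $h(S_I)\ge (1-1/e-\varepsilon)\max_{|S|\le r}h(S)$; combining with fidelity property (ii) gives $f(S_I)\ge (1-1/e-O(\varepsilon))\mathrm{OPT}_f$ (monotonicity lets us add the extra random element into the reported set, or bound $f(S_I)\le h(S_I)$ appropriately). The union bound over the failure probabilities --- the $(I+1)rn\delta$ from Theorem~\ref{the:main_nls} plus the concentration failures of the $\widehat h$ estimates --- is driven to $O(1/\log n)$ by taking the sample counts slightly super-logarithmic in $n$, which is where the $\max\{r,\log n\}$ factor in the query bound comes from. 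The main obstacle I anticipate is item (i)+(ii) in tension: the averaging set must be large enough (size $\Theta(n)$) that the noise concentrates, yet for that to not distort $f$ we need the cardinality $r$ small relative to $n$ --- which is exactly why this theorem is restricted to $r = O(n^{1/3})$ and a separate surrogate $h_H$ (Definition~\ref{def:h_2}) and separate theorem are needed for larger $r$. Getting the quantitative trade-off right --- showing the multiplicative noise on $h$ really is $1\pm o(\varepsilon/(r\log r))$ with only polynomially many samples while $h$ stays $(1-O(\varepsilon))$-close to $f$ --- is the crux, and it hinges on the generalized-exponential-tail assumption giving sub-exponential (hence well-concentrated) $\xi_S$.
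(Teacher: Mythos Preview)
Your overall architecture is right, but there is a genuine gap in the step where you pass from ``$h(S_I)$ is large'' back to ``$f(\text{output})$ is large.'' You write ``monotonicity lets us add the extra random element into the reported set, or bound $f(S_I)\le h(S_I)$ appropriately,'' but $f(S_I)\le h(S_I)$ is the wrong direction (it is always true by monotonicity and tells you nothing), and adding a \emph{random} element only gives $\mathbb{E}_e[f(S_I+e)] = h(S_I)$, which is an in-expectation guarantee, not the with-high-probability guarantee the theorem claims. Your fidelity claim that ``$h$ stays $(1-O(\varepsilon))$-close to $f$'' is simply false pointwise: for the set $S_L$ returned by local search there is no a priori reason $f(S_L)\ge(1-\varepsilon)h(S_L)$, since $h(S_L)$ can be dominated by the contribution of a single outside element $e$.

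The paper handles this by a mechanism you do not have. It runs local search on the $(r-1)$-cardinality constraint (so there is room to add one element), then returns $S_M = S_L + \arg\max_{e\notin S_L}\tilde f(S_L+e)$. The crux (Lemma~\ref{lem:add}) is showing $f(S_M)\ge(1-\varepsilon)h(S_L)$ with probability $1-O(1/\log n)$, and this requires two nontrivial pieces: (a) a counting argument (Lemma~\ref{lem:no_of_good}) showing that $\Omega(\varepsilon n/\log r)$ elements $e$ are ``good'' in the sense $f(S_L+e)\ge(1-\varepsilon/2)h(S_L)$ --- proved by contradiction using the \emph{local optimality} of $S_L$, since if good elements were scarce one of them would have enormous $f$-value and local search would have swapped it in; and (b) an extreme-value comparison for the generalized exponential tail (Lemma~\ref{lem:multiplier_bound}) showing that among $\Theta(n)$ good elements and $\Theta(n)$ bad elements, the noisy max over the good ones beats the noisy max over the bad ones with probability $1-O(1/\log n)$. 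This last step, not the union bound over local-search queries, is where the $O(1/\log n)$ failure probability in the theorem statement actually comes from.
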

\noindent
The assumption that $n$ is sufficiently large is necessary and has also been adopted in prior works on noisy submodular optimization~\cite{hassidim2017submodular,singer2018optimization}.
We achieve a tight approximation ratio of $1-1/e-O(\varepsilon)$.
Furthermore, we only require $\text{Poly}\left(n, 1/\varepsilon\right)$ queries to $\tilde{f}$, in contrast to $\Omega\left(n^{1/\varepsilon}\right)$ for the prior greedy algorithm~\cite{singer2018optimization}.

\subsubsection{Useful notations and useful facts for Theorem~\ref{thm:cardi}}
\label{sec:notation_I}

Our algorithm and analysis are based on the following smoothing surrogate function $h$.
\begin{definition}[\bf{Smoothing surrogate function \rom{1}}]
\label{def:h_1}
For any set $S\subseteq N$, we define the smoothing surrogate function $h(S)$ as the expectation of $f(S+e)$ over a random element $e\in N$, i.e., $h(S) = \frac{1}{n}\sum_{e\in N}f(S+e)$.
\end{definition}

\noindent
The surrogate function $h(S)$ is robust to noise when $|S|$ is relatively small since at this time $h(S)$ is based on a rather large averaging set with size nearly $n$. 
Note that $h(S)$ becomes too concentrated on $f(S)$ as $|S|\approx n$, and hence, we consider this surrogate $h$ for the range $r\in O\left(n^{\frac{1}{3}}\right)$.
We now show that $h$ shares some basic properties with $f$.
\begin{lemma}[\textbf{Properties of $h$}]
\label{lem:property_of_h}
The smoothing surrogate function $h$ is monotone and submodular.
\end{lemma}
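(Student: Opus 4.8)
The plan is to verify monotonicity and submodularity of $h$ directly from the definition $h(S) = \frac{1}{n}\sum_{e\in N} f(S+e)$, using the fact that $h$ is a nonnegative combination (indeed an average) of translates of $f$, together with the stability of the monotone-submodular cone under such operations. The key observation is that for each fixed $e\in N$, the set function $f_e(S) := f(S+e)$ inherits monotonicity and submodularity from $f$, and then $h = \frac{1}{n}\sum_{e\in N} f_e$ is a convex combination of these.

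First I would show each $f_e$ is monotone: for $A\subseteq B$ we have $A+e\subseteq B+e$, so $f_e(A) = f(A+e)\leq f(B+e) = f_e(B)$ by monotonicity of $f$. Summing over $e$ and dividing by $n$ gives $h(A)\leq h(B)$. Second, I would show each $f_e$ is submodular: for any $A,B\subseteq N$ and $x\in N$, I need $f_e(A+x) - f_e(A)\geq f_e(B+x) - f_e(B)$ whenever $A\subseteq B$ (using the equivalent ``diminishing returns'' form from Definition~\ref{def:submodular}), i.e.\ $f((A+e)+x) - f(A+e)\geq f((B+e)+x) - f(B+e)$; if $x = e$ both marginals are zero and the inequality is trivial, and if $x\neq e$ this is exactly submodularity of $f$ applied to the sets $A+e\subseteq B+e$ and element $x$. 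Taking the average over $e$ preserves the inequality since it is preserved termwise, so $h$ is submodular. Nonnegativity of $h$ is immediate since $f\geq 0$.

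There is essentially no serious obstacle here; the only mild subtlety is being careful with the case $x\in S$ (or $x = e$) in the submodularity check, since adding an already-present element changes nothing — but this only makes the relevant marginal vanish, which is harmless. I would state the argument cleanly by first isolating the one-line lemma that $f_e$ is monotone submodular, and then invoking closure of the monotone-submodular functions under nonnegative linear combinations. This keeps the proof short and makes clear that the same reasoning will apply verbatim to the second surrogate $h_H$ in Definition~\ref{def:h_2}, which is likewise an average of translates of $f$.
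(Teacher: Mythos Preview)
Your proposal is correct and is essentially the same as the paper's proof: both verify monotonicity and submodularity termwise by writing $h(S+x)-h(S)=\frac{1}{n}\sum_{e\in N}\bigl[f(S+x+e)-f(S+e)\bigr]$ and invoking the corresponding property of $f$ for each summand. Your framing via the translates $f_e$ and closure under nonnegative combinations is a slightly more conceptual packaging of the identical computation, and your explicit handling of the $x=e$ case is a minor refinement the paper leaves implicit.
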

\begin{proof}
    For any set $S\subseteq N$ and $x\notin S$, by Definition~\ref{def:h_1} we have
\begin{equation}
\label{eq:h_monotone}
    h(S+x) - h(S) = \frac{1}{n}\sum_{e\in N}\left[f(S+x+e) - f(S+e)\right].
\end{equation}
This immediately implies that $h$ is monotone, since the monotonicity of $f$ implies that each term $f(S+x+e) - f(S+e)$ is non-negative.
Next, suppose that $T\subseteq S$ and $x\notin S$.
Then
\begin{align*}
    h(T+x) - h(T) &= \frac{1}{n}\sum_{e\in N}\left[f(T+x+e) - f(T+e)\right] && \text{(Eq.~\eqref{eq:h_monotone})}\\
    &\geq \frac{1}{n}\sum_{e\in N}\left[f(S+x+e) - f(S+e)\right] && \text{(submodularity of $f$)}\\
    &= h(S+x) - h(S).
    && \text{(Eq.~\eqref{eq:h_monotone})}
\end{align*}
Thus, $h$ is submodular.
\end{proof}
\noindent
Note that $h$ is implicitly constructed since we only have a value oracle to $\tilde{f}$ instead of $f$.
Hence, we construct an approximation of its auxiliary function $\varphi_h$; summarized by the following lemma.
\begin{lemma}[\textbf{Approximation of $\varphi_{h}$}]
\label{lem:hat_varphi_h}
Let $\alpha,\delta \in (0,1/2)$ and assume $n\in \Omega(\alpha^{-2}\log(\delta^{-1}))$. 
There exists a value oracle $\mathcal{O}$ to an $\left(\alpha,\delta, \mathcal{I}(r-1)\right)$-approximation $\widehat{\varphi_{h}}$ of $\varphi_{h}$, which answering $\mathcal{O}(A)$ queries at most $M = \Theta\left(\log r\cdot n^{\frac{1}{2}}\max\{r, \log n\}\right)$ times to $\tilde{f}$ for each set $A\in\mathcal{I}(r-1)$.
\end{lemma}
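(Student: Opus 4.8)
The plan is to construct $\widehat{\varphi_h}$ by a two-level Monte Carlo estimator: an outer average over random elements $e \in N$ to handle the smoothing operator defining $h$, and an inner average over a random subset $T \subseteq A$ (drawn according to the coefficients $m_{|A|-1,|T|-1}$) to handle the linear combination defining $\varphi_h$. First I would recall that by Definition~\ref{def:auxiliary}, $\varphi_h(A) = \sum_{T \subseteq A} m_{|A|-1,|T|-1}\, h(T)$, and that the coefficients $m_{s,t} = \int_0^1 e^p(e-1)^{-1} p^t (1-p)^{s-t}\,dp$ are (up to the $e^p(e-1)^{-1}$ weight) the probabilities of a binomial-type sampling: drawing $p \in [0,1]$ and then including each element of $A$ independently with probability $p$. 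So $\varphi_h(A)$ can be written as an expectation $\mathbb{E}_{p}\mathbb{E}_{T \sim \mathrm{Bin}(A,p)}[e^p(e-1)^{-1} h(T)]$, and $h(T) = \mathbb{E}_{e \in N}[f(T+e)]$ is itself an expectation. Composing these, $\varphi_h(A)$ is the expectation of a single random evaluation $e^p(e-1)^{-1} f(T+e)$; each such evaluation requires one query to $f$, which we can only access through $\tilde f(T+e) = \xi_{T+e} f(T+e)$.

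The key steps are then: (1) Argue the estimator is (nearly) unbiased. Because $\mathbb{E}[\xi_S] = 1$ and the $\xi_S$ are independent, the noisy evaluation $e^p(e-1)^{-1}\tilde f(T+e)$ has expectation exactly $e^p(e-1)^{-1} f(T+e)$ conditioned on the choice of $(p,T,e)$, so averaging $M$ independent copies gives an unbiased estimate of $\varphi_h(A)$. (2) Establish concentration. Here I would invoke the tail control afforded by the generalized exponential tail distribution (Definition~\ref{def:exponential}): the noise multiplier has sub-exponential-type tails, so after truncating at a threshold of order $\log$ (of the relevant parameters) one gets a bounded-variance, light-tailed summand and can apply a Bernstein/Hoeffding-type bound. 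The additive error must be controlled relative to $\max_{S \in \mathcal{I}(r-1)} \varphi_h(S)$; since $\varphi_h(A) \le \frac{e}{e-1}H_{|A|} h(A) = O(\log r)\, h(A)$ by Lemma~\ref{lmm:bound_g}, and each individual term $f(T+e)$ is at most $h(\cdot)$-comparable to the relevant scale, the summands are bounded by roughly $O(\log r)$ times the target quantity, which is why $M = \Theta(\log r \cdot \sqrt{n}\max\{r,\log n\})$ samples suffice: a factor $n^{1/2}$ beyond the naive $\alpha^{-2}\log\delta^{-1}$ comes from the variance contributed by the smoothing step combined with the assumption $n \in \Omega(\alpha^{-2}\log\delta^{-1})$, and the $\max\{r,\log n\}$ absorbs the union-type or tail-truncation overhead. (3) Count queries: each of the $M$ samples is a single evaluation $\tilde f(T+e)$ with $T \subseteq A$, $|A| \le r-1$, so $T+e$ has size at most $r$ and the query is legitimate; total queries per call to $\mathcal{O}(A)$ is exactly $M$.

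The main obstacle I anticipate is step (2): carefully calibrating the truncation level for the generalized-exponential-tail noise so that the truncated estimator is both essentially unbiased (the truncation bias is negligible compared to $\alpha \max_S \varphi_h(S)$) and has small enough variance, while simultaneously ensuring the bound $|\widehat{\varphi_h}(A) - \varphi_h(A)| \le \alpha \max_{S \in \mathcal{I}(r-1)}\varphi_h(S)$ holds with the stated confidence $1-\delta$ rather than just $1-o(1)$. The delicate point is that the natural scale controlling the summands is $h(T+e)$ for the particular random $T$, which can be much smaller than $\max_S \varphi_h(S)$ on most draws but occasionally comparable; bounding the variance therefore needs the submodularity/monotonicity of $h$ (Lemma~\ref{lem:property_of_h}) together with $h(T) \le h(A) \le \varphi_h(A)/1 \cdot (\ldots)$-type inequalities from Lemma~\ref{lmm:bound_g}, and this is where the concrete polynomial dependence $n^{1/2}\max\{r,\log n\}$ has to be extracted rather than just "$\mathrm{poly}$". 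I would also need to double-check the assumption $n \in \Omega(\alpha^{-2}\log \delta^{-1})$ is exactly what makes the outer smoothing-average concentrate well enough that the dominant error is the inner sampling error, so that the two error sources can be balanced by a single choice of $M$.
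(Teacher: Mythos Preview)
Your high-level picture—Monte Carlo sampling from the distribution proportional to the coefficients $m_{|A|-1,|T|-1}$, combined with sub-exponential tail control on the noise—matches the paper's construction (their Algorithm~\ref{alg:hat_varphi_h}). But your analysis has a structural gap that the paper's proof avoids.

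You treat the two error sources (sampling error and noise error) as if both are driven by the Monte Carlo sample size $M$, and you speak of ``balancing'' them by a single choice of $M$. This is not how the argument works. The paper decomposes
\[
|\widehat{\varphi_h}(A)-\varphi_h(A)| \;\le\; |\widehat{\varphi_h}(A)-\widetilde{\varphi_h}(A)| \;+\; |\widetilde{\varphi_h}(A)-\varphi_h(A)|,
\]
where $\widetilde{\varphi_h}(A)=\sum_{T\in\mathcal L_A}\tau_A(T)\tilde f(T)$ is the \emph{full} noisy linear combination. The second term is controlled by Bernstein applied to the fixed weighted sum $\sum_T \tau_A(T)f(T)(\xi_T-1)$, and this step has nothing to do with $M$: it relies on the coefficient bounds $\sum_T\tau_A^2(T)\le 12/(n|A|)$ and $\max_T\tau_A(T)\le 4/n$ (their Lemma~\ref{lem:tau_bound}), which encode the fact that the smoothing $h(S)=\tfrac1n\sum_e f(S+e)$ averages over $n$ terms. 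This is precisely where the hypothesis $n\in\Omega(\alpha^{-2}\log\delta^{-1})$ is consumed. The first term, by contrast, is a pure sampling error (conditional on the realized noise), controlled by Hoeffding after truncating the noise at level $b=O(\max\{r,\log n\})$ and using $\sum_T\tau_A(T)\le 2(\ln|A|+2)$; this is where $M=\Theta(b\log r\cdot\sqrt n)$ arises.

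Your proposal skips the coefficient bounds entirely, and those are the technical heart of the lemma. Without them you cannot explain why the noise error is already small \emph{before} any Monte Carlo sampling happens, nor why the claimed $M$ has the shape it does (the $\sqrt n$ factor is really $\alpha^{-1}\sqrt{\log\delta^{-1}}$ rewritten via the hypothesis on $n$; it comes from the Hoeffding step, not from ``variance contributed by the smoothing''). A direct one-shot Bernstein on $M$ samples, as you sketch, would also need to address collisions (repeated queries to the same set return the \emph{same} persistent noise, so the samples are not genuinely independent), and even ignoring that would yield a different dependence on the parameters than the stated $M$.
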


\noindent
The above lemma indicates that for a sufficiently large $n$, we can achieve an arbitrary accurate approximation $\widehat{\varphi_h}$ of $\varphi_h$, which queries $\tilde{f}$ only $\tilde{O}(r n^{\frac{1}{2}})$ times for each $\widehat{\varphi_h}(A)$.

More precisely, the relationship among $\alpha$, $\delta$ and $n$ can be expressed as
$\alpha^{-2}\ln(4\delta^{-1}) \leq c\cdot n\kappa^{-2}$ (See Lemma~\ref{lmm:app_diff_phi_hat1}), 
where $\kappa$ is the sub-exponential norm (Definition~\ref{def:sub_exponential}) of the noise distribution $\mathcal{D}$ depending on the parameters $c_i, \gamma_i$ of density function, and $c$ is an absolute constant.
A small $\kappa$ indicates a concentrated noise distribution. Generally speaking, the more concentrated the noise is ($\kappa \rightarrow 0$), the more accurate approximation we are able to obtain ($\alpha, \delta \rightarrow 0$).

\begin{proof}
Here we illustrate intuitively how to obtain the value oracle $\mathcal{O}$ to an $(\alpha, \delta, \mathcal{I}(r-1))$-approximation $\widehat{\varphi_h}$ of $\varphi_h$.
When $n$ is large enough, the generalized exponential tail allows the difference between $\varphi_h$ and its noisy analogue $\widetilde{\varphi_h}$ to be averaged out.
Thus we can obtain an approximation $\widehat{\varphi_h}$ of $\varphi_h$ by approximating $\widetilde{\varphi_h}$.
Since evaluating $\widetilde{\varphi_h}$ exactly will require an exponential number of value queries to $\Tilde{f}$, we use a random sampling procedure (Algorithm~\ref{alg:hat_varphi_h}) to estimate it.
With only polynomial queries to $\tilde{f}$, $\widetilde{\varphi_h}$ can be approximated to the desired accuracy.

To prove Lemma~\ref{lem:hat_varphi_h}, we first show some bounds for coefficients $\tau_A(T)$.
With these bounds, we will then prove that when $n$ is sufficiently large, $\varphi_h$ and its noisy analogue $\widetilde{\varphi_h}$ are close to each other by Bernstein's inequality for sub-exponential variables.
Finally, we use Hoeffding's inequality to show that with enough samples, $\widehat{\varphi_h}$ is close to $\widetilde{\varphi_h}$ and therefore a good approximation of $\varphi_h$.
For a set $A\subseteq N$, we rewrite the auxiliary function $\varphi_h(A)$ as a linear combination of $f(T)$ according to Definition~\ref{def:auxiliary} and~\ref{def:h_1} and obtain
\begin{equation}
    \resizebox{0.94\textwidth}{!}{$\displaystyle
    \label{eq:varphi_h_f}
        \varphi_h(A) = \frac{1}{n}\sum_{T\subseteq A}\left[(m_{|A|-1, |T|-1} + m_{|A|-1, |T|-2})|T|\cdot f(T) + \sum_{e\in N\backslash A}m_{|A|-1, |T|-1}\cdot f(T+e)\right].$}
\end{equation}
We define the averaging set of $\varphi_h(A)$ as
$$\mathcal{L}_A = \left\{T\in 2^N: T\subseteq A, \text{\ or\ } \exists\ S\subseteq A\ \text{and}\ e\in N\backslash A \text{\ s.t.\ } S+e = T\right\}.$$
%
%
For any $T\in\mathcal{L}_A$, we denote by $\tau_{A}(T)$ the linear coefficient of $f(T)$.
Then we can write $\varphi_h(A)$ as $\sum\limits_{T\in\mathcal{L}_A}\tau_{A}(T) f(T)$.

\begin{lemma}[\textbf{Useful bounds of the linear coefficients}]
\label{lem:tau_bound}
For any $A\subseteq N$ with $|A|\leq \sqrt{n}$, we have
\begin{align*}
    &(1) \sum_{T\in\mathcal{L}_{A}} \tau^2_{A}(T) \leq \frac{12}{n|A|};
    && (2) \max_{T\in\mathcal{L}_A} \tau_A(T) \leq \frac{4}{n};
    && (3) \sum_{T\in\mathcal{L}_A}\tau_A(T)\leq 2\left(\ln|A|+2\right).
\end{align*}
\end{lemma}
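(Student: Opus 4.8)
\textbf{Proof proposal for Lemma~\ref{lem:tau_bound}.}

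The plan is to obtain explicit bounds on the coefficient $\tau_A(T)$ directly from its definition as the linear coefficient of $f(T)$ in the expansion~\eqref{eq:varphi_h_f}, and then sum (or maximize) over $T \in \mathcal{L}_A$. First I would write down $\tau_A(T)$ explicitly. Setting $a = |A|$ and $t = |T|$: if $T \subseteq A$ then $T$ receives a contribution $\frac{1}{n}(m_{a-1,t-1} + m_{a-1,t-2})t$ from the first sum in~\eqref{eq:varphi_h_f} (here $m_{a-1,-1} := 0$), plus a contribution $\frac{1}{n}m_{a-1,(t-1)-1}$ from the second sum for each way to write $T = S + e$ with $S \subseteq A$, $e \in N \setminus A$ — but $T \subseteq A$ means no such $e$ exists, so that term is absent; if instead $T = S + e$ with $S \subsetneq A$ of size $t-1$ and $e \in N \setminus A$, then $\tau_A(T) = \frac{1}{n} m_{a-1, t-2}$ (the number of such representations is $1$, since $e$ is the unique element of $T \setminus A$). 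So $\mathcal{L}_A$ splits into "$T \subseteq A$" and "$|T \setminus A| = 1$", and on each piece $\tau_A(T)$ depends only on $t = |T|$.

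The key analytic input is control of $m_{s,t}$. From Definition~\ref{def:auxiliary}, $m_{s,t} = \frac{1}{e-1}\int_0^1 e^p p^t (1-p)^{s-t}\,dp$, so $m_{s,t} \le \frac{e}{e-1} B(t+1, s-t+1) = \frac{e}{e-1}\cdot \frac{t!\,(s-t)!}{(s+1)!}$, and crucially $\sum_{t=0}^s \binom{s}{t} m_{s,t} = \frac{1}{e-1}\int_0^1 e^p\,dp = 1$ (this identity is already implicit in~\cite{filmus2014monotone} via $\varphi_h$'s normalization). For part (3) I would use this normalization: $\sum_{T \subseteq A} \tau_A(T)$ and $\sum_{|T\setminus A|=1}\tau_A(T)$ each reduce, after collecting terms by size and using $\binom{a}{t}m_{a-1,t-1} \cdot t / t = \cdots$, to partial sums of $\sum_t \binom{a-1}{t-1} m_{a-1,t-1} \cdot (\text{something} \le 1)$, and the $(n - a)/n \le 1$ factor from $|N \setminus A|$ kills the $n$; the $\ln|A| + 2$ then comes from a harmonic-type sum $\sum_{t=1}^{a}\frac{1}{t}$ that arises because the factor $t$ multiplying $m_{a-1,t-1}$ in the first sum, combined with $\binom{a}{t} = \frac{a}{t}\binom{a-1}{t-1}$, leaves a $\frac{1}{t}$. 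For parts (1) and (2), since each $\tau_A(T) \le \frac{4}{n}$ will follow from $t \cdot (m_{a-1,t-1} + m_{a-1,t-2}) \le 4$ — itself a consequence of $t\,m_{a-1,t-1} \le \frac{e}{e-1}\cdot\frac{t \cdot (t-1)!(a-t)!}{a!}\cdot$(adjust indices)$\le \frac{e}{e-1} \le 2$ after bounding the Beta integral — part (2) is immediate, and for part (1) I would bound $\sum \tau_A^2 \le \max_T \tau_A(T) \cdot \sum_T \tau_A(T) \le \frac{4}{n}\cdot 2(\ln|A|+2)$; however this gives a $\frac{\log|A|}{n}$ bound, not $\frac{1}{n|A|}$, so a cruder maximum bound is not enough for (1).

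The main obstacle will be part (1): obtaining the extra factor $\frac{1}{|A|}$ (so that $\sum_T \tau_A^2(T) \le \frac{12}{n|A|}$ rather than merely $\tilde O(1/n)$). For this I would not bound $\tau_A^2$ by (max)$\cdot$(sum) but instead compute $\sum_{t} \binom{a}{t} (\tfrac{1}{n} t\, m_{a-1,t-1})^2$ and $\sum_t \binom{a-1}{t-1}(n-a)(\tfrac{1}{n} m_{a-1,t-2})^2$ honestly. Using $m_{a-1,t-1} \le \frac{e}{e-1}\cdot\frac{(t-1)!(a-t)!}{a!}$ one gets $\binom{a}{t}(t\,m_{a-1,t-1})^2 \lesssim \binom{a}{t}\cdot t^2 \cdot \frac{(t-1)!^2(a-t)!^2}{a!^2} = \frac{t \cdot (t-1)!\,(a-t)!}{a!} \cdot \frac{t!(a-t)!}{a!}\cdot\frac{1}{\binom{a}{t}}\cdot\binom{a}{t}$... more directly, $\binom{a}{t} m_{a-1,t-1}^2 \lesssim \frac{1}{a}\binom{a}{t}m_{a-1,t-1}\cdot\frac{1}{\binom{a-1}{t-1}}$, and since $\sum_t \binom{a-1}{t-1}m_{a-1,t-1} \le 1$ one extracts a clean $\frac{1}{a}$; the $n-a \le n$ and the $\frac1{n^2}$ prefactor then produce the $\frac{1}{n|A|}$, and tracking the absolute constants through $\frac{e}{e-1} < 2$ and the two pieces of $\mathcal{L}_A$ gives the stated $12$. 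I expect this to be the only step requiring genuine care; parts (2) and (3) are short computations once the explicit form of $\tau_A(T)$ and the normalization $\sum_t \binom{s}{t} m_{s,t} = 1$ are in hand.
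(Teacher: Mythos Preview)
Your proposal is correct and follows essentially the same approach as the paper: write $\tau_A(T)$ explicitly by splitting $\mathcal{L}_A$ into $\{T\subseteq A\}$ and $\{|T\setminus A|=1\}$, bound $m_{s,t}$ via $e^p\le e$ and the Beta integral $B(t+1,s-t+1)=\frac{t!(s-t)!}{(s+1)!}$, and for part~(1) compute $\sum_t \binom{a}{t}\tau^2$ directly rather than via $\max\cdot\mathrm{sum}$ to extract the crucial $1/|A|$ factor. The paper's computation for~(1) lands on $\frac{6}{n^2}\bigl(a+\frac{n}{a}\bigr)$, which under $a\le\sqrt n$ yields $\frac{12}{n|A|}$; your sketch of isolating a $\frac{1}{a}$ from $\binom{a}{t}m_{a-1,t-1}^2$ is the same mechanism, and your organizing identity $\sum_t\binom{s}{t}m_{s,t}=1$ for part~(3) is a clean variant of the paper's direct harmonic-sum bound.
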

\begin{proof}
For notation convenience, we let $a = |A|$ and $t = |T|$ in this proof.
For any $A\subseteq N$, we have
\begin{align}
    \notag&\sum_{T\in\mathcal{L}_{A}} \tau^2_{A}(T) \\
    =&  \frac{1}{n^2}\sum^a_{t=0}{a \choose t}\left[(m_{a-1,t-1}+m_{a-1,t-2})^2\cdot t^2+m^2_{a-1,t-1}\cdot(n-a) \right] \tag*{(Eq.~\eqref{eq:varphi_h_f})}\\
    =& \frac{1}{n^2}\!\left\{{a \choose 1} \left(\int_0^1 \frac{e^p}{e-1} (1-p)^{a-1}dp\right)^2(n-a+1)\right. \tag*{(Definition~\ref{def:auxiliary})}\\
    \notag &\left.+
    \sum^a_{t=2}{a \choose t}\!\left[\left(\int_0^1 \frac{e^p}{e-1} p^{t-2}(1-p)^{a-t}dp\right)^2\!\!t^2+ \left(\int_0^1 \frac{e^p}{e-1} p^{t-1}(1-p)^{a-t}dp\right)^2\!\!(n-a)\right]\right\} \\
    \label{eq:tau2_1}\leq& \frac{e^2}{(e-1)^2n^2}\left\{\sum^a_{t=2}\left(\frac{at}{(t-1)^2}+ \frac{n-a}{at}\right)\frac{\Gamma(t)\Gamma(a-t+1)}{\Gamma(a)}  +\frac{n-a+1}{a}\right\}\\
    \notag\leq& \frac{e^2}{(e-1)^2n^2}\left\{\frac{1}{a-1}\sum^a_{t=2}\left(2a+ \frac{n-a}{a}\right)\frac{1}{{{a-2}\choose {t-2}}}  +\frac{n-a+1}{a}\right\} \tag*{($\frac{t}{t-1}\leq 2$ for $t\geq 2$)}\\
    <& \frac{6}{n^2}\left(a+\frac{n}{a}\right) \tag*{($\frac{e^2}{(e-1)^2}\leq 3$ and ${{a-2}\choose {t-2}} \geq 1$)},
\end{align}
where we use the fact that $e^p\leq e$ for $p\in[0,1]$ and a property of beta function that 
\begin{equation}
\label{eq:beta}
    B(x, y) = \int_0^1p^{x-1}(1-p)^{y-1}dp = \frac{\Gamma(x)\Gamma(y)}{\Gamma(x+y)}
\end{equation}
to obtain Ineq.~\eqref{eq:tau2_1}.
In particular, if $|A| \leq \sqrt{n}$, we have $\sum_{T\in\mathcal{L}_{A}} \tau^2_{A}(T) \leq {12}/{(n|A|)}$.

Next we show that $\max_{T\in\mathcal{L}_A} \tau_A(T) \leq {4}/{n}$.
For any set $T\subseteq A$ with $|T|\geq 2$, we have 
\begin{align*}
    \tau_A(T) &= \frac{1}{n}(m_{a-1, t-1} + m_{a-1, t-2})\cdot t  \tag*{(Eq.~\eqref{eq:varphi_h_f})}\\
    &= \frac{t}{n}\int_0^1 \frac{e^p}{e-1} p^{t-2}(1-p)^{a-t}dp \tag*{(Definition~\ref{def:auxiliary})}\\
    &\leq \frac{e}{e-1}\cdot\frac{t}{n}\cdot\frac{(t-2)!(a-t)!}{(a-1)!} \tag*{($p < 1$ and Eq.~\eqref{eq:beta})}\\
    &\leq \frac{2}{n}\cdot\frac{t}{a-1} \leq \frac{4}{n}. \tag*{($\frac{e}{e-1} \leq 2$, ${{a-2}\choose{t-2}}\geq 1$ and $\frac{t}{a-1} \leq 2$)}
\end{align*}
Similarly, for a set $T$ that only contains a single element in $A$,
$$\tau_A(T) = \frac{m_{a-1, 0}}{n} = \frac{1}{n}\int_0^1 \frac{e^p}{e-1} (1-p)^{a-1}dp \leq \frac{2}{an}.$$
For any $T\in\mathcal{L}_A\setminus 2^A$, we have
\begin{align*}
    \tau_A(T) &= \frac{1}{n}\cdot m_{a-1,t-2} && \text{(Eq.~\eqref{eq:varphi_h_f})}\\
    &= \frac{1}{n}\int_0^1 \frac{e^p}{e-1} p^{t-2}(1-p)^{a-t+1}dp && \text{(Definition~\ref{def:auxiliary})}\\
    &\leq \frac{e}{(e-1)n}\cdot\frac{1}{{{a}\choose {t-2}}}\cdot\frac{1}{a-t+2}\leq \frac{2}{n}. && \text{(Eq.~\eqref{eq:beta} and $2\leq t\leq a+1$)}
\end{align*}
Combining these three parts gives $\max_{T\in\mathcal{L}_A} \tau_A(T) \leq {4}/{n}$.
Finally we can bound $\sum_{T\in\mathcal{L}_A}\tau_A(T)$ as below:
\begin{align*}
    &\sum_{T\in\mathcal{L}_A}\tau_A(T)\\
    =& \frac{1}{n}\sum^a_{t=0}{a \choose t}\left[(m_{a-1, t-1} + m_{a-1, t-2})t + m_{a-1, t-1}(n-a)\right]\tag*{(Eq.~\eqref{eq:varphi_h_f})}\\
    =&\frac{1}{n}\left\{\sum^a_{t=2}{a \choose t}\left[t\int_0^1 \frac{e^p}{e-1} p^{t-2}(1-p)^{a-t}dp+(n-a)\int_0^1 \frac{e^p}{e-1} p^{t-1}(1-p)^{a-t}dp\right]\right.\\
    & \left.+{a \choose 1} (n-a+1)\int_0^1 \frac{e^p}{e-1} (1-p)^{a-1}dp\right\} \tag*{(Definition~\ref{def:auxiliary})}\\
    \leq& \frac{e}{(e-1)n}\cdot\left[\sum^a_{t=2}\left(\frac{a}{t-1}+\frac{n-a}{t}\right)+(n-a+1)\right] && \tag*{($p < 1$)}\\
    \leq& 2(\ln a+2). && \tag*{($\frac{e}{e-1} \leq 2$ and $\sum^a_{t=2}\frac{1}{t-1} \leq \ln a + 1$)}
\end{align*}
\end{proof}
\noindent
We now use concentration results of sub-exponential distributions to bound the difference between $\varphi_h$ and its noisy analogue $\widetilde{\varphi_h}$.
\begin{definition}[\textbf{Sub-exponential distribution}]
\label{def:sub_exponential}
The sub-exponential norm of $X \in  \mathbb{R}$ is 
\[\left\|X\right\|_{\psi_1} = \inf \{t > 0 : \mathbb{E} \exp(|X|/t) \leq 2\} .\]
If $\left\|X\right\|_{\psi_1}$
is finite, we say that $X$ is sub-exponential.
\end{definition}
\begin{claim}
\label{clm:subexp}
Let $\xi$ be a random variable with a generalized exponential tail distribution $\mathcal{D}$.
Then $\xi-1$ is sub-exponential.
\end{claim}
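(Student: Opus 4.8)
The plan is to show that the random variable $\xi - 1$ has a finite sub-exponential norm by establishing that its moment generating function (equivalently, $\mathbb{E}\exp(|\xi-1|/t)$) is bounded for some $t > 0$. The key structural fact to exploit is that the density $\rho(x) = e^{-g(x)}$ with $g(x) = \sum_i c_i x^{\gamma_i}$, $\gamma_0 \geq 1$, $c_0 > 0$, so for large $x$ the density decays at least as fast as $e^{-c_0 x^{\gamma_0}/2} \leq e^{-c_0 x/2}$ (using $\gamma_0 \geq 1$ and $x$ large, and absorbing lower-order terms $\sum_{i\geq 1} c_i x^{\gamma_i}$ into a constant shift in $x_0$). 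This exponential-or-faster tail on the right is exactly what is needed.

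First I would split the analysis at the threshold $x_0$ from Definition~\ref{def:exponential}. For $x \leq x_0$, the contribution of that region to $\mathbb{E}\exp(|\xi-1|/t)$ is at most $e^{(|x_0|+1)/t} \cdot \mathbb{P}[\xi \leq x_0] \leq e^{(|x_0|+1)/t}$, which is a finite constant for any fixed $t$. (Here I also need the left tail to be controlled; but since $\mathbb{E}[\mathcal{D}] = 1$ is finite and $\rho$ is a genuine density, either $\mathcal{D}$ has bounded support below — in which case the left side is trivial — or, more carefully, for the generalized exponential tail family the relevant tail behavior is on the right where $g$ is the polynomial; the left side can be handled by noting the density integrates to one, so $\mathbb{P}[\xi < -K] \to 0$, and in fact for this class the support is typically bounded below or the left tail is also at least exponentially decaying.) For $x > x_0$, I would bound
\[
\int_{x_0}^\infty e^{(x-1)/t} e^{-g(x)}\, dx \leq \int_{x_0}^\infty e^{x/t} e^{-c_0 x^{\gamma_0} + \sum_{i\geq 1} |c_i| x^{\gamma_i}}\, dx,
\]
and since $\gamma_0 \geq 1 > \gamma_i$ for $i \geq 1$, by enlarging $x_0$ if necessary we can guarantee $\sum_{i\geq 1}|c_i|x^{\gamma_i} \leq \frac{c_0}{2}x^{\gamma_0}$ for $x > x_0$, giving an integrand bounded by $e^{x/t - (c_0/2)x^{\gamma_0}}$. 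Choosing $t$ large enough (e.g. $t \geq 2/c_0$ so that $x/t \leq (c_0/4)x^{\gamma_0}$ eventually, using $\gamma_0 \geq 1$), this integral converges, and by dominated convergence it can be made at most, say, $1$ by taking $t$ sufficiently large.

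Putting the two pieces together, for $t$ large enough we get $\mathbb{E}\exp(|\xi - 1|/t) \leq 2$, so $\|\xi - 1\|_{\psi_1} < \infty$, which is the claim. If $\mathcal{D}$ has bounded support instead (the second clause of Definition~\ref{def:exponential}), the claim is immediate since $|\xi - 1|$ is then a bounded random variable and any bounded random variable is trivially sub-exponential; I would dispatch that case in one sentence at the start.

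The main obstacle I anticipate is the bookkeeping around the lower-order terms $\sum_{i \geq 1} c_i x^{\gamma_i}$ in $g$ — the $c_i$ for $i \geq 1$ need not be positive, and some $\gamma_i$ could be negative, so one must argue carefully (enlarging $x_0$, using $\gamma_0 \geq 1$ strictly dominates) that the $c_0 x^{\gamma_0}$ term wins for all $x > x_0$ and that the whole correction is absorbed into a constant factor and a halving of the leading coefficient. The other mild subtlety is justifying control of the left tail / sub-$x_0$ region for distributions with unbounded support below; this should follow from $\mathbb{E}|\mathcal{D}| < \infty$ together with the fact that, for this distribution class, the density is a legitimate normalized function, but it is worth stating explicitly rather than glossing over. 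Everything else is a routine convergence estimate for $\int e^{x/t - (c_0/2)x^{\gamma_0}}dx$.
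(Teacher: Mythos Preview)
Your proposal is correct and follows essentially the same approach as the paper's proof: dispatch the bounded-support case trivially, and for unbounded support show that the leading term $c_0 x^{\gamma_0}$ dominates the lower-order terms for large $x$ so that $\rho(x) \le e^{-(c_0/2)x^{\gamma_0}} \le e^{-(c_0/2)x}$, then choose the norm parameter large enough to make $\mathbb{E}\exp(|\xi-1|/t) \le 2$. Your treatment is in fact slightly more careful than the paper's, which simply writes ``it is straightforward to verify'' after giving the density bound and a specific $\kappa$; your flagged concern about the left tail is not addressed there either (implicitly $\xi \ge 0$ since it is a multiplicative noise factor on a nonnegative function).
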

\begin{proof}
If the support of $\mathcal{D}$ is bounded by some constant $b$, then let $\kappa = (b+1)/\ln 2$, and we have $\left\|\xi-1\right\|_{\psi_1}\leq \kappa$.
For the case where $\mathcal{D}$ does not have bounded support, recall that the probability density function $\rho(x)$ of $\mathcal{D}$ is $\exp(-\sum_ic_ix^{\gamma_i})$. 
When 
$$x\geq \max\left\{1, \left(\frac{2\sum_{i=1}|c_i|}{c_0}\right)^{\frac{1}{\gamma_0-\gamma_1}}\right\},$$
the term $\frac{1}{2}c_0x^{\gamma_0}$ dominates the rest of the terms, and thus $\rho(x) \leq e^{-\frac{1}{2}c_0x^{\gamma_0}} \leq e^{-\frac{1}{2}c_0x}$.
Let 
$$\kappa = \max\left\{2,\  \frac{2}{c_0}\cdot\ln\left(\frac{8}{c_0}\right)+1,\  \ln\left(\frac{2}{3}\right)\cdot\left(\frac{2\sum_{i=1}|c_i|}{c_0}\right)^{\frac{1}{\gamma_0-\gamma_1}}  \right\}.$$
It is straightforward to verify that $\mathbb{E}\left[\exp\left(\frac{|\xi-1|}{\kappa}\right)\right] \leq 2$.
%
%
Hence $\xi-1$ is a sub-exponential variable with a constant sub-exponential norm $\left\|\xi-1\right\|_{\psi_1} \leq \kappa$.
\end{proof}
\noindent
For sub-exponential random variables, we have the following Bernstein's inequality.
\begin{lemma}[\textbf{Bernstein's inequality for sub-exponential variables~\cite{roman2018high}}]
\label{lem:bern_exp}
Let $X_1,\dots, X_m$ be independent, mean zero, sub-exponential random variables, and $a=(a_1,\dots,a_m)\in \mathbb{R}^m$.
Then, for every $t>0$, we have
\begin{equation}
\label{eq:Bernstein}
    \mathbb{P}\left[\left|\sum^m_{i=1}a_iX_i\right|\geq t\right] \leq 2\exp\left[-c\min\left(\frac{t^2}{K^2\left\|a\right\|^2_2}, \frac{t}{K\left\|a\right\|_{\infty}}\right)\right],
\end{equation}
where $K = \max_i\left\|X_i\right\|_{\psi_1}$ and $c$ is a constant.
\end{lemma}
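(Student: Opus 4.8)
The statement is the standard Bernstein inequality for sub-exponential sums, so the plan is to follow the classical moment-generating-function (Chernoff) route. First I would reduce the two-sided estimate to a one-sided one: since $-X_1,\dots,-X_m$ are again independent, mean zero, and sub-exponential with $\|-X_i\|_{\psi_1}=\|X_i\|_{\psi_1}$, a union bound over $\{\sum_i a_iX_i\ge t\}$ and $\{-\sum_i a_iX_i\ge t\}$ accounts for the leading factor $2$, and it suffices to bound $\mathbb{P}[\sum_i a_iX_i\ge t]$.

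The key ingredient is an MGF bound for a single centered sub-exponential variable: if $\mathbb{E}X=0$ and $\|X\|_{\psi_1}\le\kappa$, then $\mathbb{E}\exp(\lambda X)\le\exp(C\kappa^2\lambda^2)$ for all $|\lambda|\le c_0/\kappa$, with absolute constants $C,c_0$. I would prove this by first extracting a moment bound from the definition of $\|\cdot\|_{\psi_1}$: expanding $\mathbb{E}\exp(|X|/\kappa)\le 2$ term by term gives $\mathbb{E}|X|^p\le 2\kappa^p p!\le 2(\kappa p)^p$, hence $(\mathbb{E}|X|^p)^{1/p}\le 2\kappa p$ for all $p\ge1$. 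Then, expanding $\mathbb{E}\exp(\lambda X)=1+\sum_{p\ge2}\lambda^p\mathbb{E}[X^p]/p!$ (the linear term vanishes by centering), bounding $|\mathbb{E}[X^p]|\le 2(\kappa p)^p$ and $p!\ge(p/e)^p$ via Stirling, one gets $\mathbb{E}\exp(\lambda X)\le 1+2\sum_{p\ge2}(e\kappa|\lambda|)^p$, which sums to at most $\exp(C\kappa^2\lambda^2)$ once $e\kappa|\lambda|\le 1/2$.

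Next I would tensorize and apply Chernoff. Since $\|a_iX_i\|_{\psi_1}=|a_i|\,\|X_i\|_{\psi_1}\le|a_i|K$, independence gives $\mathbb{E}\exp(\lambda\sum_i a_iX_i)=\prod_i\mathbb{E}\exp(\lambda a_iX_i)\le\exp\!\big(CK^2\lambda^2\textstyle\sum_i a_i^2\big)=\exp(CK^2\lambda^2\|a\|_2^2)$, valid whenever $|\lambda a_i|\le c_0/K$ for every $i$, i.e.\ $|\lambda|\le c_0/(K\|a\|_\infty)$. Markov's inequality then yields $\mathbb{P}[\sum_i a_iX_i\ge t]\le\exp(-\lambda t+CK^2\lambda^2\|a\|_2^2)$ on that range of $\lambda$. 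Optimizing: the unconstrained minimizer is $\lambda^\star=t/(2CK^2\|a\|_2^2)$; if $\lambda^\star\le c_0/(K\|a\|_\infty)$ it is admissible and gives the sub-Gaussian term $\exp(-t^2/(4CK^2\|a\|_2^2))$, and otherwise I would take $\lambda=c_0/(K\|a\|_\infty)$, which lies below $\lambda^\star$, so the decreasing quadratic there gives at least $\exp(-c_0t/(2K\|a\|_\infty))$. Combining the two cases into a single $\min$ and absorbing constants into $c$ finishes the proof.

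I expect the main obstacle to be the single-variable MGF bound with the correct admissibility range for $\lambda$ and clean absolute constants; the tensorization, the Chernoff step, and the two-regime optimization are then routine. Two small points also need care: keeping $C,c_0$ uniform across coordinates so that the single constraint $|\lambda|\le c_0/(K\|a\|_\infty)$ suffices, and verifying $\|a_iX_i\|_{\psi_1}=|a_i|\,\|X_i\|_{\psi_1}$, which is immediate from the definition by the rescaling $t\mapsto|a_i|t$.
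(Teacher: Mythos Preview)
The paper does not prove this lemma at all; it is stated with a citation to Vershynin's textbook and used as a black box. Your sketch is the standard Chernoff/MGF proof (single-variable MGF bound on the admissible range $|\lambda|\le c_0/\kappa$, tensorization by independence, two-regime optimization in $\lambda$) that appears in that reference, and it is correct.
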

\noindent
With Bernstein's inequality, we can bound the difference between $\varphi_h$ and $\widetilde{\varphi_h}$ by the following lemma.
\begin{lemma}[\textbf{Difference between $\varphi_h$ and $\widetilde{\varphi_h}$}]
\label{lmm:app_diff_phi_hat1}
Suppose that $n \geq 192\kappa^2c^{-1}\cdot \alpha^{-2}\ln(4\delta^{-1})$, where $\kappa$ denotes the sub-exponential norm of the noise multiplier and $c$ is the constant in Ineq.~\eqref{eq:Bernstein}.
For any set $A\in \mathcal{I}(r-1)$, we have $$\mathbb{P}\left[|\widetilde{\varphi_h}(A)-\varphi_h(A)| > \frac{\alpha}{2}\cdot\max_{S\in \mathcal{I}(r-1)}\varphi_h(S)\right] \leq \frac{\delta}{2}.$$
\end{lemma}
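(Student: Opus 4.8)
The plan is to express $\widetilde{\varphi_h}(A) - \varphi_h(A)$ as a weighted sum of independent mean-zero sub-exponential random variables and apply Bernstein's inequality (Lemma~\ref{lem:bern_exp}). First I would write $\varphi_h(A) = \sum_{T \in \mathcal{L}_A} \tau_A(T) f(T)$ using the notation set up before Lemma~\ref{lem:tau_bound}, and correspondingly $\widetilde{\varphi_h}(A) = \sum_{T \in \mathcal{L}_A} \tau_A(T) \tilde f(T) = \sum_{T \in \mathcal{L}_A} \tau_A(T) \xi_T f(T)$, since each $\tilde f(T) = \xi_T f(T)$ with the $\xi_T$ i.i.d.\ from $\mathcal{D}$. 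Subtracting, $\widetilde{\varphi_h}(A) - \varphi_h(A) = \sum_{T \in \mathcal{L}_A} \tau_A(T) f(T) (\xi_T - 1)$. Setting $a_T = \tau_A(T) f(T)$ and $X_T = \xi_T - 1$, the variables $X_T$ are independent, mean zero (since $\mathbb{E}[\mathcal{D}] = 1$), and sub-exponential with $\|X_T\|_{\psi_1} \le \kappa$ by Claim~\ref{clm:subexp}. This puts us exactly in the setting of Lemma~\ref{lem:bern_exp}.

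Next I would bound the two norms $\|a\|_2$ and $\|a\|_\infty$ appearing in Bernstein's bound. Since $f$ is monotone, $f(T) \le f(A+e) \le$ a value controlled by the maximum of $\varphi_h$; more precisely, because $\varphi_h(S) \ge h(S) \ge$ (something comparable to) $f$ on the relevant sets via Lemma~\ref{lmm:bound_g}, and each $T \in \mathcal{L}_A$ has $|T| \le |A|+1 \le r$, one gets $f(T) \le c' \max_{S \in \mathcal{I}(r-1)} \varphi_h(S) =: c' \Phi^\star$ for an absolute constant $c'$. Then, using Lemma~\ref{lem:tau_bound}, $\|a\|_2^2 = \sum_T \tau_A(T)^2 f(T)^2 \le (c'\Phi^\star)^2 \cdot \frac{12}{n|A|} \le (c'\Phi^\star)^2 \cdot \frac{12}{n}$ (using $|A| \ge 1$; if $A = \varnothing$ the statement is trivial since $\varphi_h(\varnothing)=0=\widetilde{\varphi_h}(\varnothing)$), and $\|a\|_\infty \le \frac{4}{n} c'\Phi^\star$. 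Plugging $t = \frac{\alpha}{2}\Phi^\star$ into Bernstein, the quadratic term in the min is $\frac{(\alpha\Phi^\star/2)^2}{\kappa^2 \cdot 12(c'\Phi^\star)^2/n} = \frac{\alpha^2 n}{48 (c')^2 \kappa^2}$ and the linear term is $\frac{\alpha\Phi^\star/2}{\kappa \cdot 4 c'\Phi^\star/n} = \frac{\alpha n}{8 c' \kappa}$; since $\alpha < 1$ and $n$ is large, the quadratic term dominates (or one just takes the smaller), so the bound becomes $2\exp(-c \cdot \frac{\alpha^2 n}{C \kappa^2})$ for an absolute constant $C$. The hypothesis $n \ge 192 \kappa^2 c^{-1} \alpha^{-2} \ln(4\delta^{-1})$ is calibrated so that this exponential is at most $2 \exp(-\ln(4\delta^{-1})) = \delta/2$, which is the claim.

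The main obstacle I anticipate is pinning down the inequality $f(T) \le c' \max_{S \in \mathcal{I}(r-1)} \varphi_h(S)$ with an honest absolute constant: one must relate values of the original $f$ at subsets $T \subseteq A+e$ to values of the auxiliary function $\varphi_h$ at independent sets, which requires chaining monotonicity of $f$, the definition $h(S) = \frac1n\sum_e f(S+e)$ (so $h$ is an average that could be smaller than individual $f(S+e)$ — this needs care), and the sandwich $h \le \varphi_h \le \frac{e}{e-1}H_{|A|} h$ from Lemma~\ref{lmm:bound_g}. A clean way around this is to observe that for any $T \in \mathcal{L}_A$ there is $e \in N$ with $T \subseteq A + e$, hence $f(T) \le f(A+e) \le n \cdot h(A)$... but that loses a factor $n$; better is to note $f(T) \le \max_{|S| \le r, S \text{ feasible}} f(S)$ and that this max is at most $\frac{e}{e-1} H_r \cdot$ the max of $h$, which in turn is at most the max of $\varphi_h$. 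Getting the constant bookkeeping exactly right here — and verifying the two numerical constants $12$ and $4$ from Lemma~\ref{lem:tau_bound} propagate to the stated $192$ — is where the real work lies; everything else is a direct substitution into Lemma~\ref{lem:bern_exp}.
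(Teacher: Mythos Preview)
Your approach is correct and is exactly the paper's: write $\widetilde{\varphi_h}(A)-\varphi_h(A)=\sum_{T\in\mathcal{L}_A}\tau_A(T)f(T)(\xi_T-1)$, invoke Bernstein (Lemma~\ref{lem:bern_exp}) with $a_T=\tau_A(T)f(T)$ and $X_T=\xi_T-1$, bound $\|a\|_2$ and $\|a\|_\infty$ via Lemma~\ref{lem:tau_bound}, and verify the hypothesis on $n$ forces the tail below $\delta/2$.

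The one step you flag as the obstacle---getting $f(T)\le c'\,\Phi^\star$ with an absolute constant $c'$---has a short resolution you missed, and your proposed route through $\frac{e}{e-1}H_r$ would actually break the constant $192$ (that factor is $\Theta(\log r)$, not absolute). The paper's argument is: every $T\in\mathcal{L}_A$ satisfies $T\subseteq A+e$ for some $e\in N$, so by monotonicity and then \emph{submodularity} of $f$,
\[
f(T)\;\le\; f(A+e)\;\le\; f(A)+f(e).
\]
Since $A\in\mathcal{I}(r-1)$ and $\{e\}\in\mathcal{I}(r-1)$, each summand is at most $\max_{S\in\mathcal{I}(r-1)}f(S)$. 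Finally $f(S)\le h(S)\le\varphi_h(S)$ for all $S$ (the first inequality because $h(S)$ is an average of values $f(S+e)\ge f(S)$, the second by Lemma~\ref{lmm:bound_g}), so $f(T)\le 2\Phi^\star$ and $c'=2$. With this, $\|a\|_2^2\le\frac{48}{n|A|}(\Phi^\star)^2$ and $\|a\|_\infty\le\frac{8}{n}\Phi^\star$, the quadratic term in the Bernstein minimum is $\frac{\alpha^2 n|A|}{192\kappa^2}$ and the linear term is $\frac{\alpha n}{16\kappa}$; the quadratic one is the smaller precisely when $\alpha|A|\le 12\kappa$, which holds in the paper's regime since $\alpha=\varepsilon/(4r\ln r)$ and $|A|\le r-1$. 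The hypothesis $n\ge 192\kappa^2 c^{-1}\alpha^{-2}\ln(4\delta^{-1})$ then gives the bound $\delta/2$ directly (using $|A|\ge 1$).
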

\begin{proof}
Let us fix a set $A\in\mathcal{I}(r-1)$.
From monotonicity and submodularity of $f$, for any $T\in\mathcal{L}_A$,
\begin{equation}
\label{eq:max}
    f(T) \leq f(A) + \max_{e\in N}f(e) \leq 2\cdot\max_{S\in \mathcal{I}(r-1)}f(S) \leq 2\cdot\max_{S\in \mathcal{I}(r-1)}\varphi_h(S).
\end{equation}
%
%
Let $X_T = \xi(T)-1$ and $a_T = \tau_{A}(T)f(T)$.
By Lemma~\ref{lem:tau_bound}, we have
\begin{align*}
\sum_{T\in\mathcal{L}_A}\left(\tau_{A}(T)f(T)\right)^2
\leq 4\left(\sum_{T\in\mathcal{E}_{A}}\tau^2_{A}(T)\right)\left(\max_{S\in \mathcal{I}(r-1)}\varphi_h(S)\right)^2\leq \frac{48}{|A|n}\cdot\left(\max_{S\in \mathcal{I}(r-1)}\varphi_h(S)\right)^2
\end{align*}
and
$$\max_{T\in\mathcal{L}_A}\tau_{A}(T)f(T) \leq 2\max_{T\in\mathcal{L}_A}\tau_{A}(T) \cdot \max_{S\in \mathcal{I}(r-1)}\varphi_h(S)\leq \frac{8}{n}\cdot\max_{S\in \mathcal{I}(r-1)}\varphi_h(S).$$
Since we have 
$\alpha|A| < \varepsilon/(4H_{r-1}) < 12\kappa$,
applying Lemma~\ref{lem:bern_exp} gives
$$\mathbb{P}\left[|\widetilde{\varphi_h}(A)-\varphi_h(A)|\geq 
\frac{\alpha}{2}\cdot\max_{S\in \mathcal{I}(r-1)}\varphi_h(S)\right] 
\leq 2\exp\left(-c\cdot\frac{n\alpha^2|A|}{192\kappa^2}\right).
$$
By assumption that
$n\geq 192\kappa^2c^{-1}\cdot \varepsilon^{-2}_0\ln(4\delta^{-1})$,
the probability above is at most $\delta/2$.
\end{proof}
\noindent
The random sampling procedure that we use to estimate $\widetilde{\varphi_h}$ is presented as Algorithm~\ref{alg:hat_varphi_h}.

\begin{algorithm}[htp] 
	\caption{Approximation of $\varphi_h$} 
	\label{alg:hat_varphi_h}
	\DontPrintSemicolon
    \SetNoFillComment
    \SetKwInOut{Input}{Input}
		\Input{Noisy oracle $\Tilde{f}$, a set $A\in\mathcal{I}(r-1)$, number of samples $M$.}
         Construct a distribution $\nu(T)$ on set $\mathcal{L}_A$ such that
		$\nu(T) = {\tau_A(T)}/{\left(\sum_{T'\in \mathcal{L}_A} \tau_{A}(T')\right)}$\;
		 Sample $M$ sets $T_1,\dots,T_{M}$ in $\mathcal{L}_A$ according to the distribution $\nu(T)$\;
		 $\widehat{\varphi_h}(A)\leftarrow\left(\sum_{T'\in \mathcal{L}_A} \tau_A(T')\right)\cdot \frac{1}{M}\sum_{j=1}^M\Tilde{f} (T_j)$\;
	\Return $\widehat{\varphi_h}(A)$
\end{algorithm}
We plan to use Hoeffding's inequality to show that $\widehat{\varphi_h}(A)$ and $\widetilde{\varphi_h}(A)$ are very close to each other, and thus $\widehat{\varphi_h}$ is a $(\alpha, \delta, \mathcal{I}(r-1))$-approximation of $\varphi_h$.
\begin{lemma}[\textbf{Hoeffding’s inequality}]
\label{lmm:Hoeffding}
Let $X_1,\cdots , X_N$ be independent bounded
random variables such that $X_i \in [a_i, b_i]$, where $-\infty < a_i \leq b_i < \infty$. 
Then
\[\mathbb{P}\left[\left|\frac{1}{N}\sum\limits_{i=1}^N(X_i-\mathbb{E}X_i)\right|\geq t\right]\leq 2\exp \left[-\frac{2N^2t^2}{\sum^N_{i=1}(b_i-a_i)^2}\right].\]
\end{lemma}
\noindent To use Hoeffding's inequality, we need a bound $b$ for noise multipliers.
If the noise distribution $\mathcal{D}$ has bounded support, this bound is natural.
When the support of $\mathcal{D}$ is not bounded, we let 
$$b = \frac{2}{c_0}\ln\left(c_0^{-1}2^{|A|+3}n\delta^{-1}\right).$$
%
%
%
%
When $n$ is sufficiently large,
$$\mathbb{P}\left[\xi\geq b\right] \leq \int_{b}^{\infty} \exp\left(-\frac{1}{2}c_0x\right)dx \leq \frac{\delta}{2^{|A|+2}n}.$$
The probability that all $\xi(T)\ (T\in\mathcal{L}_A)$ are bounded by $b$ is
$$\left(1-\frac{\delta}{2^{r} n}\right)^{|\mathcal{L}_A|} \geq 1- |\mathcal{L}_A|\cdot\frac{\delta}{2^{|A|+2} n} \geq 1- \frac{\delta}{4}.$$


\begin{lemma}[\textbf{Difference between $\widetilde{\varphi_h}$ and $\widehat{\varphi_h}$}]
\label{lmm:app_diff_tilde_hat_1}
For any set $A\in \mathcal{I}(r-1)$, we assume $M \geq c^{\frac{1}{2}}\kappa^{-1}\cdot b(\ln(r-1)+2)\sqrt{n}$, where $c$ is the constant in Ineq.~\eqref{eq:Bernstein}, and suppose that all $\xi(T)\ (T\in\mathcal{L}_A)$ are bounded by $b$. 
Then
$$\mathbb{P}\left[|\widehat{\varphi_h}(A)-\widetilde{\varphi_h}(A)| > \frac{\alpha}{2}\cdot\max_{S\in \mathcal{I}(r-1)}\varphi_h(S)\right] \leq \frac{\delta}{4}.$$
\end{lemma}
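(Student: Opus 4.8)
The plan is to apply Hoeffding's inequality (Lemma~\ref{lmm:Hoeffding}) to the $M$ i.i.d.\ samples $T_1,\dots,T_M$ drawn from the distribution $\nu$ in Algorithm~\ref{alg:hat_varphi_h}. First I would observe that, conditioned on all noise multipliers $\xi(T)$ for $T\in\mathcal{L}_A$ being bounded by $b$, the random variable $X_j := \left(\sum_{T'\in\mathcal{L}_A}\tau_A(T')\right)\cdot\tilde f(T_j)$ satisfies $\mathbb{E}[X_j] = \widetilde{\varphi_h}(A)$, since $\mathbb{E}_\nu[\tilde f(T)] = \sum_{T\in\mathcal{L}_A}\nu(T)\tilde f(T) = \sum_{T\in\mathcal{L}_A}\tau_A(T)\tilde f(T)\big/\big(\sum_{T'}\tau_A(T')\big) = \widetilde{\varphi_h}(A)\big/\big(\sum_{T'}\tau_A(T')\big)$. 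Thus $\widehat{\varphi_h}(A) = \frac1M\sum_{j=1}^M X_j$ is an unbiased estimator of $\widetilde{\varphi_h}(A)$, and Hoeffding's inequality applies directly to the average.

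Next I would bound the range $[a_j,b_j]$ of each $X_j$. Since $\tilde f(T_j) = \xi(T_j)f(T_j) \in [0,\, b\cdot f(T_j)]$ and, by Ineq.~\eqref{eq:max} in the proof of Lemma~\ref{lmm:app_diff_phi_hat1}, $f(T_j)\leq 2\max_{S\in\mathcal{I}(r-1)}\varphi_h(S)$, each $X_j$ lies in an interval of length at most $2b\cdot\big(\sum_{T'\in\mathcal{L}_A}\tau_A(T')\big)\cdot\max_{S\in\mathcal{I}(r-1)}\varphi_h(S)$. By part (3) of Lemma~\ref{lem:tau_bound}, $\sum_{T'\in\mathcal{L}_A}\tau_A(T')\leq 2(\ln|A|+2)\leq 2(\ln(r-1)+2)$, so $b_j - a_j \leq 4b(\ln(r-1)+2)\max_{S\in\mathcal{I}(r-1)}\varphi_h(S)$ for every $j$. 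Plugging this and $N=M$ into Lemma~\ref{lmm:Hoeffding} with $t = \tfrac{\alpha}{2}\max_{S\in\mathcal{I}(r-1)}\varphi_h(S)$ yields
\[
\mathbb{P}\left[\left|\widehat{\varphi_h}(A)-\widetilde{\varphi_h}(A)\right| > \tfrac{\alpha}{2}\max_{S\in\mathcal{I}(r-1)}\varphi_h(S)\right] \leq 2\exp\left(-\frac{M\alpha^2}{8\cdot 16\,b^2(\ln(r-1)+2)^2}\right),
\]
where the $\max_{S}\varphi_h(S)^2$ factors cancel between numerator and denominator. It then remains to check that the hypothesis $M \geq c^{1/2}\kappa^{-1}b(\ln(r-1)+2)\sqrt{n}$, combined with the relation $\alpha^{-2}\ln(4\delta^{-1})\leq c\,n\kappa^{-2}/192$ from Lemma~\ref{lmm:app_diff_phi_hat1} (equivalently $M^2\alpha^2 \gtrsim b^2(\ln(r-1)+2)^2\ln(\delta^{-1})$ up to the constant $128$), makes the exponent at least $\ln(8/\delta)$, so the bound is at most $\delta/4$.

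The main obstacle I anticipate is not the Hoeffding step itself, which is routine, but the bookkeeping that ties $M$, $\alpha$, $\delta$, $n$, $\kappa$, and $b$ together so that the final exponent clears $\ln(8\delta^{-1})$ with the stated value of $M$; one must be careful that the growth of $b=\tfrac{2}{c_0}\ln(c_0^{-1}2^{|A|+3}n\delta^{-1})$ is logarithmic (hence absorbed) and that the $\sqrt{n}$ factor in the lower bound on $M$ is exactly what is needed to beat the $n$ on the right-hand side of the constraint inherited from Lemma~\ref{lmm:app_diff_phi_hat1}. I would carry out the constant-chasing last, after the structural steps are in place, and note that the union bound with the event ``all $\xi(T)\leq b$'' (which has probability $\geq 1-\delta/4$, shown just before the lemma) combines with this $\delta/4$ and the $\delta/2$ of Lemma~\ref{lmm:app_diff_phi_hat1} to give the overall $\delta$ failure probability required by Definition~\ref{def:approximation}.
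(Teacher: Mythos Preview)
Your proposal is correct and follows essentially the same approach as the paper: define $X_j=\bigl(\sum_{T'}\tau_A(T')\bigr)\tilde f(T_j)$, observe that $\widehat{\varphi_h}(A)=\frac{1}{M}\sum_j X_j$ is an unbiased estimator of $\widetilde{\varphi_h}(A)$, bound the range of each $X_j$ via Ineq.~\eqref{eq:max} and Lemma~\ref{lem:tau_bound}(3), and apply Hoeffding (Lemma~\ref{lmm:Hoeffding}); the paper's proof is the same but terser, skipping the explicit unbiasedness and range computations that you spell out, and then invoking the constraint on $n$ from Lemma~\ref{lmm:app_diff_phi_hat1} together with the assumed lower bound on $M$ exactly as you describe.
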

%
\begin{proof}
We apply Hoeffding's inequality by letting $X_j = \left(\sum_{T'\in\mathcal{L}_A}\tau_A(T')\right)\Tilde{f}(T_j)$ and obtain
$$
\mathbb{P}\left[|\widehat{\varphi_h}(A)- \widetilde{\varphi_h}(A)|\geq \frac{\alpha}{2}\cdot\max_{S\in \mathcal{I}(r-1)}\varphi_h(S)\right] 
    \leq 2\exp\left[-\frac{M^2\varepsilon^2_0}{8b^2\left(\sum_{T\in\mathcal{E}_{A}}\tau_{A}(T)\right)^2}\right].
$$
By assumption that $M \geq c^{\frac{1}{2}}\kappa^{-1}\cdot b(\ln(r-1)+2)\sqrt{n}$ and $n \geq 192\kappa^2c^{-1}\cdot \alpha^{-2}\ln(4\delta^{-1})$, the probability above is at most ${\delta}/{4}$.
\end{proof}
\noindent
For any $A\in\mathcal{I}(r-1)$, by taking a union bound, $|\widehat{\varphi_h}(A)-\varphi_h(A)|\leq \alpha\max_{S\in \mathcal{I}(r-1)}\varphi_h(S)$ fails with probability at most ${\delta}/{2}+{\delta}/{4}+(1-{\delta}/{4}){\delta}/{4} \leq \delta$.
The lemma is proved.
\end{proof}

\subsubsection{Algorithm for Theorem~\ref{thm:cardi}}

\begin{algorithm}[htp]
 \caption{Noisy local search under a small cardinality constraint}
  \label{alg:local_search_monotone}
\DontPrintSemicolon
\SetNoFillComment
  \SetKwInOut{Input}{Input}
  \Input{a value oracle to $\Tilde{f}$, budget $r\in \Omega\left(\frac{1}{\varepsilon}\right) \cap O\left(n^{\frac{1}{3}}\right)$ and $\varepsilon\in (0,1/2)$}
  %
        Let $\widehat{\varphi_h}$ be a $\left(\alpha=\frac{\varepsilon}{4r\ln r}, \delta=\frac{1}{(I+1)(r-1)n^2}, \mathcal{I}(r-1)\right)$-approximation of $\varphi_h$ as in Lemma~\ref{lem:hat_varphi_h}\; 
    	$S_L\leftarrow\mathtt{NLS}\left(\widehat{\varphi_h}, \mathcal{I}(r-1), \Delta = \frac{\varepsilon}{4r\ln r}\right)$ \Comment*[r]{Local search phase}
		$S_M \leftarrow S_L+ \arg\max_{e\in N\backslash S_L} \Tilde{f}(S_L+e)$\Comment*[r]{$\Tilde{f}$-maximization phase}
\Return $S_{M}$
\end{algorithm}
We present Algorithm~\ref{alg:local_search_monotone} that contains two phases: a local search phase (Line 2) and a $\Tilde{f}$-maximization phase (Line 3).
We first run a non-oblivious local search procedure (Algorithm~\ref{alg:local_search}) under the $(r-1)$-cardinality constraint (Line 2).
We set $\delta = \frac{1}{(I+1)(r-1)n^2}$ by  Corollary~\ref{cor:main_nls} when applying Algorithm~\ref{alg:local_search} to obtain a $\left(1-\frac{1}{e}-\varepsilon\right)$-approximation solution $S_L$ for $h$-maximization.
At the $\Tilde{f}$-maximization phase (Line 3), we obtain $S_M$ by selecting an additional element $e\in N\setminus S_L$ that maximizes $\tilde{f}(S_L+e)$.
This guarantees that $f(S_M) \geq (1-\varepsilon)h(S_L)$ with high probability (Lemma~\ref{lem:add}), which results in the approximation ratio $1-\frac{1}{e}-O(\varepsilon)$ in Theorem~\ref{thm:cardi}.
Given the smoothing surrogate function $h(S)$ (Definition 4.2), a natural idea is to simply apply local search (Algorithm 1) to optimize it under the $r$-cardinality constraint. However, this idea does not yield a provable approximation of $\max_{S:|S|\leq r} f(S)$, since it is not easy to control the contribution of the additional element $e\in N$ to $h(S)$ and it is possible that $h(S)\gg f(S)$.
To handle this difficulty, Algorithm~\ref{alg:local_search_monotone} first runs a local search procedure under the $(r-1)$-cardinality constraint, and then selects an additional element with a "large enough" margin at the $\tilde{f}$-maximization phase.
This idea enables us to control the loss induced by the surrogate $h$  within $\frac{1}{r}\cdot \max_{S:|S|\leq r} f(S)$.

\subsubsection{Proof of Theorem~\ref{thm:cardi}}
We first analyze the query complexity and then prove the approximation performance. 

\paragraph{Query complexity of Algorithm~\ref{alg:local_search_monotone}.}
In Line 3, we make $(n-r+1)$ calls to the noisy oracle $\Tilde{f}$ in total.
By Corollary~\ref{cor:main_nls}, the total number of queries to $\widehat{\varphi_h}$ in Line 2 is at most $(I+1)(r-1)n$ and $I\leq \frac{5r\ln r}{\varepsilon}$.
Combining with Lemma~\ref{lem:hat_varphi_h}, the query complexity to $\tilde{f}$ is upper bounded by
$(n-r+1) + M(I+1)(r-1)n = O\left(r^2\log ^2r\cdot n^{\frac{3}{2}}\varepsilon^{-1}\max\{r, \log n\}\right)$.
This matches Theorem~\ref{thm:cardi}.

\paragraph{Approximation ratio and success probability of Algorithm~\ref{alg:local_search_monotone}.}
Let $O_h \in \arg\max_{S\in \mathcal{I}(r-1)} h(S)$ represent the $(r-1)$-set whose value of $h$ is the largest.
Following from Theorem~\ref{the:main_nls}, we have
$h(S_L) \geq \left(1-\frac{1}{e}-\varepsilon\right)h(O_{h})$ with probability at least $1-\frac{1}{n}$.
We denote by $O_f\in \arg\max_{S\in \mathcal{I}(r)}f(S)$ the optimal solution to $f$.
By the submodularity of $f$, $O_f$ has a subset $\Tilde{O}_f$ with $r-1$ elements such that $f(\Tilde{O}_f) \geq \left(1-\frac{1}{r}\right)f(O_f)$.
Then we have
\begin{equation}
\label{eq:S_L_2}
    h(S_L) \geq \left(1-\frac{1}{e}-\varepsilon\right)h(\Tilde{O}_{f}) \geq \left(1-\frac{1}{e}-\varepsilon\right)f(\Tilde{O}_{f}) \geq \left(1-\frac{1}{r}\right)\left(1-\frac{1}{e}-\varepsilon\right)f(O_f),
\end{equation}
where the first inequality follows from $h(S_L) \geq \left(1-\frac{1}{e}-\varepsilon\right)h(O_{h})$ and the second from monotonicity of $f$.
Since $r$ is assumed to be $\Omega\left(\frac{1}{\varepsilon}\right)$, $h(S_L)$ is a $\left(1-\frac{1}{e}-O(\varepsilon)\right)$-approximation of $f(O_f)$.

Recall that $h(S_L)$ is the expectation of $f(S_L+e)$ over a random element $e\in N$.
Ineq.~\eqref{eq:S_L_2} already proves a claim that uniformly randomly adding an element $e\in N$ to $S_L$ achieves an approximation ratio of $1-\frac{1}{e}-O(\varepsilon)$ in expectation for maximizing $f$, i.e.,
$\mathbb{E}[f(S_L + e)]\geq \left(1-\frac{1}{e}-O(\varepsilon)\right)f(O_f).$
Moreover, we can convert this claim to a with-high-probability claim by the following lemma.

\begin{lemma}[\textbf{Approximation at the $\Tilde{f}$-maximization phase}]
\label{lem:add}
With probability $1-O\left(\frac{
1}{\log n}\right)$, we have 
$f(S_M) \geq (1-\varepsilon)h(S_L).$
\end{lemma}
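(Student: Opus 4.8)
The plan is to relate the value $f(S_M)$, which uses the single best additional element chosen according to the \emph{noisy} oracle $\tilde f$, to the average $h(S_L) = \frac1n\sum_{e\in N} f(S_L+e)$. The key observation is that $S_M = S_L + e^\star$ where $e^\star = \arg\max_{e\in N\setminus S_L}\tilde f(S_L+e) = \arg\max_{e}\xi_{S_L+e} f(S_L+e)$, so $f(S_M)$ is at least as large as $f(S_L+e)$ for any $e$ whose noise multiplier $\xi_{S_L+e}$ is not too small \emph{and} whose true value $f(S_L+e)$ is not too small. More precisely, for the maximizer $e^\star$ we have $\xi_{S_L+e^\star} f(S_L+e^\star) \ge \xi_{S_L+e} f(S_L+e)$ for all $e$; if additionally $\xi_{S_L+e^\star}$ is bounded above by some constant $b$ (which holds with high probability by the generalized exponential tail, exactly as in the bound $\mathbb P[\xi\ge b]\le \delta/2^{|A|+2}n$ used earlier in Section~\ref{sec:cardinality1}), then $f(S_M) = f(S_L+e^\star) \ge \frac1b\,\xi_{S_L+e} f(S_L+e)$ for every $e$.

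The heart of the argument is then a concentration/averaging step: I would show that among the $n$ elements $e\in N$, a $(1-\varepsilon/2)$-fraction (say) of the quantities $\xi_{S_L+e} f(S_L+e)$ are simultaneously large, in the sense that their maximum is at least $(1-\varepsilon)\,\frac1n\sum_e f(S_L+e) = (1-\varepsilon) h(S_L)$ up to the factor $b$. Concretely, since $\mathbb E[\xi]=1$ and the $\xi_{S_L+e}$ are i.i.d., the random variables $\xi_{S_L+e} f(S_L+e)$ have mean $f(S_L+e)$ summing to $n\,h(S_L)$; I want to argue that $\max_e \xi_{S_L+e} f(S_L+e)$ cannot be much smaller than $h(S_L)$. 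The cleanest route is: condition on the high-probability event $\mathcal B$ that all the $\xi_{S_L+e}$ lie in $[\,\text{something},b\,]$ — in fact we only need the upper bound $\xi_{S_L+e}\le b$ — and observe that $\max_e \xi_{S_L+e} f(S_L+e) \ge \frac1n\sum_e \xi_{S_L+e} f(S_L+e)$, and then apply a concentration inequality (Hoeffding, using the almost-sure bound $\xi_{S_L+e} f(S_L+e)\le b\cdot 2\max_{S\in\mathcal I(r)}f(S)$ as in Eq.~\eqref{eq:max}) to show $\frac1n\sum_e \xi_{S_L+e} f(S_L+e) \ge (1-\varepsilon)\cdot\frac1n\sum_e f(S_L+e) = (1-\varepsilon)h(S_L)$ with probability $1-O(1/\log n)$. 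Combining with $f(S_M)\ge \max_e \xi_{S_L+e}f(S_L+e)$ — wait, more carefully: $f(S_M) = f(S_L+e^\star)$ and $\xi_{S_L+e^\star}f(S_L+e^\star)=\max_e \xi_{S_L+e}f(S_L+e)$, so using $\xi_{S_L+e^\star}\le b$ gives $f(S_M)\ge \frac1b\max_e\xi_{S_L+e}f(S_L+e)\ge \frac1b(1-\varepsilon)h(S_L)$; absorbing the $\frac1b$ requires instead arguing directly that the \emph{true} value $f(S_L+e^\star)$ is large, which one gets because the element realizing the $\frac1n\sum_e$-average bound above already has $f(S_L+e)\ge (1-\varepsilon)h(S_L)$ with high probability when its own multiplier is close to $1$, and $e^\star$ dominates it under $\tilde f$ while having a bounded multiplier.

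The main obstacle I anticipate is the direction of the inequality: the noise on $e^\star$ could be \emph{large}, which inflates $\tilde f(S_L+e^\star)$ without guaranteeing $f(S_L+e^\star)$ is large, so one cannot naively pass from "$e^\star$ maximizes $\tilde f$" to "$e^\star$ has large $f$". The fix, and the technically delicate part, is to exhibit at least one element $e_0$ whose multiplier is simultaneously (i) bounded above by a constant close to $1$ and (ii) such that $f(S_L+e_0)\ge(1-\varepsilon/2)h(S_L)$ — the latter by a counting argument on the fixed numbers $\{f(S_L+e)\}_{e\in N}$ (at most an $O(\varepsilon)$-fraction of them can exceed $\frac{2}{\varepsilon}h(S_L)$ since they average to $h(S_L)$, and conversely a constant fraction must be at least, say, $\frac12 h(S_L)$ — more care is needed to get the clean $(1-\varepsilon)$ factor, likely by discarding the at most $\varepsilon n/2$ largest ones and applying the tail bound on the multipliers only to the remaining "typical" elements). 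Then $e^\star$ inherits $\tilde f(S_L+e^\star)\ge \tilde f(S_L+e_0)\ge (1-o(1))f(S_L+e_0)$, and since $\xi_{S_L+e^\star}\le b$ on the high-probability event (applying $\mathbb P[\xi\ge b]\le \delta/2^{|A|+2}n$ with a union bound over the $n$ candidate elements, exactly as done earlier), we conclude $f(S_M)=f(S_L+e^\star)\ge (1-\varepsilon)h(S_L)$. The failure probability $O(1/\log n)$ will come from choosing $b = \frac{2}{c_0}\ln(\cdots)$ so that the union bound over $n$ elements leaves a $1/\mathrm{poly}$ slack, together with the Hoeffding error, which with $n\in\Omega(\exp((1/\varepsilon)^{O(1)}))$ is comfortably $o(1/\log n)$.
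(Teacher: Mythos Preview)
Your proposal has two genuine gaps, both in the final paragraph where the real argument lives.

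\textbf{Gap 1: the counting argument for good elements fails.} You want to find many elements $e_0$ with $f(S_L+e_0)\ge(1-\varepsilon/2)h(S_L)$ by a pure averaging argument on the numbers $\{f(S_L+e)\}_e$. But averaging alone does not give this: a single element could carry almost all the mass (e.g.\ $f(S_L+e_1)\approx n\cdot h(S_L)$ and $f(S_L+e)\approx f(S_L)\ll h(S_L)$ for all other $e$), and then essentially no element is good in your sense. The paper rules out this scenario by invoking the \emph{local optimality} of $S_L$: if the good set were too small, some good element would have $f(S_L+e_0)>8H_{r-1}\,h(S_L)$, and then swapping any $e'\in S_L$ for $e_0$ would strictly improve $\widehat{\varphi_h}$ by more than the step threshold, contradicting that local search terminated at $S_L$. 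This use of local optimality is the missing idea; without it your counting step does not go through.

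\textbf{Gap 2: bounding $\xi_{S_L+e^\star}\le b$ is far too weak.} Your last inference reads: from $\tilde f(S_L+e^\star)\ge(1-o(1))f(S_L+e_0)\ge(1-\varepsilon/2)h(S_L)$ and $\xi_{S_L+e^\star}\le b$, conclude $f(S_L+e^\star)\ge(1-\varepsilon)h(S_L)$. But $b=\Theta(\log n)$ in the unbounded-support case, so this only yields $f(S_L+e^\star)\ge \Theta(1/\log n)\,h(S_L)$. The real danger is that $e^\star$ is a \emph{bad} element ($f(S_L+e^\star)<(1-\varepsilon)h(S_L)$) whose noise multiplier happens to be the overall maximum $\approx\Theta(\log n)$; nothing in your argument excludes this. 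The paper's fix is to compare \emph{extreme values}: once Gap~1 is closed and the good set $\mathcal G$ has size $\Omega(\varepsilon n/\log r)$, the maxima $\xi^*_{\mathcal G}=\max_{e\in\mathcal G}\xi_{S_L+e}$ and $\xi^*_{\mathcal B}=\max_{e\in\mathcal B}\xi_{S_L+e}$ over i.i.d.\ samples of comparable polynomial size satisfy $\xi^*_{\mathcal G}/\xi^*_{\mathcal B}\ge 1-\varepsilon/2$ with probability $1-O(1/\log n)$ (this is where the $1/\log n$ actually comes from, via explicit thresholds $m_{\mathcal G},M_{\mathcal B}$ computed from the generalized-exponential-tail density). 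Then $\max_{e\in\mathcal G}\tilde f(S_L+e)\ge \xi^*_{\mathcal G}(1-\varepsilon/2)h(S_L)>(1-\varepsilon)\xi^*_{\mathcal B}h(S_L)>\max_{e\in\mathcal B}\tilde f(S_L+e)$, so $e^\star$ cannot be bad.
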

\noindent
This lemma indicates that a \textit{bad} element $e\in N\setminus S_L$ with $f(S_L+e) < (1-\varepsilon)h(S_L)$ is unlikely to be chosen at the $\Tilde{f}$-maximization phase.
This is because the local search procedure guarantees that the number of \textit{good} elements with $f(S_L+e)\geq (1-\varepsilon)h(S_L)$ is almost of the same order as the bad ones. 

\begin{proof}
    We denote by $e^{\star}\in \arg\max_{e\in N\backslash S_L}\Tilde{f}(S_L+e)$ the element added to $S_L$ at the $\Tilde{f}$-maximization phase by Algorithm~\ref{alg:local_search_monotone}.
We define two kinds of elements in $N\backslash S_L$.
Say that an element $e$ is \textit{good} if
\begin{equation}
\label{eq:good_elements}
    f(S_L+e) \geq (1-\frac{\varepsilon}{2}) h(S_L),
\end{equation}
and $e$ is \textit{bad} if 
$$f(S_L+e) < (1-\varepsilon)h(S_L).$$
The set of good elements and that of bad ones are denoted as $\mathcal{G}$ and $\mathcal{B}$, respectively.
Our goal is to prove that $e^{\star}$ is not bad with probability $1-O\left(\frac{1}{\log n}\right)$.
\noindent
First we use the following lemma to quantify the number of good elements.
\begin{lemma}[\textbf{The number of good elements}]
\label{lem:no_of_good}
    With probability $1-\frac{1}{n}$, there are at least $\frac{\varepsilon n}{16H_{r-1}}$ good elements in $N\backslash S_L$, where $H_{r-1} = \sum^{r-1}_{i=1} \frac{1}{i}$.
\end{lemma}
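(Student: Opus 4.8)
The plan is to argue by contradiction: suppose fewer than $\frac{\varepsilon n}{16 H_{r-1}}$ elements of $N\setminus S_L$ are good, and derive a contradiction with the near-local-optimality guarantee $h(S_L)\geq\left(1-\frac1e-\varepsilon\right)h(O_h)$ from Theorem~\ref{the:main_nls}, together with Ineq.~\eqref{eq:S_L_2}. The key identity to exploit is the definition $h(S_L)=\frac1n\sum_{e\in N}f(S_L+e)$, which expresses $h(S_L)$ as an average of the quantities $f(S_L+e)$ that define ``good.'' First I would split this average according to whether $e$ is good or not:
\[
h(S_L) \;=\; \frac1n\sum_{e\in N}f(S_L+e) \;\leq\; \frac{|\mathcal{G}|}{n}\cdot\max_{e}f(S_L+e) \;+\; \frac{n-|\mathcal{G}|}{n}\cdot\left(1-\tfrac{\varepsilon}{2}\right)h(S_L),
\]
using that every non-good $e$ contributes less than $(1-\frac{\varepsilon}{2})h(S_L)$. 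Rearranging gives a lower bound on $|\mathcal{G}|$ in terms of the ratio $\max_e f(S_L+e)/h(S_L)$, so the crux becomes: how large can $\max_e f(S_L+e)$ be relative to $h(S_L)$?

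The bound on $\max_e f(S_L+e)$ should come from comparing against $f(O_f)$. By monotonicity and submodularity of $f$, for any single element $e$ we have $f(S_L+e)\leq f(S_L)+f(e)\leq 2\max_{S\in\mathcal{I}(r)}f(S)=2f(O_f)$; more crudely $f(S_L+e)\leq f(S_L\cup O_f)\leq f(S_L)+f(O_f)$, and since $f(S_L)\leq h(S_L)$ we can also write $f(S_L+e)\leq h(S_L)+f(O_f)$. Then from Ineq.~\eqref{eq:S_L_2}, $f(O_f)\leq \left(1-\frac1r\right)^{-1}\left(1-\frac1e-\varepsilon\right)^{-1}h(S_L) = O(1)\cdot h(S_L)$, so $\max_e f(S_L+e)\leq C\, h(S_L)$ for an absolute constant $C$ (one should track $C$ carefully — it is something like $1+(1-1/e)^{-1}\leq 3$, possibly with the $H_{r-1}$ factor entering through a sharper submodular averaging bound rather than the crude one; the statement's $16H_{r-1}$ denominator suggests the intended estimate routes through Lemma~\ref{lmm:bound_g}-type inequalities relating $h$ on an $r$-set to $h$ on its $(r-1)$-subsets). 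Plugging $\max_e f(S_L+e)\leq C\,h(S_L)$ into the rearranged inequality yields $|\mathcal{G}|\geq \frac{(\varepsilon/2)\,n}{C + \varepsilon/2 - 1}$ or similar, which is $\Omega(\varepsilon n)$, and matching it to $\frac{\varepsilon n}{16 H_{r-1}}$ pins down where the $H_{r-1}$ must come from.

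The main obstacle I anticipate is getting the constant and, in particular, the $H_{r-1}=\Theta(\log r)$ factor in the denominator exactly right: the crude bound $f(S_L+e)\leq 2f(O_f)$ gives only $|\mathcal{G}|=\Omega(\varepsilon n)$ with no logarithmic loss, so the $H_{r-1}$ must enter because the relevant comparison is not directly $h(S_L)$ vs.\ $f(O_f)$ but rather involves the auxiliary function $\varphi_h$ and the factor $\frac{e}{e-1}H_{|A|}$ from Lemma~\ref{lmm:bound_g}, or because $\max_e f(S_L+e)$ is bounded via $\max_e h(S_L) \lesssim H_{r-1}\cdot(\text{margin})$. I would therefore redo the averaging step replacing the crude $f(S_L\cup O_f)\leq f(S_L)+f(O_f)$ by a telescoping/submodular sum over the elements of $O_f$ that brings in $H_{r-1}$, so that $\max_e f(S_L+e)\leq O(H_{r-1})\cdot h(S_L)$ is the estimate actually used. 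Finally, the ``with probability $1-\frac1n$'' qualifier is inherited entirely from the event that $h(S_L)\geq\left(1-\frac1e-\varepsilon\right)h(O_h)$ holds (Theorem~\ref{the:main_nls} with the chosen $\delta=\frac{1}{(I+1)(r-1)n^2}$ giving failure probability $\leq\frac1n$); conditioned on that event the rest of the argument is deterministic, so no further union bound is needed.
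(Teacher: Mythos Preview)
Your approach is correct and in fact yields a stronger bound than the lemma claims, but it is genuinely different from the paper's argument, and your confusion about the $H_{r-1}$ factor stems from that difference.

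You use the \emph{global} approximation guarantee: since $|S_L+e|\le r$, trivially $f(S_L+e)\le f(O_f)$, and Ineq.~\eqref{eq:S_L_2} gives $f(O_f)\le C\cdot h(S_L)$ with $C\approx\frac{e}{e-1}$; plugging $\max_e f(S_L+e)\le C\,h(S_L)$ into the rearranged averaging identity yields $|\mathcal G|=\Omega(\varepsilon n)$ directly. This is a clean, deterministic computation once the event behind Ineq.~\eqref{eq:S_L_2} holds (probability $1-\tfrac1n$, as you correctly identify), and it actually proves more than the stated $\frac{\varepsilon n}{16H_{r-1}}$.

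The paper instead uses the \emph{local} optimality of $S_L$: from the same averaging identity it extracts the existence of some $e_0$ with $f(S_L+e_0)>8H_{r-1}\,h(S_L)$ (under the contradiction hypothesis $|\mathcal G|<\frac{\varepsilon n}{16H_{r-1}}$), and then argues that swapping $e_0$ into $S_L$ would violate the termination condition of the local search. Concretely, $\varphi_h(S_L-e'_0+e_0)\ge h(S_L-e'_0+e_0)\ge f(S_L+e_0)-f(e'_0)$ is large, while the local-search stopping rule forces $\varphi_h(S_L-e'_0+e_0)\le(1+10\alpha)\varphi_h(S_L)\le(1+10\alpha)\frac{e}{e-1}H_{r-1}\,h(S_L)$ via Lemma~\ref{lmm:bound_g}. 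The $H_{r-1}$ you were hunting for enters exactly there: it is the overhead in the upper bound $\varphi_h(S_L)\le\frac{e}{e-1}H_{r-1}h(S_L)$, not in any bound on $\max_e f(S_L+e)$.

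So your speculation that $H_{r-1}$ ``must enter because the relevant comparison \ldots\ involves the auxiliary function $\varphi_h$'' is right in spirit, but it describes the paper's route, not a refinement of yours; your own argument simply doesn't need the factor. The trade-off is that the paper's proof relies only on $S_L$ being an approximate local optimum for $\widehat{\varphi_h}$ (a structural property of the algorithm's output), whereas yours invokes the downstream consequence Ineq.~\eqref{eq:S_L_2}. Both are legitimate here; yours is shorter and sharper for this lemma.
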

\begin{proof}
Suppose by contradiction that $|\mathcal{G}| < \frac{\varepsilon n}{16 H_{r-1}}$.
We will show that if there are only a few good elements in $N\backslash S_L$, a good element with very large $f$-value must exist.
However, this contradicts how Algorithm~\ref{alg:local_search_monotone} does local search, which should have changed some element in $S_L$ for the one with such large $f$-value.

By construction of function $h$ and definition of good elements, we have
\begin{align*}
    h(S_L) 
    &= \frac{1}{n}\left(\sum_{e\in \mathcal{G}}f(S_L+e) + \sum_{e\in N\backslash\mathcal{G}} f(S_L+e) \right) && \text{(Definition~\ref{def:h_1})}\\ 
    &< \frac{1}{n}\left(\sum_{e\in  \mathcal{G}}f(S_L+e) + |N\backslash\mathcal{G}|\left(1-\frac{\varepsilon}{2}\right)h(S_L) \right). && \text{(Ineq.~\eqref{eq:good_elements})}
\end{align*}
Rewriting the inequality above gives
$$\sum_{e\in\mathcal{G}}f(S_L+e) > \left(n-\left(1-\frac{\varepsilon}{2}\right)|N\backslash\mathcal{G}|\right)\cdot h(S_L) > \frac{\varepsilon n}{2}\cdot h(S_L),$$
where the last inequality holds since $|N\backslash\mathcal{G}|<n$.
Hence, there must exist a good element $e_0$ such that 
\begin{equation}
\label{eq:f_max}
    f(S_L+e_0) \geq \frac{\varepsilon n}{2|\mathcal{G}|}\cdot h(S_L) > 8H_{r-1}\cdot h(S_L).
\end{equation}
%
%
%
%

Consider the set $S_L+e_0-e'_0$, where $e'_0$ is an arbitrary element in $S_L$.
On the one hand, we must have
\begin{equation}
\label{eq:local_step}
    \widehat{\varphi_h}(S_L+e_0-e'_0) \leq (1+\alpha)\widehat{\varphi_h}(S_L).
\end{equation}
Otherwise, Algorithm~\ref{alg:local_search_monotone} would have changed $e'_0$ for $e_0$ rather than returning $S_L$.
We translate Ineq.~\eqref{eq:local_step} to that of $\varphi_h$.
With probability $1-O(\frac{1}{n})$, 
\begin{align*}
    \varphi_h(S_L+e_0-e'_0)
    &\leq  \widehat{\varphi_h}(S_L+e_0-e'_0) + \notag\alpha\max_{S\in \mathcal{I}(r-1)}\varphi_h(S) && \text{(Lemma~\ref{lem:hat_varphi_h})}\\
    \notag &\leq  (1+\alpha)\widehat{\varphi_h}(S_L) + \alpha\max_{S\in \mathcal{I}(r-1)}\varphi_h(S) && \text{(Ineq.~\eqref{eq:local_step})} \\
    \notag &\leq  (1+\alpha)\left[\varphi_h(S_L)+2\alpha\max_{S\in \mathcal{I}(r-1)}\varphi_h(S)\right] && \text{(Lemma~\ref{lem:hat_varphi_h})}\\
    &\leq (1+\alpha)\left(1+\frac{4\alpha}{1-2r\alpha}\right)\varphi_h(S_L) && 
    \text{(Lemma~\ref{lmm:init_solution})}\\
    &\leq  (1+10\alpha)\varphi_h(S_L). && \text{($r\alpha < {1}/{4}$)}
\end{align*}
On the other hand, for any $\alpha < \frac{1}{4}$ and $r \geq 2$,
\begin{align*}
    \varphi_h(S_L+e_0-e'_0) 
    &\geq h(S_L+e_0-e'_0) &&\text{(Lemma~\ref{lmm:bound_g})}\\
    &\geq f(S_L+e_0) - f(e'_0) && \text{(monotonicity and submodularity of $f$)}\\
    &\geq (8H_{r-1}-1)h(S_L) && \text{(Ineq.~\eqref{eq:f_max}) and $f(e'_0) \leq f(S_L) \leq h(S_L)$)}\\
    &>(1+10\alpha)\frac{e}{e-1}H_{r-1}h(S_L)&& \text{($\alpha<1/4$ and $\frac{e}{e-1}<2$)}\\
    &\geq (1+10\alpha)\varphi_h(S_L). &&\text{(Lemma~\ref{lmm:bound_g})}
\end{align*}
This constitutes a contradiction, and the lemma is proved.
\end{proof}
\noindent
Let $\xi^*_{\mathcal{G}} = \max_{e\in \mathcal{G}} \xi(S_L+e)$ and $\xi^*_{\mathcal{B}} = \max_{e\in \mathcal{B}} \xi(S_L+e)$ denote the largest noise multiplier on the sets belonging to $\{S_L+e: e\in \mathcal{G}\}$ and $\{S_L+e: e\in \mathcal{B}\}$, respectively. 
Next we show that with sufficient good elements guaranteed by Lemma~\ref{lem:no_of_good}, $\xi^*_{\mathcal{G}}$ and $\xi^*_{\mathcal{B}}$ are close to each other with high probability.

If the distribution has bounded support and there is an atom at the supremum with some probability $p$, it is clear that both $\xi^*_{\mathcal{G}}$ and $\xi^*_{\mathcal{B}}$ reaches the supremum with high probability if $n$ is sufficiently large.
Hence we focus on the case where the distribution $\mathcal{D}$ does not have bounded support in the following analysis.
Recall that the probability density function $\rho(x)$ of the noise distribution $\mathcal{D}$ is $\exp(-\sum_{i}c_ix^{\gamma_i})$. 
We define two threshold $m_{\mathcal{G}}$ and $M_{\mathcal{B}}$:
$$m_{\mathcal{G}} = \left[\frac{1}{(1+\beta)c_0}\cdot\ln \left(\frac{|\mathcal{G}|}{\gamma_0\ln n}\right)\right]^{\frac{1}{\gamma_0}}$$
and
$$M_{\mathcal{B}} = \left[\frac{1}{(1-\beta)c_0}\ln \left(\frac{|\mathcal{B}|\ln n}{\gamma_0}\right)\right]^{\frac{1}{\gamma_0}},$$
where we set $\beta = \left(\frac{1}{\ln n}\right)^{\frac{\gamma_0-\gamma_1}{2 \gamma_0}}$.
Note that $m_{\mathcal{G}}$ and $M_{\mathcal{B}}$ are very close to each other in a sense that
$$\frac{m_{\mathcal{G}}}{M_{\mathcal{B}}} \geq \left[\frac{1-\beta}{1+\beta}\cdot\frac{\ln n -\ln(1/\varepsilon)-\ln\ln n -\ln(16\gamma_0H_{r-1})}{\ln n + \ln\ln n -\ln \gamma_0}\right]^{\frac{1}{\gamma_0}} > 1-\frac{\varepsilon}{2},$$
holds when $n\geq 6\exp\left(\left(\frac{1}{\varepsilon}\right)^\frac{2\gamma_0}{\gamma_0-\gamma_1}\right)$.
Our goal is to show that with high probability, $m_{\mathcal{G}}$ is a lower bound for $\xi^*_{\mathcal{G}}$, and $M_{\mathcal{B}}$ is an upper bound for $\xi^*_{\mathcal{B}}$.
Before that, we need the following upper and lower bounds for the probability density function $\rho(x)$ of a noise distribution. 
\begin{lemma}[\textbf{Upper and lower bounds for $\rho(x)$~\cite{singer2018optimization}}]
\label{lem:noise_bound}
For any noise distribution $\mathcal{D}$ has a generalized exponential tail, there exists $n_0$ such that for any $n>n_0$ and $x \geq \left(\frac{2\sum_{i=1}|c_i|}{c_0}\right)^{\frac{1}{\gamma_0-\gamma_1}}\cdot(\ln n)^{\frac{1}{2\gamma_0}}$, the probability density function $\rho(x) = \exp(-\sum_ic_ix^{\gamma_i})$ of $\mathcal{D}$ has the following upper and lower bound:
$$(1+\beta)c_0x^{\gamma_0-1}e^{-(1+\beta)c_0x^{\gamma_0}} \leq \rho(x) \leq (1-\beta)c_0x^{\gamma_0-1}e^{-(1-\beta)c_0x^{\gamma_0}}.$$
\end{lemma}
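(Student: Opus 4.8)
The plan is to prove Lemma~\ref{lem:noise_bound} by a direct estimate of the exponent $g(x) = \sum_i c_i x^{\gamma_i}$, isolating the leading term $c_0 x^{\gamma_0}$ and showing that the remaining terms are a $\beta$-fraction of it in the stated range of $x$, and simultaneously that the logarithmic derivative $g'(x)$ is within a $\beta$-factor of $\gamma_0 c_0 x^{\gamma_0-1}$. First I would write $\rho(x) = e^{-g(x)}$ and split $g(x) = c_0 x^{\gamma_0} + R(x)$ where $R(x) = \sum_{i\geq 1} c_i x^{\gamma_i}$. Since every $\gamma_i \leq \gamma_1 < \gamma_0$ for $i\geq 1$, we have $|R(x)| \leq \left(\sum_{i\geq 1}|c_i|\right) x^{\gamma_1}$ for $x\geq 1$, so $|R(x)| \leq \beta c_0 x^{\gamma_0}$ precisely when $x^{\gamma_0-\gamma_1} \geq \beta^{-1}\cdot\frac{\sum_{i\geq 1}|c_i|}{c_0}$, i.e.\ when $x \geq \left(\frac{\sum_{i\geq 1}|c_i|}{\beta c_0}\right)^{1/(\gamma_0-\gamma_1)}$. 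Plugging in $\beta = (\ln n)^{-(\gamma_0-\gamma_1)/(2\gamma_0)}$ turns the threshold into $\left(\frac{\sum_{i\geq 1}|c_i|}{c_0}\right)^{1/(\gamma_0-\gamma_1)}\cdot (\ln n)^{1/(2\gamma_0)}$, which (up to the harmless factor of $2$ that one can absorb for large $n$, or by being slightly more careful with constants) is exactly the threshold $\left(\frac{2\sum_{i\geq 1}|c_i|}{c_0}\right)^{1/(\gamma_0-\gamma_1)}\cdot(\ln n)^{1/(2\gamma_0)}$ in the statement. This immediately yields $(1-\beta)c_0 x^{\gamma_0} \leq g(x) \leq (1+\beta)c_0 x^{\gamma_0}$, hence $e^{-(1+\beta)c_0 x^{\gamma_0}} \leq \rho(x) \leq e^{-(1-\beta)c_0 x^{\gamma_0}}$, which is the core of both claimed bounds.

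Next I would account for the polynomial prefactor $c_0 x^{\gamma_0-1}$ (resp.\ $(1+\beta)c_0 x^{\gamma_0-1}$, $(1-\beta)c_0 x^{\gamma_0-1}$) appearing in the lemma. The natural way to see where this comes from is to note that the lemma is really a statement comparing $\rho(x)$ with the density of the ``model'' distribution whose tail decays like $e^{-c_0 x^{\gamma_0}}$, whose density is $\propto c_0\gamma_0 x^{\gamma_0-1}e^{-c_0 x^{\gamma_0}}$; the prefactors on both sides are of this polynomial order. Concretely, one shows that for $x$ in the stated range the prefactor is dominated by the exponential slack: since $e^{-\beta c_0 x^{\gamma_0}}$ decays faster than any fixed power of $x$ grows, for $n$ (hence the threshold) large enough we have $x^{\gamma_0-1} \leq e^{\beta c_0 x^{\gamma_0}}$, so $\rho(x) = e^{-g(x)} \leq e^{-(1-\beta)c_0 x^{\gamma_0}} \leq (1-\beta)c_0 x^{\gamma_0-1} e^{-(1-2\beta)c_0 x^{\gamma_0}}$; then a reparametrization of $\beta$ (replacing $2\beta$ by $\beta$, which only changes the threshold by a constant and is absorbed into ``there exists $n_0$'') gives the stated form. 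The lower bound is symmetric: $\rho(x) = e^{-g(x)} \geq e^{-(1+\beta)c_0 x^{\gamma_0}}$, and $x^{\gamma_0-1}\leq e^{\beta c_0 x^{\gamma_0}}$ again absorbs the prefactor, giving $\rho(x) \geq (1+\beta)c_0 x^{\gamma_0-1}e^{-(1+2\beta)c_0 x^{\gamma_0}}$ and then reparametrize.

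Since this lemma is quoted from~\cite{singer2018optimization}, the cleanest writeup is simply to cite their proof and then indicate the one-line adaptation needed, namely that the threshold $\left(\frac{2\sum_{i\geq 1}|c_i|}{c_0}\right)^{1/(\gamma_0-\gamma_1)}(\ln n)^{1/(2\gamma_0)}$ is exactly the point beyond which the lower-order terms of $g$ and the polynomial prefactor are jointly controlled by the slack $\beta c_0 x^{\gamma_0}$ for our choice $\beta = (\ln n)^{-(\gamma_0-\gamma_1)/(2\gamma_0)}$. I expect the only genuinely delicate point to be bookkeeping the constant factors so that the stated threshold (with the factor $2$ and the exponent $\frac{1}{2\gamma_0}$ on $\ln n$) comes out matching exactly rather than up to constants — in particular making sure that the same $\beta$ simultaneously dominates (i) the ratio $|R(x)|/(c_0x^{\gamma_0})$ and (ii) the logarithm of the prefactor divided by $c_0x^{\gamma_0}$; both reduce to the inequality $x^{\gamma_0-\gamma_1}\geq \mathrm{const}\cdot\beta^{-1}$ for $n$ large, and the ``$n > n_0$'' clause in the statement is exactly what lets us push the $n_0$ up enough to swallow all the absolute constants. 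The rest is routine. I would not reprove the detailed calculus from scratch but would state the split $g = c_0 x^{\gamma_0} + R$, the two bounds on $R$ and on $\log(\text{prefactor})$, and then conclude.
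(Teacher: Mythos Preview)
The paper does not prove this lemma: it is quoted verbatim from~\cite{singer2018optimization} (as the lemma title indicates) and used as a black box in the proof of Lemma~\ref{lem:multiplier_bound}. So there is no ``paper's own proof'' to compare against.

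Your sketch is the natural reconstruction and is essentially correct. The decomposition $g(x) = c_0 x^{\gamma_0} + R(x)$ with $|R(x)| \leq \big(\sum_{i\geq 1}|c_i|\big)x^{\gamma_1}$, together with the choice $\beta = (\ln n)^{-(\gamma_0-\gamma_1)/(2\gamma_0)}$, gives exactly the threshold in the statement and the exponential sandwich $e^{-(1+\beta)c_0 x^{\gamma_0}} \leq \rho(x) \leq e^{-(1-\beta)c_0 x^{\gamma_0}}$. Absorbing the polynomial prefactor $c_0 x^{\gamma_0-1}$ into a further $\beta$-fraction of the exponent and then reparametrizing $\beta$ (hidden in ``there exists $n_0$'') is also the right move. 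One small caveat in your writeup: when you check that the prefactor is swallowed by the slack, i.e.\ that $(1\pm\beta)c_0 x^{\gamma_0-1}$ is controlled by $e^{\pm \beta' c_0 x^{\gamma_0}}$, be aware that at the \emph{lower} end of the stated range one has $x^{\gamma_0} \asymp \sqrt{\ln n}$ and hence $\beta c_0 x^{\gamma_0} \asymp (\ln n)^{\gamma_1/(2\gamma_0)}$, which need not diverge if $\gamma_1 \leq 0$. This is not actually a problem for the paper, since the lemma is only invoked at $x = m_{\mathcal{G}}, M_{\mathcal{B}} \asymp (\ln n)^{1/\gamma_0}$, where $x^{\gamma_0} \asymp \ln n$ and the slack is ample; but if you intend a self-contained proof valid on the entire stated range, you should note this and either restrict the range or verify the prefactor inequality directly for moderate $x$.
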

\begin{lemma}[\textbf{Bounds for $\xi^*_{\mathcal{G}}$ and $\xi^*_{\mathcal{B}}$}]
\label{lem:multiplier_bound}
With probability $1-O(\frac{1}{\log n})$, $\xi^*_{\mathcal{G}} \geq m_{\mathcal{G}}$ and $\xi^*_{\mathcal{B}} \leq M_{\mathcal{B}}$.
\end{lemma}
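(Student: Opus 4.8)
The plan is to prove the two bounds $\xi^*_{\mathcal{G}} \geq m_{\mathcal{G}}$ and $\xi^*_{\mathcal{B}} \leq M_{\mathcal{B}}$ separately, each via a one-sided concentration estimate on the relevant extreme order statistic, and then combine them by a union bound. The key structural fact we rely on is Lemma~\ref{lem:no_of_good}, which (on an event of probability $1 - 1/n$) guarantees $|\mathcal{G}| \geq \varepsilon n/(16 H_{r-1})$, so $\mathcal{G}$ is polynomially large in $n$; and trivially $|\mathcal{B}| \leq n$. Throughout, condition on the success event of Lemma~\ref{lem:no_of_good}, and note that the noise multipliers $\xi(S_L + e)$ for distinct $e$ are i.i.d.\ draws from $\mathcal{D}$ by Definition~\ref{def:noise}.

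First I would handle $\xi^*_{\mathcal{B}} \leq M_{\mathcal{B}}$. The idea is that $M_{\mathcal{B}}$ sits deep enough in the tail that even the maximum of $|\mathcal{B}| \leq n$ i.i.d.\ samples is unlikely to exceed it. Concretely, $\mathbb{P}[\xi^*_{\mathcal{B}} > M_{\mathcal{B}}] \leq |\mathcal{B}| \cdot \mathbb{P}[\xi > M_{\mathcal{B}}]$, and one estimates the tail $\mathbb{P}[\xi > M_{\mathcal{B}}] = \int_{M_{\mathcal{B}}}^\infty \rho(x)\,dx$ using the upper bound on $\rho$ from Lemma~\ref{lem:noise_bound}, namely $\rho(x) \leq (1-\beta)c_0 x^{\gamma_0 - 1} e^{-(1-\beta)c_0 x^{\gamma_0}}$, valid once $M_{\mathcal{B}}$ clears the threshold $\left(\frac{2\sum_{i\geq 1}|c_i|}{c_0}\right)^{1/(\gamma_0-\gamma_1)}(\ln n)^{1/(2\gamma_0)}$ (which it does for large $n$, since $M_{\mathcal{B}} = \Theta\big((\ln n)^{1/\gamma_0}\big)$ grows faster than $(\ln n)^{1/(2\gamma_0)}$). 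That integrand is an exact derivative: $\int_{M_{\mathcal{B}}}^\infty (1-\beta)c_0 x^{\gamma_0-1} e^{-(1-\beta)c_0 x^{\gamma_0}}\,dx = e^{-(1-\beta)c_0 M_{\mathcal{B}}^{\gamma_0}}$. Plugging in the definition $M_{\mathcal{B}} = \left[\frac{1}{(1-\beta)c_0}\ln\!\left(\frac{|\mathcal{B}|\ln n}{\gamma_0}\right)\right]^{1/\gamma_0}$ gives $(1-\beta)c_0 M_{\mathcal{B}}^{\gamma_0} = \ln\!\left(\frac{|\mathcal{B}|\ln n}{\gamma_0}\right)$, so the tail probability is exactly $\frac{\gamma_0}{|\mathcal{B}|\ln n}$, and the union bound yields $\mathbb{P}[\xi^*_{\mathcal{B}} > M_{\mathcal{B}}] \leq \frac{\gamma_0}{\ln n} = O\!\left(\frac{1}{\log n}\right)$.

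Next, for $\xi^*_{\mathcal{G}} \geq m_{\mathcal{G}}$, the idea is the complementary one: since $\mathcal{G}$ has polynomially many members, the chance that \emph{all} of them fall below $m_{\mathcal{G}}$ is tiny, because $m_{\mathcal{G}}$ is chosen so that each single sample exceeds it with probability $\approx \frac{\gamma_0 \ln n}{|\mathcal{G}|}$. Formally, $\mathbb{P}[\xi^*_{\mathcal{G}} < m_{\mathcal{G}}] = \left(\mathbb{P}[\xi < m_{\mathcal{G}}]\right)^{|\mathcal{G}|} = \left(1 - \mathbb{P}[\xi \geq m_{\mathcal{G}}]\right)^{|\mathcal{G}|}$. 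I would lower-bound $\mathbb{P}[\xi \geq m_{\mathcal{G}}] = \int_{m_{\mathcal{G}}}^\infty \rho(x)\,dx$ using the lower bound $\rho(x) \geq (1+\beta)c_0 x^{\gamma_0-1} e^{-(1+\beta)c_0 x^{\gamma_0}}$ from Lemma~\ref{lem:noise_bound}, again valid for $m_{\mathcal{G}} = \Theta\big((\ln n)^{1/\gamma_0}\big)$ past the stated threshold; integrating exactly gives $\int_{m_{\mathcal{G}}}^\infty (1+\beta)c_0 x^{\gamma_0-1} e^{-(1+\beta)c_0 x^{\gamma_0}}\,dx = e^{-(1+\beta)c_0 m_{\mathcal{G}}^{\gamma_0}}$, and with $m_{\mathcal{G}} = \left[\frac{1}{(1+\beta)c_0}\ln\!\left(\frac{|\mathcal{G}|}{\gamma_0 \ln n}\right)\right]^{1/\gamma_0}$ this equals $\frac{\gamma_0 \ln n}{|\mathcal{G}|}$. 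Hence $\mathbb{P}[\xi^*_{\mathcal{G}} < m_{\mathcal{G}}] \leq \left(1 - \frac{\gamma_0 \ln n}{|\mathcal{G}|}\right)^{|\mathcal{G}|} \leq e^{-\gamma_0 \ln n} = n^{-\gamma_0} = O\!\left(\frac{1}{\log n}\right)$ (in fact much smaller). A final union bound over this event, the $\xi^*_{\mathcal{B}}$ event, and the failure event of Lemma~\ref{lem:no_of_good} (probability $1/n$) gives the claimed $1 - O(1/\log n)$.

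The main obstacle is bookkeeping around the applicability of Lemma~\ref{lem:noise_bound}: one must verify that both $m_{\mathcal{G}}$ and $M_{\mathcal{B}}$ exceed the threshold $\left(\frac{2\sum_{i\geq1}|c_i|}{c_0}\right)^{1/(\gamma_0-\gamma_1)}(\ln n)^{1/(2\gamma_0)}$ for $n$ large. This is where the somewhat peculiar choice $\beta = (1/\ln n)^{(\gamma_0-\gamma_1)/(2\gamma_0)}$ earns its keep — it is exactly the scale at which the correction terms in the tail expansion of $\rho$ become negligible relative to the leading $c_0 x^{\gamma_0}$ term, uniformly over the relevant range of $x$. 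One also needs $\beta < 1$ and the two exponents $(1\pm\beta)$ to stay bounded away from $0$ and $2$, which holds for $n$ past an absolute constant. A secondary subtlety is that $|\mathcal{G}|$ is itself random; this is handled cleanly by conditioning on the Lemma~\ref{lem:no_of_good} event, under which $|\mathcal{G}| \geq \varepsilon n/(16 H_{r-1}) = \mathrm{poly}(n)$, and noting the tail estimates above are monotone in $|\mathcal{G}|$ in the favorable direction, so the worst case is $|\mathcal{G}|$ at its lower bound (and the $m_{\mathcal{G}}/M_{\mathcal{B}}$ ratio computation already cited in the excerpt uses precisely this value).
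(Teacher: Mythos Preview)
Your proposal is correct and follows essentially the same approach as the paper: both use the tail envelopes of Lemma~\ref{lem:noise_bound} to estimate $\mathbb{P}[\xi > m_{\mathcal{G}}]$ from below and $\mathbb{P}[\xi > M_{\mathcal{B}}]$ from above, then pass to the extreme order statistics via the product formula $(1-p)^{|\mathcal{G}|}$ (for $\xi^*_{\mathcal{G}}$) and a union bound/product (for $\xi^*_{\mathcal{B}}$). One small arithmetic slip: the antiderivative of $(1\pm\beta)c_0 x^{\gamma_0-1}e^{-(1\pm\beta)c_0 x^{\gamma_0}}$ is $-\tfrac{1}{\gamma_0}e^{-(1\pm\beta)c_0 x^{\gamma_0}}$, so the tail integrals carry an extra $1/\gamma_0$ factor; this cancels the $\gamma_0$ in the definitions of $m_{\mathcal{G}},M_{\mathcal{B}}$ and gives exactly the paper's values $\tfrac{\ln n}{|\mathcal{G}|}$ and $\tfrac{1}{|\mathcal{B}|\ln n}$, but either way the $O(1/\log n)$ conclusion is unaffected since $\gamma_0$ is a constant.
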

\begin{proof}
For a single sample $\xi$ drawn from noise distribution $\mathcal{D}$, we have
$$\mathbb{P}\left[\xi \leq m_{\mathcal{G}}\right] = 1-\int^{\infty}_{m_{\mathcal{G}}}\rho(x)dx \leq 1-\int^{\infty}_{m_{\mathcal{G}}}(1+\beta)c_0x^{\gamma_0-1}e^{-(1+\beta)c_0x^{\gamma_0}}dx= 1-\frac{\ln n}{|\mathcal{G}|},$$
which follows from the lower bound shown in Lemma~\ref{lem:noise_bound} when $n$ is sufficiently large.
The probability that all $\xi(S_L+e)\ (e\in \mathcal{G})$ are bounded by $m_{\mathcal{G}}$ is at most
$$\left(1-\frac{\ln n}{|\mathcal{G}|}\right)^{|\mathcal{G}|}\leq \frac{1}{e^{\ln n}} = \frac{1}{n}.$$
Hence $\xi^*_{\mathcal{G}} \geq m_{\mathcal{G}}$ holds with probability at least $1-\frac{1}{n}$.

The probability that a single noise multiplier $\xi$ is bounded by $M_{\mathcal{B}}$ is at least
$$\mathbb{P}\left[\xi \leq M_{\mathcal{B}}\right] = 1-\int^{\infty}_{M_{\mathcal{B}}}\rho(x)dx \geq 1-\int^{\infty}_{M_{\mathcal{B}}}(1-\beta)c_0x^{\gamma_0-1}e^{-(1-\beta)c_0x^{\gamma_0}}dx= 1-\frac{1}{|\mathcal{B}|\ln n},$$
which follows from the upper bound shown in Lemma~\ref{lem:noise_bound} when $n$ is sufficiently large.
Then we obtain
$$\mathbb{P}[\xi^*_{\mathcal{B}}\leq M_{\mathcal{B}}] \geq \left(1-\frac{1}{|\mathcal{B}|\ln n}\right)^{|\mathcal{B}|}\geq 1-\frac{1}{\ln n}.$$

Therefore, by a union bound, $\xi^*_{\mathcal{G}} \geq m_{\mathcal{G}}$ and $\xi^*_{\mathcal{B}} \leq M_{\mathcal{B}}$ hold with probability $1-O(\frac{1}{\log n})$.
\end{proof}
\noindent
With Lemma~\ref{lem:multiplier_bound}, we have
$$\max_{e\in \mathcal{G}}\Tilde{f}(S_L+e) \geq m_{\mathcal{G}}h(S_L) > \left(1-\frac{\varepsilon}{2}\right)M_{\mathcal{B}}h(S_L) > \max_{e\in \mathcal{B}}\Tilde{f}(S_L+e)$$
with probability $1-O(\frac{1}{\log n})$.
Thus a bad element will not be selected by Algorithm~\ref{alg:local_search_monotone} at the $\Tilde{f}$-maximization phase, i.e., $f(S_M) \geq (1-\varepsilon)h(S_L)$.
\end{proof}
\noindent
By Ineq.~\eqref{eq:S_L_2} and Lemma~\ref{lem:add}, we complete the proof of Theorem \ref{thm:cardi}.

\subsection{Algorithmic results for cardinality constraints when \texorpdfstring{$r\in \Omega\left(n^{\frac{1}{3}}\right)$}{Lg}}
\label{sec:cardinality2}

We present in this subsection an algorithm (Algorithm~\ref{alg:local_search_monotone2}) and its analysis (Theorem~\ref{thm:cardi2}) that can be applied to $r\in \Omega\left(n^{\frac{1}{3}}\right)$ when $n$ is sufficiently large.

\begin{theorem}[\textbf{Algorithmic results for cardinality constraints when  $r\in  \Omega\left(n^{\frac{1}{3}}\right) $}]
\label{thm:cardi2}
    Let $\varepsilon>0$ and assume $n \in \Omega\left(\frac{1}{\varepsilon^4}\right)$ is sufficiently large. 
    For any $r\in  \Omega\left(n^{\frac{1}{3}}\right)$, there exists an algorithm that returns a $\left(1-\frac{1}{e}-O(\varepsilon)\right)$-approximation for  Problem~\ref{problem:noisy_submodular} under a $r$-cardinality constraint, with probability $1-O\left(\frac{1}{n^2}\right)$ and query complexity at most $O(n^6\varepsilon^{-1})$ to $\Tilde{f}$.
\end{theorem}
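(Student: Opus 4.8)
The plan is to mirror the development of Section~\ref{sec:cardinality1}, replacing the surrogate $h$ of Definition~\ref{def:h_1} --- which loses robustness once $|S|\gtrsim n^{1/3}$ because its averaging set of size $\approx n$ degenerates into $f(S)$ --- by the surrogate $h_H$ of Definition~\ref{def:h_2}, i.e.\ (roughly) an average of $f(S\cup T)$ over subsets $T$ of a fixed small set $H$. The free parameter is $|H|$, and I would take $|H|=\Theta(\log n)$ so that the number of sets over which $h_H$ averages, $2^{|H|}$, is a fixed polynomial in $n$: large enough that the noisy analogue $\widetilde{\varphi_{h_H}}$ concentrates around $\varphi_{h_H}$ even under the comparatively weak hypothesis $n\in\Omega(\varepsilon^{-4})$, yet small enough that $2^{|H|}=\mathrm{poly}(n)$ (which will bound the query complexity) and $|H|\le O(\varepsilon r)$ (which is needed to relate $h_H$ to $f$, using $r\in\Omega(n^{1/3})$ and $n$ large). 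A first easy step is to record that $h_H$ is monotone submodular, exactly as in Lemma~\ref{lem:property_of_h}, since averaging preserves both properties.

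The analytic core is the analogue of Lemma~\ref{lem:hat_varphi_h}: an $(\alpha,\delta,\mathcal I(r-|H|))$-approximation $\widehat{\varphi_{h_H}}$ of $\varphi_{h_H}$ evaluable with $\mathrm{poly}(n,\varepsilon^{-1})$ noisy queries (this is Lemma~\ref{lem:hat_varphi_h2}). I would follow the same three-step template: (i) expand $\varphi_{h_H}(A)=\sum_T\tau_A(T)f(T)$ as in Eq.~\eqref{eq:varphi_h_f}, and bound $\sum_T\tau_A(T)^2$, $\max_T\tau_A(T)$, and $\sum_T\tau_A(T)$ as in Lemma~\ref{lem:tau_bound} --- here each $f(T)$ carries the extra factor $2^{-|H|}$, so the $\ell_2$ mass of the coefficient vector shrinks by a factor $\Theta(2^{-|H|})$ relative to the $h$ case, which is precisely what buys robustness without an exponentially large $n$; (ii) apply Bernstein's inequality for sub-exponential variables (Lemma~\ref{lem:bern_exp}, Claim~\ref{clm:subexp}) to get $|\widetilde{\varphi_{h_H}}(A)-\varphi_{h_H}(A)|\le\tfrac\alpha2\max_S\varphi_{h_H}(S)$ with probability $1-\tfrac\delta2$, using that every $T$ in the averaging set has $|T|\le r$ and hence $f(T)\le O(\max_S\varphi_{h_H}(S))$; (iii) estimate $\widetilde{\varphi_{h_H}}$ by the sampling scheme of Algorithm~\ref{alg:hat_varphi_h} and control the sampling error by Hoeffding's inequality (Lemma~\ref{lmm:Hoeffding}), after truncating the noise at a threshold $b=O(\log(n/\delta))$ which holds with probability $1-\tfrac\delta4$. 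A union bound over these three events gives the claim.

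With $\widehat{\varphi_{h_H}}$ in hand, Algorithm~\ref{alg:local_search_monotone2} would fix such an $H$, run $S_L\leftarrow\mathtt{NLS}(\widehat{\varphi_{h_H}},\mathcal I(r-|H|),\Delta)$ on the ground set $N\setminus H$ with $\alpha=\Delta=\tfrac\varepsilon{4r\ln r}$ and $\delta=n^{-\Theta(1)}$, and return $S_L\cup H$, which is feasible since $|S_L\cup H|\le r$. For the query bound I would multiply the $\le(I+1)(r-|H|)n=O(r^2n\log r\cdot\varepsilon^{-1})$ calls to $\widehat{\varphi_{h_H}}$ guaranteed by Theorem~\ref{the:main_nls} (with $I=O(r\log r\cdot\varepsilon^{-1})$, which follows directly from the bound in Theorem~\ref{the:main_nls} and hence avoids Corollary~\ref{cor:main_nls} and its stronger large-$n$ hypothesis) by the $\mathrm{poly}(n,\varepsilon^{-1})$ samples per evaluation from Lemma~\ref{lem:hat_varphi_h2}, using $r\le n$ to collapse everything to $O(n^6\varepsilon^{-1})$; the $O(n^{-2})$ failure probability is the union bound $(I+1)(r-|H|)n\,\delta$ together with the $O(\delta)$ truncation events. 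For the ratio, Theorem~\ref{the:main_nls} with these parameters gives $h_H(S_L)\ge(1-\tfrac1e-O(\varepsilon))\max_{S\in\mathcal I(r-|H|),\,S\subseteq N\setminus H}h_H(S)$; since $f(S_L\cup H)\ge h_H(S_L)$ holds \emph{deterministically} by monotonicity of $f$ (each term $f(S_L\cup T)$ with $T\subseteq H$ is at most $f(S_L\cup H)$), it only remains to show $\max_{S\in\mathcal I(r-|H|),\,S\subseteq N\setminus H}h_H(S)\ge(1-O(\varepsilon))\max_{S\in\mathcal I(r)}f(S)$.

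That last inequality is the main obstacle. The naive bound $h_H(S)\ge f(S)$ is insufficient, because the $f$-optimal $r$-set $O_f$ may be forced to meet $H$, so that no $(r-|H|)$-subset of $O_f\setminus H$ retains full value; one must instead use the averaging. I would pick a near-best $(r-|H|)$-subset $O'$ of $O_f$, argue by submodularity (an averaging/exchange argument, as in the $r\to r-1$ reduction behind Ineq.~\eqref{eq:S_L_2}) that $f(O')\ge(1-|H|/r)f(O_f)$, and then --- exploiting the precise averaging distribution fixed in Definition~\ref{def:h_2} --- show $h_H(O')=\mathbb{E}_{T\subseteq H}[f(O'\cup T)]\ge(1-O(|H|/r))f(O_f)$, absorbing the loss into $O(\varepsilon)$ via $|H|=O(\varepsilon r)$. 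Carrying out this step \emph{tightly}, i.e.\ with no constant-factor loss (unlike the generic estimate $\mathbb{E}[g(\text{random subset of }H)]\ge\tfrac12 g(H)$), is exactly where the construction of $h_H$ must be chosen carefully, and is the one place where the argument genuinely departs from Section~\ref{sec:cardinality1}; I also expect the book-keeping in Lemma~\ref{lem:hat_varphi_h2} that makes all three constraints on $|H|$ hold simultaneously to require some care.
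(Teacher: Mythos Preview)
Your plan mirrors the paper's proof closely. Algorithm~\ref{alg:local_search_monotone2} does exactly what you describe: fix $H$ with $|H|=\lceil 3\ln n\rceil$, build the $\bigl(\tfrac{\varepsilon}{4r\ln r},\tfrac{3}{n^6},\mathcal I_H(r)\bigr)$-approximation $\widehat{\varphi_{h_H}}$ of Lemma~\ref{lem:hat_varphi_h2} via the same Bernstein--sampling--Hoeffding template you sketch (coefficient bounds in Lemma~\ref{lmm:app_card2_bound}, the two concentration steps in Lemmas~\ref{clm:4.2_2} and~\ref{clm:4.2_1}), run \texttt{NLS} on $\mathcal I_H(r)=\{S\subseteq N\setminus H:|S|=r-|H|\}$, and return $S_L\cup H$. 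Your query-count and failure-probability bookkeeping match the paper's, as does the deterministic bound $f(S_L\cup H)\ge h_H(S_L)$ from monotonicity.

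The one divergence is the final inequality (Lemma~\ref{clm:card2_split}): $\max_{S\in\mathcal I_H(r)}h_H(S)\ge(1-|H|/r)f(O^\star)$. You expect this to require the averaging structure of $h_H$ and brace for a constant-factor loss. The paper's argument is much simpler than you anticipate: it uses \emph{only} the trivial bound $h_H(S)\ge f(S)$, plus the standard submodular averaging to produce $A_0\subseteq O^\star$ of size $|H|$ with $f(O^\star\setminus A_0)\ge(1-|H|/r)f(O^\star)$, then asserts $\max h_H\ge h_H(O^\star\setminus A_0)\ge f(O^\star\setminus A_0)$. So in the paper's telling you are over-engineering this step. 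That said, your instinct is not misplaced: the paper's chain tacitly needs $O^\star\setminus A_0\in\mathcal I_H(r)$, i.e.\ $A_0\supseteq O^\star\cap H$, and once that constraint is imposed the bound can fail outright --- for $f(S)=|S\cap H|$ one has $\max_{S\in\mathcal I_H(r)}h_H(S)=|H|/2$ while $f(O^\star)=|H|$, contradicting Lemma~\ref{clm:card2_split} whenever $|H|<r/2$. The algorithm's output is still good in such cases (it contains all of $H$), but making the analysis watertight does need something closer to what you were preparing for, not the paper's shortcut.
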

\noindent
The above theorem addresses the remaining range of $r$ in Theorem~\ref{thm:cardi}, and also achieves a tight approximation ratio of $1-\frac{1}{e}-O(\varepsilon)$ with $\text{Poly}(n,1/\varepsilon)$ queries to $\tilde{f}$.

\subsubsection{Useful notations and useful facts for Theorem~\ref{thm:cardi2}}
\label{sec:notation_II}
Our algorithm and analysis in this subsection are based on another smoothing surrogate function $h$.
\begin{definition}[\bf{Smoothing surrogate function \rom{2}}]
\label{def:h_2}
Given a subset $H\subseteq N$, for any set $S\subseteq N\setminus H$, we define the smoothing surrogate function $h_H(S)$ as 
   $h_H(S)= \sum_{H_j\subseteq H} f(S\cup H_j)/2^{|H|}$.
\end{definition}
\noindent
Intuitively, if $|H|$ is sufficiently large, the surrogate function $h$ is robust to noise as it is based on a large averaging set with size $2^{|H|}$. 
Throughout the rest of this subsection, we consider the case that $|H|\in \Omega(\log n)$.
Similar to Section~\ref{sec:notation_I}, we give some basic properties of $h_H$.

\begin{lemma}[\textbf{Properties of $h_H$}]
\label{lmm:property_of_h2}
For any $H\subseteq N$, the smoothing surrogate function $h_H$ is monotone and submodular, and for all $S\subseteq N\setminus H$, $h_H(S)\geq \frac{1}{2} f(S\cup H) + \frac{1}{2}f(S)$.
\end{lemma}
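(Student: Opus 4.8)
The plan is to establish each of the three assertions about $h_H$ separately, mirroring the structure of the proof of Lemma~\ref{lem:property_of_h} but now working with averages over subsets $H_j \subseteq H$ rather than single added elements. Throughout I would fix $H \subseteq N$ and write $h_H(S) = 2^{-|H|}\sum_{H_j \subseteq H} f(S \cup H_j)$ for $S \subseteq N \setminus H$.

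\textbf{Monotonicity and submodularity.} For $S \subseteq N \setminus H$ and $x \notin S \cup H$, I would first write
\[
h_H(S+x) - h_H(S) = \frac{1}{2^{|H|}}\sum_{H_j \subseteq H}\bigl[f(S\cup H_j + x) - f(S \cup H_j)\bigr].
\]
Monotonicity of $f$ makes each bracketed term nonnegative, giving monotonicity of $h_H$. For submodularity, fix $T \subseteq S$ with $S \subseteq N\setminus H$ and $x \notin S \cup H$; applying the identity above to both $T$ and $S$ and using that submodularity of $f$ gives $f(T\cup H_j + x) - f(T\cup H_j) \geq f(S\cup H_j + x) - f(S\cup H_j)$ termwise, I sum over $H_j \subseteq H$ and divide by $2^{|H|}$ to conclude $h_H(T+x) - h_H(T) \geq h_H(S+x) - h_H(S)$. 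One small point to be careful about: the definition of $h_H$ only applies to sets disjoint from $H$, so I should check the swap operations in the submodularity inequality stay within $N \setminus H$, which they do since $x \notin H$ and $T \subseteq S \subseteq N\setminus H$.

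\textbf{The lower bound $h_H(S) \geq \tfrac12 f(S\cup H) + \tfrac12 f(S)$.} This is the part I expect to be the main obstacle, though it is still short. The idea is to pair up the subsets of $H$: for each $H_j \subseteq H$, pair it with its complement $H \setminus H_j$. Each pair contributes $f(S \cup H_j) + f(S \cup (H\setminus H_j))$ to the sum $\sum_{H_j} f(S\cup H_j)$, and there are $2^{|H|-1}$ such pairs (assuming $H \neq \varnothing$; the case $H = \varnothing$ is trivial since then $h_H(S) = f(S)$ and the claim reads $f(S) \geq f(S)$). So it suffices to show $f(S\cup H_j) + f(S\cup (H\setminus H_j)) \geq f(S\cup H) + f(S)$ for every $H_j \subseteq H$. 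This is exactly submodularity of $f$ applied to the two sets $A = S \cup H_j$ and $B = S \cup (H\setminus H_j)$: their union is $S \cup H$ and their intersection is $S$, and submodularity (in the equivalent $f(A)+f(B) \geq f(A\cup B) + f(A\cap B)$ form) gives the claim. Summing over the $2^{|H|-1}$ pairs and dividing by $2^{|H|}$ yields $h_H(S) \geq \tfrac12\bigl(f(S\cup H) + f(S)\bigr)$.

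The only genuine subtlety is making sure the pairing double-counts correctly and that I invoke the union–intersection form of submodularity rather than the marginal-decrease form stated in Definition~\ref{def:submodular}; these are equivalent for monotone (indeed all) set functions, so I would either cite that equivalence or, to keep things self-contained, derive $f(A) + f(B) \geq f(A\cup B) + f(A\cap B)$ from the diminishing-returns definition by adding the elements of $A \setminus B$ one at a time to $A \cap B$ versus to $B$. With that in hand the proof is complete.
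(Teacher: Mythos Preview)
Your proposal is correct and follows essentially the same argument as the paper: monotonicity and submodularity are inherited termwise from $f$, and the lower bound comes from pairing each $H_j$ with its complement $H\setminus H_j$ and applying the union--intersection form of submodularity. The only cosmetic difference is that the paper sums over all $2^{|H|}$ subsets (thus counting each complementary pair twice) and divides by $2^{|H|+1}$, whereas you group into $2^{|H|-1}$ pairs directly.
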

\begin{proof}
    For any set $S\subseteq N$ and $x\not\in S$, by Definition~\ref{def:h_2}, we have 
\[h(S+x)-h(S) =\frac{1}{2^{|H|}} \sum\limits_{H_j\subseteq H} \left[f(S\cup H_j+x) - f(S\cup H_j)\right], \]
which immediately indicates that $h$ is monotone.
Moreover, for a set $T\subseteq S$, 
\begin{align*}
    h(T+x)-h(T) &=\frac{1}{2^{|H|}} \sum\limits_{H_j\subseteq H} \left[f(T\cup H_j+x)- f( T\cup H_j)\right]\\
    &\geq \frac{1}{2^{|H|}}\sum\limits_{H_j\subseteq H} \left[f(S\cup H_j+x) - f(S\cup H_j)\right] && \text{(submodularity of $f$)}\\
    &= h(S+x)-h(S).
\end{align*}
Thus $h$ is also submodular.
Finally, we lower bound $h_H(A)$ as below:
\begin{align*}
    h_H(A)&~=~ \frac{1}{2^{|H|}} \sum\limits_{H_j\subseteq H} f(A\cup H_j) \\
    &~=~ \frac{1}{2^{|H|+1}} \sum\limits_{H_j\subseteq H} \left[f(A\cup H_j) +f(A\cup H\setminus H_j) \right] 
    \\
    &~\geq ~\frac{1}{2^{|H|+1}} \sum\limits_{H_j\subseteq H} \left[f(A\cup H)+f(A) \right] &&\text{(submodularity of $f$)}\\
    & ~=~ \frac{1}{2}  \left[f(A\cup H)+f(A) \right].
\end{align*}
\end{proof}

\noindent
Given a cardinality $r$ and a set $H\subseteq N$, we define a $(r-|H|)$-cardinality constraint confined on $N\backslash H$ as $\mathcal{I}_H(r) = \left\{S\subseteq N\setminus H: |S| = r-|H|\right\}$.
Similar to Section~\ref{sec:notation_I}, we provide the following lemma indicating that the auxiliary function $\varphi_{h_H}$ can be well approximated.

\begin{lemma}[\bf{Approximation of $\varphi_{h_H}$}]
\label{lem:hat_varphi_h2}
Let $\varepsilon>0$ and assume $n\in \Omega\left(\varepsilon^{-4}\right)$. 
For a set $H$ with $|H|\geq 3\ln n$, there exists a value oracle $\mathcal{O}$ to a $\left(\alpha=\frac{\varepsilon}{4r\ln r}, \delta=\frac{3}{n^6}, \mathcal{I}_H(r)\right)$-approximation $\widehat{\varphi_{h_H}}$ of $\varphi_{h_H}$, which answering $\mathcal{O}(A)$ queries $\Tilde{f}$'s oracle $O\left(r\varepsilon^{-1}\log^{\frac{5}{2}}n\log^2r\right)$ times for each set $A\in \mathcal{I}_H(r)$.
\end{lemma}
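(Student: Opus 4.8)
The plan is to mirror the proof of Lemma~\ref{lem:hat_varphi_h} line for line, with the averaging over single elements of $N$ replaced by the averaging over subsets of $H$. First I would expand the auxiliary function by combining Definition~\ref{def:auxiliary} with Definition~\ref{def:h_2}:
\[
\varphi_{h_H}(A)=\frac{1}{2^{|H|}}\sum_{T'\subseteq A}\sum_{H_j\subseteq H} m_{|A|-1,\,|T'|-1}\cdot f(T'\cup H_j),
\]
so the natural averaging set is $\mathcal{L}_A=\{T'\cup H_j:\ T'\subseteq A,\ H_j\subseteq H\}$ (a disjoint union since $A\subseteq N\setminus H$), and the coefficient of $f(T'\cup H_j)$ is $\tau_A(T'\cup H_j)=m_{|A|-1,\,|T'|-1}/2^{|H|}$, which depends only on $|T'|$. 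Using the integral form of $m_{s,t}$, the Beta-function identity~\eqref{eq:beta}, and the elementary bound $\binom{|A|}{t}m_{|A|-1,t-1}\le\frac{e}{e-1}\cdot\frac1t$ (which already underlies Lemma~\ref{lmm:bound_g}), I would establish the analogue of Lemma~\ref{lem:tau_bound}:
\[
\sum_{T\in\mathcal{L}_A}\tau_A(T)^2=\frac{1}{2^{|H|}}\sum_{t=1}^{|A|}\binom{|A|}{t}m_{|A|-1,t-1}^2\le\frac{O(\log|A|)}{2^{|H|}\,|A|},\qquad
\max_{T\in\mathcal{L}_A}\tau_A(T)\le\frac{O(1)}{2^{|H|}\,|A|},\qquad
\sum_{T\in\mathcal{L}_A}\tau_A(T)\le 2(\ln|A|+2).
\]
The one genuinely new point — and the place the two surrogates diverge — is that the small factor multiplying the first two quantities is now $2^{-|H|}$ instead of $n^{-1}$. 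Since $|H|\ge 3\ln n$ we have $2^{|H|}\ge n^{3\ln 2}\ge n^{2}$, so these sums are still $n^{-\Omega(1)}\cdot\mathrm{polylog}(n)$ even though $|A|=r-|H|$ may now be as large as $\Theta(n)$; this is precisely why the hypothesis $|A|\le\sqrt n$ used in Lemma~\ref{lem:tau_bound} is no longer needed and can be dropped.

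Next I would control the noise exactly as in Lemma~\ref{lmm:app_diff_phi_hat1}. By Lemma~\ref{lmm:property_of_h2} and Lemma~\ref{lmm:bound_g}, every $T\in\mathcal{L}_A$ satisfies $T\subseteq A\cup H$, hence $f(T)\le f(A\cup H)\le 2h_H(A)\le 2\varphi_{h_H}(A)\le 2\max_{S\in\mathcal{I}_H(r)}\varphi_{h_H}(S)$. Setting $X_T=\xi_T-1$ (sub-exponential with $\|X_T\|_{\psi_1}\le\kappa$ by Claim~\ref{clm:subexp}) and $a_T=\tau_A(T)f(T)$, the coefficient bounds above give $\|a\|_2^2\le \tfrac{O(\log|A|)}{2^{|H|}|A|}\big(2\max_S\varphi_{h_H}(S)\big)^2$ and $\|a\|_\infty\le\tfrac{O(1)}{2^{|H|}|A|}\cdot 2\max_S\varphi_{h_H}(S)$. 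Applying Bernstein's inequality (Lemma~\ref{lem:bern_exp}) to $\widetilde{\varphi_{h_H}}(A)-\varphi_{h_H}(A)=\sum_T a_TX_T$ with deviation $\tfrac{\alpha}{2}\max_S\varphi_{h_H}(S)$, and using $2^{|H|}\ge n^2$, $|A|=\Theta(r)$, $\alpha=\varepsilon/(4r\ln r)$, $n\in\Omega(\varepsilon^{-4})$, the exponent becomes $\Omega(\mathrm{poly}(n))$, so $|\widetilde{\varphi_{h_H}}(A)-\varphi_{h_H}(A)|>\tfrac\alpha2\max_S\varphi_{h_H}(S)$ holds with probability at most $1/n^6$.

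Finally, since exact evaluation of $\widetilde{\varphi_{h_H}}(A)$ costs $2^{|A|+|H|}$ queries, I would estimate it by the sampling scheme of Algorithm~\ref{alg:hat_varphi_h} adapted to $\mathcal{L}_A$: form $\nu(T)=\tau_A(T)/\sum_{T'}\tau_A(T')$ (equivalently, draw $H_j$ uniformly from the $2^{|H|}$ subsets of $H$ and, independently, a size $t$ with probability $\propto\binom{|A|}{t}m_{|A|-1,t-1}$ followed by a uniform $t$-subset of $A$), draw $M$ samples $T_1,\dots,T_M$, and return $\big(\sum_{T'}\tau_A(T')\big)\cdot\frac1M\sum_j\tilde f(T_j)$, whose conditional expectation over the sampling is exactly $\widetilde{\varphi_{h_H}}(A)$. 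After discarding the low-probability event that some sampled multiplier exceeds $b=O(\log(Mn))=O(\log n)$ (automatic when $\mathcal D$ has bounded support; otherwise from $\rho(x)\le e^{-c_0x/2}$ for $x$ large, as in the proof of Lemma~\ref{lem:hat_varphi_h}), Hoeffding's inequality (Lemma~\ref{lmm:Hoeffding}), together with $\sum_{T'}\tau_A(T')=O(\log r)$, $\tilde f(T_j)\le 2b\max_S\varphi_{h_H}(S)$, and the admissible range of $\alpha$ forced by the previous step, shows that $M=O\!\big(r\varepsilon^{-1}\log^{5/2}n\,\log^2 r\big)$ samples make $|\widehat{\varphi_{h_H}}(A)-\widetilde{\varphi_{h_H}}(A)|>\tfrac\alpha2\max_S\varphi_{h_H}(S)$ fail with probability $\le 1/n^6$. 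A union bound over these three events and the triangle inequality then deliver the claimed $(\alpha,\delta,\mathcal{I}_H(r))$-approximation with $\delta=3/n^6$, each call issuing $M$ queries to $\tilde f$. I expect the main obstacle to be precisely the first step — obtaining clean explicit constants in the bounds on $\tau_A$ in the regime where $|A|$ is comparable to $n$, so that the $2^{-|H|}\le n^{-2}$ savings provably dominate; once those bounds are in hand, the Bernstein/Hoeffding/truncation part is a mechanical re-run of Lemmas~\ref{lmm:app_diff_phi_hat1} and~\ref{lmm:app_diff_tilde_hat_1}.
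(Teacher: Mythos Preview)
Your proposal is correct and follows essentially the same route as the paper: expand $\varphi_{h_H}(A)$ as $\frac{1}{2^{|H|}}\sum_{T\subseteq A}\sum_{H_j\subseteq H}m_{|A|-1,|T|-1}f(T\cup H_j)$, bound the coefficients (the paper records $\sum_{T\subseteq A}m_{|A|-1,|T|-1}\le H_{|A|}$ and $\sum_{T\subseteq A}m_{|A|-1,|T|-1}^2\le\frac{2e^2}{(e-1)^2|A|}$ directly, which is slightly tighter than your $O(\log|A|)/|A|$ but functionally equivalent), use $f(T)\le f(A\cup H)\le 2h_H(A)\le 2\varphi_{h_H}(A)$ from Lemma~\ref{lmm:property_of_h2} together with Bernstein to control $|\widetilde{\varphi_{h_H}}-\varphi_{h_H}|$, and then truncate the noise and apply Hoeffding with $M=\Theta(rH_r^2\varepsilon^{-1}\log^{5/2}n)$ samples to control $|\widehat{\varphi_{h_H}}-\widetilde{\varphi_{h_H}}|$. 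The only cosmetic difference is that the paper actually proves the stronger multiplicative statement $|\widehat{\varphi_{h_H}}(A)-\varphi_{h_H}(A)|\le\frac{\varepsilon}{4rH_r}\varphi_{h_H}(A)$ (relative to $\varphi_{h_H}(A)$ itself rather than $\max_S\varphi_{h_H}(S)$), which of course implies the $(\alpha,\delta,\mathcal{I}_H(r))$-approximation you state; your chain $f(T)\le 2\varphi_{h_H}(A)$ already contains the ingredient needed for this sharpening.
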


\noindent
The above lemma indicates that for sufficiently large $n$ and $|H|$, we can achieve an almost accurate approximation $\widehat{\varphi_{h_H}}$ of $\varphi_{h_H}$, which queries $\tilde{f}$ only $\tilde{O}(n\varepsilon^{-1})$ times for each $\widehat{\varphi_{h_H}}(A)$.
By Corollay~\ref{cor:main_nls}, we set $\delta = \frac{3}{n^6}$ 
when applying Algorithm~\ref{alg:local_search}.
\begin{proof}
    
For any set $A\subseteq N$, we rewrite $\varphi_{h_H}(A)$ as a linear combination of $f(T)$ according to Definitions \ref{def:auxiliary} and~\ref{def:h_2}:
\[\varphi_{h_H}(A) =\frac{1}{2^{|H|}} \sum\limits_{T\subseteq A}\sum\limits_{H_i\subseteq H}m_{|A|-1,|T|-1}f(T\cup H_i).\]
We will use Algorithm \ref{alg:hat_varphi_h2} to construct the oracle of approximate function $\widehat{\varphi_{h_H}}$. 
Similar to Lemma  \ref{lem:hat_varphi_h},  we first show some bounds of coefficients $m_{|A|-1,|T|-1}$.
With these bounds, we will then prove $\varphi_{h_H}$ and its noisy analogue $\widetilde{\varphi_{h_H}}$ are close to each other by Bernstein’s inequality for sub-exponential variables. 
Finally, we use Hoeffding’s inequality to show that with enough samples, $\widehat{\varphi_{h_H}}$ is close to $\widetilde{\varphi_{h_H}}$ and therefore a good approximation of $\varphi_{h_H}$.

\begin{algorithm}[htp] 
	\caption{Approximation of $\varphi_{h}$} 
	\label{alg:hat_varphi_h2}
	\DontPrintSemicolon
    \SetNoFillComment
    \SetKwInOut{Input}{Input}
		\Input{Noisy oracle $\Tilde{f}$, a set $A\in\mathcal{I}_H(r)$
		and $\varepsilon\in(0, 1/2)$}
		Let $M = 32rH_r^2\log^{\frac{5}{2}}n\cdot\varepsilon^{-1}$ and $s(A) = \sum_{T\subseteq A}m_{|A|-1,|T|-1}$\;
         Construct a distribution $\nu_A(B)$ on set $2^N$ such that
		$\nu_A(B) =\left\{\begin{array}{ll}
     \dfrac{m_{|A|-1,|T|-1}}{2^{|H|}s(A)}&  \mbox{ if there exists $T\subseteq A$ and $H_i\subseteq H$ such that $B= T\cup H_i$} \\
     0 & \mbox{otherwise} 
\end{array} \right.$\;
Sample $M$ subsets $B_1,\cdots,B_M$ with distribution $\nu_A(B)$\;
		 $\widehat{\varphi_{h}}(A)\leftarrow s(A) \sum\limits_{i=1}^M\Tilde{f}(B_i)/M$\;
	\Return $\widehat{\varphi_{h}}(A)$
\end{algorithm}

\begin{lemma}[\bf{Useful bounds of the coefficients}]
\label{lmm:app_card2_bound}
For any $A\subseteq N$ with size $|A|\geq 2$, we have 
\begin{align*}
    &(1)~~\sum\limits_{T\subseteq A}m_{|A|-1,|T|-1} \leq H_{|A|}; && \text{(\cite[Lemma 2]{filmus2014monotone})}\\
    &(2)~~\sum\limits_{T\subseteq A}m^2_{|A|-1,|T|-1}\leq \dfrac{e^2}{(e-1)^2}\dfrac{2}{|A|}\leq 3.
\end{align*}
\end{lemma}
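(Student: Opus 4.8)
The plan is to treat the two inequalities separately; the real work is entirely in part~(2). Throughout I would write $a=|A|$ and group the subsets $T\subseteq A$ by size, so that for each size $\ell\in\{1,\dots,a\}$ there are $\binom a\ell$ subsets contributing the coefficient $m_{a-1,\ell-1}$; the empty set contributes $m_{a-1,-1}$, which I take to be $0$ by convention (harmless, since in $\varphi_{h_H}$ it is multiplied by $h_H(\varnothing)=0$). Part~(1) is exactly \cite[Lemma~2]{filmus2014monotone}, so I would simply invoke it; for completeness one could rederive it by writing $\sum_{T\subseteq A}m_{a-1,|T|-1}=\int_0^1\frac{e^p}{e-1}\sum_{\ell=1}^{a}\binom a\ell p^{\ell-1}(1-p)^{a-\ell}\,dp$, using $\sum_{\ell=1}^{a}\binom a\ell p^{\ell-1}(1-p)^{a-\ell}=\frac{1-(1-p)^a}{p}=\sum_{i=0}^{a-1}(1-p)^i$, and bounding $\int_0^1 e^p(1-p)^i\,dp\le\frac{e}{i+1}$ termwise; but since it is cited I would not grind through this.

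For part~(2) the first step is to control a single coefficient. Since $e^p\le e$ on $[0,1]$, Definition~\ref{def:auxiliary} together with the Beta-function identity~\eqref{eq:beta} gives
\[
m_{a-1,\ell-1}=\int_0^1\frac{e^p}{e-1}\,p^{\ell-1}(1-p)^{a-\ell}\,dp\;\le\;\frac{e}{e-1}\,B(\ell,a-\ell+1)\;=\;\frac{e}{e-1}\cdot\frac{(\ell-1)!\,(a-\ell)!}{a!}.
\]
Squaring and multiplying by $\binom a\ell=\frac{a!}{\ell!(a-\ell)!}$, all factorials cancel down to the clean bound
\[
\binom a\ell\,m_{a-1,\ell-1}^2\;\le\;\frac{e^2}{(e-1)^2}\cdot\frac{1}{\ell^2\binom a\ell}.
\]
Summing over $\ell$, it then remains to prove the purely combinatorial inequality $\sum_{\ell=1}^{a}\frac{1}{\ell^2\binom a\ell}\le\frac2a$.

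For that estimate I would separate the endpoint $\ell=a$, which contributes $\frac1{a^2}$, and for $1\le\ell\le a-1$ use $\binom a\ell\ge\binom a1=a$ to get $\sum_{\ell=1}^{a-1}\frac1{\ell^2\binom a\ell}\le\frac1a\sum_{\ell\ge1}\frac1{\ell^2}=\frac{\pi^2/6}{a}$. Combining with $\frac1{a^2}\le\frac1{3a}$ (valid for $a\ge3$) yields $\sum_{\ell=1}^{a}\frac1{\ell^2\binom a\ell}<\frac{\pi^2/6+1/3}{a}<\frac2a$, while the single residual case $a=2$ is checked directly ($\frac12+\frac14=\frac34\le1$). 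Feeding this back, the target sum is at most $\frac{2e^2}{(e-1)^2\,a}$, and since $a=|A|\ge2$ this is at most $\frac{e^2}{(e-1)^2}<3$, closing part~(2).

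I do not anticipate a genuine obstacle: part~(1) is black-boxed, and part~(2) is elementary. The only points needing care are the cancellation $\binom a\ell B(\ell,a-\ell+1)^2=\frac1{\ell^2\binom a\ell}$ and the bookkeeping to reach exactly the constant $\frac2{|A|}$ — a cruder estimate such as $\sum_T m^2\le(\max_T m)(\sum_T m)\le\frac{e}{(e-1)a}\cdot\frac{e}{e-1}H_a=\frac{e^2H_a}{(e-1)^2a}$ would only yield a $\frac{\log|A|}{|A|}$-type bound, which is why I route through $\sum_\ell\frac1{\ell^2\binom a\ell}$ and exploit that $B(\ell,a-\ell+1)$ is peaked at the two endpoints $\ell=1,a$.
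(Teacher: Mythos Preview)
Your proposal is correct and follows essentially the same route as the paper. Both arguments bound $e^p\le e$, invoke the Beta identity, and reduce to the same combinatorial sum; the paper writes each term as $\frac{1}{i(a-i+1)\binom{a}{i-1}}$ (equal to your $\frac{1}{\ell^2\binom a\ell}$), peels off $i=1$ rather than your $\ell=a$, and then also uses $\binom a{\cdot}\ge a$ to finish.
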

\begin{proof}
It only remains to estimate the bound of $\sum\limits_{T\subseteq A}m^2_{|A|-1,|T|-1}$.
We have
\begin{align*}
    \sum\limits_{T\subseteq A}m^2_{|A|-1,|T|-1}
    ~ =~& \sum\limits_{i=1}^{|A|} \left(\int_0^1\frac{e^p}{e-1}p^{i-1}(1-p)^{|A|-i} dp\right)^2 \binom{|A|}{i} \tag*{(Definition~\ref{def:auxiliary})}\\
    ~\leq~&  \frac{e^2}{(e-1)^2}\sum\limits_{i=1}^{|A|} \left(\int_0^1 p^{i-1}(1-p)^{|A|-i} dp\right)^2 \binom{|A|}{i}\tag*{($p\in [0,1]$)} \\
     ~ =~& \frac{e^2}{(e-1)^2}\sum\limits_{i=1}^{|A|} \left(\frac{\Gamma(i)\Gamma(|A|-i+1)}{\Gamma(|A|+1)}\right)^2 \binom{|A|}{i}\tag*{(Ineq.~\eqref{eq:beta})} \\
    ~=~ &\frac{e^2}{(e-1)^2}\left(\frac{1}{|A|} + \sum\limits_{i=2}^{|A|} \frac{1}{i(|A|-i+1)} \frac{1}{\binom{|A|}{i-1}}  \right)\tag*{($\binom{|A|}{i-1}\geq |A|$)}\\ 
    ~\leq~&  \frac{e^2}{(e-1)^2}\frac{2}{|A|} \leq 3.
\end{align*}
\end{proof}
\noindent To show that $\widehat{\varphi_{h_H}}$ is a $\left(\alpha=\frac{\varepsilon}{4r\ln r}, \delta=\frac{3}{n^6}, \mathcal{I}_H(r)\right)$-approximation of $\varphi_{h_H}$,
it suffices to prove
\[ \mathbb{P}\left[\left|\widehat{\varphi_{h_H}}(A) - \varphi_{h_H}(A)\right| > \frac{\varepsilon}{4rH_r}\cdot  \varphi_{h_H}(A)\right]\leq \frac{3}{n^6}.\]
First, we prove the the following lemma using concentration properties of sub-exponential random variables. 
\begin{lemma}[\textbf{Difference between $\varphi_{h_H}$ and $\widetilde{\varphi_{h_H}}$}]
\label{clm:4.2_2}
If $|H| = \lceil3\ln n\rceil$, we have  
\begin{equation}
\label{eq:tilde_varphi}
    \mathbb{P}\left[|\widetilde{\varphi_{h_H}}(A) - \varphi_{h_H}(A) | > \dfrac{\varepsilon}{8rH_r} \varphi_{h_H}(A)\right]\leq \frac{2}{n^6}.
\end{equation}
\end{lemma}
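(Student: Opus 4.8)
The plan is to write $\widetilde{\varphi_{h_H}}(A) - \varphi_{h_H}(A)$ as a weighted sum of independent mean-zero sub-exponential random variables, one for each set $B = T\cup H_i$ contributing to $\varphi_{h_H}(A)$, and then apply Bernstein's inequality (Lemma~\ref{lem:bern_exp}). Concretely, for each pair $(T, H_i)$ with $T\subseteq A$ and $H_i\subseteq H$, set $B = T\cup H_i$, let $X_B = \xi_B - 1$ (which is sub-exponential with norm at most $\kappa$ by Claim~\ref{clm:subexp}), and let the coefficient be $a_B = \frac{1}{2^{|H|}} m_{|A|-1,|T|-1} f(B)$. Since $A\in\mathcal{I}_H(r)$ lies in $N\setminus H$, the sets $B$ are all distinct as $(T,H_i)$ ranges over its index set, so the $X_B$ are genuinely independent. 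Then $\widetilde{\varphi_{h_H}}(A) - \varphi_{h_H}(A) = \sum_B a_B X_B$ with mean zero.

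Next I would bound $\|a\|_2^2$ and $\|a\|_\infty$ in terms of $\varphi_{h_H}(A)$. For the $\ell_\infty$ bound: $m_{|A|-1,|T|-1}\le 1$ and $f(B) = f(T\cup H_i) \le f(A\cup H)\le 2^{|H|} h_H(A) \le 2^{|H|}\varphi_{h_H}(A)$ using monotonicity of $f$ and $h_H \le \varphi_{h_H}$ (Lemma~\ref{lmm:bound_g}), so $a_B \le \varphi_{h_H}(A)$ — actually with a more careful bound using $f(B)\le f(A\cup H)$ and $h_H(A)\ge \frac12 f(A\cup H)$ from Lemma~\ref{lmm:property_of_h2}, one gets $f(B)\le 2^{|H|+1} h_H(A)\le 2^{|H|+1}\varphi_{h_H}(A)$, hence $a_B\le 2\varphi_{h_H}(A)$. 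For the $\ell_2$ bound: $\sum_B a_B^2 = \frac{1}{4^{|H|}}\sum_{T\subseteq A}\sum_{H_i\subseteq H} m_{|A|-1,|T|-1}^2 f(T\cup H_i)^2 \le \frac{1}{4^{|H|}}\big(\max_B f(B)^2\big)\cdot 2^{|H|}\sum_{T\subseteq A}m_{|A|-1,|T|-1}^2 \le \frac{1}{4^{|H|}}\cdot 4^{|H|+1}\varphi_{h_H}(A)^2\cdot\frac{1}{2^{|H|}}\cdot 3$ by part (2) of Lemma~\ref{lmm:app_card2_bound}, i.e. $\|a\|_2^2 \le \frac{12}{2^{|H|}}\varphi_{h_H}(A)^2$. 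With $|H| = \lceil 3\ln n\rceil$ we get $2^{|H|}\ge n^{3\ln 2}\ge n^2$, so $\|a\|_2^2 \le \frac{12}{n^2}\varphi_{h_H}(A)^2$.

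Now apply Bernstein with $t = \frac{\varepsilon}{8rH_r}\varphi_{h_H}(A)$ and $K = \kappa$. The quadratic term in the $\min$ is $\frac{t^2}{K^2\|a\|_2^2} \ge \frac{\varepsilon^2\varphi_{h_H}(A)^2/(64 r^2 H_r^2)}{\kappa^2\cdot 12\varphi_{h_H}(A)^2/n^2} = \frac{\varepsilon^2 n^2}{768\,\kappa^2 r^2 H_r^2}$, which is $\Omega(\varepsilon^2 n^2/(r^2\log^2 r))$; since $r\le n$ and under $n\in\Omega(\varepsilon^{-4})$ this term is at least $c^{-1}\cdot 6\ln n$ for $n$ large, giving the desired $2 e^{-6\ln n} = 2/n^6$. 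I should also check the linear term $\frac{t}{K\|a\|_\infty}\ge \frac{\varepsilon\varphi_{h_H}(A)/(8rH_r)}{2\kappa\varphi_{h_H}(A)} = \frac{\varepsilon}{16\kappa r H_r}$, which dominates $6\ln n$ as well once $n$ is polynomially large relative to $r/\varepsilon$ — this is where the hypothesis that $r\in O(n)$ (in fact any polynomial bound) and $n$ sufficiently large enters.

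\textbf{Main obstacle.} The delicate point is making the relationship $f(B)\le O(1)\cdot 2^{|H|}\varphi_{h_H}(A)$ work uniformly over all $B$ in the index set — this requires combining $f(T\cup H_i)\le f(A\cup H)$ (monotonicity, valid since $T\cup H_i\subseteq A\cup H$) with the lower bound $h_H(A)\ge \frac12 f(A\cup H)$ from Lemma~\ref{lmm:property_of_h2} and $\varphi_{h_H}(A)\ge h_H(A)$ from Lemma~\ref{lmm:bound_g}. A secondary subtlety is verifying that with $|H|=\lceil 3\ln n\rceil$ the factor $2^{|H|}$ genuinely beats the $n^2$ we need in the $\ell_2$ bound (it does, since $3\ln 2 > 2$), and that the exponent in Bernstein cleanly yields $6\ln n$ rather than something smaller — this pins down why the threshold $|H|\ge 3\ln n$ (rather than, say, $2\ln n$) is chosen. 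Everything else is routine plugging-in.
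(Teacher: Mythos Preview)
Your approach is essentially the same as the paper's: express $\widetilde{\varphi_{h_H}}(A)-\varphi_{h_H}(A)$ as a weighted sum of independent centered sub-exponential variables $\xi_B-1$, apply Bernstein (Lemma~\ref{lem:bern_exp}), and control the coefficient norms using Lemma~\ref{lmm:app_card2_bound} together with $f(T\cup H_i)\le f(A\cup H)\le 2h_H(A)\le 2\varphi_{h_H}(A)$ from Lemma~\ref{lmm:property_of_h2} and Lemma~\ref{lmm:bound_g}. The paper carries the bounds in terms of $f(A\cup H)$ and converts at the end, but that difference is cosmetic.

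There is, however, a concrete arithmetic slip that breaks your linear-term analysis. After correctly invoking $h_H(A)\ge \tfrac12 f(A\cup H)$, you write $f(B)\le 2^{|H|+1}\varphi_{h_H}(A)$ and conclude $\|a\|_\infty\le 2\varphi_{h_H}(A)$. But that inequality gives $f(B)\le 2\varphi_{h_H}(A)$, not $2^{|H|+1}\varphi_{h_H}(A)$; plugging this into $a_B=\tfrac{1}{2^{|H|}}m_{|A|-1,|T|-1}f(B)$ yields $\|a\|_\infty\le \tfrac{C}{2^{|H|}}\varphi_{h_H}(A)$ for a constant $C$, a factor of $2^{|H|}$ smaller than what you wrote. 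With your stated bound the linear term in Bernstein is $\tfrac{\varepsilon}{16\kappa rH_r}$, which for $r\in\Omega(n^{1/3})$ is $o(1)$ and certainly does not ``dominate $6\ln n$ once $n$ is polynomially large relative to $r/\varepsilon$''; that sentence is false as written. With the corrected $\|a\|_\infty$ the linear term becomes $\Omega\bigl(\varepsilon\,2^{|H|}/(\kappa rH_r)\bigr)$, which is indeed large since $2^{|H|}\ge n^{3\ln 2}$. The same slip appears inside your $\ell_2$ computation (you substitute $\max_B f(B)^2\le 4^{|H|+1}\varphi_{h_H}(A)^2$ and then a stray $\tfrac{1}{2^{|H|}}$ in place of $2^{|H|}$), though a second compensating error lands you on the correct final bound $\|a\|_2^2\le \tfrac{12}{2^{|H|}}\varphi_{h_H}(A)^2$. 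Once the $\ell_\infty$ bookkeeping is fixed, your argument goes through exactly as the paper's does.
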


\begin{proof}
Let $\xi$ be a random variable with a generalized exponential tail distribution $\mathcal{D}$.
The varaiable $\xi-1$ has a finite sub-exponential norm $\kappa$ by Claim~\ref{clm:subexp}.
Let 
$$X_{i,j} = \frac{1}{2^{|H|}}\cdot m_{|A|-1,|T_j|-1} f(T_j\cup H_i)\cdot(\xi_{T_j\cup H_i}-1),$$ 
where $H_i\subseteq H$, $T_j\subseteq A$ and $i\in [2^{|H|}]$, $j\in [2^{|A|}]$. 
Applying Bernstein’s inequality (Lemma~\ref{lem:bern_exp}) gives
\begin{align*}
\resizebox{\textwidth}{!}{$\displaystyle
 \mathbb{P}\left\{\left|\widetilde{\varphi_{h_H}}(A) - \varphi_{h_H}(A) \right|> \dfrac{\varepsilon\varphi_{h_H}(A)}{8rH_r} \right\}
 \leq 2 \exp 
\left[
-c \min\left(
\frac{\left(\dfrac{\varepsilon}{8rH_r}\right)^2\left(\varphi_{h_H}(A)\right)^2}{\sum\limits_{i,j}\left\|X_{i,j}\right\|^2_{\psi_1}},
\frac{\dfrac{\varepsilon}{8rH_r}\varphi_{h_H}(A)}{\max\limits_{i,j}\left\|X_{i,j}\right\|_{\psi_1}}
\right)
\right].$}
\end{align*}
We can bound $\max_{i,j}\left\|X_{i,j}\right\|_{\psi_1}$ by 
     \begin{equation*}
         \max_{i,j}\left\|X_{i,j}\right\|_{\psi_1} \leq \frac{\kappa}{2^{|H|}}
          f(A\cup H)\cdot  \max_{T_j}m_{|A|-1,|T_j|-1} \leq \frac{3}{2^{|H|}}\kappa\cdot f(A\cup H),
     \end{equation*}
where we use the fact that for any $a, t \geq 0$,
$$m_{a, t} = \int_0^1 \frac{e^p}{e-1} p^t(1-p)^{a-t}dp\leq 3.$$
With Lemma~\ref{lmm:app_card2_bound}, $\sum\limits_{i,j}\left\|X_{i,j}\right\|^2_{\psi_1}$ can be bounded by
$$\sum\limits_{i,j}\left\|X_{i,j}\right\|^2_{\psi_1} \leq \frac{1}{2^{|H|}}\sum\limits_{T_j\subseteq A}m^2_{|A|-1,|T_j|-1} \left(f(A\cup H)\right)^2\kappa^2 \leq \frac{3}{2^{|H|}}\kappa^2(f(A\cup H))^2.$$
For a sufficiently large $n\in \Omega(\varepsilon^{-4})$, we have 
$|H| \geq 3\ln n \geq  \ln\left(\frac{36864\kappa^2}{c}r^2H_r^2 \ln n\cdot\varepsilon^{-2}\right)$.
Then by Lemma~\ref{lmm:property_of_h2} and direct calculations, we can conclude that 
\begin{align*}
    \min\left(\frac{\left(\dfrac{\varepsilon}{8rH_r}\right)^2\left(\varphi_{h_H}(A)\right)^2}{\sum\limits_{i,j}\left\|X_{i,j}\right\|^2_{\psi_1}},
\frac{\dfrac{\varepsilon}{8rH_r}\varphi_{h_H}(A)}{\max\limits_{i,j}\left\|X_{i,j}\right\|_{\psi_1}}\right) \geq \frac{6}{c}\ln n.
\end{align*}
Therefore Ineq.~\eqref{eq:tilde_varphi} holds.
\end{proof}
\noindent
Next we prove the following lemma using Hoeffding’s inequality (Lemma~\ref{lmm:Hoeffding}).
\begin{lemma}[\textbf{Difference between $\widehat{\varphi_{h_H}}$ and $\widetilde{\varphi_{h_H}}$}]
\label{clm:4.2_1}
If $n\in \Omega(\varepsilon^{-4})$, we have  
\begin{equation*}
    \mathbb{P}\left[|\widehat{\varphi_{h_H}}(A) - \widetilde{\varphi_{h_H}}(A) | > \dfrac{\varepsilon}{8rH_r} \varphi_{h_H}(A)\right]\leq \frac{1}{n^6}.
\end{equation*}
\end{lemma}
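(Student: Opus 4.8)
The plan is to mirror the proof of Lemma~\ref{lmm:app_diff_tilde_hat_1} from Section~\ref{sec:cardinality1}, since the estimator produced by Algorithm~\ref{alg:hat_varphi_h2} is again a scaled empirical mean. Concretely, $\widehat{\varphi_{h_H}}(A) = s(A)\cdot\tfrac1M\sum_{j=1}^M\tilde f(B_j)$ with $B_1,\dots,B_M$ drawn i.i.d.\ from $\nu_A$, and since $\sum_B\nu_A(B)\tilde f(B) = \widetilde{\varphi_{h_H}}(A)/s(A)$, conditioning on the realized noise $\{\xi_S\}_S$ makes the summands $\tilde f(B_j)$ i.i.d.\ with mean $\widetilde{\varphi_{h_H}}(A)/s(A)$; hence $\mathbb{E}[\widehat{\varphi_{h_H}}(A)\mid\{\xi_S\}] = \widetilde{\varphi_{h_H}}(A)$ (note $\varphi_{h_H}(A)$ on the right-hand side of the claim is deterministic, which keeps the conditioning clean). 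I would then apply Hoeffding's inequality (Lemma~\ref{lmm:Hoeffding}) to $X_j := s(A)\tilde f(B_j)$.

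To obtain a deterministic range I would first fix a threshold $b$. If $\mathcal{D}$ has bounded support, $b$ is its supremum; otherwise, arguing as in the choice of $b$ in Section~\ref{sec:cardinality1} and using the density tail bound from the proof of Claim~\ref{clm:subexp}, I would take $b = O(\log n)$ large enough that $\mathbb{P}[\xi>b]\le (2Mn^6)^{-1}$. Since only $M=\mathrm{poly}(n)$ sets are ever sampled, a union bound over them (not over the whole support $\{B:B\subseteq A\cup H\}$, which is too large here) shows that with probability $\ge 1-\tfrac1{2n^6}$ every sampled $\tilde f(B_j)\le b\,f(B_j)$; together with $B_j=T\cup H_i\subseteq A\cup H$ and monotonicity of $f$ this gives $\tilde f(B_j)\le b\,f(A\cup H)$, while $s(A)=\sum_{T\subseteq A}m_{|A|-1,|T|-1}\le H_{|A|}\le H_r$ by Lemma~\ref{lmm:app_card2_bound}(1). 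On this event $X_j\in[0,\,H_r\,b\,f(A\cup H)]$; the mild dependence introduced by the conditioning is absorbed by instead running Hoeffding on the truncated variables $\min(\tilde f(B_j),b\,f(B_j))$, whose means differ from $\widetilde{\varphi_{h_H}}(A)/s(A)$ by an exponentially small amount and which agree with $\tilde f(B_j)$ on the boundedness event.

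The last ingredient is the structural bound $f(A\cup H)\le 2h_H(A)\le 2\varphi_{h_H}(A)$, obtained from Lemma~\ref{lmm:property_of_h2} (the lower bound $h_H(A)\ge\tfrac12 f(A\cup H)$) combined with Lemma~\ref{lmm:bound_g} ($\varphi_{h_H}(A)\ge h_H(A)$). This shows that the target deviation $\tfrac{\varepsilon}{8rH_r}\varphi_{h_H}(A)$ is at least a $\Theta\!\big(\tfrac{\varepsilon}{rH_r^2 b}\big)$-fraction of the range $H_r b\,f(A\cup H)$, so Hoeffding gives a failure probability $\le 2\exp\!\big(-\Theta\big(\tfrac{M\varepsilon^2}{r^2H_r^4 b^2}\big)\big)$. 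Plugging in $M=32rH_r^2\log^{5/2}n\cdot\varepsilon^{-1}$ together with $n\in\Omega(\varepsilon^{-4})$ and $b=O(\log n)$ is meant to force this exponent above $7\ln n$, so Hoeffding contributes at most $\tfrac1{2n^6}$; a union bound with the boundedness event then yields $\mathbb{P}\big[|\widehat{\varphi_{h_H}}(A)-\widetilde{\varphi_{h_H}}(A)|>\tfrac{\varepsilon}{8rH_r}\varphi_{h_H}(A)\big]\le\tfrac1{n^6}$.

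The step I expect to be the real obstacle is precisely this quantitative calibration of $M$: Hoeffding only sees the range $f(A\cup H)=\Theta(\varphi_{h_H}(A))$ of the summands and not their much smaller variance under $\nu_A$, so the bound is somewhat lossy, and one must check carefully that retaining the logarithmic factor $s(A)=\Theta(\log r)$ rather than crudely replacing it by $H_r$, using the tight inequality $f(A\cup H)\le 2\varphi_{h_H}(A)$, and exploiting $n\in\Omega(\varepsilon^{-4})$ together with $|H|\ge 3\ln n$ (the same assumption that powered the $\ln n$ in the exponent of Lemma~\ref{clm:4.2_2}) actually pushes the exponent past $6\ln n$ uniformly over the whole range $r\in\Omega(n^{1/3})$. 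If the range-based Hoeffding estimate turns out to be too crude for large $r$, I would instead invoke the sub-exponential Bernstein inequality (Lemma~\ref{lem:bern_exp}), applied exactly as in the proof of Lemma~\ref{clm:4.2_2} with sub-exponential norms controlled by $\kappa$, to get the sharper variance-sensitive concentration.
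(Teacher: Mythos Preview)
Your proposal is essentially identical to the paper's own proof: the paper also truncates the noise multipliers at a threshold $b=\Theta(\log n)$ chosen so that $\mathbb{P}[\xi>b]\le 2/n^8$, takes a union bound over the $M$ sampled sets $B_1,\dots,B_M$ (its event $\mathcal{E}_A$), bounds each $X_i=s(A)\tilde f(B_i)$ by $H_r\,b\,f(A\cup H)$ via $s(A)\le H_{|A|}$ and monotonicity of $f$, and then invokes Hoeffding, finishing with the same structural inequality $f(A\cup H)\le 2\varphi_{h_H}(A)$ you derive from Lemmas~\ref{lmm:property_of_h2} and~\ref{lmm:bound_g}. Your extra remark about truncating to handle the conditioning cleanly, and your flagging of the quantitative calibration of $M$ as the delicate step, are both fair; the paper simply asserts the Hoeffding exponent exceeds $8\ln n$ ``for a sufficiently large $n$ such that $\ln^2 n\ge b$'' without spelling out the arithmetic.
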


\begin{proof}
To use Hoeffding's inequality, we need a bound $b$ for noise multipliers.
If the noise distribution $\mathcal{D}$ has bounded support, this bound is natural.
When the support of $\mathcal{D}$ is not bounded, we let \[b = \frac{16}{c_0}\ln\left(n\cdot c_0^{-\frac{1}{4}}\right).\]
Recall that we sample $M$ sets $B_1,\dots,B_M$ to estimate $\widehat{\varphi_{h}}(A)$ in Algorithm~\ref{alg:hat_varphi_h2}.
We say that event $\mathcal{E}_A$ happens if for all $i\in [M]$, we have $\xi_{B_i} \leq b$.
If the support of $\mathcal{D}$ is bounded, event $\mathcal{E}_A$ trivially holds.
Otherwise, for a distribution with unbounded support, recall that when 
$$x\geq \max\left\{1, \left(\frac{2\sum_{i=1}|c_i|}{c_0}\right)^{\frac{1}{\gamma_0-\gamma_1}}\right\},$$
we have the probability density $\rho(x) \leq e^{-\frac{1}{2}c_0x^{\gamma_0}} \leq e^{-\frac{1}{2}c_0x}$.
When $n$ is sufficiently large, we have
$$\mathbb{P}\left[\xi\geq b\right] \leq \int_{b}^{\infty} \exp\left(-\frac{1}{2}c_0x\right)dx 
\leq \frac{2}{c_0}\exp\left(-\frac{1}{2}c_0b\right)= \frac{2}{n^8}.$$
Thus by taking a union bound, event $\mathcal{E}_A$ happens with probability at least $1-{2M}/{n^8}$.
Conditioned on event $\mathcal{E}_A$, consider the random variables $X_i = s(A)\Tilde{f}(B_i)$. 
We bound $X_i$ by
\begin{align*}
 X_i 
  & ~=~ s(A)\xi_{B_i} f(B_i) \tag*{(Definition \ref{def:noise})}\\
  & ~\leq~ H_{r} b f(B_i) \tag*{(Lemma \ref{lmm:app_card2_bound} with $|A|\leq r$ and event $\mathcal{E}_A$)}\\
  & ~\leq~ H_{r} b f(A\cup H) .\tag*{(monotonicity of $f$)}
\end{align*}
Thus applying Hoeffding’s inequality (Lemma \ref{lmm:Hoeffding}), we obtain
\begin{align*}
 & \mathbb{P}\left[|\widehat{\varphi_{h_H}}(A)- \widetilde{\varphi_{h_H}}(A)|> \dfrac{\varepsilon}{8rH_r} \varphi_{h_H}(A)\right] 
    ~\leq ~ \frac{2}{n^8},
\end{align*}
which holds for a sufficiently large $n$ such that $\ln^2n \geq b$.
Therefore, $|\widehat{\varphi_h}(A)-\widetilde{\varphi_h}(A)|> \frac{\varepsilon}{8rH_r}\varphi_h(A)$ holds with probability at most $$\frac{2M}{n^8} +\left(1-\frac{2M}{n^8}\right)\frac{2}{n^8} = O(n^{-6}).$$
\end{proof}
\noindent
Combining Lemma \ref{clm:4.2_2} and \ref{clm:4.2_1}, we complete the proof of Lemma~\ref{lem:hat_varphi_h2}.

\end{proof}
\subsubsection{Algorithm for Theorem~\ref{thm:cardi2}}
\noindent
We present Algorithm~\ref{alg:local_search_monotone2} that mainly contains a local search phase (Line 3).
The main difference from Algorithm~\ref{alg:local_search_monotone} is that Algorithm \ref{alg:local_search_monotone2}
arbitrarily select a redundant set $H$ (Line 1), and use this set to construct a smoothing surrogate function $h_{H}$ for local search (Line 3). 
Then Algorithm~\ref{alg:local_search_monotone2} returns  $S_L\cup H$ directly (Line 4) without a $\Tilde{f}$-maximization phase. 

\begin{algorithm}[htp!] 
	\caption{Noisy local search under large cardinality constraint} 
	\label{alg:local_search_monotone2}
	\DontPrintSemicolon
    \SetNoFillComment
    \SetKwInOut{Input}{Input}
	\Input{a value oracle to $\Tilde{f}$, budget $r\in  \Omega\left(n^{\frac{1}{3}}\right)$, and $\varepsilon\in (0,1/2)$}
    Arbitrarily select a subset $H\subseteq N$ with $|H|= \lceil 3\ln n \rceil$ \;
    Let $\widehat{\varphi_{h_{H}}}$ be a $\left(\alpha =\frac{ \varepsilon}{4r\ln r},\delta =\frac{3}{n^6},\mathcal{I}_{H}(r)\right)$-approximation of $\varphi_{h_{H}}$ as in Lemma \ref{lem:hat_varphi_h2} \;
	$S_L\leftarrow \mathtt{NLS}\left(\widehat{\varphi_{h_{H}}},\mathcal{I}_{H}(r),\Delta = \frac{\varepsilon}{4r\ln r}\right)$ \Comment*[r]{Local search phase}
    \Return $S_L\cup H$ 
\end{algorithm}

\subsubsection{Proof of Theorem~\ref{thm:cardi2}}

Again, we first analyze the query complexity and then prove the approximation performance. 

\paragraph{Query complexity of Algorithm~\ref{alg:local_search_monotone2}.}
Since $n\in \Omega\left(\frac{1}{\varepsilon^4}\right)$, we have $I\in O(n^2)$ in Corollary~\ref{cor:main_nls}.
Hence Line 4 calls the oracle of $\widehat{\varphi_{h_H}}$  at most $O(rn^3)$ times by Corollary~\ref{cor:main_nls}, and each call to $\widehat{\varphi_{h_H}}$ queries $\Tilde{f}$ at most $O\left(r\varepsilon^{-1}\log^{\frac{5}{2}}n\log^2r\right)$ times by Lemma~\ref{lem:hat_varphi_h2}.
Thus we query $\Tilde{f}$ at most $O(n^6\varepsilon^{-1})$ times.

\paragraph{Approximation ratio and success probability of Algorithm~\ref{alg:local_search_monotone2}.}
Corollary \ref{cor:main_nls}, we have
    $h_H(S_L)\geq \left(1-\frac{1}{e}-\varepsilon\right)\cdot\max_{S\subseteq \mathcal{I}_H(r)} h_H(S)$ with probability $1-O\left(\frac{1}{n^2}\right)$.
Furthermore, $f(S_L\cup H)\geq h_H(S_L)$ follows from the monotonicity of $f$.
Then we can demonstrate the approximation performance of Algorithm~\ref{alg:local_search_monotone2} by the following lemma.
\begin{lemma}
\label{clm:card2_split}
Let $O^\star = \mathop{\arg\max}_{S\in \mathcal{I}(r)}f(S)$. 
We have
$\max_{S\in \mathcal{I}_H(r)}h_H(S)\geq \left(1-\frac{|H|}{r}\right)f(O^\star)$.
\end{lemma}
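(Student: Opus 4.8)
The plan is to produce an explicit feasible set $S^{\dagger}\in\mathcal{I}_H(r)$ for which the surrogate value $h_H(S^{\dagger})$ is already large, and then read off the claim from $h_H(S^{\dagger})\le\max_{S\in\mathcal{I}_H(r)}h_H(S)$. First I would normalize $O^{\star}$: by monotonicity of $f$ we may assume $|O^{\star}|=r$, padding $O^{\star}$ with arbitrary elements of $N\setminus H$ if necessary (there are $n-|H|\ge r$ of them once $n$ is large), which does not decrease $f(O^{\star})$ and does not alter $O^{\star}\cap H$. Write $k:=|O^{\star}\cap H|\le|H|$, so $|O^{\star}\setminus H|=r-k\ge r-|H|$.

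Next I would extract from $O^{\star}$ a set that avoids $H$, has size exactly $r-|H|$, and keeps most of the value. I delete $|H|-k$ elements one at a time, each time removing from the current set an element of the surviving part of $O^{\star}\setminus H$ whose current marginal is smallest. Here I use the standard estimate that for a monotone submodular $g$ and any $A'\subseteq A$ one has $\sum_{a\in A'}\bigl(g(A)-g(A-a)\bigr)\le g(A)$ (telescoping $g(A)$ along an ordering of $A$ and applying submodularity term by term, then using monotonicity to drop the terms outside $A'$). Consequently the $j$-th deletion costs at most a $\tfrac{1}{(r-k)-j}$ fraction of the current value, and telescoping over $j=0,\dots,|H|-k-1$ yields a set $O'$ with $(O^{\star}\cap H)\subseteq O'\subseteq O^{\star}$, $\,|O'\setminus H|=r-|H|$, and $f(O')\ge\frac{r-|H|}{r-k}f(O^{\star})\ge\bigl(1-\tfrac{|H|}{r}\bigr)f(O^{\star})$. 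Setting $S^{\dagger}:=O'\setminus H\in\mathcal{I}_H(r)$ and noting $S^{\dagger}\cup H\supseteq S^{\dagger}\cup(O^{\star}\cap H)=O'$, monotonicity gives $f(S^{\dagger}\cup H)\ge f(O')\ge\bigl(1-\tfrac{|H|}{r}\bigr)f(O^{\star})$. Invoking Lemma~\ref{lmm:property_of_h2} then gives $h_H(S^{\dagger})\ge\tfrac12\bigl(f(S^{\dagger})+f(S^{\dagger}\cup H)\bigr)\ge\tfrac12 f(S^{\dagger}\cup H)$.

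The step I expect to be the real obstacle is closing the factor $\tfrac12$: the argument above only delivers $\max_{S\in\mathcal{I}_H(r)}h_H(S)\ge\tfrac12\bigl(1-\tfrac{|H|}{r}\bigr)f(O^{\star})$, whereas the statement asks for $\bigl(1-\tfrac{|H|}{r}\bigr)f(O^{\star})$, so one must also get genuine value out of the $f(S^{\dagger})$ term. This seems to require either (i) choosing the padding of $O^{\star}$ and the $|H|-k$ deleted elements jointly so that $f(S^{\dagger})$ itself is already $\bigl(1-O(|H|/r)\bigr)f(O^{\star})$ — i.e. arguing that restricting an optimal solution to $N\setminus H$ costs little — or (ii) replacing the crude bound $h_H(S)\ge\tfrac12(f(S)+f(S\cup H))$ by a sharper estimate of the full average $2^{-|H|}\sum_{H_j\subseteq H}f(S^{\dagger}\cup H_j)$, exploiting concavity of $f$ along the chord from $S^{\dagger}$ to $S^{\dagger}\cup H$ together with $|H|=\Theta(\log n)\ll r$. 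I would try route (i) first, since it dovetails with the low-marginal deletion scheme already set up: the hope is that for the padded optimum, the elements of $H\cap O^{\star}$ can be charged to the deleted low-marginal elements, so that $h_H(S^{\dagger})$ inherits the $\bigl(1-|H|/r\bigr)$ guarantee directly rather than losing a factor two.
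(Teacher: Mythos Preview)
The gap you identify is real, but the fix is far simpler than your routes (i)–(ii) suggest. You are applying Lemma~\ref{lmm:property_of_h2} in the form $h_H(S)\ge\tfrac12 f(S\cup H)$ and discarding the $f(S)$ term; instead, use the trivial lower bound
\[
h_H(S)=\frac{1}{2^{|H|}}\sum_{H_j\subseteq H}f(S\cup H_j)\ \ge\ f(S),
\]
which holds term by term by monotonicity of $f$. With this in hand there is nothing more to do: a low-marginal deletion scheme (or, equivalently, the averaging argument the paper uses over size-$|H|$ subsets $A\subseteq O^\star$) produces a set $S^{\dagger}\subseteq O^\star$ of size $r-|H|$ with $f(S^{\dagger})\ge(1-|H|/r)\,f(O^\star)$, and then $h_H(S^{\dagger})\ge f(S^{\dagger})$ finishes the proof. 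No charging argument and no sharper concavity estimate is needed; this is exactly the paper's proof.

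The reason your construction does not quite deliver this is that you aim to make $f(S^{\dagger}\cup H)$ large rather than $f(S^{\dagger})$ itself: you keep $O^\star\cap H$ inside $O'$ and delete only $|H|-k$ elements from $O^\star\setminus H$, so you control $f(O')$ but not $f(S^{\dagger})=f(O'\setminus H)$. The paper instead deletes $|H|$ elements from $O^\star$ outright and lower-bounds $f$ of the remainder via the averaging argument you already know. (You are, incidentally, more careful than the paper about the case $O^\star\cap H\neq\emptyset$; the paper's written proof does not check that the set $O^\star\setminus A_0$ it exhibits actually lies in $N\setminus H$. In the regime of Theorem~\ref{thm:cardi2} this is a minor issue, but it is worth noting that your caution here is well placed even if it led you to bound the wrong quantity.)
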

\noindent
Roughly speaking, there exists a subset $A\subseteq O^\star$ of size $|H|$ such that
$f(O^\star\backslash A)\geq \left(1-\frac{|H|}{r}\right)f(O^\star)$ by submodularity of $f$.
The claim follows from $\max_{S_L\in \mathcal{I}_H(r)}h_H(S)\geq h(O^\star\backslash A)\geq f(O^\star\backslash A)$.
\begin{proof}
    We first show that a random set $A$, which is uniformly selected from all subsets of $N$ with size $|H|$, satisfies that
\[\mathop{\mathbb{E}}\left[f(O^\star\setminus A)\right]\geq \left(1-\frac{|H|}{r}\right)f(O^\star).\]
To prove this, we index $O^\star$ as $\{o_1,\cdots,o_r\}$ and denote by $I_{A}$ the indexes of elements in $O^{\star}\cap A$. 
Then we decompose $f(O^\star)$ as below: 
\[f(O^\star)=\sum\limits_{i=1}^rf(o_i\mid o_t,~t\in [i-1]),\]
where $o_0$ is defined to be $\varnothing$.
Similarly, we decompose $f(O^\star\setminus H)$ as 
\begin{align*}
f(O^\star\setminus A)~=~&\sum\limits_{i\in [r]\setminus I_A}f(o_i\mid o_t,~t\in [i-1] \mbox{ and } t \not\in I_A)\\ 
~\geq~& \sum\limits_{i\in [r]\setminus I_A}f(o_i\mid o_t,~t\in [i-1] ).\tag*{(submodularity of $f$)}\\
\end{align*} 
Taking an expectation gives
\begin{align*}
\mathop{\mathbb{E}}\left[f(O^\star\setminus A)\right]
~\geq~& \mathop{\mathbb{E}}\left[\sum\limits_{i\in [r]\setminus I_A}f\left(o_i\mid o_t,~t\in [i-1] \right)\right]\\
~=~& \left(1-\frac{|H|}{r}\right)\left(\sum\limits_{i=1}^rf(o_i\mid o_t,~t\in [i-1])\right) = \left(1-\frac{|H|}{r}\right)f(O^\star).
\end{align*}
Thus there must exist a set $A_0$ such that 
$f(O^\star\setminus A_0)  
\geq 
\left(1-\frac{|H|}{r}\right)f(O^\star)$.
Therefore, we have
\[\max\limits_{S\in \mathcal{I}_H(r)}h_H(S)\geq h_H(O^\star \setminus A_0)\geq f(O^\star \setminus A_0)\geq \left(1-\frac{|H|}{r}\right)f(O^\star).\]
\end{proof}
\noindent
Combining the fact that $f(S_L\cup H)\geq \left(1-\frac{1}{e}-\varepsilon\right)\max_{S\subseteq \mathcal{I}_H(r)} h_H(S)$ with Lemma \ref{clm:card2_split}, we complete the proof of Theorem~\ref{thm:cardi2}.

\section{Our algorithms and main theorems for general matroid constraints}
\label{sec:matroid}

In this section, we consider Problem~\ref{problem:noisy_submodular} under matroid constraints.
Similarly, for different ranges of matroid ranks, we prove there are  $(1-\frac{1}{e}-O(\varepsilon))/2$-approximate algorithms (Theorems \ref{thm:matroid1} and \ref{thm:matroid2}). 
W.l.o.g., we assume that all single elements are feasible for the given matroid constraints.
\subsection{Algorithmic results for matroid constraints with rank \texorpdfstring{$r\in \Omega\left(\frac{1}{\varepsilon}\right) \cap O\left(n^{\frac{1}{3}}\right)$}{Lg}}

In this section, we generalize Algorithm~\ref{alg:local_search_monotone} to deal with general matroid constraints, which results in an algorithm (Algorithm~\ref{alg:local_search_monotone3}) achieving the performance stated in Theorem \ref{thm:matroid1}.

\begin{theorem}[\textbf{Algorithmic results for matroid constraints with rank $r\in  \Omega\left(\frac{1}{\varepsilon}\log\left(\frac{1}{\varepsilon}\right)\right) \cap O\left(n^{\frac{1}{3}}\right) $}]
\label{thm:matroid1}
  Let $\varepsilon > 0$ is sufficiently small and assume $n \in \Omega\left(\exp\left(\left(\frac{1}{\varepsilon}\right)^{O(1)}\right)\right)$ is sufficiently large. 
    For any $r\in \Omega(\varepsilon^{-1}\log(\varepsilon^{-1})) \cap O\left(n^{\frac{1}{3}}\right)$, there exists an algorithm that returns a $\left(\left(1-\frac{1}{e}\right)/2-O(\varepsilon)\right)$-approximation for Problem~\ref{problem:noisy_submodular} under a matroid constraint $\mathcal{I}(\mathcal{M})$ with rank $r$, with probability at least $1-O\left(\varepsilon^4\right)$ and query complexity at most $O\left(r^2\log ^2r\cdot n^{\frac{3}{2}}\varepsilon^{-1}\max\{r, \log n\}\right)$ to $\tilde{f}$.
\end{theorem}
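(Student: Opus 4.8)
The plan is to mirror the cardinality algorithm (Algorithm~\ref{alg:local_search_monotone}) and its analysis (Theorem~\ref{thm:cardi}): replace the $(r-1)$-cardinality constraint by a truncated matroid, reuse the smoothing surrogate $h(S)=\frac1n\sum_{e\in N}f(S+e)$ of Definition~\ref{def:h_1} together with its approximate auxiliary oracle from Lemma~\ref{lem:hat_varphi_h}, and replace the final $\tilde f$-maximization step — which crucially relied on being able to append an arbitrary ground element — by a noise-robust selection among a few independence-respecting candidates. Concretely, I would fix a slack $\ell=\Theta(\log(1/\varepsilon))$, let $\mathcal{M}'$ be the rank-$(r-\ell)$ truncation of $\mathcal{M}$ (again a matroid), and run $\mathtt{NLS}(\widehat{\varphi_h},\mathcal{I}(\mathcal{M}'),\Delta)$ with the same parameters $\alpha=\Delta=\frac{\varepsilon}{4r\ln r}$ and $\delta$ as in Algorithm~\ref{alg:local_search_monotone}. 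The query-complexity bound is then inherited from Theorem~\ref{thm:cardi} unchanged, since $\mathtt{NLS}$ makes the same number of calls to $\widehat{\varphi_h}$ (Corollary~\ref{cor:main_nls}) at the same per-call cost (Lemma~\ref{lem:hat_varphi_h}), and the selection phase adds only lower-order queries to $\tilde f$.

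For the approximation ratio I would first recover, exactly as in Section~\ref{sec:cardinality1}, that the $\mathtt{NLS}$ output $S_L$ obeys $h(S_L)\ge(1-\frac1e-\varepsilon)\max_{S\in\mathcal{I}(\mathcal{M}')}h(S)$ with high probability (Theorem~\ref{the:main_nls}, Corollary~\ref{cor:main_nls}); and, letting $O_f\in\arg\max_{S\in\mathcal{I}(\mathcal{M})}f(S)$ be an optimal base, the submodularity argument behind $\tilde O_f$ in the proof of Theorem~\ref{thm:cardi} yields $\tilde O_f\subseteq O_f$ with $|\tilde O_f|=r-\ell$ and $f(\tilde O_f)\ge(1-\frac\ell r)f(O_f)$; since $\tilde O_f\in\mathcal{I}(\mathcal{M}')$ and $h\ge f$ pointwise by monotonicity, this gives
\[
h(S_L)\ \ge\ \Bigl(1-\tfrac1e-\varepsilon\Bigr)\Bigl(1-\tfrac{\ell}{r}\Bigr)f(O_f)\ =\ \Bigl(1-\tfrac1e-O(\varepsilon)\Bigr)f(O_f),
\]
where $\ell/r=O(\varepsilon)$ precisely because $r=\Omega(\varepsilon^{-1}\log\varepsilon^{-1})$. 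It then remains to turn this bound on $h(S_L)$ into a bound on $f$ of a feasible set.

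This conversion is where the factor $\frac12$ is born: $h(S_L)$ averages $f(S_L+e)$ over all $e\in N$, but feasibility only permits appending elements $e$ with $S_L+e\in\mathcal{I}(\mathcal{M})$, i.e.\ elements not spanned by $S_L$. I would have the algorithm build a small candidate family — $S_L$ itself, together with augmentations obtained by repeatedly appending the addable element of largest $\tilde f$-value (using the $\ell$ slack units to grow towards a base) — and return a candidate chosen by a noise-robust comparison. The structural heart is a lemma guaranteeing that some candidate $S^\star$ satisfies $f(S^\star)\ge\frac12(1-O(\varepsilon))h(S_L)$: writing $n\,h(S_L)=\sum_{e\in S_L}f(S_L)+\sum_{e\in C}f(S_L+e)+\sum_{e\in\bar C}f(S_L+e)$ with $C$ the addable elements and $\bar C$ the spanned ones, the first sum is $\le r\,f(S_L)\le r\,f(S^\star)$, the second is $\le|C|\max_{e\in C}f(S_L+e)\le n\,f(S^\star)$, and for $e\in\bar C$ one selects a circuit element $s_e$ so that $S_L-s_e+e\in\mathcal{I}(\mathcal{M}')$, whence $f(S_L+e)\le f(S_L-s_e+e)+f(S_L)$, and the approximate local optimality of $S_L$ with respect to $\varphi_h$ established in the proof of Theorem~\ref{the:main_nls}, together with $h\le\varphi_h\le\frac{e}{e-1}H_{|A|}h$ from Lemma~\ref{lmm:bound_g}, controls the spanned contribution — this is the step that costs the constant factor and where the $\ell=\Theta(\log(1/\varepsilon))$ slack is needed to absorb the $H_r$-type losses. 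For the comparison itself, since the noise is consistent one cannot re-query, so I would compare two candidate independent sets $A,B$ not through $\tilde f(A)$ versus $\tilde f(B)$ but through $\max_{e:\,A+e\in\mathcal{I}(\mathcal{M})}\tilde f(A+e)$ versus its analogue for $B$: each maximum, by the extreme-value estimates of Lemmas~\ref{lem:noise_bound} and~\ref{lem:multiplier_bound}, concentrates around a known multiple of $\max_e f(\cdot+e)\ge f(\cdot)$, and taking the maximum over the addable elements supplied by the slack drives the comparison's failure probability down to $O(\varepsilon^4)$. Chaining the display with $f(S_M)\ge\frac12(1-O(\varepsilon))h(S_L)$ gives $f(S_M)\ge(\frac12(1-\frac1e)-O(\varepsilon))f(O_f)$, and rescaling $\varepsilon$ by a constant yields Theorem~\ref{thm:matroid1}.

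I expect the main obstacle to be exactly this structural lemma: forcing the loss from "$h$ averages over infeasible appendings" to be a factor $2+O(\varepsilon)$ rather than $\Theta(\log r)$. Since Lemma~\ref{lmm:bound_g} only sandwiches $\varphi_h$ between $h$ and $\frac{e}{e-1}H_{|A|}h$, bounding $f(S_L+e)$ for spanned $e$ through $\varphi_h$ naively loses a $\log r$ factor; the real argument must instead exploit that $S_L$ is a genuine (approximate) local optimum under the swaps available in $\mathcal{M}'$, show that only an $O(\varepsilon)$-fraction of the mass of $h(S_L)$ can hide on spanned elements, and simultaneously keep enough addable elements around — hence the $\log(1/\varepsilon)$-scale truncation — to make the noisy comparison between candidate independent sets reliable.
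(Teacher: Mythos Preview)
Your approach diverges from the paper's and carries a genuine gap at exactly the point you flag.

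The paper does \emph{not} attempt to bound the ``spanned'' contribution to $h(S_L)$, nor does it truncate by $\ell=\Theta(\log(1/\varepsilon))$ or restrict the $\tilde f$-maximization to addable elements. Instead it runs Algorithm~\ref{alg:local_search_monotone} verbatim on the rank-$(r-1)$ truncation $\mathcal{I}(r-1)\cap\mathcal{I}(\mathcal{M})$: the $\tilde f$-maximization step still appends $e^\star=\arg\max_{e\in N\setminus S_L}\tilde f(S_L+e)$ over \emph{all} of $N\setminus S_L$, and the resulting $S_M=S_L+e^\star$ is allowed to be infeasible. The entire analysis of Theorem~\ref{thm:cardi} (in particular Lemma~\ref{lem:add} and the good/bad element argument) goes through unchanged to give $f(S_M)\ge(1-\tfrac1e-O(\varepsilon))f(O_f)$. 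The factor $\tfrac12$ then comes from the trivial decomposition $S_M=S_L\cup\{e^\star\}$ into two feasible sets (singletons are assumed independent), together with $f(S_L)+f(e^\star)\ge f(S_M)$ by submodularity.

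For the noise-robust choice between $S_L$ and $\{e^\star\}$, the paper introduces a new \emph{comparison auxiliary function} $f_0(S)=\frac{1}{|S|}\sum_{e\in S}f(S-e)$, which satisfies $(1-\tfrac{1}{|S|})f(S)\le f_0(S)\le f(S)$ and whose noisy version $\widetilde{f_0}$ concentrates once $|S|\gtrsim \varepsilon^{-1}\log(\delta^{-1})$. The algorithm compares $\widetilde{f_0}(S_L)$ against $\tfrac12\widetilde{f_0}(S_M)$ (both sets have size $\ge r-1$, so concentration applies) and returns $S_L$ or $S_M\setminus S_L$ accordingly; setting $\delta=\varepsilon^4$ is precisely what forces $r\in\Omega(\varepsilon^{-1}\log\varepsilon^{-1})$ and yields the $1-O(\varepsilon^4)$ success probability. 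Your proposed comparison via $\max_{e:A+e\in\mathcal{I}(\mathcal{M})}\tilde f(A+e)$ is not what is used, and the extreme-value machinery of Lemmas~\ref{lem:noise_bound}--\ref{lem:multiplier_bound} is not invoked here at all.

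Your structural lemma is therefore unnecessary, and as you correctly suspect it would be hard to save: routing $f(S_L-s_e+e)$ through $\varphi_h$ via Lemma~\ref{lmm:bound_g} does lose $H_r$, and there is no mechanism in your sketch that prevents the mass of $h(S_L)$ from sitting almost entirely on spanned elements. The paper avoids the whole issue by never trying to realize $h(S_L)$ with a feasible augmentation.
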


\noindent
The main difference from Theorem \ref{thm:cardi} is that the approximation ratio is  $\left(1-\frac{1}{e}\right)/2-O(\varepsilon)$ instead of $1-\frac{1}{e}-O(\varepsilon)$.
The reason is that to maintain the feasibility of the output, we may not be able to add an element to $S_L$ as in Line 3 of Algorithm~\ref{alg:local_search_monotone}.
To address this issue, we design a comparison procedure (Lines 4-8 of Algorithm~\ref{alg:local_search_monotone3}) to evaluate the value of $f(S_L)$ and $f(S_M\setminus S_L)$ and output the larger one, which results in a loss on the approximation ratio.
Moreover, the failure probability of the comparison procedure is upper bounded by $O(\varepsilon^4)$.

\subsubsection{Useful notations and useful facts for Theorem~\ref{thm:matroid1}}
\label{sec:surrogate_III}
The local search procedure in this subsection is still based on the smoothing surrogate function $h$ in Definition \ref{def:h_1}.
Similar to Lemma~\ref{lem:hat_varphi_h}, we can construct a $\left(\alpha,\delta, \mathcal{I}(r-1)\cap \mathcal{I}(\mathcal{M})\right)$-approximation of auxiliary function $\varphi_h$.
\begin{lemma}[\textbf{Approximation of $\varphi_{h}$}]
\label{lem:hat_varphi_h_3}
%
Let $\alpha,\delta \in (0,1/2)$ and assume $n\in \Omega(\alpha^{-2}\log(\delta^{-1}))$. 
There exists a value oracle $\mathcal{O}$ to an $\left(\alpha,\delta, \mathcal{I}(r-1)\cap \mathcal{I}(\mathcal{M})\right)$-approximation $\widehat{\varphi_{h}}$ of $\varphi_{h}$, which answering $\mathcal{O}(A)$ queries at most $O\left(\log r\cdot n^{\frac{1}{2}}\max\{r, \log n\}\right)$ times to $\tilde{f}$ for each set $A\in\mathcal{I}(r-1)\cap \mathcal{I}(\mathcal{M})$.
\end{lemma}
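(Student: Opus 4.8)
The plan is to mirror the proof of Lemma~\ref{lem:hat_varphi_h} almost verbatim, reusing the same random-sampling estimator (Algorithm~\ref{alg:hat_varphi_h}) with the same number of samples $M=\Theta\left(\log r\cdot n^{1/2}\max\{r,\log n\}\right)$, and adjusting only the single place where the constraint $\mathcal{I}(r-1)$ must be replaced by $\mathcal{I}(r-1)\cap\mathcal{I}(\mathcal{M})$. First I would note that every $A\in\mathcal{I}(r-1)\cap\mathcal{I}(\mathcal{M})$ satisfies $|A|\leq r-1$, and since $r\in O(n^{1/3})$ and $n$ is large we have $|A|\leq\sqrt{n}$; hence Lemma~\ref{lem:tau_bound} applies unchanged, giving $\sum_{T\in\mathcal{L}_A}\tau_A^2(T)\leq 12/(n|A|)$, $\max_T\tau_A(T)\leq 4/n$, and $\sum_T\tau_A(T)\leq 2(\ln|A|+2)$, and we may write $\varphi_h(A)=\sum_{T\in\mathcal{L}_A}\tau_A(T)f(T)$ exactly as in Eq.~\eqref{eq:varphi_h_f}.

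The one genuinely new point is that the estimation error in Definition~\ref{def:approximation} must now be measured against $\max_{S\in\mathcal{I}(r-1)\cap\mathcal{I}(\mathcal{M})}\varphi_h(S)$, which may be strictly smaller than $\max_{S\in\mathcal{I}(r-1)}\varphi_h(S)$, so invoking Lemma~\ref{lem:hat_varphi_h} as a black box is not enough. The only step in that proof that referenced the feasible maximum is Ineq.~\eqref{eq:max}, where for $T\in\mathcal{L}_A$ one bounds $f(T)\leq f(A)+\max_{e\in N}f(e)$ by monotonicity and submodularity (since $|T\setminus A|\leq 1$). Here I would use the standing normalization of Section~\ref{sec:matroid} that all singletons are feasible: for $r\geq 2$, both $A$ and every $\{e\}$ lie in $\mathcal{I}(r-1)\cap\mathcal{I}(\mathcal{M})$, so $f(A),\ \max_{e}f(e)\leq\max_{S\in\mathcal{I}(r-1)\cap\mathcal{I}(\mathcal{M})}f(S)\leq\max_{S\in\mathcal{I}(r-1)\cap\mathcal{I}(\mathcal{M})}\varphi_h(S)$ by Lemma~\ref{lmm:bound_g}, whence $f(T)\leq 2\max_{S\in\mathcal{I}(r-1)\cap\mathcal{I}(\mathcal{M})}\varphi_h(S)$ --- which is exactly the input the rest of the argument needs.

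With that inequality in place the remaining steps are line-by-line repetitions of the arguments already given. Claim~\ref{clm:subexp} gives that each $\xi_T-1$ is sub-exponential with norm at most $\kappa$; Bernstein's inequality (Lemma~\ref{lem:bern_exp}) applied with coefficients $a_T=\tau_A(T)f(T)$, together with the $\tau_A$-bounds and the bound on $f(T)$, shows as in Lemma~\ref{lmm:app_diff_phi_hat1} that $\lvert\widetilde{\varphi_h}(A)-\varphi_h(A)\rvert>\tfrac{\alpha}{2}\max_{S\in\mathcal{I}(r-1)\cap\mathcal{I}(\mathcal{M})}\varphi_h(S)$ with probability at most $\delta/2$ once $n=\Omega(\alpha^{-2}\log(\delta^{-1}))$ (the hidden constant absorbing $\kappa^2$ and the constant $c$); truncating the noise multipliers on $\mathcal{L}_A$ at the same threshold $b$ fails with probability at most $\delta/4$; and Hoeffding's inequality (Lemma~\ref{lmm:Hoeffding}) applied to $X_j=\bigl(\sum_{T}\tau_A(T)\bigr)\tilde f(T_j)$ shows that $M=\Theta\left(\log r\cdot n^{1/2}\max\{r,\log n\}\right)$ samples make $\lvert\widehat{\varphi_h}(A)-\widetilde{\varphi_h}(A)\rvert$ exceed $\tfrac{\alpha}{2}\max\varphi_h$ with probability at most $\delta/4$. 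A union bound over these events yields the claimed $(\alpha,\delta,\mathcal{I}(r-1)\cap\mathcal{I}(\mathcal{M}))$-approximation with the stated per-query cost.

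The main obstacle is the (mild) second step: ensuring that intersecting with the matroid does not shrink the reference maximum $\max\varphi_h$ enough to destroy Ineq.~\eqref{eq:max}. This is precisely where the ``all singletons feasible'' assumption of Section~\ref{sec:matroid} is essential; absent it, one would only get a bound relative to $\max_{S\in\mathcal{I}(r-1)}\varphi_h(S)$, which is not what Definition~\ref{def:approximation} demands. Everything else is a repetition of the proof of Lemma~\ref{lem:hat_varphi_h}.
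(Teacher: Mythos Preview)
Your proposal is correct and matches the paper's own proof essentially line for line: the paper too observes that the only place the constraint enters is Ineq.~\eqref{eq:max}, and that for $A\in\mathcal{I}(r-1)\cap\mathcal{I}(\mathcal{M})$ one still has $f(T)\leq 2\max_{S\in\mathcal{I}(r-1)\cap\mathcal{I}(\mathcal{M})}\varphi_h(S)$, after which Lemmas~\ref{lmm:app_diff_phi_hat1} and~\ref{lmm:app_diff_tilde_hat_1} carry over verbatim. You are in fact slightly more explicit than the paper in invoking the ``all singletons feasible'' assumption to justify $\max_e f(e)\leq\max_{S\in\mathcal{I}(r-1)\cap\mathcal{I}(\mathcal{M})}\varphi_h(S)$.
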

\noindent
Compared with the $\left(\alpha,\delta, \mathcal{I}(r-1)\right)$-approximation in Lemma~\ref{lem:hat_varphi_h}, an $\left(\alpha,\delta, \mathcal{I}(r-1)\cap \mathcal{I}(\mathcal{M})\right)$-approximation requires a smaller estimation error $\alpha\cdot\max\limits_{S\in \mathcal{I}(r-1)\cap\mathcal{I}(\mathcal{M})}\varphi_h(S)$ than $\alpha\cdot\max\limits_{S\in \mathcal{I}(r-1)}\varphi_h(S)$.
However, the proof of Lemma~\ref{lem:hat_varphi_h} remains valid for Lemma~\ref{lem:hat_varphi_h_3}.
This is because for any set $A\in \mathcal{I}(r-1)\cap\mathcal{I}(\mathcal{M})$, we can still bound $f(T)$ for all $T\in \mathcal{L}_A$ by $2\cdot\max_{S\in\mathcal{I}(r-1)\cap \mathcal{I}(\mathcal{M})}\varphi_h(S)$ as in Eq.~\eqref{eq:max}.
Thus the concentration results (Lemma~\ref{lmm:app_diff_phi_hat1} and~\ref{lmm:app_diff_tilde_hat_1}) in the proof hold as well.

Besides $\varphi_h$ (Definition~\ref{def:auxiliary}) guiding the local search, we introduce another auxiliary function $\widehat{f}_0$ in this subsection, which will be used to compare the values of sets at the final step of Algorithm~\ref{alg:local_search_monotone3}.
\begin{definition}[\bf{Comparison auxiliary function}]
\label{def:f_0}
For any set $S\subseteq N$, we define the comparison auxiliary function $f_0(S)$ as the expectation of $f(S-e)$ over a random element $e\in S$, i.e., 
$$f_0(S) = \frac{1}{|S|}\sum_{e\in S}f(S-e).$$
\end{definition}
\noindent
The comparison auxiliary function $f_0$ is close to $f$ in the sense that the following lemma holds.
\begin{lemma}[\textbf{Bounds of $f_0$}]
\label{lem:bounds_f0}
For any set $S\subseteq N$, we have
$$\left(1-\frac{1}{|S|}\right)f(S)\leq f_0(S)\leq f(S).$$
\end{lemma}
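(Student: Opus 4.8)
The plan is to prove the two inequalities separately, both of which follow directly from the definition $f_0(S) = \frac{1}{|S|}\sum_{e\in S} f(S-e)$ together with monotonicity and submodularity of $f$.

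For the upper bound $f_0(S)\leq f(S)$: by monotonicity, $f(S-e)\leq f(S)$ for every $e\in S$, so averaging over $e\in S$ gives $f_0(S)=\frac{1}{|S|}\sum_{e\in S}f(S-e)\leq f(S)$ immediately. (If $S=\varnothing$ the statement is vacuous or trivial, so I would silently assume $|S|\geq 1$.)

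For the lower bound $\left(1-\frac{1}{|S|}\right)f(S)\leq f_0(S)$: write $|S|=s$ and index $S=\{e_1,\dots,e_s\}$. The key observation is that $f(S) - f(S-e_i) = f(e_i \mid S - e_i)$ is the marginal gain of adding $e_i$ back to $S-e_i$. Summing the chain-rule decomposition $f(S) = \sum_{i=1}^{s} f(e_i \mid \{e_1,\dots,e_{i-1}\})$ and comparing term by term via submodularity — since $\{e_1,\dots,e_{i-1}\}\subseteq S-e_i$, we have $f(e_i\mid\{e_1,\dots,e_{i-1}\})\geq f(e_i\mid S-e_i)$ — yields $f(S)\geq \sum_{i=1}^s f(e_i\mid S-e_i) = \sum_{i=1}^s\left[f(S)-f(S-e_i)\right] = s\,f(S) - \sum_{i=1}^s f(S-e_i)$. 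Rearranging gives $\sum_{i=1}^s f(S-e_i)\geq (s-1)f(S)$, i.e.\ $f_0(S)\geq \left(1-\frac{1}{s}\right)f(S)$, as desired.

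I do not anticipate any real obstacle here; the only mild subtlety is making sure the telescoping/chain-rule step is applied correctly and the submodularity comparison is set up with the right nested sets. This lemma is the exact "deletion" analogue of the standard fact (used in Ineq.~\eqref{eq:S_L_2} and Lemma~\ref{clm:card2_split}) that dropping one element from a set loses at most a $\frac{1}{|S|}$ fraction of its value, so the argument is routine.
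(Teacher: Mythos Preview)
Your proposal is correct and follows essentially the same route as the paper's own proof: the upper bound comes directly from monotonicity, and the lower bound is obtained by comparing $\sum_{e\in S} f(e\mid S-e)$ to the telescoping sum $\sum_i f(e_i\mid e_1,\dots,e_{i-1})=f(S)$ via submodularity, then rearranging. The paper presents the same calculation in the equivalent form $f_0(S)=f(S)-\frac{1}{|S|}\sum_{e\in S} f(e\mid S-e)\geq \left(1-\frac{1}{|S|}\right)f(S)$.
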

\begin{proof}
From monotonicity of $f$, we have $f_0(S) \leq
 f(S)$ immediately.
We index the elements of $S$ as $s_i$, where $i\in[|S|]$. Then
\begin{align*}
f_0(S) &~=~ \frac{1}{|S|}\sum_{e\in S}\left[ f(S)-f(e\mid S-e)\right] && \text{(Definition~\ref{def:f_0})}\\
&~ = ~f(S) - \frac{1}{|S|}\sum_{e\in S}f(e\mid S-e)\\
&~\geq~ f(S) - \frac{1}{|S|}\sum_{i \in [|S|]}f(s_i\mid s_1,s_2,\cdots,s_{i-1}) && \text{(submodularity of $f$)}\\
& ~= ~\left(1-\frac{1}{|S|}\right) f(S). && \text{($\sum_{i \in [|S|]}f(s_i\mid s_1,\cdots,s_{i-1}) = f(S)$)}
\end{align*}
\end{proof}
\noindent
Note that $f_0$ is also implicitly constructed since we only have a value oracle to $\tilde{f}$ instead of $f$.
We denote $f_0$'s noisy analogue as $\widetilde{f_0}$, i.e.,
$$\widetilde{f_0}(S) = \frac{1}{|S|}\sum_{e\in S}\Tilde{f}(S-e).$$
As $f_0(S)$ is based on an averaging set of size $|S|$, the following lemma indicates that $f_0$ and $\widetilde{f_0}$ are close to each other when $|S|$ is sufficiently large.
\begin{lemma}[\bf{Concentration property of $f_0$}]
\label{lmm:matroid1_comp} 
Let $\varepsilon, \delta\in(0,1/2)$ and suppose that $|S| \geq \kappa\varepsilon^{-1}\ln(2\delta^{-1})$, where $\kappa$ is the sub-exponential norm of the noise multiplier.
For any subset $S\subseteq N$, we have 
\[\mathbb{P}\left[\left|\widetilde{f}_0(S)-f_0(S)\right|> \varepsilon \cdot f_0(S) \right] \leq \delta.\]
\end{lemma}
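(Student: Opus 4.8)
The plan is to write $\widetilde{f_0}(S) - f_0(S)$ as a normalized sum of independent mean-zero sub-exponential random variables and apply Bernstein's inequality (Lemma~\ref{lem:bern_exp}), exactly in the spirit of the earlier concentration arguments (Lemmas~\ref{lmm:app_diff_phi_hat1} and~\ref{clm:4.2_2}). Writing $S = \{s_1,\dots,s_m\}$ with $m = |S|$, Definition~\ref{def:f_0} and the definition of $\widetilde{f_0}$ give
\[
\widetilde{f_0}(S) - f_0(S) = \frac{1}{m}\sum_{e\in S}\bigl(\xi_{S-e} - 1\bigr) f(S-e) = \sum_{e\in S} a_e X_e,
\]
where $X_e := \xi_{S-e} - 1$ and $a_e := f(S-e)/m$. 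The $X_e$ are i.i.d., mean zero (since $\mathbb{E}[\mathcal D]=1$), and sub-exponential with norm $\kappa$ by Claim~\ref{clm:subexp}.

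First I would bound the coefficient vector $a = (a_e)_{e\in S}$. By monotonicity of $f$ we have $f(S-e) \le f(S)$ for every $e$, hence $\|a\|_\infty \le f(S)/m$ and $\|a\|_2^2 = \frac{1}{m^2}\sum_{e\in S} f(S-e)^2 \le f(S)^2/m$. Then Lemma~\ref{lem:bern_exp} with $K = \kappa$ and threshold $t = \varepsilon f_0(S)$ gives
\[
\mathbb{P}\Bigl[\bigl|\widetilde{f_0}(S) - f_0(S)\bigr| \ge \varepsilon f_0(S)\Bigr]
\le 2\exp\!\left[-c\min\!\left(\frac{\varepsilon^2 f_0(S)^2 \, m}{\kappa^2 f(S)^2},\ \frac{\varepsilon f_0(S)\, m}{\kappa f(S)}\right)\right].
\]
Next I would replace $f_0(S)$ by $f(S)$ in the denominators using the two-sided bound of Lemma~\ref{lem:bounds_f0}: since $m \ge \kappa\varepsilon^{-1}\ln(2\delta^{-1}) \ge 2$ (as $\varepsilon,\delta < 1/2$ and $\kappa \ge 2$ from Claim~\ref{clm:subexp}), we have $f_0(S) \ge \tfrac12 f(S)$, so $f_0(S)/f(S) \ge 1/2$, and likewise $\varepsilon < 1$ so the minimum in the exponent is governed by its first (quadratic) argument once $\varepsilon f_0(S) m/(\kappa f(S)) \ge 1$, which holds by the hypothesis on $|S|$. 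Substituting, the exponent is at least $c \cdot \tfrac{1}{4}\varepsilon^2 m / \kappa^2$ — wait, I should be careful: to get a clean bound matching $\delta$, the cleanest route is to observe that both arguments of the $\min$ are at least $\tfrac{1}{2}\varepsilon m/\kappa$ when the quantities are $\ge 1$; more precisely, using $f_0(S)/f(S)\ge 1/2$ and $\varepsilon \le 1$, the linear term is $\ge \tfrac{1}{2}\varepsilon m/\kappa \ge \tfrac12 \ln(2\delta^{-1})$, and the quadratic term dominates it in that regime, so the $\min$ is $\ge \tfrac12\varepsilon m/\kappa \cdot(\varepsilon f_0(S)/(\kappa f(S)))$...

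The main obstacle is precisely this last bookkeeping step: Bernstein's bound naturally produces $\varepsilon^2 m/\kappa^2$ in the quadratic regime but only $\varepsilon m/\kappa$ in the linear regime, whereas the hypothesis $|S| \ge \kappa\varepsilon^{-1}\ln(2\delta^{-1})$ is tuned to the \emph{linear} regime. The resolution is that at the threshold $t = \varepsilon f_0(S)$ with $\varepsilon < 1$ and $f_0(S) \le f(S)$, one has $t/(\kappa\|a\|_\infty) \le t^2/(\kappa^2\|a\|_2^2)$ exactly when $\|a\|_2^2 \le t\|a\|_\infty$, i.e. when the deviation is at most the "Bernstein crossover" — and here a direct check using $\|a\|_2^2 \le f(S)^2/m$, $\|a\|_\infty \le f(S)/m$, and $t = \varepsilon f_0(S) \ge \tfrac{\varepsilon}{2}f(S)$ shows $\|a\|_2^2/\|a\|_\infty \le f(S) \cdot \tfrac{1}{m}\cdot m = f(S)$... so $t \ge \tfrac{\varepsilon}{2}f(S) \ge \|a\|_2^2/\|a\|_\infty$ fails for small $\varepsilon$; hence we are genuinely in the \emph{linear} regime and the $\min$ equals $c\, t/(\kappa\|a\|_\infty) \ge c\varepsilon f_0(S) m/(\kappa f(S)) \ge (c/2)\varepsilon m/\kappa$. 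Absorbing the constant $c$ into the definition of $\kappa$ (or stating the hypothesis as $|S| \ge C\kappa\varepsilon^{-1}\ln(2\delta^{-1})$ for the appropriate absolute $C$, as the earlier lemmas implicitly do with their $\kappa$), the exponent is at least $\ln(2\delta^{-1})$, giving the claimed bound $2\exp(-\ln(2\delta^{-1})) = \delta$. I would finish by noting the bounded-support case is handled identically (or more simply), since there $\kappa = (b+1)/\ln 2$ and the same Bernstein inequality applies verbatim.
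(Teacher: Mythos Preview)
Your setup and invocation of Bernstein's inequality (Lemma~\ref{lem:bern_exp}) are exactly what the paper does; the paper's proof is three lines, recording
\[
\mathbb{P}\bigl[|\widetilde f_0(S)-f_0(S)|>\varepsilon f_0(S)\bigr]\;\le\;2\exp\!\left[-c\min\!\left(\frac{\varepsilon^2|S|^2}{\kappa^2},\,\frac{\varepsilon|S|}{\kappa}\right)\right]
\]
and concluding directly from the hypothesis $|S|\ge\kappa\varepsilon^{-1}\ln(2\delta^{-1})$.

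The gap in your argument is the regime identification at the end. You check whether $t\ge K\|a\|_2^2/\|a\|_\infty$ and find that it \emph{fails} for small $\varepsilon$ --- correct --- but that failure means $t^2/(K^2\|a\|_2^2)<t/(K\|a\|_\infty)$, so the minimum in Bernstein's exponent is the \emph{quadratic} term, not the linear one as you then assert. With $\|a\|_2^2\le f(S)^2/m$, $\|a\|_\infty\le f(S)/m$, and $f_0(S)\ge\tfrac12 f(S)$, the two terms are of order $\varepsilon^2 m/\kappa^2$ and $\varepsilon m/\kappa$ respectively, and for $\varepsilon<\kappa$ the first is the minimum. Under the stated hypothesis this yields an exponent of order $\varepsilon\ln(2\delta^{-1})/\kappa$, which does not reach $\ln(2\delta^{-1})$; absorbing absolute constants into $\kappa$ cannot repair an $\varepsilon$-versus-$\varepsilon^2$ discrepancy. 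In fact you have been more scrupulous than the paper here: the $|S|^2$ in the paper's displayed quadratic term is not obtainable from this Bernstein setup (Jensen gives $\|a\|_2^2\ge f_0(S)^2/m$, forcing the quadratic term to be at most $\varepsilon^2 m/\kappa^2$), so you have located precisely the soft spot in the paper's own argument --- but your proposed resolution inverts the two regimes and therefore does not close it either.
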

\begin{proof}
For any set $T\in\{S-e:e\in S\}$, let $X_T = \xi(T)-1$ and $a_T = \frac{1}{|S|}\cdot f(T)$.
Recall that $X_T$ is a sub-exponential random variable with norm $\kappa$ by Claim~\ref{clm:subexp}.
Applying Bernstein's inequality for sub-exponential variables (Lemma~\ref{lem:bern_exp}) gives
$$\mathbb{P}\left[\left|\widetilde{f}_0(S)-f_0(S)\right|> \varepsilon f_0(S) \right] \leq 2 \exp 
\left[
-c \min\left(
\frac{\varepsilon^2 |S|^2}{\kappa^2}, \frac{\varepsilon|S|}{\kappa}
\right)\right].$$
By assumption that 
$|S|\geq \kappa\varepsilon^{-1}\ln(2\delta^{-1})$, the probability above is at most $\delta$.
\end{proof}

\subsubsection{Algorithm for Theorem~\ref{thm:matroid1}}
We present Algorithm~\ref{alg:local_search_monotone3} that is a variant of Algorithm~\ref{alg:local_search_monotone}.
The main difference is that Algorithm~\ref{alg:local_search_monotone3} contains a comparison phase (Line 4-7).
Since $S_M$ may not be an independent set defined by $\mathcal{I}(\mathcal{M})$, Algorithm~\ref{alg:local_search_monotone3} compares $S_L$ returned by local search with the element $S_M\backslash S_L$ obtained at the $\Tilde{f}$-maximization phase and returns the one with a larger $\widetilde{f_0}$ value.

\begin{algorithm}[htp]
 \caption{Noisy local search subject to matroids with small ranks}
  \label{alg:local_search_monotone3}
\DontPrintSemicolon
\SetNoFillComment
  \SetKwInOut{Input}{Input}
  \Input{a value oracle to $\Tilde{f}$, rank $r\in \Omega\left(\frac{1}{\varepsilon}\log\left(\frac{1}{\varepsilon}\right)\right) \cap O\left(n^{\frac{1}{3}}\right)$ and $\varepsilon\in (0,1/2)$}
        Let $\widehat{\varphi_h}$ be a $\left(\frac{\varepsilon}{4r\ln r}, \frac{1}{(I+1)(r-1)n^2}, \mathcal{I}(r-1)\cap \mathcal{I}(\mathcal{M})\right)$-approximation of $\varphi_h$ as in Lemma~\ref{lem:hat_varphi_h_3} \; 
    	$S_L\leftarrow\mathtt{NLS}\left(\widehat{\varphi_h}, \mathcal{I}(r-1)\cap \mathcal{I}(\mathcal{M}), \Delta = \frac{\varepsilon}{4r\ln r}\right)$ \Comment*[r]{Local search phase}
		$S_M\leftarrow S_L +\mathop{\arg\max}\limits_{e\in N\backslash S_L} \Tilde{f}(S_L+e)$\Comment*[r]{$\Tilde{f}$-maximization phase}
		\If{$\widetilde{f}_0(S_L)\geq \frac{1}{2}\widetilde{f}_0(S_M)$}{
         \Return $S_L$
        \Comment*[r]{Comparison phase}}
        \Else{
         \Return $S_M\backslash S_L$
        }
\end{algorithm}
\subsubsection{Proof of Theorem~\ref{thm:matroid1}}
We first analyze the query complexity and then prove the approximation performance of Algorithm~\ref{alg:local_search_monotone3}. 
\paragraph{Query complexity of Algorithm~\ref{alg:local_search_monotone3}.}
We call the noisy oracle $\Tilde{f}$ only $(2r-1)$ times at the comparison phase.
The query complexity of Algorithm~\ref{alg:local_search_monotone3} is dominated by the number of calls it makes at the local search phase and therefore of the same order $\Tilde{O}(r^2n^{\frac{3}{2}}\varepsilon^{-1})$ as Algorithm~\ref{alg:local_search_monotone}.

\paragraph{Approximate ratio and success probability of Algorithm~\ref{alg:local_search_monotone3}.}
Recall that $O_f\in \mathop{\arg\max}\limits_{S\in \mathcal{I}(\mathcal{M})}f(S)$ denote an optimal solution to $f$.
Similar to the analysis of Algorithm~\ref{alg:local_search_monotone}, it can be shown that we obtain a set $S_M$ at the end of the $\Tilde{f}$-maximization phase such that with probability $1-O\left(1/\log n\right)$,
\begin{equation}
\label{eq:f_SM}
    f(S_M) \geq \left(1-\frac{1}{e}-O(\varepsilon)\right)f(O_{f}).
\end{equation}
Although $S_M$ may not be an independent set, we can decompose $S_M$ into two feasible parts $S_L\in \mathcal{I}(\mathcal{M})$ and $S_M\backslash S_L\in \mathcal{I}(\mathcal{M})$ and output one of them.
The approximate ratio of the output set is guaranteed by the following lemma.

\begin{lemma}
\label{clm:compare_m}
We assume a sufficient small $\varepsilon$ and suppose that $r\in \Omega\left(\varepsilon^{-1}\log(\varepsilon^{-1})\right)$.
Let $S_R$ denote the set returned by Algorithm~\ref{alg:local_search_monotone3}.
With probability at least $1-2\varepsilon^4$, we have 
\begin{equation*}
    f(S_R)\geq \left(\frac{1}{2}-O(\varepsilon)\right)f(S_M).
\end{equation*}
\end{lemma}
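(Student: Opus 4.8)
The plan is to compare $f(S_L)$ and $f(S_M\setminus S_L)$ through their proxies $\widetilde{f_0}$, using that $f_0$ is sandwiched between $(1-1/r)f$ and $f$ (Lemma~\ref{lem:bounds_f0}) and that $\widetilde{f_0}$ concentrates around $f_0$ on large sets (Lemma~\ref{lmm:matroid1_comp}). First I would note that both $S_L$ and $S_M\setminus S_L$ are independent sets of size $r-1$ or $r$; since $r\in\Omega(\varepsilon^{-1}\log(\varepsilon^{-1}))$, applying Lemma~\ref{lmm:matroid1_comp} with the parameters $\varepsilon$ and $\delta=\varepsilon^4$ (so that $|S|\geq \kappa\varepsilon^{-1}\ln(2\varepsilon^{-4}) = O(\varepsilon^{-1}\log(\varepsilon^{-1}))$ is satisfied) gives, by a union bound over the two sets, that with probability at least $1-2\varepsilon^4$ we have simultaneously $|\widetilde{f_0}(S_L)-f_0(S_L)|\leq \varepsilon f_0(S_L)$ and $|\widetilde{f_0}(S_M\setminus S_L)-f_0(S_M\setminus S_L)|\leq \varepsilon f_0(S_M\setminus S_L)$. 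Condition on this event $\mathcal{E}$ for the rest of the argument. Combining with Lemma~\ref{lem:bounds_f0}, we get that $\widetilde{f_0}(S) = (1\pm O(\varepsilon))f(S)$ for both $S\in\{S_L, S_M\setminus S_L\}$, since $1-1/|S| = 1-O(\varepsilon)$ as well.

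Next I would split into the two cases determined by the test in Line~4. If $\widetilde{f_0}(S_L)\geq \frac{1}{2}\widetilde{f_0}(S_M)$, the algorithm returns $S_R = S_L$. Here I would first observe that $\widetilde{f_0}(S_M)\geq \widetilde{f_0}(S_L)$ is false in general, so I cannot directly bound $\widetilde{f_0}(S_M)$ from below by $\widetilde{f_0}$ of a subset; instead I use submodularity of $f$: $f(S_M)\leq f(S_L)+f(S_M\setminus S_L)$, hence on $\mathcal{E}$, $\widetilde{f_0}(S_M)$ — which is not itself one of the concentrated quantities — should be related to $f(S_M)$ via the same averaging/concentration argument (this needs $S_M$ to have size $\geq r$, which it does, so Lemma~\ref{lmm:matroid1_comp} applies to $S_M$ too and I would include $S_M$ in the union bound, adjusting constants, still $O(\varepsilon^4)$). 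So on the enlarged good event, $\widetilde{f_0}(S_M) = (1\pm O(\varepsilon))f_0(S_M) = (1\pm O(\varepsilon))f(S_M)$. Then the test condition $\widetilde{f_0}(S_L)\geq \frac{1}{2}\widetilde{f_0}(S_M)$ together with $\widetilde{f_0}(S_L)\leq(1+O(\varepsilon))f(S_L)$ and $\widetilde{f_0}(S_M)\geq(1-O(\varepsilon))f(S_M)$ yields $f(S_L)\geq \frac{1-O(\varepsilon)}{2(1+O(\varepsilon))}f(S_M) = (\frac{1}{2}-O(\varepsilon))f(S_M)$, as required.

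For the other case, $\widetilde{f_0}(S_L)< \frac{1}{2}\widetilde{f_0}(S_M)$, the algorithm returns $S_R = S_M\setminus S_L$. I would argue: on the good event, $\widetilde{f_0}(S_L)\leq (1+O(\varepsilon))f(S_L)$ and so $f(S_L) \geq \frac{1}{1+O(\varepsilon)}\widetilde{f_0}(S_L)$, but more usefully I want an \emph{upper} bound on $f(S_L)$ to conclude $f(S_M\setminus S_L)$ is large. From $\widetilde{f_0}(S_L)< \frac{1}{2}\widetilde{f_0}(S_M)$ and concentration, $f(S_L)\leq (1+O(\varepsilon))\widetilde{f_0}(S_L) < (1+O(\varepsilon))\cdot\frac{1}{2}\widetilde{f_0}(S_M)\leq (1+O(\varepsilon))\cdot\frac{1}{2}(1+O(\varepsilon))f(S_M) = (\frac{1}{2}+O(\varepsilon))f(S_M)$. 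Wait—I need $\widetilde{f_0}(S_L)\geq(1-O(\varepsilon))f(S_L)$ here, i.e. the lower concentration bound, so $f(S_L)\leq \frac{1}{1-O(\varepsilon)}\widetilde{f_0}(S_L)$; the conclusion is the same up to constants. Then by submodularity $f(S_M)\leq f(S_L)+f(S_M\setminus S_L)$, so $f(S_M\setminus S_L)\geq f(S_M)-f(S_L)\geq f(S_M) - (\frac{1}{2}+O(\varepsilon))f(S_M) = (\frac{1}{2}-O(\varepsilon))f(S_M)$. Combining both cases, $f(S_R)\geq (\frac{1}{2}-O(\varepsilon))f(S_M)$ on an event of probability at least $1-2\varepsilon^4$ (after absorbing the extra union-bound term for $S_M$ into the constant, or simply tracking that three applications of Lemma~\ref{lmm:matroid1_comp} with $\delta = \varepsilon^4$ give failure probability $\leq 3\varepsilon^4$, which one can reduce to $2\varepsilon^4$ by taking $\delta$ slightly smaller — a routine constant adjustment).

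\textbf{Main obstacle.} The delicate point is handling $\widetilde{f_0}(S_M)$: $S_M$ has size $r$ (not $r-1$), but it need not be independent, and more importantly the concentration lemma must be invoked for $S_M$ as well, not just for the two independent pieces — I need to make sure the size hypothesis $|S_M|\geq \kappa\varepsilon^{-1}\ln(2\delta^{-1})$ holds (it does, since $|S_M| = r$), and then carefully chase the $(1\pm O(\varepsilon))$ factors through both the concentration error and the $1-1/|S|$ slack in Lemma~\ref{lem:bounds_f0} so that the final constant in front of $f(S_M)$ is exactly $\frac{1}{2}-O(\varepsilon)$ in both branches. The submodularity inequality $f(S_M)\leq f(S_L)+f(S_M\setminus S_L)$ (which holds because $f$ is submodular and normalized) is what makes the ``losing'' branch still give a $\frac{1}{2}$-fraction, and verifying it cleanly — together with bookkeeping the probability as $2\varepsilon^4$ rather than a larger constant multiple — is the part requiring the most care.
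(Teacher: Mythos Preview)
Your overall strategy matches the paper's, but there is a factual slip at the start that you should fix. You write that ``both $S_L$ and $S_M\setminus S_L$ are independent sets of size $r-1$ or $r$''; in fact $S_M = S_L + e^\star$ for a single element $e^\star$, so $S_M\setminus S_L = \{e^\star\}$ has size~$1$. Consequently Lemma~\ref{lmm:matroid1_comp} does not apply to $S_M\setminus S_L$, and indeed $f_0(\{e^\star\}) = f(\varnothing) = 0$ by Definition~\ref{def:f_0}, so $\widetilde{f_0}(S_M\setminus S_L)$ carries no information. Fortunately you never actually use concentration on $S_M\setminus S_L$ in either branch of your case split, so this error is cosmetic rather than fatal.

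The paper's proof applies Lemma~\ref{lmm:matroid1_comp} with $\delta=\varepsilon^4$ only to the two large sets $S_L$ (size $r-1$) and $S_M$ (size $r$), giving the stated failure probability $2\varepsilon^4$ directly with no constant-juggling needed. From there your two cases are exactly the paper's: in the first branch, chain $\widetilde{f_0}(S_L)\geq\frac12\widetilde{f_0}(S_M)$ through concentration and Lemma~\ref{lem:bounds_f0} to get $f(S_L)\geq(\frac12-O(\varepsilon))f(S_M)$; in the second branch, the same chaining gives $f(S_L)\leq(\frac12+O(\varepsilon))f(S_M)$, and then the submodularity inequality $f(S_L)+f(S_M\setminus S_L)\geq f(S_M)$ finishes it. So once you drop the spurious concentration claim on the singleton and invoke the lemma on $S_L$ and $S_M$ from the outset, your argument is the paper's argument.
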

\begin{proof}
By Lemma~\ref{lmm:matroid1_comp} with $\delta = \varepsilon^{4}$, we have
\begin{equation}
\label{eq:event_c}
    \left|\widetilde{f}_0(S)-f_0(S)\right|\leq  \varepsilon f_0(S)
\end{equation}
holds for both $S_L$ and $S_M$ with probability at least $1-2\varepsilon^{4}$.
Suppose this is true.

If $\widetilde{f}_0(S_L)\geq \frac{1}{2}\widetilde{f}_0(S_M)$, then $S_R$ is $S_L$.
We translate the inequality into that of $f_0$ by Ineq.~\eqref{eq:event_c}:
$$f_0(S_L)\geq  \frac{\widetilde{f}_0(S_L)}{1+\varepsilon}\geq  \frac{\widetilde{f}_0(S_M)}{2(1+\varepsilon)}\geq \frac{1}{2}\cdot \frac{1-\varepsilon}{1+\varepsilon}\cdot f_0(S_M).$$
Then from Lemma~\ref{lem:bounds_f0}, we have
$$f(S_L) \geq f_0(S_L)\geq \frac{1}{2}\cdot \frac{1-\varepsilon}{1+\varepsilon}\cdot f_0(S_M) \geq \frac{1}{2}\cdot \frac{1-\varepsilon}{1+\varepsilon}\left(1-\frac{1}{r}\right)f(S_M).$$
For any $r>1/\varepsilon$, the inequality above implies that
$$f(S_L) \geq \left(\frac{1}{2}-2\varepsilon\right)f(S_M).$$

Algorithm~\ref{alg:local_search_monotone3} returns $S_M\setminus S_L$ if $\widetilde{f}_0(S_L)< \frac{1}{2}\widetilde{f}_0(S_M)$.
Similarly, we convert this by Ineq.~\eqref{eq:event_c} into the following inequality:
$$f_0(S_L) \leq \frac{\widetilde{f_0}(S_L)}{1-\varepsilon}\leq \frac{1}{2}\cdot\frac{\widetilde{f_0}(S_M)}{1-\varepsilon} \leq \frac{1}{2}\cdot \frac{1+\varepsilon}{1-\varepsilon}f_0(S_M).$$
By Lemma~\ref{lem:bounds_f0},
$$f(S_L) \leq \frac{r-1}{r-2}f_0(S_L) < \frac{1}{2}\cdot \frac{1+\varepsilon}{1-\varepsilon}\cdot\frac{r-1}{r-2}f_0(S_M) \leq \frac{1}{2}\cdot \frac{1+\varepsilon}{1-\varepsilon}\cdot\frac{r-1}{r-2}f(S_M),$$
which is directly followed by 
$$f(S_L) < \frac{1}{2}(1+5\varepsilon)f(S_M)$$ if $\varepsilon<1/3$ and $r\geq 2/\varepsilon$.
Since submodularity of $f$ indicates that $f(S_L)+f(S_M\setminus S_L)\geq f(S_M)$, we have 
\[f(S_M\setminus S_L)\geq \left(\frac{1}{2}-3\varepsilon\right)f(S_M).\]
\end{proof}
\noindent
Combining Ineq.~\eqref{eq:f_SM} and Lemma \ref{clm:compare_m}, we prove that Algorithm~\ref{alg:local_search_monotone3} achieves a $((1-1/e)/2-O(\varepsilon))$-approximation with probability $1-O(\varepsilon^4)$.

\subsection{Algorithmic results for matroid constraints with rank  \texorpdfstring{$r\in \Omega\left(n^{\frac{1}{3}}\right)$}{Lg}}
\label{sec:cardinality2}

We present in this subsection an algorithm (Algorithm~\ref{alg:local_search_monotone4}) and its analysis (Theorem~\ref{thm:matroid2}) that can be applied to $r\in \Omega\left(n^{\frac{1}{3}}\right)$ when $n$ is sufficiently large.

\begin{theorem}[\textbf{Algorithmic results for matroid constraints with rank $r\in  \Omega\left(n^{\frac{1}{3}}\right) $}]
\label{thm:matroid2}
    Let $\varepsilon>0$ and assume $n \in \Omega\left(\frac{1}{\varepsilon^4}\right)$ is sufficiently large. 
    For $r\in \Omega\left(n^{\frac{1}{3}}\right)$, there exists an algorithm returning a $\left(\left(1-\frac{1}{e}\right)/2-O(\varepsilon)\right)$-approximation for  Problem~\ref{problem:noisy_submodular} under a matroid constraint $\mathcal{I}(\mathcal{M})$ with rank $r$, with probability $1-O\left(\frac{1}{n^2}\right)$ and query complexity $O(n^6\varepsilon^{-1})$ to $\Tilde{f}$.
    Specifically, for a strongly base-orderable matroid (Definition~\ref{def:strongly}), the approximation ratio can be improved to $1-\frac{1}{e}-O(\varepsilon)$.
\end{theorem}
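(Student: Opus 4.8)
The plan is to combine the large‑cardinality machinery of Section~\ref{sec:cardinality2} (the surrogate $h_H$ of Definition~\ref{def:h_2} and the approximable auxiliary function of Lemma~\ref{lem:hat_varphi_h2}) with the noisy comparison technique of Section~\ref{sec:surrogate_III} (the function $f_0$ of Definition~\ref{def:f_0} and Lemma~\ref{lmm:matroid1_comp}). Concretely, Algorithm~\ref{alg:local_search_monotone4} will first fix an arbitrary independent set $H\in\mathcal{I}(\mathcal{M})$ with $|H|=\lceil 3\ln n\rceil$ (which exists since the rank $r\in\Omega(n^{1/3})$ is $\gg\log n$), build a $(\tfrac{\varepsilon}{4r\ln r},\tfrac{3}{n^6},\mathcal{C})$‑approximation $\widehat{\varphi_{h_H}}$ of $\varphi_{h_H}$ exactly as in Lemma~\ref{lem:hat_varphi_h2} but relative to the constraint $\mathcal{C}$ consisting of the size‑$(r-|H|)$ independent sets of $\mathcal{M}$ contained in $N\setminus H$ (this is again a matroid constraint, so Algorithm~\ref{alg:hat_varphi_h2} and its analysis carry over verbatim, with the error measured against $\max_{S\in\mathcal{C}}\varphi_{h_H}(S)$), and then run $\mathtt{NLS}(\widehat{\varphi_{h_H}},\mathcal{C},\tfrac{\varepsilon}{4r\ln r})$ to obtain a set $S_L\in\mathcal{I}(\mathcal{M})$ with $|S_L|=r-|H|$. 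Finally it will set $S_M\leftarrow S_L\cup H$ and, mirroring the comparison phase of Algorithm~\ref{alg:local_search_monotone3}, return $S_L$ if $\widetilde{f_0}(S_L)\geq\tfrac12\widetilde{f_0}(S_M)$ and return $S_M\setminus S_L=H$ otherwise.

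For the query complexity and the success probability I would reproduce the bookkeeping of Theorem~\ref{thm:cardi2} almost word for word: since $n\in\Omega(\varepsilon^{-4})$, Corollary~\ref{cor:main_nls} gives $I\in O(n^2)$, so the local search makes $O(rn^3)$ calls to $\widehat{\varphi_{h_H}}$, each costing $O(r\varepsilon^{-1}\log^{5/2}n\log^2 r)$ queries to $\tilde f$ by the Lemma~\ref{lem:hat_varphi_h2} analogue, giving $O(n^6\varepsilon^{-1})$ in total, while the comparison phase adds only $O(r)$ further queries. The failure probability of the local search is $O(1/n^2)$ by Corollary~\ref{cor:main_nls} with $\delta=3/n^6$, and the comparison phase succeeds with probability $1-O(1/n^2)$ by Lemma~\ref{lmm:matroid1_comp} applied to $S_L$ and $S_M$ with $\delta=1/n^2$; here one uses that $|S_L|=r-|H|$ and $|S_M|=r$ are both $\Omega(n^{1/3})\geq\kappa\varepsilon^{-1}\ln(2n^2)$ when $n\in\Omega(\varepsilon^{-4})$ and $\varepsilon$ is small, so Lemma~\ref{lmm:matroid1_comp} is applicable.

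The approximation guarantee rests on two ingredients. First, a witness bound $\max_{S\in\mathcal{C}}h_H(S)\geq(1-O(|H|/r))f(O^\star)$, where $O^\star$ is an optimal base: by an averaging/submodularity argument as in Lemma~\ref{clm:card2_split}, applied to a uniformly random $|H|$‑subset $A_0$ of $O^\star$ that is conditioned to contain $O^\star\cap H$, one gets a size‑$(r-|H|)$ subset $S'=O^\star\setminus A_0\subseteq N\setminus H$ with $f(S')\geq(1-O(|H|/r))f(O^\star)$; then $S'\in\mathcal{C}$ and $h_H(S')\geq f(S')$ by monotonicity of $f$ (every term $f(S'\cup H_j)$ in the average is $\geq f(S')$). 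Since $|H|/r=O(\log n/n^{1/3})=o(\varepsilon)$, Corollary~\ref{cor:main_nls} yields $h_H(S_L)\geq(1-\tfrac1e-O(\varepsilon))f(O^\star)$ with probability $1-O(1/n^2)$. Second, because $h_H(S_L)$ is an average of values $f(S_L\cup H_j)\leq f(S_L\cup H)$ we obtain $f(S_M)=f(S_L\cup H)\geq h_H(S_L)\geq(1-\tfrac1e-O(\varepsilon))f(O^\star)$, while submodularity of $f$ on the disjoint sets $S_L$ and $H$ gives $f(S_L)+f(H)\geq f(S_M)$, so $\max(f(S_L),f(H))\geq\tfrac12 f(S_M)$. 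Tracing the $\widetilde{f_0}$‑versus‑$f_0$‑versus‑$f$ slack through Lemmas~\ref{lem:bounds_f0} and~\ref{lmm:matroid1_comp} exactly as in the proof of Lemma~\ref{clm:compare_m}, the returned set $S_R\in\{S_L,H\}$ satisfies $f(S_R)\geq(\tfrac12-O(\varepsilon))f(S_M)\geq(\tfrac12(1-\tfrac1e)-O(\varepsilon))f(O^\star)$, which is the claimed bound.

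For the strongly base‑orderable refinement the plan is to eliminate the factor‑$2$ loss by arranging that the local‑search output together with the redundant elements is itself feasible, so it can be returned directly with value $\geq h_H(S_L)\geq(1-\tfrac1e-O(\varepsilon))f(O^\star)$. Concretely I would instead run the local search over the contracted and truncated matroid $\mathcal{M}/H$ (of rank $r-|H|$), so that $S_L\cup H$ is always a base of $\mathcal{M}$, and use the strong exchange bijection of Definition~\ref{def:strongly}: extending $H$ to a base $H\cup Z$ and letting $\pi$ be the strong exchange bijection to the optimal base, the set $H\cup\pi(X)$ is a base for every $X\subseteq Z$, and together with the fact that $\mathcal{M}/H$ is again strongly base‑orderable this lets an averaging argument over the choice of $H$ (e.g. taking $H$ as a random subset of a random base, so that $\pi(H)$ is close to a uniformly random $|H|$‑subset of $O^\star$) produce a base $S'\cup H$ with $f(S'\cup H)\geq(1-O(|H|/r))f(O^\star)$; hence $\max h_H\geq(1-o(\varepsilon))f(O^\star)$ over the stronger constraint, and no comparison phase is needed. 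I expect this last point to be the main obstacle: for a general matroid a badly chosen redundant set $H$ can block every high‑value base, so it is precisely the strong base‑orderability — combined with a careful (randomized) choice of $H$ — that makes the witness bound go through; the noisy‑comparison and query‑complexity parts are routine adaptations of Sections~\ref{sec:cardinality2} and~\ref{sec:surrogate_III}.
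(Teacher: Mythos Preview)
Your witness bound $\max_{S\in\mathcal{C}}h_H(S)\geq(1-O(|H|/r))f(O^\star)$ is the heart of the argument, and it does not hold. The averaging step in Lemma~\ref{clm:card2_split} relies on $A_0$ being a \emph{uniform} $|H|$-subset of $O^\star$; once you condition on $A_0\supseteq O^\star\cap H$, the bound can collapse. Concretely, take $f$ with a single valuable element $e^\star$ (say $f(S)=\mathbf{1}[e^\star\in S]$) and suppose the arbitrary set $H$ you fixed happens to contain $e^\star$. Then every $S'\in\mathcal{C}$ has $f(S')=0$, and $h_H(S')=\tfrac12$ for all $S'$, so $\max_{\mathcal{C}}h_H=\tfrac12 f(O^\star)$, not $(1-o(1))f(O^\star)$. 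Your chain then yields only $\tfrac14(1-\tfrac1e)$, not $\tfrac12(1-\tfrac1e)$. The algorithm may still do well on this particular instance, but your \emph{analysis} breaks, and I do not see an alternative analysis of your single-$H$ scheme that recovers the factor $\tfrac12(1-\tfrac1e)$ in general.

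The paper sidesteps this obstruction by a different construction: it picks an arbitrary \emph{basis} $B_0$, splits it into two halves $H_1,H_2$ of size $\approx r/2$, and runs $\mathtt{NLS}$ on each \emph{contracted} matroid $\mathcal{I}_{H_t}(\mathcal{M})=\{S:S\cup H_t\in\mathcal{I}(\mathcal{M})\}$. The key structural fact is Greene's exchange lemma: for the two bases $B_0$ and $O^\star$ there is a partition $O^\star=O_1\cup O_2$ with $O_t\cup H_t\in\mathcal{I}(\mathcal{M})$, hence $O_t\in\mathcal{I}_{H_t}(\mathcal{M})$ and $h_{H_1}(O_1)+h_{H_2}(O_2)\geq f(O_1)+f(O_2)\geq f(O^\star)$. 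No conditioning on $O^\star\cap H$ is needed, and no value is ``trapped'' in a single redundant set. One then compares the two feasible bases $S_t\cup H_t$ via $\widetilde{f_0}$ and returns the better one, giving $\tfrac12(1-\tfrac1e)-O(\varepsilon)$.

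For the strongly base-orderable case your sketch has a second, related gap: a single (even random) choice of $H$ only gives an \emph{expectation} bound on the witness quality; converting this to the required high-probability statement via Markov yields failure probability $O(|H|/(r\varepsilon))$, far from $O(1/n)$. The paper instead partitions a basis into $\lfloor r/\lceil 3\ln n\rceil\rfloor$ small pieces $H_t$, runs $\mathtt{NLS}$ on each contraction $\mathcal{M}/H_t$, and compares all resulting bases. The SBO bijection $\sigma:B_0\to O^\star$ guarantees $O^\star\setminus\sigma(H_t)\in\mathcal{I}_{H_t}(\mathcal{M})$ for every $t$, and an average over $t$ (not a random choice) shows $\mathbb{E}_t[\max_{\mathcal{I}_{H_t}}h_{H_t}]\geq(1-O(\log n/r))f(O^\star)$, so the best $t$ achieves the full $1-\tfrac1e-O(\varepsilon)$ deterministically modulo the noise events.
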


\noindent
The main difference from Theorem \ref{thm:cardi2} is again the approximation ratio.
%
For maintaning the feasibility, we divide an arbitrary base into two equal-size subsets, apply Algorithm~\ref{alg:local_search_monotone2} to both subsets, and output the larger one.
Consequently, we achieve a $\left(\left(1-\frac{1}{e}\right)/2-O(\varepsilon)\right)$-approximation ratio.
Specifically, for a strongly base-orderable matroid, we can use its good exchangable property to maintain the feasibility without sacrificing the approximation ratio; see Section~\ref{app:matroid3} for a complete proof.

\subsubsection{Useful facts for Theorem~\ref{thm:matroid2}}
\label{sec:matroid2_notation}
Our algorithm and analysis in this subsection are based on the smoothing surrogate function $h_H$ in Definition \ref{def:h_2} and comparison auxiliary function $f_0$ in Definition~\ref{def:f_0}.
We introduce a contraction matroid $\mathcal{I}_H(\mathcal{M})$ confined on $N\backslash H$.
%
Any feasible set $S\in\mathcal{I}_H(\mathcal{M})$ can be converted to an independent set subject to $\mathcal{I}(\mathcal{M})$ by adding $H$.

\begin{definition}[\bf{Contraction matroid}]
\label{def:contraction_matroid}
For any subset $H$ and matroid constraint $\mathcal{I}(\mathcal{M})$, we define contraction matroid over $N\setminus H$ as the following:
\[\mathcal{I}_H(\mathcal{M})= \{S \subseteq N\setminus H : S \cup H \in \mathcal{I}(\mathcal{M})\}.\]
\end{definition}

\noindent
Similar to Lemma~\ref{lem:hat_varphi_h2}, the following lemma indicates that the auxiliary function $\varphi_{h_H}$ can be well approximated.
\begin{lemma}[\bf{Approximation of $\varphi_{h_H}$}]
\label{lem:hat_varphi_h4}
Let $\varepsilon>0$ and for sufficiently large $n\in \Omega(\frac{1}{\varepsilon^4})$. 
If  $|H|\geq 3\ln n$, there exists a value oracle $\mathcal{O}$ to an $(\frac{\varepsilon}{4r\ln r},\frac{3}{n^6},\mathcal{I}_H(\mathcal{M}))$-approximation $\widehat{\varphi_{h_H}}$ of $\varphi_{h_H}$, in which answering $\mathcal{O}(A)$ queries at most $\Tilde{f}$'s oracle $O(r\varepsilon^{-1}\log^{\frac{5}{2}}n\log^2r)$ times of $\Tilde{f}$ for each set $A\in \mathcal{I}$.
\end{lemma}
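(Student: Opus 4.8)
The plan is to observe that the proof of Lemma~\ref{lem:hat_varphi_h2} never uses the fact that the feasible family is a cardinality constraint rather than a matroid constraint, so it transfers almost verbatim. Concretely, I would use exactly the estimator of Algorithm~\ref{alg:hat_varphi_h2}, replacing the input requirement ``$A\in\mathcal{I}_H(r)$'' by ``$A\in\mathcal{I}_H(\mathcal{M})$'' (equivalently, any $A\subseteq N\setminus H$ with $|A|\le r-|H|$), keeping $s(A)=\sum_{T\subseteq A}m_{|A|-1,|T|-1}$, $M=32rH_r^2\log^{5/2}n\cdot\varepsilon^{-1}$ and the same sampling distribution $\nu_A$ on $\{T\cup H_i:T\subseteq A,\ H_i\subseteq H\}$, and output $\widehat{\varphi_{h_H}}(A)=s(A)\cdot\frac1M\sum_{i=1}^M\widetilde f(B_i)$. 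The key structural remark is that what that proof actually establishes is the \emph{multiplicative} concentration bound $\mathbb{P}\big[\,|\widehat{\varphi_{h_H}}(A)-\varphi_{h_H}(A)|>\tfrac{\varepsilon}{4rH_r}\varphi_{h_H}(A)\,\big]\le 3/n^6$; since $\varphi_{h_H}(A)\le\max_{S\in\mathcal{I}_H(\mathcal{M})}\varphi_{h_H}(S)$ for $A\in\mathcal{I}_H(\mathcal{M})$ and $\tfrac{\varepsilon}{4rH_r}\le\tfrac{\varepsilon}{4r\ln r}=\alpha$ (using $H_r\ge\ln r$), this multiplicative bound immediately yields the $(\alpha,\delta,\mathcal{I}_H(\mathcal{M}))$-approximation property of Definition~\ref{def:approximation}.

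For the multiplicative bound itself I would reproduce the two sub-lemmas. First rewrite $\varphi_{h_H}(A)=\frac1{2^{|H|}}\sum_{T\subseteq A}\sum_{H_i\subseteq H}m_{|A|-1,|T|-1}f(T\cup H_i)$ and invoke the coefficient bounds of Lemma~\ref{lmm:app_card2_bound}, namely $\sum_{T\subseteq A}m_{|A|-1,|T|-1}\le H_{|A|}\le H_r$ and $\sum_{T\subseteq A}m^2_{|A|-1,|T|-1}\le 3$. Setting $X_{i,j}=\frac1{2^{|H|}}m_{|A|-1,|T_j|-1}f(T_j\cup H_i)(\xi_{T_j\cup H_i}-1)$, which is sub-exponential by Claim~\ref{clm:subexp}, these give $\max_{i,j}\|X_{i,j}\|_{\psi_1}\le\frac{3\kappa}{2^{|H|}}f(A\cup H)$ and $\sum_{i,j}\|X_{i,j}\|_{\psi_1}^2\le\frac{3\kappa^2}{2^{|H|}}(f(A\cup H))^2$; combined with the one structural fact $\varphi_{h_H}(A)\ge h_H(A)\ge\frac12 f(A\cup H)$ (Lemmas~\ref{lmm:bound_g} and~\ref{lmm:property_of_h2}, which hold for all $A$) and $|H|\ge 3\ln n\ge\ln\!\big(\tfrac{36864\kappa^2}{c}r^2H_r^2\ln n\cdot\varepsilon^{-2}\big)$ --- which is where $n\in\Omega(\varepsilon^{-4})$ enters --- Bernstein's inequality (Lemma~\ref{lem:bern_exp}) forces the exponent to be at least $\tfrac6c\ln n$, giving $\mathbb{P}[|\widetilde{\varphi_{h_H}}(A)-\varphi_{h_H}(A)|>\tfrac{\varepsilon}{8rH_r}\varphi_{h_H}(A)]\le 2/n^6$. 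Next, with the truncation threshold $b=\frac{16}{c_0}\ln(n c_0^{-1/4})$ (trivial when $\mathcal{D}$ has bounded support), the sampled variables $X_i=s(A)\widetilde f(B_i)$ are bounded by $H_r b\,f(A\cup H)$ on the event $\mathcal{E}_A=\{\xi_{B_i}\le b\ \forall i\}$, which fails with probability at most $2M/n^8$; Hoeffding's inequality (Lemma~\ref{lmm:Hoeffding}) with $M=32rH_r^2\log^{5/2}n\cdot\varepsilon^{-1}$ then gives $\mathbb{P}[|\widehat{\varphi_{h_H}}(A)-\widetilde{\varphi_{h_H}}(A)|>\tfrac{\varepsilon}{8rH_r}\varphi_{h_H}(A)]\le O(n^{-6})$. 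A union bound yields the claimed failure probability $3/n^6$, and each evaluation costs $M=O(rH_r^2\log^{5/2}n\cdot\varepsilon^{-1})=O(r\varepsilon^{-1}\log^{5/2}n\log^2 r)$ calls to $\widetilde f$, matching the statement.

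I do not expect a genuine obstacle; the work is to verify that nothing in the cardinality-case argument silently relied on $|A|$ equalling $r-|H|$ rather than merely being $\le r$. The only place this could bite is that $\mathcal{I}_H(\mathcal{M})$ contains independent sets of many sizes, including possibly $|A|\in\{0,1\}$, whereas every set in $\mathcal{I}_H(r)$ has size exactly $r-|H|\ge 2$; for such tiny $A$ the bound $\sum_{T\subseteq A}m^2_{|A|-1,|T|-1}\le 3$ of Lemma~\ref{lmm:app_card2_bound}(2) is stated only for $|A|\ge 2$, so I would dispatch $|A|\le 1$ by the crude estimate $\sum_{T\subseteq A}m^2_{|A|-1,|T|-1}\le H_{|A|}^2\le 1$, which only improves constants (alternatively, padding such $A$ using the assumption that all single elements are feasible). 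Everything else --- the coefficient identities, the lower bound $\varphi_{h_H}(A)\ge\frac12 f(A\cup H)$, the finiteness of $\kappa$, and the choices of $b$ and $M$ --- carries over unchanged, so the proof is complete once these routine substitutions are made.
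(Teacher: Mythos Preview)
Your proposal is correct and follows essentially the same approach as the paper: the paper's own proof simply observes that the argument for Lemma~\ref{lem:hat_varphi_h2} actually establishes the multiplicative bound $\mathbb{P}\big[|\widehat{\varphi_{h_H}}(A)-\varphi_{h_H}(A)|>\tfrac{\varepsilon}{4r\ln r}\varphi_{h_H}(A)\big]\le 3/n^6$ for every $A\in\mathcal{I}_H(r)$, and then invokes $\mathcal{I}_H(\mathcal{M})\subseteq\mathcal{I}_H(r)$. You go a bit further by spelling out the two sub-lemmas and, more notably, by flagging and handling the edge case $|A|\le 1$ (where Lemma~\ref{lmm:app_card2_bound}(2) is not stated); the paper glosses over this, so your treatment is in fact slightly more careful than the original.
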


\noindent
The proof of Lemma~\ref{lem:hat_varphi_h2} actually demonstrates a stricter claim that for any set $A\in \mathcal{I}_{H}(r)$,
$\mathbb{P}\left[\left|\widehat{\varphi_h}(A) - \varphi_h(A)\right| > \frac{\varepsilon}{4r\ln r} \varphi_{h_H}(A)\right]\leq \frac{3}{n^6}$.
Since $\mathcal{I}_H(\mathcal{M}) \subseteq \mathcal{I}_H(r)$, 
the proof remains valid for Lemma~\ref{lem:hat_varphi_h4}.

\subsubsection{Algorithm for Theorem~\ref{thm:matroid2}}
\begin{algorithm}[htp!] 
	\caption{Noisy local search subject to matroids with large ranks} 
	\label{alg:local_search_monotone4}
	\DontPrintSemicolon
    \SetNoFillComment
    \SetKwInOut{Input}{Input}
	\Input{a value oracle to $\Tilde{f}$, rank $r\in  \Omega\left(n^{\frac{1}{3}}\right)$, and $\varepsilon\in (0,1/2)$}
     Arbitrarily select a basis $B_0\subseteq N$ and spilt $B_0$ into two parts $H_1,H_2$ with size $\lfloor\frac{r}{2}\rfloor$ or $\lfloor\frac{r}{2}\rfloor+1$ \\
     
	\For{$t =1,2$} {
		Let $\widehat{\varphi_{h_{H_t}}}$ be a $(\alpha = \frac{\varepsilon}{4r\ln r},\delta =\frac{3}{n^6},\mathcal{I}_{H_t}(\mathcal{M}))$-approximation of $\varphi_{h_{H_t}}$ as in Lemma \ref{lem:hat_varphi_h4}  \;
		$S_t\leftarrow \mathtt{NLS}\left(\widehat{\varphi_{h_{H_t}}},\mathcal{I}_{H_t}(\mathcal{M}),\Delta= \frac{\varepsilon}{4r\ln r}\right)$ \Comment*[r]{Local search phase}
		}
	Let $i^\star \leftarrow \mathop{\arg\max}\limits_{t \in \{1,2\}}\widetilde{f_0}(S_t\cup H_t)$ \Comment*[r]{Comparison phase}
		\Return $S_{i^\star}\cup H_{i^\star}$ 
\end{algorithm}

\noindent
We present Algorithm~\ref{alg:local_search_monotone4} that contains two phases: a local search phase (Lines 2-4) and a comparison phase (Line 5).
Algorithm~\ref{alg:local_search_monotone4} arbitrarily splits a basis (maximal independent set) into two parts of almost the same size, and it then grows each one of them into a basis using Algorithm \ref{alg:local_search}, respectively. 
Finally the algorithm compares two solutions $S_t\cup H_t\ (t=1,2)$ and outputs the better one in terms of $\widetilde{f_0}$. 

\subsubsection{Proof of Theorem~\ref{thm:matroid2}}
To prove Theorem~\ref{thm:matroid2}, we analyze the query complexity and approximation performance of Algorithm~\ref{alg:local_search_monotone3} as below.
\paragraph{Query Complexity of Algorithm~\ref{alg:local_search_monotone4}.} 
%
%
The number of calls to $\Tilde{f}$ during local search phase dominates the query complexity of Algorithm~\ref{alg:local_search_monotone4} as
it queries only $2r$ times at the comparison phase.
Algorithm~\ref{alg:local_search_monotone4} runs the local search procedure twice, and thus the query complexity of Algorithm~\ref{alg:local_search_monotone4} is of the same order $O(n^6\varepsilon^{-1})$ as that of Algorithm~\ref{alg:local_search_monotone2}.


%
\paragraph{Approximation performance analysis of Algorithm~\ref{alg:local_search_monotone4}.}
%
%

Similar to the analysis of Algorithm~\ref{alg:local_search_monotone2},
for $t\in\{1,2\}$, Corollary \ref{cor:main_nls} implies that with probability $1-O\left(\frac{1}{n^2}\right)$,  
\begin{equation}
\label{eq:f_St_cup_Ht}
f(S_t\cup H_t) \geq \left(1-\frac{1}{e}-\varepsilon\right)\max\limits_{S\subseteq \mathcal{I}_{H_t}(\mathcal{M})} h_{H_t}(S).   
\end{equation}
Recall that $O^{\star} \in \mathop{\arg\max}_{S\in \mathcal{I}(\mathcal{M})}f(S)$ denotes an optimal solution.
The following lemma relates $\max_{S\subseteq \mathcal{I}_{H_t}(\mathcal{M})} h_{H_t}(S_t)$ to $f(O^{\star})$.
\begin{lemma}
\label{clm:matroid2_split}
$f(O^{\star}) \leq \max\limits_{S\subseteq \mathcal{I}_{H_1}(\mathcal{M})} h_{H_1}(S_1) + \max\limits_{S\subseteq \mathcal{I}_{H_2}(\mathcal{M})} h_{H_2}(S_2)$.
\end{lemma}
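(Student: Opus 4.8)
The plan is to reduce the statement to the existence of a single combinatorial object: a disjoint partition $O^\star = O_1 \sqcup O_2$ with $O_1 \in \mathcal{I}_{H_1}(\mathcal{M})$ and $O_2 \in \mathcal{I}_{H_2}(\mathcal{M})$. Granting this, the inequality is essentially immediate. Each $h_{H_t}$ is an average of $f$ over supersets of its argument, so monotonicity of $f$ (equivalently, Lemma~\ref{lmm:property_of_h2}) gives $h_{H_t}(O_t) \ge f(O_t)$. Hence $\max_{S \in \mathcal{I}_{H_1}(\mathcal{M})} h_{H_1}(S) + \max_{S \in \mathcal{I}_{H_2}(\mathcal{M})} h_{H_2}(S) \ge h_{H_1}(O_1) + h_{H_2}(O_2) \ge f(O_1) + f(O_2)$, and since $O_1 \cap O_2 = \varnothing$ and $f$ is normalized submodular, $f(O_1) + f(O_2) \ge f(O_1 \cup O_2) + f(O_1 \cap O_2) = f(O^\star)$, which is exactly the claim.

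So the real work is the partition. I would prove it via the matroid union theorem applied on the ground set $E := O^\star$ itself (note $O^\star \in \mathcal{I}(\mathcal{M})$ is independent; one may also extend it to a base first, but that is not needed). Define two matroids $\mathcal{M}_1, \mathcal{M}_2$ on $O^\star$ by declaring $F \subseteq O^\star$ independent in $\mathcal{M}_t$ precisely when $F \cap H_t = \varnothing$ and $F \cup H_t \in \mathcal{I}(\mathcal{M})$; concretely $\mathcal{M}_t$ is the restriction of the contraction matroid $\mathcal{M}/H_t$ to $O^\star \setminus H_t$ with the elements of $O^\star \cap H_t$ adjoined as loops. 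This is genuinely a matroid, and because $H_t \subseteq B_0$ is independent its rank function is $r_{\mathcal{M}_t}(F) = r_{\mathcal{M}}(F \cup H_t) - |H_t|$. The matroid union theorem says $O^\star$ decomposes as a disjoint union $O_1 \sqcup O_2$ with $O_t$ independent in $\mathcal{M}_t$ if and only if $r_{\mathcal{M}_1}(F) + r_{\mathcal{M}_2}(F) \ge |F|$ for every $F \subseteq O^\star$; any witnessing pair can be made disjoint by replacing the second set by its difference with the first, and "$O_t$ independent in $\mathcal{M}_t$" is exactly "$O_t \in \mathcal{I}_{H_t}(\mathcal{M})$" in the sense of Definition~\ref{def:contraction_matroid}.

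It then remains to verify the rank condition, which is a short submodularity computation. Using the contraction formula and $|H_1| + |H_2| = |B_0| = r$, the condition becomes $r_{\mathcal{M}}(F \cup H_1) + r_{\mathcal{M}}(F \cup H_2) \ge |F| + r$. Submodularity of the matroid rank function gives $r_{\mathcal{M}}(F \cup H_1) + r_{\mathcal{M}}(F \cup H_2) \ge r_{\mathcal{M}}(F \cup H_1 \cup H_2) + r_{\mathcal{M}}\big((F \cup H_1) \cap (F \cup H_2)\big)$; since $H_1 \cap H_2 = \varnothing$ the intersection is just $F$, while $F \cup B_0 \supseteq B_0$ forces $r_{\mathcal{M}}(F \cup B_0) = r$ and $F \subseteq O^\star$ independent forces $r_{\mathcal{M}}(F) = |F|$. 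So the right-hand side is exactly $r + |F|$, completing the proof.

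I expect the delicate point to be the bookkeeping around $O^\star \cap B_0$: an element of $O^\star$ already lying in $H_1$ (resp.\ $H_2$) cannot be placed into $O_1$ (resp.\ $O_2$), so a naive direct exchange argument has to carve these out by hand. The cleanest way to make this automatic is precisely the choice to run matroid union on $E = O^\star$ with those elements turned into loops of the relevant $\mathcal{M}_t$, so the theorem itself routes each such element into the other part. A secondary item to be careful about is checking that $\mathcal{M}_t$ really is a matroid and that its independent sets coincide with $\mathcal{I}_{H_t}(\mathcal{M})$ — both follow from standard facts about contraction and about adding loops, but they should be stated explicitly.
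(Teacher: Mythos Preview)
Your proposal is correct and follows the same high-level outline as the paper --- find a partition $O^\star = O_1 \sqcup O_2$ with $O_t \in \mathcal{I}_{H_t}(\mathcal{M})$, then finish via monotonicity of $f$ (so $h_{H_t}(O_t)\ge f(O_t)$) and submodularity (so $f(O_1)+f(O_2)\ge f(O^\star)$) --- but the combinatorial tool you use to produce the partition is different. The paper simply invokes a classical base-partition theorem of Greene: given two bases $B_1,B_2$ and a partition $B_1 = X_1 \cup Y_1$, there is a partition $B_2 = X_2 \cup Y_2$ with $X_1\cup Y_2$ and $X_2\cup Y_1$ both bases; applying this to $B_0 = H_1\cup H_2$ and $O^\star$ gives the required $O_1,O_2$ in one line. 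You instead derive the partition from the matroid union theorem together with a short rank-submodularity computation, which amounts to a self-contained proof of the special case of Greene's result that is needed here. Your route buys you independence from an external citation and an explicit treatment of the overlap $O^\star\cap B_0$ via the loop construction (a point the paper leaves implicit); the paper's route buys brevity, since once Greene's lemma is quoted the whole proof is three lines.
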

%
%
%
%
%
\begin{proof}
We first introduce a lemma concerning the structure of matroids.
\begin{lemma}[See~\cite{greene1973multiple}]
Given two bases $B_1,\ B_2$ of a matroid $\mathcal{M}$ a partition $B_1 = X_1 \cup Y_1$ , there is a partition $B_2 = X_2 \cup Y_2$ such that $X_1 \cup Y_2$ and $X_2\cup  Y_1$ are both bases of $\mathcal{M}$.
\end{lemma}
The lemma above indicates that there is a partition of $O^{\star}=O_1\cup O_2$ such that $O_t\cup H_t\in\mathcal{I}(\mathcal{M})$ for $t\in \{1,2\}$.
%
Hence we have
\begin{align*}
  \max\limits_{S\subseteq \mathcal{I}_{H_1}(\mathcal{M})} h_1(S_1) + \max\limits_{S\subseteq \mathcal{I}_{H_2}(\mathcal{M})} h_2(S_2)  &~\geq ~  h_{H_1}(O_1) + h_{H_2}(O_2)\\
  &~\geq ~  f(O_1) + f(O_2)\tag*{(monotonicity of $f$)}\\
    &~\geq ~ f(O_1\cup O_2) = f(O^{\star})\tag*{(submodularity of $f$)}.
\end{align*}
\end{proof}
\noindent
Finally, we show that comparison with noisy auxiliary function $\widetilde{f_0}$ causes only a small loss in the approximate ratio.
\begin{claim}
\label{clm:matroid2_comp}
We assume a sufficient small $\varepsilon$ and suppose that $r\in \Omega\left(n^{\frac{1}{3}}\right)$.
Let $S_R$ denote the set returned by Algorithm~\ref{alg:local_search_monotone4}.
With probability at least $1-\frac{2}{n^4}$, we have 
\begin{equation*}
    f(S_R)\geq \left(\frac{1}{2}-O(\varepsilon)\right)\left(f(S_1\cup H_1)+f(S_2\cup H_2)\right).
\end{equation*}
\end{claim}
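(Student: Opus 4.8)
The plan is to mirror the comparison analysis of Lemma~\ref{clm:compare_m}, but now the two candidate outputs are the two full-size solutions $S_1 \cup H_1$ and $S_2 \cup H_2$ rather than a local search solution and a single additional element. First I would invoke Lemma~\ref{lmm:matroid1_comp} (the concentration property of $f_0$) with $\delta = 1/n^4$ and error parameter $\varepsilon$: since $r \in \Omega(n^{1/3})$ and $|S_t \cup H_t| = r$, the hypothesis $|S| \geq \kappa \varepsilon^{-1}\ln(2\delta^{-1}) = \kappa \varepsilon^{-1}\ln(2n^4)$ is satisfied for $n$ large, so with probability at least $1 - 2/n^4$ we have $|\widetilde{f_0}(S_t \cup H_t) - f_0(S_t \cup H_t)| \leq \varepsilon f_0(S_t \cup H_t)$ simultaneously for $t = 1, 2$. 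Condition on this event for the rest of the argument.

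Next, recall $i^\star = \arg\max_{t}\widetilde{f_0}(S_t \cup H_t)$, so $S_R = S_{i^\star} \cup H_{i^\star}$ satisfies $\widetilde{f_0}(S_R) \geq \widetilde{f_0}(S_t \cup H_t)$ for both $t$. Chaining the concentration bound in both directions gives $f_0(S_R) \geq \frac{1}{1+\varepsilon}\widetilde{f_0}(S_R) \geq \frac{1}{1+\varepsilon}\widetilde{f_0}(S_t \cup H_t) \geq \frac{1-\varepsilon}{1+\varepsilon}f_0(S_t \cup H_t)$ for each $t$. Averaging the two inequalities (for $t = 1$ and $t = 2$) and applying Lemma~\ref{lem:bounds_f0} on both sides — namely $f(S_R) \geq f_0(S_R)$ and $f_0(S_t \cup H_t) \geq (1 - \frac{1}{r})f(S_t \cup H_t)$ — yields
\[
f(S_R) \geq \frac{1-\varepsilon}{1+\varepsilon}\left(1 - \frac{1}{r}\right)\cdot \frac{1}{2}\left(f(S_1 \cup H_1) + f(S_2 \cup H_2)\right).
\]
Since $\frac{1-\varepsilon}{1+\varepsilon} \geq 1 - 2\varepsilon$ and $1 - \frac{1}{r} \geq 1 - \varepsilon$ for $r \geq 1/\varepsilon$ (which holds as $r \in \Omega(n^{1/3})$ and $n \in \Omega(\varepsilon^{-4})$), the prefactor is $\frac{1}{2}(1 - O(\varepsilon)) = \frac{1}{2} - O(\varepsilon)$, establishing the claim.

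I do not expect a serious obstacle here; this is essentially a streamlined version of the comparison argument already carried out for Algorithm~\ref{alg:local_search_monotone3}, and it is simpler because we are not splitting $f(S_M)$ via submodularity but directly comparing two symmetric candidates. The one point requiring a little care is the bookkeeping of the failure probabilities: the $1 - O(1/n^2)$ from Eq.~\eqref{eq:f_St_cup_Ht} (the two local search runs) and the $1 - 2/n^4$ from the comparison phase must be combined by a union bound to give the overall $1 - O(1/n^2)$ success probability claimed in Theorem~\ref{thm:matroid2}; this is routine. Finally, combining Claim~\ref{clm:matroid2_comp} with Lemma~\ref{clm:matroid2_split} and Eq.~\eqref{eq:f_St_cup_Ht} gives $f(S_R) \geq (\frac{1}{2} - O(\varepsilon))(1 - \frac{1}{e} - \varepsilon)f(O^\star) = (\frac{1}{2}(1 - \frac{1}{e}) - O(\varepsilon))f(O^\star)$, completing the proof of Theorem~\ref{thm:matroid2}.
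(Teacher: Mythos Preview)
Your proposal is correct and follows essentially the same approach as the paper: invoke Lemma~\ref{lmm:matroid1_comp} with $\delta=1/n^4$ to get the concentration event for both $t=1,2$, use that $\widetilde{f_0}(S_R)$ is the maximum to bound it below by the average, and then sandwich with Lemma~\ref{lem:bounds_f0} and the elementary estimates $\frac{1-\varepsilon}{1+\varepsilon}\geq 1-2\varepsilon$ and $r\geq 1/\varepsilon$. The only cosmetic difference is that the paper averages in $\widetilde{f_0}$-space before converting to $f_0$, whereas you convert first and then average; the two orderings yield the same bound.
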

\begin{proof}
By Lemma~\ref{lmm:matroid1_comp} with $\delta = 1/n^{4}$, we have
\begin{equation}
\label{eq:event_c2}
    \left|\widetilde{f}_0(S)-f_0(S)\right|\leq  \varepsilon f_0(S)
\end{equation}
holds for both $S_1\cup H_1$ and $S_2\cup H_2$ with probability at least $1-2/n^{4}$.
Suppose this is true.
Then for $t\in\{1,2\}$, we have
\begin{align}
    \notag f_0(S_R) &\geq \frac{1}{1+\varepsilon}\cdot\widetilde{f_0}(S_R) && \text{(Ineq.~\eqref{eq:event_c2})}\\
    \notag&\geq \frac{1}{2(1+\varepsilon)}\cdot\left(\widetilde{f_0}(S_1\cup H_1)+\widetilde{f_0}(S_2\cup H_2)\right) && \text{($\widetilde{f_0}(S_R) = \max_{t\in\{1,2\}}\widetilde{f_0}(S_t\cup H_t)$)}\\
    \label{eq:f_0_ineq} &\geq \frac{1}{2}\cdot\frac{1-\varepsilon}{1+\varepsilon}\cdot\left(f_0(S_1\cup H_1)+f_0(S_2\cup H_2)\right). && \text{(Ineq.~\eqref{eq:event_c2})}
\end{align}
We convert this inequality of $f_0$ to that of $f$:
\begin{align*}
    f(S_R) &\geq f_0(S_R) && \text{(Lemma~\ref{lem:bounds_f0})}\\
    &\geq \frac{1}{2}\cdot\frac{1-\varepsilon}{1+\varepsilon}\cdot\left(f_0(S_1\cup H_1)+f_0(S_2\cup H_2)\right) && \text{(Ineq.~\eqref{eq:f_0_ineq})}\\
    &\geq \frac{1}{2}\cdot\frac{1-\varepsilon}{1+\varepsilon}\cdot\left(1-\frac{1}{r}\right)\left(f(S_1\cup H_1)+f(S_2\cup H_2)\right) && \text{(Lemma~\ref{lem:bounds_f0})}\\
    &\geq \left(\frac{1}{2}-2\varepsilon\right)\left(f(S_1\cup H_1)+f(S_2\cup H_2)\right), && \text{($\frac{1-\varepsilon}{1+\varepsilon}\geq 1-2\varepsilon$ and $r\geq 1/\varepsilon$)}
\end{align*}
which completes the proof.
\end{proof}
\noindent
Combining Ineq.~\eqref{eq:f_St_cup_Ht}, Claim~\ref{clm:matroid2_split} and Claim~\ref{clm:matroid2_comp}, we can conclude that with probability $1-O\left(\frac{1}{n^2}\right)$,
\begin{equation*}
 f(S_{i^\star}\cup H_{i^\star})
 \geq \left(\frac{1}{2}\left(1-\frac{1}{e}\right)-O(\varepsilon)\right) f(O^{\star}),
\end{equation*}
which matches Theorem~\ref{thm:matroid2}.

\section{Discussions}
For any cardinality $r\geq 2$, Algorithms~\ref{alg:local_search_monotone} and \ref{alg:local_search_monotone2} return a $\left(\left(1-\frac{1}{r}\right)\left(1-\frac{1}{e}\right)-O\left(\varepsilon\right)\right)$-approximation for monotone submodular maximization problem under noise with query complexity $\mathrm{Poly}\left(n, \frac{1}{\varepsilon}\right)$. When $r\in \Omega\left(\frac{1}{\varepsilon}\right)$, the approximation ratio can be written as $\left(1-\frac{1}{e}\right)-O(\varepsilon)$ (Theorem~\ref{thm:informal_cardi}). 
However, if $r\in O\left(\frac{1}{\varepsilon}\right)$, then this ratio is not close to the optimal one. In this case, \cite{singer2018optimization} provide an algorithm that achieves $\left(1-\frac{1}{r}\right)$-approximation with query complexity $\Omega(n^r)$. It remains open whether there is an algorithm that returns $\left(1-\frac{1}{e}-O(\varepsilon)\right)$-approximation with query complexity $\text{Poly}\left(n,\frac{1}{\varepsilon}\right)$ for $r\in O\left(\frac{1}{\varepsilon}\right)$. The main challenges of our approach to handle this range of $r$ is that the introduction of surrogates (Definition~\ref{def:h_1}) inevitably results in a loss of $\frac{1}{r}$ in approximation ratio. For the special case of $r=1$, one of the few known results is an algorithm by \cite{singer2018optimization} achieving a $\frac{1}{2}$-approximation guarantee in expectation, which is information theoretically tight. To the best of our knowledge, there is no with-high-probability result for this case.

Another limitation of this work is that we only consider independent noise.
For non-i.i.d. noise, \cite{hassidim2017submodular} indicate that no algorithm can achieve a constant approximation when the noise multipliers are arbitrarily correlated across sets.
Considering this, it may be worthwhile to consider special cases of correlated distributions for which optimal guarantees can be obtained. One of them is a model called $d$-correlated noise~\cite{hassidim2017submodular}: a noise distribution is $d$-correlated if for any two sets $S$ and $T$ such that $|S\backslash T|+|T\backslash S| > d$, the noise is applied independently to $S$ and to $T$. The noise multipliers can be arbitrarily correlated when $S$ and $T$ are similar in the sense that $|S\backslash T|+|T\backslash S| \leq d$. We notice that our algorithms can be naturally extended to $d$-correlated noise for $d\in O(1)$. In particular, to adapt Algorithm~\ref{alg:local_search_monotone} to deal with $d$-correlated noise, we need to arbitrarily split $N$ into sets $T_1,\dots,T_{\lfloor\frac{N}{d+1}\rfloor}$ and define the smoothing surrogate function as $h(S) = \frac{1}{L}\sum^L_{l=1}f(S\cup T_l)$ to replace the original surrogate.
\section{Conclusions}
\label{sec:conclusion}
In this work, we study the problem of constrained monotone submodular maximization with noisy oracles.
We design a unified local search framework that allows for inaccuracy in the objective function.
Under this framework, we construct several smoothing surrogate functions to average the noise out.
For cardinality constraints, the local search framework results in algorithms that achieve
a $\left(1-\frac{1}{e}-O(\varepsilon)\right)$-approximation with $\text{Poly}\left(n,\frac{1}{\varepsilon}\right)$ query complexity.
Moreover, for general matroid constraints, the framework obtains an approximation ratio arbitrarily close to $\left(1-\frac{1}{e}\right)/2$, which is the first constant approximation result to our knowledge.

There are many directions in which this work could be extended.
For submodular maximization with noisy oracles under general matroid constraints, there is a gap between the approximation ratio of $\left(1-1/e\right)/2-O(\varepsilon)$ provided in this paper and impossibility results~\cite{feige1998threshold, singer2018optimization}.
The first open question is to close this gap. 
In addition, it would be interesting to consider more complex constraints under noise including knapsack constraints~\cite{chekuri2014submodular}.
It is meaningful to investigate submodular maximization with correlated noise.
Moreover, it is also worthwhile to investigate the robustness of other submodular optimization approaches, e.g., multi-linear extension ~\cite{vondrak2008optimal}.

\section*{Acknowledgements}

We are grateful to Dr. Bei Xiaohui for his good counsel and valuable comments on the manuscript.

\bibliographystyle{plain}
\bibliography{references}

\begin{thebibliography}{10}

\bibitem{buchbinder2018submodular}
Niv Buchbinder and Moran Feldman.
\newblock Submodular functions maximization problems.
\newblock In {\em Handbook of Approximation Algorithms and Metaheuristics},
  page 771–806. Chapman and Hall/CRC, 2018.

\bibitem{calinescu2011maximizing}
Gruia Calinescu, Chandra Chekuri, Martin Pal, and Jan Vondr{\'a}k.
\newblock Maximizing a monotone submodular function subject to a matroid
  constraint.
\newblock {\em SIAM Journal on Computing}, 40(6):1740--1766, 2011.

\bibitem{chambers2016revealed}
Christopher~P Chambers and Federico Echenique.
\newblock {\em Revealed Preference Theory}.
\newblock Cambridge University Press, 2016.

\bibitem{chekuri2014submodular}
Chandra Chekuri, Jan Vondr{\'a}k, and Rico Zenklusen.
\newblock Submodular function maximization via the multilinear relaxation and
  contention resolution schemes.
\newblock {\em SIAM Journal on Computing}, 43(6):1831--1879, 2014.

\bibitem{feige1998threshold}
Uriel Feige.
\newblock A threshold of $\ln n$ for approximating set cover.
\newblock {\em Journal of the ACM (JACM)}, 45(4):634--652, 1998.

\bibitem{feldman2009power}
Vitaly Feldman.
\newblock On the power of membership queries in agnostic learning.
\newblock {\em The Journal of Machine Learning Research}, 10:163--182, 2009.

\bibitem{filmus2014monotone}
Yuval Filmus and Justin Ward.
\newblock Monotone submodular maximization over a matroid via non-oblivious
  local search.
\newblock {\em SIAM Journal on Computing}, 43(2):514--542, 2014.

\bibitem{fisher1978analysis}
Marshall~L Fisher, George~L Nemhauser, and Laurence~A Wolsey.
\newblock An analysis of approximations for maximizing submodular set
  functions—\rom{2}.
\newblock In {\em Polyhedral combinatorics}, pages 73--87. Springer, 1978.

\bibitem{globerson2006nightmare}
Amir Globerson and Sam Roweis.
\newblock Nightmare at test time: robust learning by feature deletion.
\newblock In {\em Proceedings of the 23rd international conference on Machine
  learning}, pages 353--360, 2006.

\bibitem{greene1973multiple}
Curtis Greene.
\newblock A multiple exchange property for bases.
\newblock {\em Proceedings of the American Mathematical Society}, 39(1):45--50,
  1973.

\bibitem{hassidim2017submodular}
Avinatan Hassidim and Yaron Singer.
\newblock Submodular optimization under noise.
\newblock In {\em Conference on Learning Theory}, pages 1069--1122. PMLR, 2017.

\bibitem{horel2016maximization}
Thibaut Horel and Yaron Singer.
\newblock Maximization of approximately submodular functions.
\newblock {\em Advances in neural information processing systems}, 29, 2016.

\bibitem{ito2019submodular}
Shinji Ito.
\newblock Submodular function minimization with noisy evaluation oracle.
\newblock {\em Advances in Neural Information Processing Systems}, 32, 2019.

\bibitem{kempe2003maximizing}
David Kempe, Jon Kleinberg, and {\'E}va Tardos.
\newblock Maximizing the spread of influence through a social network.
\newblock In {\em Proceedings of the ninth ACM SIGKDD international conference
  on Knowledge discovery and data mining}, pages 137--146, 2003.

\bibitem{krause2008efficient}
Andreas Krause, Jure Leskovec, Carlos Guestrin, Jeanne VanBriesen, and Christos
  Faloutsos.
\newblock Efficient sensor placement optimization for securing large water
  distribution networks.
\newblock {\em Journal of Water Resources Planning and Management},
  134(6):516--526, 2008.

\bibitem{krause2009simultaneous}
Andreas Krause, Ram Rajagopal, Anupam Gupta, and Carlos Guestrin.
\newblock Simultaneous placement and scheduling of sensors.
\newblock In {\em 2009 International Conference on Information Processing in
  Sensor Networks}, pages 181--192. IEEE, 2009.

\bibitem{lee2010submodular}
Jon Lee, Maxim Sviridenko, and Jan Vondr{\'a}k.
\newblock Submodular maximization over multiple matroids via generalized
  exchange properties.
\newblock {\em Mathematics of Operations Research}, 35(4):795--806, 2010.

\bibitem{nemhauser1978}
George~L Nemhauser, Laurence~A Wolsey, and Marshall~L Fisher.
\newblock An analysis of approximations for maximizing submodular set
  functions—\rom{1}.
\newblock {\em Mathematical programming}, 14(1):265--294, 1978.

\bibitem{oxley2006matroid}
James~G Oxley.
\newblock {\em Matroid theory}, volume~3.
\newblock Oxford University Press, USA, 2006.

\bibitem{singer2018optimization}
Yaron Singer and Avinatan Hassidim.
\newblock Optimization for approximate submodularity.
\newblock {\em Advances in Neural Information Processing Systems}, 31, 2018.

\bibitem{singla2016noisy}
Adish Singla, Sebastian Tschiatschek, and Andreas Krause.
\newblock Noisy submodular maximization via adaptive sampling with applications
  to crowdsourced image collection summarization.
\newblock In {\em Thirtieth AAAI Conference on Artificial Intelligence}, 2016.

\bibitem{sviridenko2004note}
Maxim Sviridenko.
\newblock A note on maximizing a submodular set function subject to a knapsack
  constraint.
\newblock {\em Operations Research Letters}, 32(1):41--43, 2004.

\bibitem{thoma2009near}
Marisa Thoma, Hong Cheng, Arthur Gretton, Jiawei Han, Hans-Peter Kriegel, Alex
  Smola, Le~Song, Philip~S Yu, Xifeng Yan, and Karsten Borgwardt.
\newblock Near-optimal supervised feature selection among frequent subgraphs.
\newblock In {\em Proceedings of the 2009 SIAM International Conference on Data
  Mining}, pages 1076--1087. SIAM, 2009.

\bibitem{roman2018high}
Roman Vershynin.
\newblock {\em High-dimensional probability: An introduction with applications
  in data science}.
\newblock Cambridge University Press, 2018.

\bibitem{vondrak2008optimal}
Jan Vondr{\'a}k.
\newblock Optimal approximation for the submodular welfare problem in the value
  oracle model.
\newblock In {\em Proceedings of the fortieth annual ACM symposium on Theory of
  computing}, pages 67--74, 2008.

\bibitem{vondrak2011submodular}
Jan Vondr{\'a}k, Chandra Chekuri, and Rico Zenklusen.
\newblock Submodular function maximization via the multilinear relaxation and
  contention resolution schemes.
\newblock In {\em Proceedings of the forty-third annual ACM symposium on Theory
  of computing}, pages 783--792, 2011.

\bibitem{welsh2010matroid}
Dominic~JA Welsh.
\newblock {\em Matroid theory}.
\newblock Courier Corporation, 2010.

\bibitem{wilson1973introduction}
Robin~J Wilson.
\newblock An introduction to matroid theory.
\newblock {\em The American Mathematical Monthly}, 80(5):500--525, 1973.

\end{thebibliography}

\newpage
\appendix
\section{The Invalid Result for General Matroid Constraints in~\cite{singer2018optimization}}
\label{sec:invalid}
In this section, we recall Algorithm~\ref{alg:wrong_matroid} (\cite[Algorithm 2]{singer2018optimization}) and its analysis Theorem~\ref{thm:wrong_matroid} (\cite[Theorem  4.2]{singer2018optimization}) proposed by~\cite{singer2018optimization} for monotone submodular maximization with noisy oracles under general matroid constraints.
We argue that Algorithm~\ref{alg:wrong_matroid} fails to obtain the approximation guarantee claimed in Theorem~\ref{thm:wrong_matroid}.

For a set $S\subseteq N$, a bundle $\boldsymbol{b}\subseteq N$ and an intersection of matroids $\mathcal{F}$, the \emph{mean value}, \emph{noisy mean value} and \emph{mean marginal contribution} of $\boldsymbol{b}$ given $S$ are, respectively:
\begin{align*}
   & F(S\cup\boldsymbol{b}) := \mathbb{E}_{\boldsymbol{z}\sim\mathcal{B}_S(\boldsymbol{b})}\left[f(S\cup\boldsymbol{z})\right]\\
  & \Tilde{F}(S\cup\boldsymbol{b}) := \mathbb{E}_{\boldsymbol{z}\sim\mathcal{B}_S(\boldsymbol{b})}\left[\Tilde{f}(S\cup\boldsymbol{z})\right]\\
  & F_{S}(\boldsymbol{x}) := \mathbb{E}_{\boldsymbol{z}\sim\mathcal{B}_S(\boldsymbol{b})}\left[f_S(\boldsymbol{b})\right] 
\end{align*}
where $\mathcal{B}_S(\boldsymbol{b})$ represents the ball around $\boldsymbol{b}$, i.e., $\mathcal{B}_S(\boldsymbol{b}) = \{\boldsymbol{b}-x_i+x_j\in \mathcal{F}:x_i\in\boldsymbol{b}, x_j\notin S\cup\boldsymbol{x}\}$.

\begin{algorithm}[htp!]
	\caption{\textsc{SM-Matroid-Greedy}} 
	\label{alg:wrong_matroid}
    \DontPrintSemicolon
    \SetNoFillComment
    \SetKwInOut{Input}{Input}
    \Input{intersection of matroids $\mathcal{F}$, precision $\varepsilon > 0$, $c\leftarrow \frac{56}{\varepsilon}$}
	$S\leftarrow \varnothing$, $X \leftarrow N$\;
	\While{$X \neq S$}{
		$X\leftarrow X\backslash\{\boldsymbol{x}:S\cup \boldsymbol{x}\notin \mathcal{F}\}$\;
		$\boldsymbol{x}\leftarrow \arg\max_{\boldsymbol{b}:|\boldsymbol{b}|=c} \Tilde{F}(S\cup \boldsymbol{b})$\;
		$\hat{\boldsymbol{x}} \leftarrow \arg\max_{\boldsymbol{z}\in \mathcal{B}_S(\boldsymbol{x})}\Tilde{f}(S\cup \boldsymbol{z})$\;
		$S \leftarrow S \cup \hat{\boldsymbol{x}}$\;}
\Return $S$
\end{algorithm}
Algorithm~\ref{alg:wrong_matroid} is a variant of the standard greedy algorithm, which at every iteration adds a bundle $\hat{\boldsymbol{x}}$ of size $c=\Theta(1/\varepsilon)$ instead of a single element.
In each iteration, the algorithm updates the set $X$ of candidate elements (Line 3) in order to obtain a feasible set $S\in \mathcal{F}$.
Then it selects a bundle $\boldsymbol{x}\in\arg\max_{\boldsymbol{b}:|\boldsymbol{b}|=c} \Tilde{F}(S\cup \boldsymbol{b})$ with the largest noisy mean value $\Tilde{F}(S\cup\boldsymbol{b})$ (Line 4).
Finally, Algorithm~\ref{alg:wrong_matroid} evaluates all possible bundles $\boldsymbol{z}\in\mathcal{B}_S(\boldsymbol{x})$ in the ball around $\boldsymbol{x}$ identified at last step and incorporates the one whose noisy value $\Tilde{f}(S\cup \boldsymbol{z})$ is largest (Line 5 and 6).

\cite{singer2018optimization} claim that Algorithm~\ref{alg:wrong_matroid} can achieve the following approximation performance.
\begin{theorem}[\textbf{Wrong theorem~\cite[Theorem 4.2]{singer2018optimization}}]
\label{thm:wrong_matroid}
Let $\mathcal{F}$ denote the intersection of $P\geq 1$ matroids with rank $r\in\Omega\left(\frac{1}{\varepsilon^2}\right)\cap\sqrt{\log n}$ on the ground set $N$, and $f:2^N\rightarrow \mathbb{R}$ be a non-negative monotone submodular function.
Then with probability $1-o(1)$ the \textsc{SM-Matroid-Greedy} algorithm returns a set $S\in\mathcal{F}$ s.t.:
$$f(S)\geq \frac{1-\varepsilon}{P+1}.$$
\end{theorem}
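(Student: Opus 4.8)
The plan is to follow the template of the bundle-greedy-under-noise analysis that \cite{singer2018optimization} use for cardinality constraints, but to replace the $1-1/e$ greedy argument by the classical $\frac{1}{P+1}$-analysis of the greedy algorithm on an intersection of $P$ matroids. Concretely, I would try to establish five ingredients: (i) a noise-smoothing lemma saying that the noisy mean value $\widetilde F(S\cup\boldsymbol b)$ lies within a $(1\pm\varepsilon)$ factor of the true mean value $F(S\cup\boldsymbol b)$ for every set $S$ the algorithm visits and every bundle $\boldsymbol b$ of size $c=\Theta(1/\varepsilon)$; (ii) that Line 4 therefore picks a bundle $\boldsymbol x$ with $F(S\cup\boldsymbol x)\ge(1-\varepsilon)\max_{|\boldsymbol b|=c}F(S\cup\boldsymbol b)$; (iii) that the cleanup in Line 5 recovers a bundle $\hat{\boldsymbol x}$ whose \emph{true} value $f(S\cup\hat{\boldsymbol x})$ is close to $F(S\cup\boldsymbol x)$; (iv) a matroid-intersection exchange/charging argument controlling $\mathrm{OPT}$ in terms of the mean marginal contributions the greedy locks in; and (v) assembling these into $f(S)\ge\frac{1-O(\varepsilon)}{P+1}\max_{S'\in\mathcal F}f(S')$.

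For (i), the generalized exponential tail makes each $\xi_T-1$ sub-exponential (Claim~\ref{clm:subexp}), so $\widetilde F(S\cup\boldsymbol b)$, being an \emph{average} of $\widetilde f(S\cup\boldsymbol z)$ over $\boldsymbol z$ in the ball $\mathcal B_S(\boldsymbol b)$, concentrates around $F(S\cup\boldsymbol b)$ by Bernstein's inequality (Lemma~\ref{lem:bern_exp}), \emph{provided that ball is large}, say of size $\mathrm{poly}(n)$; a union bound over the at most $n^{O(c)}$ relevant pairs $(S,\boldsymbol b)$ then gives (i) and hence (ii). For (iii) the intended argument is: $\hat{\boldsymbol x}=\arg\max_{\boldsymbol z\in\mathcal B_S(\boldsymbol x)}\widetilde f(S\cup\boldsymbol z)$ has noisy value at least the noisy mean, i.e.\ at least $(1-\varepsilon)F(S\cup\boldsymbol x)$, so if a $(1-O(\varepsilon))$-fraction of the bundles in $\mathcal B_S(\boldsymbol x)$ already have true value $\ge(1-O(\varepsilon))F(S\cup\boldsymbol x)$, then an extreme-value comparison (the overall largest noise multiplier in the ball lands on a ``good'' bundle with probability $1-O(\varepsilon)$) yields $f(S\cup\hat{\boldsymbol x})\ge(1-O(\varepsilon))F(S\cup\boldsymbol x)$. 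For cardinality constraints this smoothness of the ball is essentially free: every single swap $\boldsymbol x-x_i+x_j$ is feasible, $|\mathcal B_S(\boldsymbol x)|=\Theta(n)$, and submodularity forces both the ``average'' element of $\boldsymbol x$ and the ``average'' new element $x_j$ to have small marginal, so most swaps barely change $f$. Step (iv) would be the standard argument: exchange properties of a $P$-fold matroid intersection let one partition $O^\star$ across the greedy bundles so that each bundle is charged at most $cP$ elements of $O^\star$, each of whose marginal at the time of selection is dominated by the mean marginal contribution the greedy secured, and telescoping gives $f(O^\star)\le(1+O(\varepsilon))(P+1)f(S)$.

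The step I expect to be the main obstacle — and where, in fact, the argument cannot be repaired — is reconciling (i) and (iii) with the matroid structure. Unlike the cardinality case, the ball $\mathcal B_S(\boldsymbol b)=\{\boldsymbol b-x_i+x_j\in\mathcal F:\ x_i\in\boldsymbol b,\ x_j\notin S\cup\boldsymbol b\}$ contains only the \emph{feasible} single swaps, and for an intersection of matroids this set can be tiny and badly skewed in value: when $S\cup\boldsymbol b$ is near-maximal in, say, an intersection of partition matroids, the only legal replacements $x_j$ for a given $x_i$ lie in a narrow slice of $N$, so $|\mathcal B_S(\boldsymbol b)|$ can be as small as $O(1)$ or $O(P)$ and can consist entirely of low-marginal swaps. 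Once the ball is small, the ``average over the ball'' defining $\widetilde F$ no longer concentrates, so Line 4 is unreliable; and even granting a good $\boldsymbol x$, Line 5 then selects the bundle carrying the largest noise multiplier among a handful of candidates with wildly different true values, so $f(S\cup\hat{\boldsymbol x})$ can be a vanishing fraction of $F(S\cup\boldsymbol x)$. In the worst case the whole procedure degenerates into the naive noisy greedy, which \cite{horel2016maximization} show can be arbitrarily bad. Consequently the $\frac{1-\varepsilon}{P+1}$ guarantee of Theorem~\ref{thm:wrong_matroid} cannot be obtained along these lines — and, as we show in the remainder of this section, it is in fact false (we also note that the stated inequality is missing the factor $\max_{S'\in\mathcal F}f(S')$ on its right-hand side).
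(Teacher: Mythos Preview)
Your diagnosis is correct and matches the paper's: the theorem is labeled ``Wrong theorem'' precisely because the ball-averaging argument breaks once feasibility restricts $\mathcal{B}_S(\boldsymbol{b})$. The paper pinpoints the same failure you do, though more narrowly: it singles out the inequality $F_S(\boldsymbol{b}) \ge (1-\varepsilon) f_S(\boldsymbol{b})$ (\cite[Lemma~2.2]{singer2018optimization}) as the step that does not survive the passage from cardinality to matroid constraints, because the matroid can force all feasible neighbors of a high-margin bundle to have low margin. Your broader framing --- that both the concentration in Line~4 and the extreme-value step in Line~5 are undermined when the ball is small or value-skewed --- is compatible with this and arguably more thorough.

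The substantive thing the paper adds beyond your heuristic argument is a concrete counterexample (Claim~\ref{clm:counter}): a partition matroid with $r$ parts, one of which contains a single high-value element $e^\star$ and many unit-value elements, together with a two-atom noise distribution, for which \textsc{SM-Matroid-Greedy} provably fails to include $e^\star$ with constant probability and hence achieves no constant approximation ratio with high probability. Since your proposal ends by promising exactly such a demonstration, you are on the same track; just be aware that the paper's refutation is by explicit construction rather than by the informal ``ball can be $O(1)$'' argument alone.
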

\noindent
Now we argue that Theorem~\ref{thm:wrong_matroid} does not hold even under a single matroid constraint.
\cite{singer2018optimization} employed~\cite[Lemma 3.4]{singer2018optimization} to prove Theorem~\ref{thm:wrong_matroid}.
However, this lemma cannot be generalized beyond cardinality constraints to matroid constraints.
This is because \cite[Lemma 3.4]{singer2018optimization} is based on the following fact: for any bundle $\boldsymbol{b}$ of size $1/\varepsilon$,
\begin{equation*}
    F_S(\boldsymbol{b}) \geq (1-\varepsilon)f_S(\boldsymbol{b}).  \tag*{(\cite[Lemma 2.2]{singer2018optimization})}
\end{equation*} 
%
This fact results in \cite[Corollary 2.3]{singer2018optimization}, which appears as the premise of \cite[Claim 3.1]{singer2018optimization} to prove \cite[Lemma 3.4]{singer2018optimization}.
Under a cardinality constraint, the ball $\mathcal{B}_S(\boldsymbol{b})$ contains all the neighbors of a bundle $\boldsymbol{b}$, which differ from $\boldsymbol{b}$ by only one element.
A bundle $\boldsymbol{b}$ with good margin is likely to be surrounded by neighbors in $\mathcal{B}_S(\boldsymbol{b})$ which also have large margins on average, and the fact that $F_S(\boldsymbol{b}) \geq (1-\varepsilon)f_S(\boldsymbol{b})$ naturally holds. 
However, when we consider a matroid that restricts feasible neighbors of $\boldsymbol{b}$ to be those with small margins only, $F_S(\boldsymbol{b})$ can be far less than $f_S(\boldsymbol{b})$. 
%
%
%
%
%
%
%

We provide a concrete example to show that  Algorithm~\ref{alg:wrong_matroid} cannot obtain any with-high-probability constant approximation under a matroid constraint.
\begin{claim}
\label{clm:counter}
For any constant $m$, there exists a partition matroid $\mathcal{M}$ with rank $r\in \Omega\left(\frac{1}{\varepsilon^2}\right)\cap O(\sqrt{\log n})$ for which Algorithm~\ref{alg:wrong_matroid} fails to achieve an approximate ratio better than $1/m$ with probability at least $1/3$.
\end{claim}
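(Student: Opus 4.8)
\textbf{Proof proposal for Claim~\ref{clm:counter}.}

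The plan is to construct an explicit instance---a monotone submodular function together with a partition matroid---on which Algorithm~\ref{alg:wrong_matroid} is forced into a bad trajectory in its very first iteration, and this is the iteration that determines everything because of how the partition matroid kills candidates. First I would fix the rank $r$ somewhere in $\Omega(\varepsilon^{-2})\cap O(\sqrt{\log n})$ and take the bundle size $c=\Theta(1/\varepsilon)$ as prescribed by the algorithm. I would partition the ground set $N$ into two groups. One group $G_{\mathrm{bait}}$ contains a small collection of ``bait'' elements that individually (and in a bundle of size $c$) have enormous marginal value, but the partition matroid will cap how many bait elements may be selected---say, essentially all of them must come from this bait group, so choosing any bait bundle $\boldsymbol{x}$ immediately exhausts the matroid budget on that part. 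The second group $G_{\mathrm{good}}$ contains the elements that actually constitute the true optimum $O^\star$; these have moderate individual value but, being spread across a different part of the partition matroid, can all be chosen together and together achieve value $f(O^\star)$ at least $m$ times larger than anything reachable through the bait.

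The key steps, in order: (i) Define $f$ so that a size-$c$ bundle drawn entirely from $G_{\mathrm{bait}}$ has the strictly largest mean value $\tilde F(\varnothing\cup\boldsymbol{b})$ among all size-$c$ bundles, even after averaging over the ball $\mathcal{B}_\varnothing(\boldsymbol{b})$---here I must use exactly the phenomenon flagged in the discussion preceding the claim: the partition matroid restricts the feasible neighbors in $\mathcal{B}_S(\boldsymbol{b})$ to be low-margin elements, so $F_S(\boldsymbol{b})$ for the \emph{good} bundle is dragged far below $f_S(\boldsymbol{b})$, whereas for the bait bundle the neighbors stay high-value, so the algorithm's Line~4 selects the bait bundle $\boldsymbol{x}\subseteq G_{\mathrm{bait}}$. (ii) Argue that $\tilde f$ concentrates enough (generalized exponential tail, $n$ large) that with probability at least $1/3$ the noisy comparisons in Lines~4--5 do not overturn the ordering; this is where I invoke that the noise is i.i.d.\ with a well-behaved tail so that a constant-probability event suffices---indeed we only need the bad ordering to survive, not to hold w.h.p. (iii) Observe that once $\hat{\boldsymbol{x}}\subseteq G_{\mathrm{bait}}$ is committed to $S$, Line~3 in the next iteration removes (via the partition matroid) all of $G_{\mathrm{good}}$, or at least enough of it that the remaining reachable value is at most $f(O^\star)/m$. (iv) Conclude $f(S)\le f(O^\star)/m$ on this constant-probability event, contradicting any claimed $1/m'$-approximation for $m'<m$ holding with probability $1-o(1)$; since $m$ was arbitrary, no with-high-probability constant approximation is possible.

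The main obstacle I anticipate is step (i): making the \emph{mean} values (not just the raw values) come out in the desired order. The mean value $F(S\cup\boldsymbol{b})$ averages $f$ over the whole ball $\mathcal{B}_S(\boldsymbol{b})$ of single-swap neighbors that remain feasible, so I need the partition matroid to be arranged so that every feasible swap out of the good bundle lands on a low-value element (forcing $F$ small for the good bundle) while swaps out of the bait bundle stay within the high-value bait part (keeping $F$ large). Designing the partition classes and the values of $f$ on the ``filler'' elements to achieve both simultaneously, while keeping $f$ genuinely monotone submodular (e.g.\ building $f$ as a weighted coverage or a sum of concave-of-cardinality functions on disjoint blocks), is the delicate part; a coverage-function construction with carefully chosen element weights and a two-part partition matroid whose capacity on the bait part equals the bait-group size should suffice. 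A secondary, more routine obstacle is pinning down the exact ranges of $r$, $c=56/\varepsilon$, and $n$ so that the claimed regime $r\in\Omega(\varepsilon^{-2})\cap O(\sqrt{\log n})$ is nonempty and the tail concentration gives the constant $1/3$; this is bookkeeping once the combinatorial skeleton is in place.
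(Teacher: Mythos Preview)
Your approach targets a different failure point than the paper and, as written, has a structural gap that seems fatal.

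\textbf{The inconsistency in steps (i)+(iii).} You place $G_{\mathrm{bait}}$ and $G_{\mathrm{good}}$ in \emph{different} parts of the partition matroid (``being spread across a different part\ldots''), yet in step~(iii) you claim that committing $\hat{\boldsymbol{x}}\subseteq G_{\mathrm{bait}}$ causes Line~3 to remove $G_{\mathrm{good}}$. In a partition matroid, Line~3 removes an element $e$ only when $S$ already saturates $e$'s part; if bait and good occupy distinct parts, picking bait cannot eliminate a single good element, and the algorithm is free to collect all of $G_{\mathrm{good}}$ in later rounds. If instead you put bait and good in the \emph{same} part (capacity $c$), then for ``bait blocks good'' you need the bait bundle to win at Line~4, i.e.\ $F(\text{bait})>F(\text{good})$. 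A short computation with a coverage-type $f$ (bait elements overlapping, good elements disjoint) and many fillers gives $F(\text{bait})\approx B$ and $F(\text{good})\approx (c-1)V$, so you need $B>(c-1)V$; but then $f(\text{good})=cV< \tfrac{c}{c-1}B$, and the gap between the algorithm's value $B$ and the optimum $cV$ is only $1-O(1/c)=1-O(\varepsilon)$, not a factor $m$. I do not see how your ball-dragging mechanism at Line~4 can produce an arbitrary constant-factor loss.

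\textbf{How the paper does it.} The paper's construction is far simpler and attacks Line~5 rather than Line~4. The function is \emph{modular}: parts $C_1,\dots,C_{r-1}$ each contain a single value-$0$ element, and $C_r$ (capacity~1) contains one special element $e^\star$ of value $m$ together with $n-r$ elements of value~$1$. The optimum is $m$; any solution missing $e^\star$ has value $1$. The paper then \emph{chooses} the noise distribution---a two-point distribution returning $m$ with probability $\delta=\Theta(1/n)$ and $1$ otherwise---so that in the ball $\mathcal{B}_\varnothing(\boldsymbol{x})$ the $O(cr)$ bundles containing $e^\star$ all have $\tilde f$-value $m$ (true value $m$, noise $1$) with probability $(1-\delta)^{O(cr)}\approx 1$, while among the $\Omega(n)$ bundles \emph{without} $e^\star$, at least one has noise multiplier $m$ (hence $\tilde f$-value $m$) with probability $1-(1-\delta)^{n-r}\gtrsim 1-e^{-1/2}$. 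On the intersection of these events the algorithm's $\arg\max$ at Line~5 picks a bundle missing $e^\star$, and since $C_r$ is then saturated, $e^\star$ is gone forever. Note two differences from your plan: the noise distribution is part of the adversary's construction (you treat it as a nuisance to be concentrated away), and the mechanism is ``few good bundles vs.\ many bad bundles in the ball, so noise inflates a bad one''---not ``the ball drags $F$ below $f$.''
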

\begin{proof}
Our plan is to construct a partition matroid for which Algorithm~\ref{alg:wrong_matroid} performs badly.
\begin{definition}[\textbf{Partition matroid}]
\label{def:partition}
A matroid $\mathcal{M} = (N, \mathcal{I}(\mathcal{M}))$ is a partition matroid if $N$ is partitioned into $k$ disjoint sets $C_1, C_2, \dots, C_k$ and
$$\mathcal{I}(\mathcal{M}) = \left\{S \subseteq N: |C_i \cap S| \leq d_i \text{\ for\ } i = 1, 2,\dots,k\right\}.$$
\end{definition}
\noindent
Let there be $r$ disjoint sets, where $r\in \Omega\left(\frac{1}{\varepsilon^2}\right)\cap O(\sqrt{\log n})$.
For all $i\in[r-1]$, $C_i$ contains only one element with value $0$.
The set $C_r$ with $|C_r| = n-r+1$ contains most of the elements.
There is a special element $e^\star$ in $C_r$ with $f(e^\star) = m$, and other elements in $C_r$ are all attributed with value $1$.
An independent set of matroid $\mathcal{M}$ contains at most one element from each disjoint set. That is, $d_1 = d_2 = \cdots = d_r = 1$.
The noise distribution will return $m$ with probability $\delta = \frac{1}{2(n-r)}$ and $1$ otherwise.

Now we apply Algorithm~\ref{alg:wrong_matroid} to maximize $f$ under this matroid constraint with a noisy oracle.
Since the elements in $C_i\ (i\in[r-1])$ are all with value 0, we focus on the bundles $\boldsymbol{b}$ that consist of an element in $C_r$ and some other elements $\boldsymbol{b}_{-r}$.
%
%
%
Suppose Algorithm~\ref{alg:wrong_matroid} selects one of them as $\boldsymbol{x}$ at Line 4.
Feasible bundles in $\mathcal{B}_S(\boldsymbol{x})$ containing $e^{\star}$ can only be obtained by changing one of the elements in $\boldsymbol{x}_{-r}$.
Thus there are at most $(c-1)(r-c)$ such bundles.
With probability at least 
$\left(1-\delta\right)^{(c-1)(r-c)}
$,
the noise multipliers on such bundles are all $1$.

On the other hand, at least $n-r$ bundles in $\mathcal{B}_S(\boldsymbol{x})$ do not include $e^\star$.
With probability at least $1-(1-\delta)^{n-r}$, there exists one of these bundles with noise multiplier $m$.
Thus Algorithm~\ref{alg:wrong_matroid} selects a bundle that does not include $e^\star$ at Line 5 with probability at least
$$\left(1-\delta\right)^{(c-1)(r-c)}\left[ 1-\left(1-\delta\right)^{n-r}\right] \geq \left(1-\frac{(c-1)(r-c)}{2(n-r)}\right)\left(1-\frac{1}{\sqrt{e}}\right).$$
When $n$ is sufficiently large, the probability above is at least $1/3$.
\end{proof}

\noindent
By the proof of Claim~\ref{clm:counter}, we have an intuition that the problem of submodular maximization under a partition matroid with any rank may degenerate to that under $1$-cardinality constraint.
%
%
The case of $r=1$ is rather difficult for Problem~\ref{problem:noisy_submodular}, and the only known result is an algorithm by~\cite{singer2018optimization} which achieves a $1/2$-approximation guarantee in expectation for this case. 
To the best of our knowledge, there is no with-high-probability result before.

\section{More Related Work}
Besides cardinality and single matroid constraints, more complex constraints have also been considered in the context of submodular optimization before.
\cite{fisher1978analysis} present an algorithm achieving a $1/(k+1)$-approximation for monotone submodular maximization under $k$ matroid constraints.
\cite{lee2010submodular} subsequently improve the approximation guarantee to $1/(k+\varepsilon)$ for $k>2$.
\cite{sviridenko2004note} provides a $\left(1-\frac{1}{e}\right)$-approximation under knapsack constraints.
With the multilinear relaxation technique, \cite{chekuri2014submodular} obtain a
$0.38/k$-approximation for maximizing a monotone submodular function subject to $k$ matroids
and a constant number of knapsack constraints.

\section{Approximation algorithm for strongly base-orderable matroid constraints}
\label{app:matroid3}
In this section, we consider a special family of matroids, called \emph{strongly base-orderable matroids}.
\begin{definition}[\textbf{Strongly base-orderable matroid}]
\label{def:strongly}
A matroid $\mathcal{M}$ is strongly base-orderable if given any two bases $B_1$ and $B_2$, there is a
bijection $\sigma: B_1 \to B_2$ such that for any $X \subseteq B_1$, $(B_1 \setminus X) \cup \sigma(X)$ is a basis, and $(B_2 \setminus \sigma(X)) \cup X$ is a basis.
\end{definition}
\noindent
As is evident from Definition~\ref{def:strongly}, the cardinality constraint which we discuss in Section~\ref{sec:cardinality} is a special case of strongly base-orderable matroid. 
Moreover, this family of matroids includes many typical matroids as well, such as partition matroids (Definition~\ref{def:partition}) and transversal matroids.
%

Next we present an algorithm (Algorithm~\ref{alg:local_search_monotone5}) and its analysis (Theorem~\ref{thm:matroid3}) that achieves near-tight approximation guarantees subject to strongly base-orderable matroids with rank $r\in \Omega\left(n^{\frac{1}{3}}\right)$.

\begin{theorem}[\textbf{Algorithmic results for cardinality constraints when  $r\in  \Omega\left(n^{\frac{1}{3}}\right) $}]
\label{thm:matroid3}
    Let $\varepsilon>0$ and assume $n \in \Omega\left(\frac{1}{\varepsilon^4}\right)$ is sufficiently large. 
    For any $r\in  \Omega\left(n^{\frac{1}{3}}\right)$, there exists an algorithm that returns a $\left(1-\frac{1}{e}-O(\varepsilon)\right)$-approximation for  Problem~\ref{problem:noisy_submodular} under a strongly base-orderable matroid constraint $\mathcal{I}(\mathcal{M})$, with probability at least  $1-O\left(\frac{1}{n}\right)$ and query complexity at most  $O(n^7\varepsilon^{-1})$ to $\Tilde{f}$.
\end{theorem}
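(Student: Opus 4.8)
The plan is to mimic the large-rank cardinality algorithm (Algorithm~\ref{alg:local_search_monotone2} behind Theorem~\ref{thm:cardi2}), replacing the ``arbitrary redundant set $H$'' by a set living inside a basis and exploiting the strong exchange property of Definition~\ref{def:strongly} to avoid the factor-$\frac12$ loss incurred in Theorem~\ref{thm:matroid2}. Concretely, Algorithm~\ref{alg:local_search_monotone5} first fixes an arbitrary basis $B_0$ of $\mathcal{M}$; then, for each of $\Theta(n)$ independent trials $t$, it draws a uniformly random subset $H_t\subseteq B_0$ with $|H_t|=\lceil 3\ln n\rceil$, builds an $\left(\frac{\varepsilon}{4r\ln r},\frac{3}{n^6},\mathcal{I}_{H_t}(\mathcal{M})\right)$-approximation $\widehat{\varphi_{h_{H_t}}}$ of $\varphi_{h_{H_t}}$ via Lemma~\ref{lem:hat_varphi_h4}, runs $S_t\leftarrow\mathtt{NLS}\!\left(\widehat{\varphi_{h_{H_t}}},\mathcal{I}_{H_t}(\mathcal{M}),\frac{\varepsilon}{4r\ln r}\right)$, and records the candidate $S_t\cup H_t$ (feasible in $\mathcal{I}(\mathcal{M})$ by Definition~\ref{def:contraction_matroid}). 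Finally it outputs the candidate maximizing the noisy proxy $\widetilde{f_0}$. The choice $|H_t|=\Theta(\ln n)$ makes the averaging set of $h_{H_t}$ of size $2^{|H_t|}\ge n^3$, which is exactly what Lemma~\ref{lem:hat_varphi_h4} needs for robustness, while $|H_t|/r=O(\ln n/n^{1/3})=O(\varepsilon)$ for $n$ large because $r\in\Omega(n^{1/3})$.

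The technical heart is the analogue of Lemma~\ref{clm:card2_split} (and Lemma~\ref{clm:matroid2_split}), but \emph{without} the split: I claim that for a suitable trial, $\max_{S\in\mathcal{I}_{H}(\mathcal{M})}h_{H}(S)\ge\left(1-\frac{2|H|}{r}\right)f(O^\star)$, where $O^\star$ is an optimal basis. To see this, apply Definition~\ref{def:strongly} to the bases $O^\star$ and $B_0$ to obtain a bijection $\sigma:B_0\to O^\star$, which we may take to be the identity on $B_0\cap O^\star$, such that $(O^\star\setminus\sigma(X))\cup X$ is a basis for every $X\subseteq B_0$. Taking $X=H$ and $O':=O^\star\setminus\sigma(H)$, the set $O'\cup H$ is a basis and $O'\cap H=\varnothing$, so $O'\in\mathcal{I}_{H}(\mathcal{M})$; moreover $h_{H}(O')\ge f(O')=f(O^\star\setminus\sigma(H))$ by monotonicity of $f$ (every term $f(O'\cup H_j)\ge f(O')$ in Definition~\ref{def:h_2}). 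Since $H$ is a uniformly random $|H|$-subset of $B_0$ and $\sigma$ is a fixed bijection, $\sigma(H)$ is a uniformly random $|H|$-subset of $O^\star$, so the marginal-decomposition argument used in the proof of Lemma~\ref{clm:card2_split} gives $\mathbb{E}[f(O^\star\setminus\sigma(H))]\ge\left(1-\frac{|H|}{r}\right)f(O^\star)$; Markov's inequality applied to the deficit $f(O^\star)-f(O^\star\setminus\sigma(H))$ then shows that with probability at least $\frac12$ a single trial satisfies $f(O^\star\setminus\sigma(H_t))\ge\left(1-\frac{2|H|}{r}\right)f(O^\star)$, so such a \emph{good} trial occurs with probability $1-2^{-\Theta(n)}$ over the $\Theta(n)$ trials.

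Conditioned on a good trial $t$ in which the call to $\mathtt{NLS}$ succeeds, Corollary~\ref{cor:main_nls} gives $f(S_t\cup H_t)\ge h_{H_t}(S_t)\ge\left(1-\frac1e-\varepsilon\right)\max_{S\in\mathcal{I}_{H_t}(\mathcal{M})}h_{H_t}(S)\ge\left(1-\frac1e-\varepsilon\right)\left(1-\frac{2|H_t|}{r}\right)f(O^\star)=\left(1-\frac1e-O(\varepsilon)\right)f(O^\star)$, using the monotonicity of $f$ for the first inequality and $|H_t|/r=O(\varepsilon)$. The final $\widetilde{f_0}$-comparison over the $\Theta(n)$ candidates costs only an $O(\varepsilon)$ factor, exactly as in the comparison phases of Algorithms~\ref{alg:local_search_monotone3} and~\ref{alg:local_search_monotone4}: invoking Lemma~\ref{lmm:matroid1_comp} with $\delta=\Theta(n^{-2})$ for every candidate and Lemma~\ref{lem:bounds_f0} (all candidates have $\Omega(r)=\Omega(n^{1/3})$ elements, so $f_0$ approximates $f$ up to a $1-O(\varepsilon)$ factor) shows the returned set $S_R$ satisfies $f(S_R)\ge(1-O(\varepsilon))\max_t f(S_t\cup H_t)$. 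For the query complexity, each trial makes $O(n^6\varepsilon^{-1})$ queries to $\Tilde{f}$ by the same counting as in Theorem~\ref{thm:matroid2} (using Corollary~\ref{cor:main_nls} and Lemma~\ref{lem:hat_varphi_h4}), so the $\Theta(n)$ trials give $O(n^7\varepsilon^{-1})$ in total, the comparison queries being dominated; a union bound over the per-trial $\mathtt{NLS}$ failures and the comparison failures, together with the $2^{-\Theta(n)}$ chance of no good trial, keeps the overall failure probability at $O(1/n)$, matching Theorem~\ref{thm:matroid3}.

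The step I expect to be the main obstacle is the key lemma of the second paragraph --- transporting a low-marginal vertex-subset of the \emph{unknown} optimum $O^\star$ back to a usable subset $H$ of the fixed basis $B_0$ while preserving feasibility. Strong base-orderability is precisely what makes this possible: unlike Brualdi's lemma (Fact~\ref{fact:Brualdi}), it provides a single bijection whose exchange guarantee is closed under taking \emph{arbitrary} subsets $X$, so that removing $\sigma(H)$ from $O^\star$ and completing back by exactly $H$ stays feasible; general matroids lack this, which is exactly why Theorem~\ref{thm:matroid2} must pay the $\frac12$ factor by splitting a basis in two. The remaining care-points are routine but should be checked: that $\sigma$ can be chosen identity on $B_0\cap O^\star$ (so $O'=O^\star\setminus\sigma(H)$ is genuinely disjoint from $H$, hence lies in the contraction matroid), that a single random $H$ only gives a constant success probability (forcing the $\Theta(n)$-fold repetition and hence the extra $n$ in the query bound and the weaker $1-O(1/n)$ success probability), and that the robustness requirement $|H|=\Theta(\ln n)$ of Lemma~\ref{lem:hat_varphi_h4} is compatible with $|H|/r=O(\varepsilon)$, which holds because $r\in\Omega(n^{1/3})=\omega(\ln n)$.
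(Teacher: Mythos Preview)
Your proposal is correct and rests on the same core idea as the paper --- leverage the strong exchange bijection $\sigma:B_0\to O^\star$ of Definition~\ref{def:strongly} so that $O^\star\setminus\sigma(H)$ lies in the contraction matroid $\mathcal{I}_H(\mathcal{M})$, then argue $f(O^\star\setminus\sigma(H))$ is nearly $f(O^\star)$ for a ``typical'' $H$, run \texttt{NLS} on several $H$'s, and compare via $\widetilde{f_0}$ --- but the mechanism for producing the $H$'s differs. The paper's Algorithm~\ref{alg:local_search_monotone5} deterministically \emph{partitions} $B_0$ into $\lfloor r/l\rfloor$ blocks $H_1,\dots,H_{\lfloor r/l\rfloor}$ with $l=\lceil 3\ln n\rceil$; Lemma~\ref{clm:app_matroid3_split} then bounds the \emph{average} $\mathbb{E}_{t\sim\mathcal{U}}\bigl[\max_{S\in\mathcal{I}_{H_t}(\mathcal{M})}h_{H_t}(S)\bigr]\ge\bigl(1-\frac{l+1}{r}\bigr)f(O^\star)$ directly (the analogue of your second paragraph, without Markov), and Lemma~\ref{clm:app_matroid3_comp} shows that $\widetilde{f_0}$-comparison over these candidates recovers a $(1-O(\varepsilon))$-fraction of this average. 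You instead draw $\Theta(n)$ i.i.d.\ random subsets $H_t\subseteq B_0$ and use Markov on the deficit plus repetition to guarantee a good trial with probability $1-2^{-\Theta(n)}$. Both routes are valid; the paper's partition uses fewer local-search calls (at most $r/l\le n/(3\ln n)$ versus your $\Theta(n)$) and bypasses the Markov step, while your random-sampling variant is conceptually a bit more modular and still lands on the same asymptotic query bound $O(n^7\varepsilon^{-1})$ and success probability $1-O(1/n)$. Your flagged care-point that $\sigma$ can be taken to be the identity on $B_0\cap O^\star$ (needed so that $O'\cap H=\varnothing$) is genuine and is equally needed (though glossed over) in the paper's proof.
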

\noindent
Compared with Theorem \ref{thm:matroid2}, Theorem~\ref{thm:matroid3} improves the approximate ratio to $\left(1-\frac{1}{e}-O(\varepsilon)\right)$ for the strongly base-orderable matroids, which have stronger exchangeable structures.

\subsection{Algorithm for Theorem~\ref{thm:matroid3}}

Similar to Algorithm~\ref{alg:local_search_monotone4}, Algorithm~\ref{alg:local_search_monotone5} also contains two phases: a local search phase (Lines 3-5) and a comparison phase (Line 6).
The local search procedure is based on the smoothing surrogate function $h_H$ in Definition \ref{def:h_2}, while the comparison auxiliary function $f_0$ in Definition~\ref{def:f_0} is used at Line 6.
Algorithm~\ref{alg:local_search_monotone4} differs from Algorithm \ref{alg:local_search_monotone5} by running the local search procedure (Algorithm~\ref{alg:local_search}) $\lfloor\frac{r}{l}\rfloor$ times rather than twice, with different smoothing surrogate functions $h_{H_t}$.
\begin{algorithm}[htp!] 
	\caption{Noisy local search under strongly base-orderable matroid constraints} 
	\label{alg:local_search_monotone5}
	\DontPrintSemicolon
    \SetNoFillComment
    \SetKwInOut{Input}{Input}
	\Input{a value oracle to $\Tilde{f}$, rank $r\in  \Omega\left(n^{\frac{1}{3}}\right)$, and $\varepsilon\in (0,1/2)$}
	Let $l\leftarrow \lceil 3\ln n \rceil$\;
     Arbitrarily select a basis $B_0$ and arbitrarily spilt $B_0$ into $\lfloor\frac{r}{l}\rfloor$ parts $H_1,\cdots,H_{\lfloor\frac{r}{l}\rfloor}$ with size $l$ or $l+1$ \;
	\For{$t = 1,\cdots, \lfloor\frac{r}{l}\rfloor$} {
		Let $\widehat{\varphi_{h_{H_t}}}$ be a $(\alpha = \frac{\varepsilon}{4r\ln r},\delta = \frac{3}{n^6},\mathcal{I}_{H_t}(\mathcal{M}))$-approximation of $\varphi_{h_{H_t}}$ as in Lemma \ref{lem:hat_varphi_h4} \;
		$S_t\leftarrow \mathtt{NLS}\left(\widehat{\varphi_{h_{H_t}}},\mathcal{I}_{H_t}(\mathcal{M}),\Delta = \frac{\varepsilon}{4r\ln r}\right)$ \Comment*[r]{Local search phase}
		}
	Let $i^\star \leftarrow \mathop{\arg\max}\limits_{t \in [\lfloor\frac{r}{k}\rfloor]}\widetilde{f_0}(S_t\cup H_t)$ \Comment*[r]{Comparison phase}
	\Return $S_{i^\star}\cup H_{i^\star}$ 
\end{algorithm}

\subsection{Proof of Theorem \ref{thm:matroid3}}
We analyze the query complexity and approximation performance of Algorithm~\ref{alg:local_search_monotone5} to prove Theorem \ref{thm:matroid3}.
To simplify notation, we use $h_t$ to stand for $h_{H_t}$, $\widehat{\varphi_{h_{t}}}$ for  $\widehat{\varphi_{h_{H_t}}}$, and $\mathcal{I}_{t}(\mathcal{M})$ for $\mathcal{I}_{H_t}(r)$ in the analysis.

\paragraph{Query Complexity of Algorithm~\ref{alg:local_search_monotone5}.} 
Similar to the proof of Theorem \ref{thm:matroid2}, 
%
%
%
%
%
Algorithm~\ref{alg:local_search_monotone5} runs the local search procedure $\lfloor r/l\rfloor$ times, and thus the query complexity of Algorithm~\ref{alg:local_search_monotone5} is at most $O({n^7}{\varepsilon}^{-1})$.

\paragraph{Approximation performance analysis of Algorithm~\ref{alg:local_search_monotone5}.}
Similarly, from mononicity of $f$ and Corollary \ref{cor:main_nls}, we have that
\begin{equation}
\label{eq:f_St_cup_base}
f(S_t\cup H_t) \geq \left(1-\frac{1}{e}-\varepsilon\right)\max\limits_{S\subseteq \mathcal{I}_{t}(\mathcal{M})} h_{t}(S)
\end{equation}
holds for all $t\in[\lfloor r/l\rfloor]$ with probability $1-O\left(\frac{r}{n^2}\right)$.
Let $O^{\star} \in \mathop{\arg\max}\limits_{S\in \mathcal{I}(\mathcal{M})}f(S)$ be an optimal solution. 
The following lemma relates $\max\limits_{S\subseteq \mathcal{I}_{t}(\mathcal{M})} h_{t}(S)$ to $f(O^{\star})$.
\begin{lemma}
\label{clm:app_matroid3_split} We have
$$ \mathop{\mathbb{E}}\limits_{t\sim \mathcal{U}}\left[\max\limits_{S\subseteq \mathcal{I}_t(\mathcal{M})} h_t(S)\right]\geq \left(1-\frac{l+1}{r}\right)  f(O^{\star}),$$ where $\mathcal{U}$ is a uniform distribution over $[\lfloor\frac{r}{l}\rfloor]$.
\end{lemma}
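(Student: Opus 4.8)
The plan is to prove this by averaging over a random labeling of the base elements, exploiting the strongly base-orderable structure to guarantee that a single optimal solution can be simultaneously "charged" against all the pieces $H_t$. First I would fix the optimal base $O^\star$ (extend it to a base if it is not already one, using monotonicity) and invoke the strongly base-orderable property (Definition~\ref{def:strongly}) applied to the two bases $B_0$ and $O^\star$: this yields a bijection $\sigma: B_0 \to O^\star$ such that for every $X \subseteq B_0$, $(B_0 \setminus X)\cup \sigma(X)$ and $(O^\star\setminus\sigma(X))\cup X$ are both bases. In particular, taking $X = B_0 \setminus H_t$ we get that $\sigma(H_t) \subseteq O^\star$ satisfies $O^\star \setminus \sigma(H_t) \in \mathcal{I}_{H_t}(\mathcal{M})$ after we check that it lies in $N\setminus H_t$ and that $(O^\star\setminus\sigma(H_t))\cup H_t$ is independent — this is exactly what the base-orderable exchange guarantees. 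Hence $\max_{S\subseteq \mathcal{I}_t(\mathcal{M})} h_t(S) \geq h_t(O^\star\setminus\sigma(H_t)) \geq f(O^\star\setminus\sigma(H_t))$, where the last inequality is $h_H(S)\geq f(S)$ from Lemma~\ref{lmm:property_of_h2} (noting $h_H$ dominates $f$ on $N\setminus H$, since $H_j = \varnothing$ is one of the averaged terms and $f$ is monotone).

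Next I would take the expectation over $t$ drawn uniformly from $[\lfloor r/l\rfloor]$. Since the pieces $H_1,\dots,H_{\lfloor r/l\rfloor}$ partition $B_0$ and $\sigma$ is a bijection, the images $\sigma(H_1),\dots,\sigma(H_{\lfloor r/l\rfloor})$ partition $O^\star$ into blocks of size $l$ or $l+1$. The key estimate is then a submodularity/averaging argument of exactly the same flavor as Lemma~\ref{clm:card2_split} and Lemma~\ref{clm:matroid2_split}: for a random block $A_t := \sigma(H_t)$ removed from $O^\star$, one has $\mathbb{E}_t[f(O^\star\setminus A_t)] \geq \left(1 - \frac{l+1}{r}\right) f(O^\star)$. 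To see this, index $O^\star = \{o_1,\dots,o_r\}$ and write $f(O^\star) = \sum_{i=1}^r f(o_i \mid o_1,\dots,o_{i-1})$; when removing $A_t$, submodularity of $f$ gives $f(O^\star\setminus A_t) \geq \sum_{i\notin I_{A_t}} f(o_i \mid o_1,\dots,o_{i-1})$ where $I_{A_t}$ is the index set of $A_t$; averaging over $t$, each index $i$ is excluded in at most one of the $\lfloor r/l\rfloor$ blocks, so it survives with probability at least $1 - \frac{l+1}{\lfloor r/l\rfloor \cdot l} \geq 1 - \frac{l+1}{r - l}$, which for $r\in\Omega(n^{1/3})$ and $l=\lceil 3\ln n\rceil$ is at least $1 - \frac{l+1}{r} - O(\varepsilon)$ worth of the absorbed error; to be safe I would just state the clean bound $1 - \frac{l+1}{r}$ as in the lemma statement and fold any floor-induced slack into the $O(\varepsilon)$ terms downstream. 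Chaining the two inequalities gives $\mathbb{E}_t[\max_{S\subseteq\mathcal{I}_t(\mathcal{M})} h_t(S)] \geq \mathbb{E}_t[f(O^\star\setminus\sigma(H_t))] \geq \left(1 - \frac{l+1}{r}\right) f(O^\star)$, as claimed.

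The main obstacle I anticipate is the bookkeeping around which base-orderable exchange to apply and verifying that $O^\star\setminus\sigma(H_t)$ is genuinely feasible for the contraction matroid $\mathcal{I}_{H_t}(\mathcal{M})$ — i.e., that $(O^\star\setminus\sigma(H_t))\cup H_t \in \mathcal{I}(\mathcal{M})$ and that $\sigma(H_t)$ is disjoint from $H_t$. Disjointness holds because $\sigma: B_0 \to O^\star$ is a bijection and $H_t \subseteq B_0$, so $\sigma(H_t)\subseteq O^\star$; but $O^\star$ and $B_0$ may overlap, so I need the ``furthermore'' clause of strongly base-orderability (or of Brualdi-type exchanges) that $\sigma$ can be taken to fix $B_0 \cap O^\star$ pointwise, which ensures $\sigma(H_t)\cap H_t = H_t \cap O^\star \cap$ (something) — I would handle this by either choosing $B_0$ disjoint from some fixed $O^\star$ is impossible in general, so instead I rely on: $\sigma$ identity on $B_0\cap O^\star$ means $\sigma(H_t)\supseteq H_t\cap O^\star$, and then $(O^\star\setminus\sigma(H_t))$ excludes $H_t\cap O^\star$, so $(O^\star\setminus\sigma(H_t))\cup H_t$ is well-defined with the overlap handled consistently; the strongly base-orderable exchange on $X = B_0\setminus H_t$ directly certifies $(O^\star\setminus\sigma(B_0\setminus H_t))\cup(B_0\setminus H_t)$... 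I would instead cleanly apply it with $X = H_t$ to get $(O^\star\setminus\sigma(H_t))\cup H_t$ as a basis, which is precisely the statement ``$(B_2\setminus\sigma(X))\cup X$ is a basis'' with $B_2 = O^\star$. Once that feasibility is nailed down, the rest is the routine submodular averaging already used twice in the paper, and the $\lfloor\cdot\rfloor$ rounding is absorbed into the $O(\varepsilon)$ slack exactly as in the proof of Theorem~\ref{thm:matroid2}.
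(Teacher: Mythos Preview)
Your proposal is correct and follows essentially the same route as the paper: invoke the strongly base-orderable bijection $\sigma:B_0\to O^\star$, take $X=H_t$ to certify $(O^\star\setminus\sigma(H_t))\cup H_t$ is a basis (hence $O^\star\setminus\sigma(H_t)\in\mathcal{I}_{H_t}(\mathcal{M})$), lower-bound $h_t$ by $f$ via monotonicity, and then run the telescoping submodular averaging over the partition $\{\sigma(H_t)\}_t$ of $O^\star$ to get $\mathbb{E}_t[f(O^\star\setminus\sigma(H_t))]\geq(1-\frac{l+1}{r})f(O^\star)$.

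Two small remarks. First, your survival-probability expression ``$1-\frac{l+1}{\lfloor r/l\rfloor\cdot l}$'' is garbled: each index lies in \emph{exactly one} block, so the exclusion probability is simply $1/\lfloor r/l\rfloor$, and the clean bound $1/\lfloor r/l\rfloor\leq (l+1)/r$ holds directly once $r\geq l(l+1)$ (which is automatic for $r\in\Omega(n^{1/3})$, $l=\Theta(\log n)$); no need to route through $1-\frac{l+1}{r-l}$ or absorb slack downstream. Second, your careful check that $(O^\star\setminus\sigma(H_t))\cap H_t=\varnothing$ (needed for $h_{H_t}$ to even be defined on this set) is a point the paper glosses over; your argument via ``$\sigma$ identity on $B_0\cap O^\star$'' is the right fix, and it is indeed a standard consequence of the strongly base-orderable definition even though the paper's Definition~\ref{def:strongly} does not state it explicitly.
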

\begin{proof}
By Definition~\ref{def:strongly}, there is a bijection $\sigma:B_0\to O^{\star}$ between the elements in $B_0$ and $O^{\star}$ such that for all $t\in[\lfloor\frac{r}{l}\rfloor]$, $(O^{\star}\setminus \sigma(H_t))\in \mathcal{I}_t(\mathcal{M})$.
%
%
The following lemma gives a lower bound of $f(O^*\setminus\sigma(H_t))$ in expectation.
\begin{lemma}
\label{lmm:matroid_split}
Given an arbitrary partition $O_1,\cdots,O_{\lfloor r/l \rfloor}$ of $O^{\star}$ such that $|O_t|\in \{l,l+1\}\ (t\in[\lfloor\frac{r}{l}\rfloor])$, 
we have
\[\mathop{\mathbb{E}}\limits_{t\sim \mathcal{U}}\left[f(O^{\star}\setminus O_t)\right]\geq \left(1-\frac{l+1}{r}\right)f(O^{\star}),\]
where $\mathcal{U}$ is a uniform distribution over $[\lfloor\frac{r}{l}\rfloor]$.
\end{lemma}
\begin{proof}
We index $O^{\star}$ as $\{o_1,\cdots,o_r\}$ and denote by $I_t$ the indexes of the elements in $O_t$ for $t\in[\lfloor\frac{r}{l}\rfloor]$.
We can decompose $f(O^{\star})$ as 
\[f(O^{\star}) = \sum\limits_{i=1}^rf(o_i\mid o_k,k\in[i-1]).\]
$f(O^{\star}\setminus O_t)$ can also be decomposed as below:
\begin{align*}
f(O^{\star}\setminus O_t)
&=  \sum\limits_{i\in [r]\setminus I_t}f(o_i\mid o_k, k\in [i-1]\mbox{ and }k\not \in I_t) \\ 
&\geq \sum\limits_{i\in [r]\setminus I_t}f(o_i\mid o_k, k\in [i-1]).\tag*{(submodularity of $f$)}\\
\end{align*}
Taking an expectation over $t\in[\lfloor\frac{r}{l}\rfloor]$ gives
\begin{align*}
\mathop{\mathbb{E}}_{t\sim \mathcal{U}}\left[f(O^{\star}\setminus O_t)\right]
&\geq \mathop{\mathbb{E}}_{t\sim \mathcal{U}}\left[ \sum\limits_{i\in [r]\setminus I_t}f(o_i\mid o_k, k\in [i-1])\right]\\
&=  \left(1-\frac{1}{\left\lfloor\frac{r}{l}\right\rfloor}\right)  \sum\limits_{i=1}^r f(o_i\mid o_k, k\in [i-1])\geq  \left(1-\frac{l+1}{r}\right)  f(O^{\star}).\\
\end{align*}
\end{proof}
With Lemma~\ref{lmm:matroid_split}, we can prove Lemma~\ref{clm:app_matroid3_split} since
\begin{align*}
    \mathop{\mathbb{E}}\limits_{t\sim \mathcal{U}}\left[\max\limits_{S\subseteq \mathcal{I}_t(\mathcal{M})} h_t(S)\right]
    &\geq   \mathop{\mathbb{E}}\limits_{t\sim \mathcal{U}}\left[h_t(O^{\star}\setminus\sigma(H_t))\right] && \text{($(O^{\star}\setminus \sigma(H_t))\in \mathcal{I}_t(\mathcal{M})$)}\\
     &\geq   \mathop{\mathbb{E}}\limits_{t\sim \mathcal{U}}\left[f(O^{\star}\setminus\sigma(H_t))\right] && \text{(submodularity of $f$)}\\
    &\geq \left(1-\frac{l+1}{r}\right)  f(O^{\star}). && \text{(Lemma~\ref{lmm:matroid_split})}
    \label{eq:card2mainproof3}
\end{align*}
\end{proof}
\noindent
Finally, similar to Lemma~\ref{clm:matroid2_comp}, the following lemma shows that comparison with $\widetilde{f_0}$ causes only a small loss in the approximate ratio.
\begin{lemma}
\label{clm:app_matroid3_comp}
We assume a sufficient small $\varepsilon$ and suppose that $r\in \Omega\left(n^{\frac{1}{3}}\right)$.
With probability at least $1-\frac{2r}{n^4}$, we have 
\begin{equation*}
    f(S_{{i^\star}}\cup H_{{i^\star}})\geq \left(1-O(\varepsilon)\right)\mathop{\mathbb{E}}\limits_{t\sim \mathcal{U}}\left[f(S_t\cup H_t)\right].
\end{equation*}
\end{lemma}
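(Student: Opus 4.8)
The plan is to follow the template of Claim~\ref{clm:matroid2_comp}, adapting it from two candidates to the $\lfloor r/l\rfloor$ candidates $S_t\cup H_t$ produced by Algorithm~\ref{alg:local_search_monotone5}, and comparing the output against the \emph{average} of the $f(S_t\cup H_t)$ rather than their maximum. First I would apply Lemma~\ref{lmm:matroid1_comp} (the concentration property of $f_0$) with relative accuracy $\varepsilon$ and failure parameter $\delta=1/n^4$ to each set $S_t\cup H_t$, $t\in[\lfloor r/l\rfloor]$. Each such set has size exactly $r$, and the hypothesis $|S_t\cup H_t|\ge \kappa\varepsilon^{-1}\ln(2\delta^{-1})$ holds for all large $n$: the right-hand side is $O(\varepsilon^{-1}\ln n)=O(n^{1/4}\ln n)$ by the standing assumption $n\in\Omega(\varepsilon^{-4})$ (so $\varepsilon^{-1}=O(n^{1/4})$), which is $o(n^{1/3})=o(r)$. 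A union bound over the at most $r$ sets then shows that, with probability at least $1-r/n^4\ge 1-2r/n^4$, the event $\bigl|\widetilde{f_0}(S_t\cup H_t)-f_0(S_t\cup H_t)\bigr|\le \varepsilon\, f_0(S_t\cup H_t)$ holds simultaneously for every $t$; condition on this event henceforth.

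The rest is a short chain of inequalities, all under the conditioned event. On one side, Lemma~\ref{lem:bounds_f0} gives $f(S_{i^\star}\cup H_{i^\star})\ge f_0(S_{i^\star}\cup H_{i^\star})$, and the event gives $f_0(S_{i^\star}\cup H_{i^\star})\ge \tfrac{1}{1+\varepsilon}\widetilde{f_0}(S_{i^\star}\cup H_{i^\star})$. Since $i^\star$ maximizes $\widetilde{f_0}(S_t\cup H_t)$, its value is at least the average, i.e.\ $\widetilde{f_0}(S_{i^\star}\cup H_{i^\star})\ge \mathbb{E}_{t\sim\mathcal{U}}\bigl[\widetilde{f_0}(S_t\cup H_t)\bigr]$; applying the event termwise and then Lemma~\ref{lem:bounds_f0} (with $|S_t\cup H_t|=r$) on the right-hand side yields $\mathbb{E}_{t\sim\mathcal{U}}[\widetilde{f_0}(S_t\cup H_t)]\ge (1-\varepsilon)\bigl(1-\tfrac{1}{r}\bigr)\,\mathbb{E}_{t\sim\mathcal{U}}[f(S_t\cup H_t)]$. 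Combining the pieces gives
\[
f(S_{i^\star}\cup H_{i^\star})\ \ge\ \frac{1-\varepsilon}{1+\varepsilon}\Bigl(1-\frac{1}{r}\Bigr)\,\mathbb{E}_{t\sim\mathcal{U}}\bigl[f(S_t\cup H_t)\bigr].
\]
Finally, $r\in\Omega(n^{1/3})$ and $\varepsilon^{-1}=O(n^{1/4})$ force $1/r=o(\varepsilon)$, and $\tfrac{1-\varepsilon}{1+\varepsilon}\ge 1-2\varepsilon$, so the prefactor is $1-O(\varepsilon)$, which proves the lemma.

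I do not expect a genuine obstacle here; the argument is essentially ``concentration plus max-dominates-average'', just as in Claim~\ref{clm:matroid2_comp}. The two points that require some care are (i) checking, as above, that the rank $r$ is large enough for Lemma~\ref{lmm:matroid1_comp} to be invoked with $\delta=1/n^4$ --- this is exactly where the hypotheses $r\in\Omega(n^{1/3})$ and $n\in\Omega(\varepsilon^{-4})$ are used --- and (ii) the union-bound bookkeeping keeping the total failure probability at $2r/n^4$. It is worth noting that comparing against the average (rather than trying to certify the single best index $t$) is precisely what makes this lemma dovetail with Lemma~\ref{clm:app_matroid3_split}, which is itself stated in expectation over $t\sim\mathcal{U}$. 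Chaining the present lemma with Ineq.~\eqref{eq:f_St_cup_base} and Lemma~\ref{clm:app_matroid3_split}, and using $(l+1)/r=O(\ln n/n^{1/3})=o(\varepsilon)$, then gives the $\bigl(1-\tfrac{1}{e}-O(\varepsilon)\bigr)$-approximation asserted in Theorem~\ref{thm:matroid3}.
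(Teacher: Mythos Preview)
Your proposal is correct and follows essentially the same approach as the paper: the paper explicitly says that the proof idea is identical to that of Claim~\ref{clm:matroid2_comp}, with the only difference being a union bound over all $\lfloor r/l\rfloor$ candidate sets rather than two, yielding the $1-2r/n^4$ success probability. Your verification that $r\in\Omega(n^{1/3})$ suffices to invoke Lemma~\ref{lmm:matroid1_comp} with $\delta=1/n^4$, and your ``max dominates average'' chain of inequalities, fill in exactly the details the paper omits.
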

\noindent
The proof idea of this lemma is the same as that of Lemma \ref{clm:matroid2_comp}. 
The only difference is that when proving Lemma~\ref{clm:app_matroid3_comp}, we need to take a union bound of probability that $\widetilde{f_0}(S_t\cup H_t)$ is close to $f(S_t\cup H_t)$ for all $i\in [\lfloor r/l\rfloor]$, thus the success probability in Lemma~\ref{clm:app_matroid3_comp} is at least $1-\frac{2r}{n^4}$.

Now we arrive that with probability $1-O\left(\frac{1}{n}\right)$, we have
\begin{align*}
 f(S_{{i^\star}}\cup H_{{i^\star}})
 \geq ~& \left(1-O(\varepsilon)\right)\mathop{\mathbb{E}}\limits_{t\sim \mathcal{U}}\left[f(S_t\cup H_t)\right]\tag*{(Lemma \ref{clm:app_matroid3_comp})}
\\
 \geq ~& \left(1-\frac{1}{e}-O(\varepsilon)\right)\mathop{\mathbb{E}}\limits_{t\sim \mathcal{U}}\left[\max\limits_{S\subseteq \mathcal{I}_t(\mathcal{M})} h_t(S)\right]  \tag*{(Ineq.~\eqref{eq:f_St_cup_base})}
\\
 \geq ~& \left(1-\frac{1}{e}-O(\varepsilon)\right) f(O^{\star}), 
 \tag*{(Lemma \ref{clm:app_matroid3_split} )}\\
\end{align*}
which matches the approximation performance of Algorithm~\ref{alg:local_search_monotone5} in Theorem~\ref{thm:matroid3}.

\end{document}